\newcommand{\ds}{\displaystyle}
\newcommand{\OO}{\mathcal{O}}
\newcommand{\diag}{\mathop{\mathrm{diag}}}
\newcommand{\supp}{\mathop{\mathrm{supp}}}
\newcommand{\R}{\mathbb{R}}
\newcommand{\C}{\mathbb{C}}
\renewcommand{\Re}{\mathop{\mathrm{Re}}}
\renewcommand{\Im}{\mathop{\mathrm{Im}}}
\renewcommand{\mod}{\mathop{\mathrm{mod}}}
\newcommand{\eps}{\varepsilon}
\DeclareMathOperator{\Tr}{Tr}
\DeclareMathOperator{\Bal}{Bal}
\DeclareMathOperator*{\Res}{Res}
\DeclareMathOperator{\Ai}{Ai}
\newtheorem{theorem}{Theorem}[section]
\newtheorem{lemma}[theorem]{Lemma}
\newtheorem{corollary}[theorem]{Corollary}
\newtheorem{proposition}[theorem]{Proposition}
\newtheorem{Remark}[theorem]{Remark}
\newtheorem{Example}[theorem]{Example}
\newtheorem{Assumption}[theorem]{Assumption}
\newtheorem{definition}[theorem]{Definition}
\newenvironment{remark}{\begin{Remark}\rm}{\end{Remark}}
\numberwithin{equation}{section}
\numberwithin{figure}{section}
\title{The Hermitian two matrix model with an even quartic potential}
\date{}
\author{Maurice Duits\footnote{Department of Mathematics, California Institute of Technology, 1200 E. California Blvd, Pasadena CA 91125, USA. E-mail: mduits@caltech.edu},   Arno B.J. 
Kuijlaars\footnote{Department of Mathematics, Katholieke Universiteit Leuven, Celestijnenlaan 200B, 3001 Leuven, Belgium. E-mail: arno.kuijlaars@wis.kuleuven.be}, and Man Yue 
Mo\footnote{Department of Mathematics,
University of Bristol, Bristol BS8 1TW, UK. E-mail:
m.mo@bristol.ac.uk. } }
\begin{document}\footnotetext{2010 MSC: Primary  30E25, 60B20,  Secondary 15B52, 30F10, 31A05, 42C05, 82B26.}
\maketitle

\pagestyle{myheadings} \thispagestyle{plain}
\markboth{M. Duits, A.B.J. Kuijlaars, and M.Y. Mo}{Two matrix model with quartic potential}

\allowdisplaybreaks

\begin{abstract}
We consider the two matrix model with an even quartic potential $W(y)=y^4/4+\alpha y^2/2$ and an even polynomial potential $V(x)$. The main result of the paper is the formulation of a vector equilibrium problem for the limiting mean density for the eigenvalues of one of the matrices $M_1$. The vector equilibrium problem is defined for  three measures, with external fields on the first and  third measures and an upper constraint on the second measure. The proof is based on a steepest descent analysis of a $4\times4$ matrix valued  Riemann-Hilbert problem that characterizes the correlation kernel for the eigenvalues of $M_1$. Our results generalize earlier results for the case $\alpha=0$, where the external field on the third measure was not present.
\end{abstract}

\setcounter{tocdepth}{2} \tableofcontents

\section{Introduction and statement of results}
\subsection{Hermitian two matrix model}

The Hermitian two-matrix model is a probability measure of the form
\begin{equation}\label{eq:2MM}
    \frac{1}{Z_n} \exp\left(-n \Tr(V(M_1)+W(M_2)-\tau M_1 M_2 )\right) dM_1 d M_2,
\end{equation}
defined on the space of pairs $(M_1,M_2)$ of $n\times n$ Hermitian
matrices. The constant $Z_{n}$ in \eqref{eq:2MM} is a normalization constant,
$\tau\in\mathbb{R}\setminus\{0\}$ is the coupling constant
and $d M_1 d M_2$ is  the flat Lebesgue measure on the
space of pairs of Hermitian matrices. In \eqref{eq:2MM}, $V$ and $W$ are
the potentials of the matrix model. In this paper, we assume $V$
to be a general even polynomial and we take
$W$ to be the even quartic polynomial
\begin{equation} \label{eq:defW}
    W(y) = \frac{1}{4} y^4  + \frac{\alpha}{2} y^2, \qquad \alpha \in \mathbb R.
    \end{equation}
Without loss of generality we may (and do) assume that
\begin{equation} \label{eq:taupositive}
    \tau > 0.
    \end{equation}
We are interested in describing the eigenvalues of $M_1$ in the large $n$ limit.

In \cite{DuiKu2} the case $\alpha = 0$ was studied in detail.
An important ingredient in the analysis of \cite{DuiKu2} was
a vector equilibrium problem that describes
the limiting mean eigenvalue distribution of $M_1$. In this
paper we extend the vector equilibrium problem to the case $\alpha \neq 0$.

\subsection{Background}

The two-matrix model \eqref{eq:2MM} with polynomial potentials $V$ and $W$
was introduced in \cite{ItzZub, Meh2} as a model for quantum gravity and string theory.
The interest is in the double scaling limit for critical potentials. It is
generally believed that the two-matrix model is able to describe all $(p,q)$
conformal minimal models, whereas the one-matrix model is limited to $(p,2)$
minimal models \cite{DaKaKo, Dou, Eyn2}.
In \cite{Kaz} the two-matrix model was proposed for the study
of the Ising model on a random surface, where the logarithm of the
partition function (i.e., the normalizing constant $Z_n$ in \eqref{eq:2MM}) is
expected to be the generating function in the enumeration of graphs on surfaces.
For more information and background on the physical interest we refer to the
the surveys \cite{DiF,DiGiZi}, and more recent physical papers \cite{BergEyn, Eyn3, EynOra1, EynOra2}

The two matrix model have a very rich integrable structure that is connected to
biorthogonal polynomials, isomonodromy deformations, Riemann-Hilbert problems
and  integrable equations,  see e.g.\ \cite{AdlvMo1, Ber1, BertEyn, BeEyHa1, BeEyHa2, ErcMcL, EynMeh, Kap, KuiMcL}. This is the basis of the mathematical treatment of the two matrix model,
see also the survey \cite{Ber2}.


The eigenvalues of the matrices $M_1$ and $M_2$ in the two-matrix model are a determinantal
point process with correlation kernels that are expressed in terms of biorthogonal polynomials.
These are two families of monic polynomials $\{p_{k,n}(x)\}_{k=0}^\infty$ and
$\{q_{l,n}(y)\}_{y=0}^\infty$, where $p_{k,n}$ has degree $k$ and $q_{l,n}$ has degree $l$, satisfying the condition
\begin{equation} \label{eq:biorthogonality}
     \int_{-\infty}^{\infty} \int_{-\infty}^{\infty}
    p_{k,n}(x)q_{l,n}(y)  e^{-n\left(V(x)+W(y)-\tau x y \right)} dx dy = h_{k,n}^2 \delta_{k,l}.
    \end{equation}
The polynomials are well-defined by \eqref{eq:biorthogonality} and have simple and real zeros \cite{ErcMcL}.
Moreover, the zeros of $p_{k,n}$ and $p_{k+1,n}$, and those of $q_{l,n}$ and $q_{l+1,n}$ are interlacing \cite{DuGeKu}.

The kernels are expressed in terms of these biorthogonal polynomials and their transformed functions
\begin{align*}
    Q_{l,n}(x) &=e^{-n V(x)} \int_{-\infty}^{\infty} q_{l,n}(y) e^{-n\left( W(y) -\tau x y\right) }  dy,\\
    P_{k,n}(y) &=e^{-n W(y)} \int_{-\infty}^{\infty} p_{k,n}(x) e^{-n\left(V(x)-\tau x y \right)}  d x,
\end{align*}
as follows:
\begin{align}
K_{11}^{(n)}(x_1,x_2) & =\sum_{k=0}^{n-1} \frac{1}{h_{k,n}^2} p_{k,n}(x_1) Q_{k,n}(x_2),\label{eq:defK11}\\
K_{12}^{(n)}(x,y) & =\sum_{k=0}^{n-1} \frac{1}{h_{k,n}^2} p_{k,n}(x) q_{k,n}(y), \label{eq:defK12} \\
K_{21}^{(n)}(y,x) & =\sum_{k=0}^{n-1} \frac{1}{h_{k,n}^2} P_{k,n}(y) Q_{k,n}(x) -e^{-n\left(V(x)+W(y)-\tau x y\right)}, \label{eq:defK21} \\
K_{22}^{(n)}(y_1,y_2) & =\sum_{k=0}^{n-1} \frac{1}{h_{k,n}^2} P_{k,n}(y_1) q_{k,n}(y_2). \label{eq:defK22}
\end{align}
Then Eynard and Mehta \cite{EynMeh, MehShu}, see also \cite{BorRai, Del, Meh1},
showed that the joint probability density function
for the eigenvalues $x_1,\ldots,x_n$ of $M_1$ and $y_1,\ldots, y_n$ of $M_2$ is given by
\begin{multline*}
\mathcal P(x_1,\ldots,x_n,y_1,\ldots,y_n) \\ =
\frac{1}{(n!)^2}
\det
\begin{pmatrix}
\left(K_{11}^{(n)} (x_i,x_j)\right)_{i,j=1}^n& \left(K_{12}^{(n)} (x_i,y_j)\right)_{i,j=1}^n\\
\left(K_{21}^{(n)} (y_i,x_j)\right)_{i,j=1}^n &
\left(K_{22}^{(n)} (y_i,y_j)\right)_{i,j=1}^n\end{pmatrix},
\end{multline*}
and the marginal densities take the form
\begin{multline} \label{eq:integrateout}
    \underbrace{\int \cdots \int}_{n-k+n-l \textrm{ times}}
    \mathcal P(x_1,\ldots,x_n,y_1,\ldots,y_n)  dx_{k+1} \cdots d x_n \ dy_{l+1} \cdots dy_n \\ =
    \frac{ (n-k)! (n-l)!}{(n!)^2}
    \det
    \begin{pmatrix}
    \left(K_{11}^{(n)} (x_i,x_j)\right)_{i,j=1}^k& \left(K_{12}^{(n)} (x_i,y_j)\right)_{i,j=1}^{k,l}\\
    \left(K_{21}^{(n)} (y_i,x_j)\right)_{i,j=1}^{l,k} &
    \left(K_{22}^{(n)} (y_i,y_j)\right)_{i,j=1}^l\end{pmatrix}.
\end{multline}
In particular, by taking $l=0$, so that we average over the eigenvalues $y_1,\ldots,y_n$ of $M_2$,
we  find that the eigenvalues of $M_1$ are a determinantal point process with kernel $K_{11}^{(n)}$, see \eqref{eq:defK11}.
This kernel is constructed out of the biorthogonal family $\{p_{k,n}\}_{k=0}^\infty$ and $\{Q_{l,n}\}_{l=0}^\infty$
and the associated determinantal point process is an example of  a biorthogonal ensemble \cite{Bor}.
It is also an example of a multiple orthogonal polynomial ensemble in the sense of \cite{Kui1}.

In order to describe the behavior of the eigenvalues in the large $n$ limit, one needs to
control the kernels \eqref{eq:defK11}--\eqref{eq:defK22} as $n \to \infty$. Due to
special recurrence relations satisfied by the biorthogonal polynomials, there
exist Christoffel-Darboux-type formulas that express the $n$-term sums \eqref{eq:defK11} and \eqref{eq:defK22}
into a finite number (independent of $n$) of biorthogonal polynomials and transformed functions,
see \cite{BeEyHa1}. This paper also gives differential
equations and  a remarkable duality between spectral curves, see also \cite{BertEyn, BeEyHa2}.

A Riemann-Hilbert problem for biorthogonal polynomials was first formulated in \cite{ErcMcL}.
The Riemann-Hilbert problem in \cite{ErcMcL} is of size $2 \times 2$ but it is non-local
and one has not been able to apply an asymptotic analysis to it. Local
Riemann-Hilbert problems were formulated in \cite{BeEyHa2, Kap, KuiMcL}, but these
Riemann-Hilbert problems are of larger size, depending on the degrees
of the potentials $V$ and $W$. The formulation of a local Riemann-Hilbert problem for
biorthogonal polynomials, however, opens up the way for the application of
the Deift-Zhou \cite{DeiZho} steepest descent method, which was applied very successfully
to the Riemann-Hilbert problem for orthogonal polynomials, see \cite{BleIts1, Dei, DKMVZ1, DKMVZ2}
and many later papers.

In \cite{DuiKu2} the Deift-Zhou steepest descent method was indeed applied to the
Riemann-Hilbert problem from \cite{KuiMcL} for the case where $W$ is given by \eqref{eq:defW}
with $\alpha=0$. It gave a precise asymptotic analysis of the kernel $K_{11}^{(n)}$ as $n \to \infty$,
leading in particular to the local universality results that are well-known in one-matrix
models \cite{DKMVZ1}.
The analysis in \cite{DuiKu2} was restricted to the genus zero case.
The extension to higher  genus was done in \cite{Mo2}.

\subsection{Vector equilibrium problem}
As already stated, it is the purpose of the present paper to extend the results
of \cite{DuiKu2, Mo2} to the case of general $\alpha$.

An important role in the analysis in \cite{DuiKu2} is played by a vector equilibrium problem that
characterizes the limiting mean density for the eigenvalues of $M_1$ (and also gives the limiting zero
distribution of the biorthogonal polynomials $p_{n,n}$). One of the main contributions
of the present paper is the formulation of the appropriate  generalization to
general $\alpha \in \mathbb R$. We refer to the standard reference \cite{SafTot}
for notions of logarithmic potential theory and equilibrium problems with
external fields.

\subsubsection{Case $\alpha = 0$}
Let us first recall the vector equilibrium problem from \cite{DuiKu2},
which involves the minimization of an energy functional over three measures.
For a measure $\mu$ on $\mathbb C$ we define the logarithmic energy
\[ I(\mu) = \iint \log \frac{1}{|x-y|} d\mu(x) d\mu(y) \]
and for two measures $\mu$ and $\nu$ we define the mutual
energy
\[ I(\mu,\nu) = \iint \log \frac{1}{|x-y|} d\mu(x) d\nu(y). \]
The energy functional in \cite{DuiKu2} then takes the form
\begin{equation} \label{eq:energy0}
          \sum_{j=1}^3 I(\nu_j) - \sum_{j=1}^2 I(\nu_j,\nu_{j+1})
          + \int  \left( V(x) - \frac{3}{4} |\tau x|^{4/3} \right)  d \nu_1(x)
          \end{equation}
and the vector equilibrium problem is to minimize \eqref{eq:energy0}
among all measures $\nu_1$, $\nu_2$ and $\nu_3$ such that
\begin{enumerate}
\item[(a)] the measures have finite logarithmic energy;
\item[(b)] $\nu_1$ is a measure on $\mathbb R$ with $\nu_1(\mathbb R) = 1$;
\item[(c)] $\nu_2$ is a measure on $i\mathbb R$ with $\nu_2(i \mathbb R) = 2/3$;
\item[(d)] $\nu_3$ is a  measure on $\mathbb R$ with $\nu_3(\mathbb R) = 1/3$;
\item[(e)] $\nu_2 \leq \sigma_2$ where $\sigma_2$ is the unbounded measure
with density
\begin{equation} \label{eq:sigma2alpha0}
    \frac{d\sigma_2}{|dz|} = \frac{\sqrt{3}}{2\pi} \tau^{4/3} |z|^{1/3}, \qquad z \in i \mathbb R
    \end{equation}
on the imaginary axis.
\end{enumerate}

A main feature of the vector equilibrium problem is that it involves
an external field  acting on the first measure as well as an upper constraint
\eqref{eq:sigma2alpha0} acting on the second measure. Note that an upper constraint
arises typically in the asymptotic analysis of discrete orthogonal polynomials, see e.g.\
\cite{BKMM, BleLie, DraSaf, KuiRak, Rak}. The interaction between the measures
in \eqref{eq:energy0} is of the Nikishin type where consecutive measures
attract each other, but there is no direct interaction between measures $\nu_i$ and $\nu_j$
if $|i-j| \geq 2$. The notion of a Nikishin system originated in works
on Hermite-Pad\'e rational approximation,  see \cite{Apt2, GonRak, Kui1, NikSor}.
Vector equilibrium problems also played a role in the recent papers
\cite{BalBer, BeGeSz, BlDeKu, KuMaWi} that are  related
to random matrix theory and \cite{ApKaSa, DuGeKu, DuiKu1, KuiRom, ZhaRom} that are related
to recurrence relations.

\subsubsection{General $\alpha$}

For general $\alpha \in \mathbb R$, the relevant energy functional
 takes the form
\begin{multline} \label{eq:energy}
        E(\nu_1,\nu_2,\nu_3)
         = \sum_{j=1}^3 I(\nu_j) - \sum_{j=1}^2 I(\nu_j,\nu_{j+1}) \\
          + \int  V_1(x) d \nu_1(x)  + \int V_3(x) d \nu_3(x),
\end{multline}
where $V_1$ and $V_3$ are certain external fields acting
on $\nu_1$ and $\nu_3$, respectively.
The vector equilibrium problem is to minimize $E(\nu_1, \nu_2, \nu_3)$
among all measures $\nu_1$, $\nu_2$, $\nu_3$, such that
\begin{enumerate}
\item[(a)] the measures have finite logarithmic energy;
\item[(b)] $\nu_1$ is a measure on $\mathbb R$ with $\nu_1(\mathbb R) = 1$;
\item[(c)] $\nu_2$ is a measure on $i\mathbb R$ with $\nu_2(i \mathbb R) = 2/3$;
\item[(d)] $\nu_3$ is a  measure on $\mathbb R$ with $\nu_3(\mathbb R) = 1/3$;
\item[(e)] $\nu_2 \leq \sigma_2$ where $\sigma_2$ is a certain measure on the imaginary axis.
\end{enumerate}

Comparing with \eqref{eq:energy0} we see that there is an external field
$V_3$ acting on the third measure as well.
The vector equilibrium problem depends on the input data $V_1$, $V_3$,
and $\sigma_2$ that will be described next. Recall that $V$ is an
even polynomial and that $W$ is the quartic polynomial given by \eqref{eq:defW}.

\paragraph{External field $V_1$:}
The external field $V_1$ that acts on $\nu_1$ is defined by
\begin{equation} \label{eq:V1}
    V_1(x) = V(x) + \min_{s \in \mathbb R} \left(W(s) - \tau xs\right).
    \end{equation}
The minimum is attained at a value $s = s_1(x) \in \mathbb R$ for
which $W'(s) = \tau x$, that is
\begin{equation} \label{eq:saddlepoint}
    s^3 + \alpha s = \tau x.
    \end{equation}
For $\alpha \geq 0$, this value of $s$ is uniquely defined by \eqref{eq:saddlepoint}.
For $\alpha < 0$ there can be more than one real solution of \eqref{eq:saddlepoint}.
The relevant value is the one that has the same sign as $x$ (since $\tau > 0$, see
\eqref{eq:taupositive}). It is uniquely defined, except for $x=0$.

\paragraph{External field $V_3$:}
The external field $V_3$ that acts on $\nu_3$ is not present
if $\alpha \geq 0$. Thus
\begin{equation} \label{eq:V3a}
    V_3(x) \equiv 0 \qquad \text{if } \alpha \geq 0.
    \end{equation}

For $\alpha < 0$, the external field $V_3(x)$ is non-zero only for
$x \in (-x^*(\alpha), x^*(\alpha))$ where
\begin{align} \label{eq:xstar}
     x^*(\alpha) = \begin{cases}
     \ds \frac{2}{\tau} \left( \frac{-\alpha}{3} \right)^{3/2},
        & \alpha < 0, \\[10pt]
        0, & \alpha \geq 0.
        \end{cases}
    \end{align}
For those $x$, the equation
\eqref{eq:saddlepoint} has three real solutions
$s_1 = s_1(x)$, $s_2 = s_2(x)$, $s_3 = s_3(x)$ which we take
to be ordered such that
\[ W(s_1) - \tau xs_1 \leq W(s_2) - \tau xs_2 \leq W(s_3) - \tau xs_3. \]
Thus the global minimum of $s \in \mathbb R \mapsto W(s) = \tau xs$ is attained
at $s_1$, and this global minimum played a role in the definition \eqref{eq:V1} of $V_1$.
The function has another local minimum at $s_2$ and a local maximum at $s_3$,
and these are used in the definition of $V_3$. We define $V_3 : \mathbb R \to \mathbb R$ by
\begin{equation} \label{eq:V3b}
    V_3(x) =
        \begin{cases}
        \left(W(s_3(x)) - \tau xs_3(x)\right) - \\
        \qquad \qquad \left( W(s_2(x)) - \tau xs_2(x)\right),  & \text{for } x \in (-x^*(\alpha), x^*(\alpha)), \\
        0 & \text{elsewhere.}
        \end{cases}
            \end{equation}
Thus $V_3(x)$ is the difference
between the local maximum and the other local minimum of $s \in \mathbb R \mapsto W(s) = \tau xs$,
which indeed exist if and only if $x \in (-x^*(\alpha), x^*(\alpha))$,
where $x^*(\alpha)$ is given by \eqref{eq:xstar}. In particular $V_3(x) > 0$ for
$x \in (-x^*(\alpha), x^*(\alpha))$.

\paragraph{The constraint $\sigma_2$:}
To describe the measure $\sigma_2$ that acts as a constraint on $\nu_2$, we consider the
equation
\begin{equation} \label{eq:saddlepointimaginary}
    s^3 + \alpha s = \tau z, \qquad \textrm{with } z \in i \mathbb R.
    \end{equation}
There is always a solution $s$ on the imaginary axis. The other
two solutions are either on the imaginary axis as well, or
they are off the imaginary axis, and lie symmetrically with
respect to the imaginary axis. We define
\begin{equation} \label{eq:sigma2}
    \frac{d\sigma_2(z)}{|dz|} = \frac{\tau}{\pi} \Re s(z)
    \end{equation}
where $s(z)$ is the solution of \eqref{eq:saddlepointimaginary} with largest
real part. We then have for the support $S(\sigma_2)$ of $\sigma_2$,
\begin{equation}
     S(\sigma_2) =
        i \mathbb R \setminus (-i y^*(\alpha), i y^*(\alpha)),
    \end{equation}
where
\begin{align} \label{eq:ystar}
    y^*(\alpha)  =
        \begin{cases} \ds
        \frac{2}{\tau} \left( \frac{\alpha}{3} \right)^{3/2}, &   \alpha > 0, \\[10pt]
        0, &   \alpha \leq 0.
        \end{cases}
\end{align}

This completes the description of the vector equilibrium problem for
general $\alpha$. It is easy to check that for $\alpha = 0$ it reduces to the
vector equilibrium  described before.

\subsection{Solution of vector equilibrium problem}
Our first main theorem deals with the solution of
the vector equilibrium problem. We use $S(\mu)$ to denote
the support of a measure $\mu$. The logarithmic
potential of $\mu$ is the function
\begin{equation} \label{eq:Umu}
    U^{\mu}(x) = \int \log \frac{1}{|x-s|} d\mu(s), \qquad x \in \mathbb C,
\end{equation}
which is a harmonic function on $\mathbb C \setminus S(\mu)$
and superharmonic on $\mathbb C$.

\begin{theorem} \label{theo:theo1}
The above vector equilibrium problem has a unique minimizer $(\mu_1, \mu_2, \mu_3)$
that satisfies the following.
\begin{enumerate}
\item[\rm (a)]
There is a constant $\ell_1 \in \mathbb R$ such that
\begin{equation} \label{eq:varmu1}
\left\{
\begin{aligned}
    2U^{\mu_1}(x) & = U^{\mu_{2}}(x)-V_1(x) + \ell_1, \qquad x\in S(\mu_1), \\
    2U^{\mu_1}(x) & \geq U^{\mu_{2}}(x)-V_1(x)+ \ell_1, \qquad x\in \mathbb R\setminus S(\mu_1).
\end{aligned}
\right.
\end{equation}
If $0 \not\in S(\mu_1)$ or $0 \not\in S(\sigma_2 - \mu_2)$ then
\begin{equation} \label{eq:Smu1}
    S(\mu_1) = \bigcup_{j=1}^N [a_j, b_j],
\end{equation}
for some $N \in \mathbb N$ and $a_1 < b_1 < a_2 < \cdots < a_N < b_N$, and
on  each of the intervals $[a_j,b_j]$ in $S(\mu_1)$ there is a density
\begin{equation} \label{eq:rho1}
    \frac{d\mu_1}{dx} = \rho_1(x)
    = \frac{1}{\pi} h_j(x) \sqrt{(b_j-x)(x-a_j)}, \qquad
        x \in [a_j,b_j]
        \end{equation}
and $h_j$ is non-negative and real analytic on $[a_j,b_j]$.
\item[\rm (b)]
We have
 \begin{equation} \label{eq:varmu2}
\left\{
\begin{aligned}
    2U^{\mu_2}(x) & = U^{\mu_1}(z)+U^{\mu_3}(x), \qquad x\in S(\sigma_2-\mu_2), \\
    2U^{\mu_2}(x) & < U^{\mu_1}(z)+U^{\mu_3}(x), \qquad x\in i \mathbb R \setminus S(\sigma_2 - \mu_2),
\end{aligned}
\right.
\end{equation}
and there is a constant $c_2 \geq 0$ such that
\begin{equation}    \label{eq:Smu2}
    S(\mu_2)  = S(\sigma_2), \quad \textrm{ and } \quad
    S(\sigma_2 - \mu_2)  = i \mathbb R \setminus (-ic_2, ic_2).
 \end{equation}
Moreover, $\sigma_2 - \mu_2$ has a density
\begin{equation} \label{eq:rho2}
    \frac{d(\sigma_2 -\mu_2)}{|dz|} = \rho_2(z), \qquad z \in i \mathbb R
    \end{equation}
that is positive and real analytic on $i \mathbb R \setminus [-ic_2, ic_2]$.
If $c_2 > 0$, then $\rho_2$ vanishes as a square root at $z= \pm ic_2$.
If $\alpha \geq 0$, then $c_2 > y^*(\alpha)$,
where $y^*(\alpha)$ is given by \eqref{eq:ystar}.
\item[\rm (c)]
We have
\begin{equation} \label{eq:varmu3}
\left\{
\begin{aligned}
    2U^{\mu_3}(x) & = U^{\mu_2}(x) - V_3(x), \qquad x\in S(\mu_3), \\
    2U^{\mu_3}(x) & > U^{\mu_2}(x) - V_3(x), \qquad x \in \mathbb R \setminus S(\mu_3),
\end{aligned}
\right.
\end{equation}
and there is a constant $c_3 \geq 0$ such that
\begin{equation} \label{eq:Smu3}
     S(\mu_3) = \mathbb R \setminus (-c_3, c_3).
     \end{equation}
Moreover, $\mu_3$ has a density
\begin{equation} \label{eq:rho3}
    \frac{d\mu_3}{dx} = \rho_3(x), \qquad x \in \mathbb R,
    \end{equation}
that is positive and real analytic on $\mathbb R \setminus [-c_3,c_3]$.
If $\alpha \geq 0$, then $c_3 = 0$.
If $\alpha < 0$, then $c_3 < x^*(\alpha)$ where
$x^*(\alpha)$ is given by \eqref{eq:xstar}.
If $c_3 > 0$, then $\rho_3$ vanishes as a square root at $x = \pm c_3$.

\item[\rm (d)] All three measures are symmetric with respect to $0$,
so that for $j=1,2,3$ we have $\mu_j(A) = \mu_j(-A)$ for every Borel set $A$.
\end{enumerate}
\end{theorem}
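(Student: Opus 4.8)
\medskip
\noindent\emph{Plan of proof.} The plan is to proceed in four steps: (i) existence and uniqueness of the minimizer via convexity; (ii) the Euler--Lagrange variational (in)equalities and the symmetry (d); (iii) the description of the supports and the regularity of the densities; and (iv) identifying where the real difficulty lies. For step (i) I would first observe that the quadratic part $Q(\nu_1,\nu_2,\nu_3)=\sum_{j=1}^3 I(\nu_j)-\sum_{j=1}^2 I(\nu_j,\nu_{j+1})$ of $E$ is the logarithmic-energy form attached to the matrix $A=\left(\begin{smallmatrix}2&-1&0\\-1&2&-1\\0&-1&2\end{smallmatrix}\right)$, which is positive definite; combining this (after diagonalising $A$) with the fact that the logarithmic energy is positive definite on the class of signed measures of total mass $0$ with finite energy makes $Q$ strictly positive on triples of such measures, so that $E$ is strictly convex on the convex admissible class (a)--(e) and a minimizer, if it exists, is unique. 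Existence I would get by the direct method: the admissible class is non-empty (take, e.g., compactly supported absolutely continuous $\nu_1,\nu_3$ of the prescribed masses and a compactly supported piece of $\sigma_2$ of mass $2/3$, all of finite energy), and along a minimizing sequence $V_1(x)\to+\infty$ as $|x|\to\infty$ (since $V$ is an even polynomial and $\min_{s}(W(s)-\tau xs)=-\tfrac34|\tau x|^{4/3}+O(1)$), which keeps $\{\nu_1\}$ tight; the Nikishin-type attraction together with the fixed masses and the constraint $\nu_2\le\sigma_2$ then forces $\{\nu_2\}$ and $\{\nu_3\}$ to be tight as well, and passing to weak-$*$ limits, using lower semicontinuity of the logarithmic energy and of the external-field integrals (with a truncation argument to handle the non-compact sets), yields a minimizer.

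For step (ii) I would use the standard one-parameter perturbations: replacing $\mu_j$ by $(1-t)\mu_j+t\sigma$ with $\sigma$ an admissible competitor, differentiating the energy at $t=0^+$, and using its non-negativity. For the unconstrained $\mu_1,\mu_3$ this gives equality of the effective potential on the support and the reverse inequality off it, up to additive constants $\ell_1,\ell_3$, while the one-sided constraint $\mu_2\le\sigma_2$ yields equality on the part $S(\sigma_2-\mu_2)$ where the constraint is inactive and strict inequality on the rest of $i\mathbb R$. Since $S(\sigma_2-\mu_2)$ and $S(\mu_3)$ are unbounded and the pertinent mass balances are $2\cdot\tfrac23-1-\tfrac13=0$ and $2\cdot\tfrac13-\tfrac23=0$, letting the variable tend to $\infty$ (where $V_3\equiv0$) forces the additive constant in \eqref{eq:varmu2} to vanish and $\ell_3=0$; the quasi-everywhere statements then become pointwise on the closed supports by continuity of logarithmic potentials. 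Part (d) is then immediate from uniqueness: $V_1,V_3$ are even, $\sigma_2$ is invariant under $z\mapsto -z$, and the admissible class is invariant under $x\mapsto -x$, so $(\mu_1(-\,\cdot\,),\mu_2(-\,\cdot\,),\mu_3(-\,\cdot\,))$ is also a minimizer and hence coincides with $(\mu_1,\mu_2,\mu_3)$.

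For step (iii) the key structural point is that $\nu_1$ couples only to $\nu_2$ and $\nu_3$ only to $\nu_2$: freezing $\mu_2$ makes $\mu_1$ (resp.\ $\mu_3$) the scalar weighted equilibrium measure on $\mathbb R$ of mass $1$ (resp.\ $\tfrac13$) in the field $V_1-U^{\mu_2}$ (resp.\ $V_3-U^{\mu_2}$), while freezing $\mu_1,\mu_3$ makes $\mu_2$ the constrained weighted equilibrium measure on $i\mathbb R$ of mass $\tfrac23$ in the field $-U^{\mu_1+\mu_3}$ with upper constraint $\sigma_2$. I would then differentiate the variational equalities along $\mathbb R$ and $i\mathbb R$ to obtain relations between the Cauchy transforms $\int(x-s)^{-1}\,d\mu_j(s)$; since $V_1'(x)=V'(x)-\tau s_1(x)$ and $V_3'(x)=\tau\bigl(s_2(x)-s_3(x)\bigr)$ are governed by the cubic \eqref{eq:saddlepoint}, these relations should assemble into a single degree-$4$ algebraic equation — a spectral curve, the equilibrium-problem counterpart of the $4\times4$ Riemann--Hilbert problem — whose branches are the resolvents of $\mu_1,\mu_2,\mu_3$. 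From that curve I expect to read off real analyticity of $\rho_1,\rho_2,\rho_3$ in the interiors of their supports, the representation \eqref{eq:rho1} with $h_j\ge0$ real analytic, square-root vanishing at any hard edge (a simple branch point), and finiteness of the number $N$ of bands of $\mu_1$. The remaining claims I would obtain by combining this analyticity with the variational \emph{inequalities}: on $\mathbb R$, $V_3$ is supported on $[-x^*(\alpha),x^*(\alpha)]$ and strictly positive in its interior, which should confine the gap of $\mu_3$ to a single symmetric interval $(-c_3,c_3)$ with $c_3<x^*(\alpha)$ and collapse it ($c_3=0$) when $\alpha\ge0$; on $i\mathbb R$, the explicit form \eqref{eq:sigma2} of $\sigma_2$ together with its gap $(-iy^*(\alpha),iy^*(\alpha))$ for $\alpha>0$ should give $S(\mu_2)=S(\sigma_2)$ and $S(\sigma_2-\mu_2)=i\mathbb R\setminus(-ic_2,ic_2)$ with $c_2>y^*(\alpha)$ when $\alpha\ge0$.

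The hard part will be step (iii): setting up the spectral curve rigorously — verifying that the resolvents really are branches of one degree-$4$ algebraic function with the correct sheet structure and the correct local behaviour at the degeneration points $x=0$, $x=\pm x^*(\alpha)$ of \eqref{eq:saddlepoint} and $z=\pm iy^*(\alpha)$ of \eqref{eq:saddlepointimaginary} — and then extracting from it that $S(\mu_1)$ is a \emph{finite} union of intervals, that the complements of $S(\mu_3)$ and of $S(\sigma_2-\mu_2)$ are \emph{single} symmetric intervals, and the sharp comparisons $c_3<x^*(\alpha)$ and $c_2>y^*(\alpha)$. These require a careful sign analysis of the variational inequalities together with the convexity and monotonicity properties of $V_1$, $V_3$ and of the constraint density \eqref{eq:sigma2}. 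Finally, the exceptional behaviour at the origin — where, for $\alpha<0$, $V_1$ has a downward corner (with $V_1'(0^\pm)=\mp\tau\sqrt{-\alpha}$) and the constraint $\sigma_2$ reaches $0$ — is what will make the hypothesis ``$0\notin S(\mu_1)$ or $0\notin S(\sigma_2-\mu_2)$'' in part (a) necessary, and that case will have to be treated separately.
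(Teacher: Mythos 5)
Your steps (i), (ii) and (iv) are broadly aligned with the paper: uniqueness does come from rewriting the quadratic part as a positive combination of logarithmic energies of zero-mass signed measures (the paper's \eqref{eq:energyfunctional2}), the variational conditions come from the standard perturbation argument, and part (d) follows from uniqueness and evenness. But two of your steps have genuine gaps. First, existence by the ``direct method with tightness'' does not go through as described: the minimizers $\mu_2$ and $\mu_3$ have \emph{unbounded} supports, the kernel $\log\frac{1}{|x-y|}$ is unbounded below, and along a minimizing sequence mass of $\nu_2$ and $\nu_3$ can leak to infinity in a way that is not prevented by the Nikishin attraction alone; lower semicontinuity of $E$ under weak-$*$ convergence fails precisely there. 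The paper instead minimizes over a restricted class with a priori decay bounds $d\nu_j/|dz|\le K|z|^{-p}$, $1<p<5/3$, at infinity (Section \ref{subsec:existence}) and then proves these bounds are not active for the minimizer; this hinges on explicit balayage estimates (Lemmas \ref{lem:nu3balayage} and \ref{lem:nu2balayage}) and on the numerical inequality $(1+\varepsilon_p)^2C_p<1$ with $C_p=\tfrac{1}{2\sin(p\pi/2)}$. It also requires versions of the principle of domination and of de la Vall\'ee Poussin's theorem valid for measures with unbounded support (Theorems \ref{thm:domination} and \ref{thm:delaValleePoussin}), which are not off-the-shelf.

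Second, your step (iii) is logically backwards. The spectral curve is built \emph{from} the knowledge that $S(\mu_1)$ is a finite union of intervals and that $S(\sigma_2-\mu_2)$ and $S(\mu_3)$ are complements of single symmetric intervals; one cannot use it to prove those facts without circularity (Remark \ref{rem:spectralcurve} does use an $\varepsilon$-regularized curve to relax the hypothesis in part (a), but only after the main structure is in place). The paper's actual route is scalar and potential-theoretic: for (a), $\mu_1$ is the equilibrium measure in the field $V_1-U^{\mu_2}$, which Lemma \ref{lem:convex}(a) shows is real analytic under the stated hypothesis, so the Deift--Kriecherbauer--McLaughlin theorem gives \eqref{eq:Smu1}--\eqref{eq:rho1}; for (b), the single gap $(-ic_2,ic_2)$ comes from iterated balayage together with the monotonicity of $d\sigma_2/|dz|$ (Lemma \ref{lem:convex}(b)); for (c), the single gap $(-c_3,c_3)$ with $c_3<x^*(\alpha)$ comes from the convexity of $x\mapsto V_3(\sqrt{x})-U^{\nu_2}(\sqrt{x})$ (Lemma \ref{lem:convex}(c), itself a nontrivial computation using the integrals \eqref{eq:integral1} and the positivity of the Taylor coefficients of $s_3$), combined with a balayage inclusion and a contradiction argument at $c_3=x^*(\alpha)$ exploiting the square-root blow-up of the balayage density. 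These same convexity and monotonicity facts, not the generic first-order conditions, are also what produce the \emph{strict} inequalities in \eqref{eq:varmu2} and \eqref{eq:varmu3}, which your step (ii) asserts without justification.
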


In part (a) of the theorem it is stated that $S(\mu_1)$ is a finite union
of intervals under the condition that $S(\mu_1)$ and $S(\sigma_2 - \mu_2)$
are disjoint. If this condition is not satisfied then we are in one
of the singular cases that will be discussed in Section \ref{subsec:cases}
below. However, the condition is not necessary as will be explained
in  Remark \ref{rem:spectralcurve} below. We chose to include the condition in Theorem \ref{theo:theo1}
since the focus of the present paper is on the regular cases.

The conditions \eqref{eq:varmu1}, \eqref{eq:varmu2}, and \eqref{eq:varmu3}
are the Euler-Lagrange variational conditions associated with the vector
equilibrium problem. We note the strict inequalities in \eqref{eq:varmu2} and
\eqref{eq:varmu3}. These are consequences of special properties of
the constraint $\sigma_2$ and the external field $V_3$
that are listed in parts (b) and (c) of the following lemma.

\begin{lemma} \label{lem:convex}
The following hold.
\begin{enumerate}
\item[\rm (a)] Let $\nu_2$ be a measure on $i \mathbb R$ such that
$\nu_2 \leq \sigma_2$. If $0 \not\in S(\sigma_2 - \nu_2)$
then $x \mapsto V_1(x) - U^{\nu_2}(x)$
is real analytic on $\mathbb R$.
\item[\rm (b)] The density $\frac{d \sigma_2}{|dz|}(iy)  =  \frac{\tau}{\pi} \Re s(iy)$
(see \eqref{eq:sigma2}) is an increasing function for $y > 0$.
\item[\rm (c)] Let $\alpha < 0$.
Let $\nu_2$ be a measure on $i \mathbb R$ of finite logarithmic energy such that
$\nu_2 \leq \sigma_2$. Then $x \mapsto V_3(\sqrt{x}) - U^{\nu_2}(\sqrt{x})$
is a decreasing and convex function on $(0, (x^*(\alpha))^2)$.
\end{enumerate}
\end{lemma}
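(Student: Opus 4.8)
My plan is to handle the three parts by reducing each to an explicit computation involving the cubic equation $s^3 + \alpha s = \tau x$ and the map $x \mapsto s_j(x)$ defined by it. For part (a), the point is that $V_1(x) - U^{\nu_2}(x)$ is the difference of two functions: $V(x)$ is a polynomial, hence entire, and the term $\min_s(W(s) - \tau x s) = W(s_1(x)) - \tau x s_1(x)$ is real analytic as long as the relevant branch $s_1(x)$ stays simple — which it does except possibly at $x = 0$ when $\alpha < 0$. The subtlety is the potential $U^{\nu_2}$: a priori it is only superharmonic on $\mathbb{R}$, and its restriction to the real axis could fail to be analytic where $\sigma_2 - \nu_2$ (the ``hole'') touches the real axis. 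But I claim the real-analyticity of the potential $U^{\nu_2}$ on $\mathbb{R} \setminus \{0\}$ is automatic because $\nu_2$ lives on $i\mathbb{R}$, so $U^{\nu_2}$ is harmonic off $i\mathbb{R}$ and in particular $C^\infty$ on $\mathbb{R} \setminus \{0\}$; and near $x = 0$ one uses the hypothesis $0 \notin S(\sigma_2 - \nu_2)$ together with the fact that $\sigma_2$ itself has a real-analytic potential there, so $U^{\nu_2} = U^{\sigma_2} - U^{\sigma_2 - \nu_2}$, where $\sigma_2 - \nu_2$ is supported away from $0$ and hence contributes a harmonic (real-analytic) piece near $0$, while $U^{\sigma_2}$ is real-analytic near $0$ by an explicit computation from \eqref{eq:sigma2}. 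The only remaining worry is whether $V_1$ has a genuine singularity at $x = 0$ when $\alpha < 0$; I expect the left and right branches $s_1(x)$ match in a $C^\infty$ (indeed analytic) way because of the sign convention, and the apparent branch point is removable — this matching at the origin is the one place I expect to need a careful local expansion.

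For part (b), I would work directly from the definition \eqref{eq:sigma2}: parametrize $z = iy$ with $y > 0$ and let $s = s(iy)$ be the root of $s^3 + \alpha s = i\tau y$ with largest real part. Writing $s = \xi + i\eta$ and separating real and imaginary parts gives two real equations; differentiating implicitly in $y$ and solving for $\frac{d\xi}{dy}$, I expect to get an expression whose sign is controlled by the discriminant-type quantity $3|s|^2 + \alpha = 3(\xi^2 - \eta^2) + \alpha$ (the modulus of $W''(s) = 3s^2 + \alpha$ along this branch), and monotonicity of $\xi = \Re s$ should follow from the fact that on the relevant branch $3|s|^2+\alpha$ and a companion factor keep a definite sign. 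This is essentially the observation that the conformal map $s \mapsto (s^3 + \alpha s)/\tau$ sends the branch in question monotonically onto $i(0,\infty)$, so its inverse has monotone real part; I would phrase it via the inverse function and the Cauchy–Riemann equations rather than grinding through the cubic formula.

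For part (c), set $\phi(x) := V_3(\sqrt{x}) - U^{\nu_2}(\sqrt{x})$ on $(0, (x^*(\alpha))^2)$ and compute $\phi'$ and $\phi''$. The function $V_3(\sqrt{x})$ is explicit via \eqref{eq:V3b} in terms of $s_2, s_3$, and one shows $V_3' < 0$ and — the key analytic input — that $V_3(\sqrt{x})$ is convex, by differentiating $W(s_j) - \tau \sqrt{x}\, s_j$ twice, using $W'(s_j) = \tau\sqrt{x}$ at each critical point, which makes the first derivatives collapse nicely. For the $U^{\nu_2}(\sqrt{x})$ term: since $\nu_2$ is symmetric about $0$ and supported on $i\mathbb{R}$, one has $U^{\nu_2}(\sqrt{x}) = \frac12 \int \log\frac{1}{|x + t^2|}\,d\tilde\nu_2(t)$ after the substitution pushing $\nu_2$ to the positive axis, and each kernel $x \mapsto -\log(x + t^2)$ is concave and increasing on $x > 0$; hence $-U^{\nu_2}(\sqrt{x})$ is convex and decreasing, and it has finite values on the interval because $\nu_2$ has finite logarithmic energy and the support of $\nu_2$ meets the relevant region only through $\sigma_2$, which is integrable there. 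Adding the two contributions gives that $\phi$ is decreasing and convex. The main obstacle across all three parts is the bookkeeping at the branch points of the cubic — matching branches at $x = 0$ in (a) and verifying that the ordering $W(s_1) \le W(s_2) \le W(s_3)$ is preserved so that the formula \eqref{eq:V3b} for $V_3$ is the right smooth object to differentiate in (c) — and I would isolate that as a short preliminary sublemma on the local structure of the solutions of $s^3 + \alpha s = \tau x$.
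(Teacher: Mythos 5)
There are two genuine gaps. In part (a) your plan is to show separately that $V_1$ and $U^{\nu_2}$ are each real analytic near $x=0$; in particular you assert that the branch point of $s_1$ at the origin is removable and that $U^{\sigma_2}$ is real analytic near $0$. Both assertions are false when $\alpha\le 0$. For $\alpha<0$ one has $s_1(0^+)=\sqrt{-\alpha}$ and $s_1(0^-)=-\sqrt{-\alpha}$, so $V_1'=V'-\tau s_1$ has a genuine corner at $0$; and since $S(\sigma_2)=i\mathbb R$ with density $\tfrac{\tau}{\pi}\sqrt{-\alpha}>0$ at the origin, $(U^{\sigma_2})'$ has a corner there too (both derivatives jump by $-2\tau\sqrt{-\alpha}$). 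Neither function is analytic, nor even $C^1$, at $0$; the lemma holds only because the two corners cancel in the difference $V_1-U^{\nu_2}$ (where $\nu_2=\sigma_2$ near $0$ by hypothesis). The paper establishes this cancellation quantitatively: a residue computation gives $\int_{i\mathbb R}(x-z)^{-2}d\sigma_2(z)=-\tau s_1'(x)$, and integrating twice yields $\int(\log|z-x|-\log|z|)\,d\sigma_2(z)=\theta_1(x)+Ax+B$, so that in $V_1-U^{\nu_2}=V-\theta_1-U^{\sigma_2}+U^{\sigma_2-\nu_2}+\mathrm{const}$ the singular terms $-\theta_1-U^{\sigma_2}$ collapse to an affine function, leaving $V$ plus the potential of $\sigma_2-\nu_2$, which is analytic at $0$. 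Your decomposition $U^{\nu_2}=U^{\sigma_2}-U^{\sigma_2-\nu_2}$ is the right start, but without an identity tying $U^{\sigma_2}$ to $\theta_1$ the argument cannot close.

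In part (c) a sign error hides the whole difficulty: with $z=it$ the kernel of $-U^{\nu_2}(\sqrt x)$ is $\tfrac12\log(x+t^2)$, which is \emph{concave} and \emph{increasing} in $x$, not convex and decreasing. Hence $-U^{\nu_2}(\sqrt x)$ is concave, and ``adding the two contributions'' cannot give convexity of $V_3(\sqrt x)-U^{\nu_2}(\sqrt x)$. The constraint $\nu_2\le\sigma_2$, which your outline never uses in the convexity estimate, is exactly what is needed: it gives $\frac{d^2}{dx^2}\bigl(-U^{\nu_2}(\sqrt x)\bigr)=-\tfrac12\int(x-z^2)^{-2}d\nu_2\ge-\tfrac12\int(x-z^2)^{-2}d\sigma_2$, and the right-hand side is evaluated in closed form via $\int\frac{d\sigma_2(z)}{x-z^2}=\frac{\tau s_1(\sqrt x)}{\sqrt x}$, i.e.\ it equals $\frac{d^2}{dx^2}\theta_1(\sqrt x)$. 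Combining this with $V_3=\theta_2-\theta_3$ and the identity $\theta_1+\theta_2+\theta_3=\tfrac12\alpha^2$ reduces everything to $-2\frac{d^2}{dx^2}\theta_3(\sqrt x)>0$, which follows from positivity of the Taylor coefficients of $-s_3$ at the origin. None of these steps appears in your plan. Part (b) is essentially on track (implicit differentiation of the cubic), but the sign is governed by $\Im(3s_{1,-}^2+\alpha)\ge0$ -- a consequence of $\Re s_{1,-}\ge0$ and $\Im s_{1,-}>0$ on $i\mathbb R^+$ -- not by $3|s|^2+\alpha$ keeping a definite sign.
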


Lemma \ref{lem:convex} is proved in Section \ref{sec:preliminaries}
and the proof of Theorem \ref{theo:theo1} is given in Section \ref{sec:proofTheorem1}.

A major role in what follows will be played by functions defined on
a compact four-sheeted Riemann surface that we will introduce in Section \ref{sec:RiemannSurface}.
The sheets are connected along the supports $S(\mu_1)$, $S(\sigma_2-\mu_2)$ and $S(\mu_3)$
of the minimizing measures for the vector equilibrium problem.
The main result of Section \ref{sec:RiemannSurface} is Proposition \ref{prop:globalmeromorphic}
which says that the function defined by
\[ V'(z) - \int \frac{d\mu_1(s)}{z-s}  \]
on the first sheet has an extension to a globally meromorphic function on the full Riemann surface.
This very special property is due to the special forms of the
external fields $V_1$ and $V_3$ and the constraint $\sigma_2$, which interact in
a very precise way.

\subsection{Classification into cases} \label{subsec:cases}

According to Theorem \ref{theo:theo1} the structure of the supports
is the same for $\alpha > 0$ as it was for $\alpha = 0$ in \cite{DuiKu2},
that is, $S(\mu_3) =  \mathbb R$ and $S(\sigma_2 - \mu_2) = i \mathbb R \setminus
(-i c_2, i c_2)$ for some $c_2 > 0$. The supports determine the
underlying Riemann surface, and so  the case $\alpha > 0$
is very similar to the case $\alpha = 0$. There are no phase
transitions in case $\alpha > 0$, except for the possible closing or opening
of gaps in the support of $\mu_1$. These type of transitions already occur in the one-matrix
model.

For $\alpha < 0$, however, certain new phenomena occur which come from
the fact that the external field $V_3$ on $\mu_3$ (defined in \eqref{eq:V3b})
has its maximum at $0$
and therefore tends to move $\mu_3$  away from $0$. As a result there
are cases where $S(\mu_3)$ is no longer the full real axis, but a
strict subset \eqref{eq:Smu3} with $c_3 > 0$.

In addition, it is also possible that $c_2 = 0$ in \eqref{eq:Smu2}
such that $S(\sigma_2 - \mu_2)$ is the full imaginary axis and the constraint
$\sigma_2$ is not active. These new phenomena already occur for
the simplest case
\[ V(x) = \frac{1}{2} x^2 \]
for which explicit calculations were done in \cite{DuGeKu} based on
the coefficients in the recurrence relations satisfied
by the biorthogonal polynomials.
These calculations lead to the phase diagram shown in Figure~\ref{fig:phasediagram}
which is taken from \cite{DuGeKu}. There are four phases corresponding
to the following four cases that are determined by the fact whether $0$
is in the support of the measures $\mu_1$, $\sigma_2 - \mu_2$, $\mu_3$ or not:
\begin{description}
\item[Case I:] $0 \in S(\mu_1)$,  $0 \not\in S(\sigma_2 - \mu_2)$,  and $0 \in S(\mu_3)$,
\item[Case II:] $0 \not\in S(\mu_1)$, $0 \not\in S(\sigma_2 - \mu_2)$,  and $0 \in S(\mu_3)$,
\item[Case III:] $0 \not\in S(\mu_1)$, $0 \in S(\sigma_2 - \mu_2)$,  and $0 \not\in S(\mu_3)$,
\item[Case IV:] $0 \in S(\mu_1)$, $0 \not\in S(\sigma_2 - \mu_2)$,  and $0 \not\in S(\mu_3)$.
\end{description}

The four cases correspond to regular behavior of the supports at $0$. There is another
regular situation (which does not occur for $V(x) = \frac{1}{2}x^2$),
namely
\begin{description}
\item[Case V:] $0 \not\in S(\mu_1)$, $0 \not\in S(\sigma_2 - \mu_2)$,  and $0 \not\in S(\mu_3)$.
\end{description}
The five cases determine the cut structure of the Riemann surface and
we will use this classification throughout the paper.

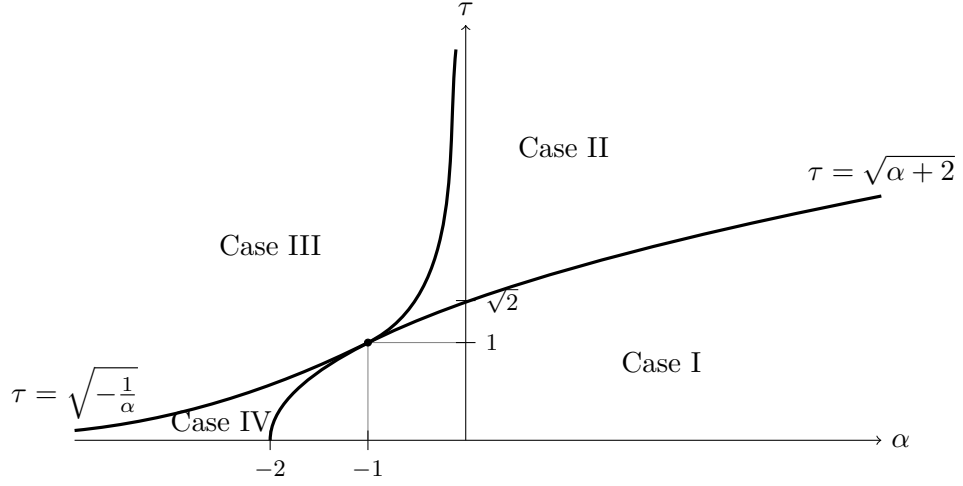
\begin{figure}[t]
\begin{center}
\begin{tikzpicture}[scale=1.3]
\draw[->](0,0)--(0,4.25) node[above]{$\tau$};
\draw[->](-4,0)--(4.25,0) node[right]{$\alpha$};
\draw[help lines] (-1,0)--(-1,1)--(0,1);
\draw[very thick,rotate around={-90:(-2,0)}] (-2,0) parabola (-4.5,6.25) node[above]{$\tau=\sqrt{\alpha+2}$};
\draw[very thick] (-1,1)..controls (0,1.5) and (-0.2,3).. (-0.1,4)
             (-1,1)..controls (-2,0.5) and (-3,0.2).. (-4,0.1) node[above]{$\tau=\sqrt{-\frac{1}{\alpha}}$};
\filldraw  (-1,1) circle (1pt);
\draw (0.1,1) node[font=\footnotesize,right]{$1$}--(-0.1,1);
\draw (-1,0.1)--(-1,-0.1) node[font=\footnotesize,below]{$-1$};
\draw (-2,0.1)--(-2,-0.1) node[font=\footnotesize,below]{$-2$};
\draw (0.1,1.43) node [font=\footnotesize,right]{$\sqrt 2$}--(-0.1,1.43);
\draw[very thick] (2,0.8) node[fill=white]{Case I}
                  (-2.5,0.2) node{Case IV} 
                  (-2,2) node[fill=white]{Case III}
                  (1,3) node[fill=white]{Case II}; 
\end{tikzpicture}
\end{center}
\caption{Phase diagram in the $\alpha$-$\tau$ plane for the case $V(x) = \frac{1}{2} x^2$:
the curves $\tau = \sqrt{\alpha+2}$ and
$\tau = \sqrt{-1/\alpha}$ separate the phase diagram
into four regions. The four regions correspond to the cases:
Case I: $N=1$, $c_2 > 0$ and $c_3 = 0$,
Case II: $N=2$, $c_2 > 0$ and $c_3 = 0$, 
Case III: $N=2$, $c_2 = 0$ and $c_3 > 0$, and
Case IV: $N=1$, $c_2 > 0$ and $c_3 > 0$.} 
\label{fig:phasediagram}
\end{figure}

Singular
behavior occurs when two consecutive supports intersect at $0$.
\begin{description}
\item[Singular supports I:] $0 \in S(\mu_1) \cap S(\sigma_2 - \mu_2)$, $0 \not\in S(\mu_3)$,
\item[Singular supports II:] $0 \not\in S(\mu_1)$, $0 \in  S(\sigma_2 - \mu_2) \cap S(\mu_3)$.
\end{description}
There is a multisingular case, when all three supports meet at $0$:
\begin{description}
\item[Singular supports III:] $0 \in  S(\mu_1) \cap S(\sigma_2 - \mu_2) \cap S(\mu_3)$.
\end{description}

Besides a singular cut structure for the Riemann surface, we can also have a singular
behavior of the first measure $\mu_1$. These singular cases also appear in
the usual equilibrium problem for the one-matrix model, see \cite{DKMVZ1}, and
they are as follows.
\begin{description}
\item[Singular interior point for $\mu_1$:]  The density of $\mu_1$ vanishes at an interior point of $S(\mu_1)$.
\item[Singular endpoint for $\mu_1$:] The density of $\mu_1$ vanishes to higher order than square
root at an endpoint of $S(\mu_1)$.
\item[Singular exterior point for $\mu_1$:] Equality holds in the variational inequality in \eqref{eq:varmu1}
at a point $x \in \mathbb R \setminus S(\mu_1)$.
\end{description}

The measures $\sigma_2 - \mu_2$ and $\mu_3$ cannot have singular endpoints, singular exterior points,
or singular interior points, except at $0$. Singular interior points  of these measures at $0$ are as follows.
\begin{description}
\item[Singular interior point for $\sigma_2 - \mu_2$:]  The density of $\sigma_2 -  \mu_2$ vanishes at $0 \in S(\sigma_2 -\mu_2)$.
\item[Singular interior point for $\mu_3$:] The density of $\mu_3$ vanishes at $0 \in S(\mu_3)$.
\end{description}

While there is great interest in the singular cases we restrict the analysis in
this paper to the regular cases, for which we give the following precise definition.
\begin{definition}
The triplet $(V, W, \tau)$ is regular if the supports of the minimizers from
the vector equilibrium problem satisfy
\begin{equation} \label{eq:regularsupports}
    S(\mu_1) \cap S(\sigma_2 - \mu_2) = \emptyset \qquad \text{and} \qquad
    S(\mu_3) \cap S(\sigma_2 - \mu_2) = \emptyset \end{equation}
and if in addition, the measure $\mu_1$ has no singular interior points, singular endpoints, or
singular exterior points, and the measures $\sigma_2-\mu_2$ and $\mu_3$ do not have a singular
interior point at $0$.
\end{definition}
The condition \eqref{eq:regularsupports} may be reformulated as
\[ c_2 = 0 \quad \Longrightarrow \quad 0 \not\in S(\mu_1) \cup S(\mu_3). \]

\subsection{Limiting mean eigenvalue distribution}

The measure $\mu_1$ is the limiting mean eigenvalue distribution of the matrix $M_1$
in the two-matrix model as $n \to \infty$. In this paper we prove this only for regular cases.
To prove it for singular cases, one would have to analyze the nature of the singular behavior
which is beyond the scope of what we want to do in this paper.

\begin{theorem} \label{theo:theo2}
Suppose $(V,W, \tau)$ is regular.
Let $\mu_1$ be the first component of the minimizer $(\mu_1, \mu_2, \mu_3)$
of the vector equilibrium problem.
Then $\mu_1$ is the limiting mean distribution of the eigenvalues of $M_1$
as $n \to \infty$ with $n \equiv 0 (\mod 3)$.
\end{theorem}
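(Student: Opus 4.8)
The plan is to prove the theorem by a Deift--Zhou steepest descent analysis of the $4\times 4$ Riemann--Hilbert (RH) problem for the biorthogonal polynomials, in the spirit of \cite{DuiKu2} but incorporating the new ingredients --- the external field $V_3$, the nontrivial cut structures $c_2\ge 0$, $c_3\ge 0$, and the five cases of Section~\ref{subsec:cases}. Throughout one works with $n\equiv 0\ (\mathrm{mod}\ 3)$, so that $2n/3$ is an integer and the sizes of the blocks in the RH problem (which involve $n$, $2n/3$, $n/3$) are all integers; this is exactly why the congruence hypothesis appears. The first step is to recall the RH characterization: the kernel $K_{11}^{(n)}$ is expressed through a $4\times 4$ matrix-valued function $Y$ solving a local RH problem of the type in \cite{KuiMcL}, whose solution encodes $p_{n,n}$, $Q_{n,n}$, and the auxiliary transformed functions. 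The correlation kernel $K_{11}^{(n)}(x,y)$ is then written as a bilinear expression in the entries of $Y_+(x)$ and $Y_+(y)$, so that control of $Y$ as $n\to\infty$ gives control of the kernel, and in particular of the one-point function $\tfrac1n K_{11}^{(n)}(x,x)$, whose weak limit is the density of $\mu_1$.

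Next I would carry out the sequence of transformations $Y\mapsto X\mapsto U\mapsto T\mapsto S\mapsto R$. The transformation $Y\mapsto X$ opens lenses around the relevant contours and symmetrizes the problem using the three measures; then $X\mapsto U$ uses the ``global'' $\lambda$-functions built from the minimizer $(\mu_1,\mu_2,\mu_3)$ of the vector equilibrium problem. Here the key structural input is Proposition~\ref{prop:globalmeromorphic}: the function $V'(z)-\int\frac{d\mu_1(s)}{z-s}$ extends to a globally meromorphic function on the four-sheeted Riemann surface of Section~\ref{sec:RiemannSurface}, whose branch points sit precisely at the endpoints of $S(\mu_1)$, $S(\sigma_2-\mu_2)$, $S(\mu_3)$ (and possibly at $0$). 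Integrating this meromorphic function sheet by sheet produces the $\lambda_j$-functions whose boundary values provide the correct jump structure and, via the Euler--Lagrange conditions \eqref{eq:varmu1}, \eqref{eq:varmu2}, \eqref{eq:varmu3}, make the jump matrices of $U$ either constant (on the bands) or exponentially close to the identity (on the gaps and lenses), using the strict inequalities in the variational conditions together with Lemma~\ref{lem:convex}. The strict inequalities in \eqref{eq:varmu2} and \eqref{eq:varmu3} --- which hold for the \emph{regular} cases by the definition of regularity and by parts (b), (c) of Lemma~\ref{lem:convex} --- are what guarantees the exponential decay on the imaginary-axis gap $(-ic_2,ic_2)$ and the real gap $(-c_3,c_3)$. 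After $U$, the step $U\mapsto T$ removes the constant oscillatory jumps on the bands; one then constructs the global parametrix $N$ from the Riemann surface using theta functions (as in \cite{DuiKu2,Mo2}), together with local Airy parametrices at the soft edges (endpoints of $S(\mu_1)$ where $\rho_1$ vanishes like a square root, and the endpoints $\pm ic_2$, $\pm c_3$ where $\rho_2$, $\rho_3$ vanish like square roots by Theorem~\ref{theo:theo1}(b),(c)). Matching $T$ to $N$ and to the local parametrices in a neighborhood of each endpoint gives the final transformation $T\mapsto R$ with $R$ solving a small-norm RH problem, $R=I+O(1/n)$ uniformly.

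Finally, unraveling the transformations expresses $\tfrac1n K_{11}^{(n)}(x,x)$ in terms of $R$, $N$, the local parametrices and the $\lambda$-functions; on the bulk of each band of $S(\mu_1)$ one reads off the density $\rho_1(x)=\tfrac1\pi h_j(x)\sqrt{(b_j-x)(x-a_j)}$ from the boundary values of $N$, while outside $S(\mu_1)$ the contribution is exponentially small because of the strict variational inequality in \eqref{eq:varmu1}. Integrating against a test function and letting $n\to\infty$ yields that the mean eigenvalue measure of $M_1$ converges weakly to $\mu_1$. I expect the main obstacle to be the construction of the global parametrix $N$ and the verification that the RH analysis closes uniformly across \emph{all} five regular cases simultaneously --- the Riemann surface changes its genus and cut structure from case to case (e.g.\ whether $0$ is a branch point, whether $c_2=0$ forces two sheets to merge along the full imaginary axis), so the parametrix and the local analysis at $0$ must be set up case by case; additionally, handling the point $z=0$, where several of the constructions are potentially singular and where Proposition~\ref{prop:globalmeromorphic} may have a pole, requires a dedicated local parametrix (or a separate argument showing no local parametrix is needed there for regular triplets). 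The remaining steps --- opening of lenses, the small-norm conclusion, and the extraction of the limiting density --- are by now standard once the global structure is in place.
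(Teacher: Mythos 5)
Your overall strategy coincides with the paper's: the same chain of transformations, the same use of the $\lambda$-functions built from the minimizer, the same reliance on the strict variational inequalities and the regularity assumption for exponential decay of the off-band jumps, Airy parametrices at the square-root endpoints, and a small-norm argument for $R$. Two points deserve comment.

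First, there is a concrete gap at the very start. You describe the first transformation $Y\mapsto X$ as ``opening lenses and symmetrizing the problem using the three measures,'' but that cannot be the first step: the weights $w_{j,n}(x)=e^{-nV(x)}\int y^j e^{-n(W(y)-\tau xy)}\,dy$ are themselves integrals, not exponentials of explicit phase functions, so no $g$-function transformation can act on them directly. The paper's first transformation instead builds a $3\times3$ matrix $P_n$ out of Pearcey-type integrals $p_{j,n}$ (solutions of the third-order ODE $p'''+n^2\tau^2\alpha p'-n^3\tau^4 xp=0$), whose large-$z$ saddle-point asymptotics introduce the functions $\theta_j(z)=-W(s_j(z))+\tau z s_j(z)$. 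Only after conjugating by $P_n^{-t}e^{n\Theta}$ do the jump entries take the form $e^{-n(V-\theta_1)}=e^{-nV_1}$, $e^{-n(\theta_2-\theta_3)}=e^{-nV_3}$, $e^{n\psi_2}$, etc., which is precisely how the external fields $V_1$, $V_3$ and the constraint $\sigma_2$ of the vector equilibrium problem enter the RH analysis; the $\lambda_j$ you invoke are defined as $g$-functions \emph{plus} $\theta_j$'s. Without this step the connection between the RH problem and the equilibrium problem is not established. Lens opening comes later ($U\mapsto T\mapsto S$), and note that the lenses around $S(\sigma_2-\mu_2)$ and $S(\mu_3)$ are unbounded, which requires the separate checks in Lemmas~\ref{lem:ineqlensmu2} and \ref{lem:ineqlensmu3} and a verification that the asymptotic condition at infinity survives.

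Second, for the global parametrix you propose theta functions; the paper instead constructs $M$ from meromorphic differentials of the third kind on the $M$-curve $\mathcal R$ (following \cite{KuiMo}), using the bijectivity of the period map $\Psi$ to hit the prescribed $n$-dependent $B$-periods $-2n\pi i\alpha_k$ and the non-speciality of divisors $\sum P_j$ with $P_j\in\Sigma_j$ to build the remaining rows via Riemann--Roch. Either construction can work, but the case-by-case bookkeeping (genus $N-1$ versus $N$, the exceptional cycle in Case III) has to be done with care in both. Your worry about $z=0$ resolves in the way you anticipate: in the regular cases the supports stay away from $0$ in the required combinations, the jump matrices on the exceptional intervals $(-\gamma_3,\gamma_3)$, $(-i\gamma_2,i\gamma_2)$, $(-i\widehat\gamma_2,i\widehat\gamma_2)$ have exponentially small off-diagonal entries, and no local parametrix at the origin is needed.
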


We recall that the eigenvalues of $M_1$ after averaging
over $M_2$ are a determinantal point process on $\mathbb R$ with
a kernel $K^{(n)}_{11}$ as given in \eqref{eq:defK11}.
The statement of Theorem \ref{theo:theo2} comes down to the statement
that
\begin{equation} \label{eq:kerneltendstorho1}
    \lim_{n \to \infty} \frac{1}{n} K_{11}^{(n)}(x,x) = \rho_1(x), \qquad x \in \mathbb R
    \end{equation}
where $\rho_1$ is the density of the measure $\mu_1$.

The restriction to $n \equiv 0 (\mod 3)$ is for convenience only,
since it simplifies the expressions in the steepest descent analysis of the
Riemann-Hilbert problem that we are going to do.

The existence of the limiting mean eigenvalue distribution was proved
by Guionnet \cite{Gui} in much more general context. She characterized
the minimizer by a completely different variational
problem, and also connects it with a large deviation principle. It would
be very interesting to see the connection with our vector equilibrium problem.
A related question would be to ask if it is possible to establish a large deviation principle with
the energy functional \eqref{eq:energy} as a good rate function.

We are going to prove \eqref{eq:kerneltendstorho1}
by applying the Deift-Zhou steepest descent analysis to the
Riemann-Hilbert problem \eqref{eq:RHforY} below. Without too much extra effort
we can also obtain the usual
universal local scaling limits that are typical for unitary
random matrix ensembles.
Namely, if  $\rho_1(x^*) > 0$ then the scaling limit is the sine kernel
\[ \lim_{n \to \infty} \frac{1}{n \rho_1(x^*)} K_{11}^{(n)} \left( x^* + \frac{x}{n \rho_1(x^*)},
    x^* + \frac{y}{n \rho_1(x^*)} \right) = \frac{\sin \pi(x-y)}{\pi(x-y)}, \]
while if $x^* \in \{a_1,b_1, \ldots, a_N, b_N \}$ is an end point of $S(\mu_1)$ then
the scaling is the Airy kernel, i.e., for some $c > 0$, we have
\[ \lim_{n \to \infty}
    \frac{1}{(cn)^{2/3}} K_{11}^{(n)} \left( x^* \pm \frac{x}{(cn)^{2/3}}, x^* \pm \frac{y}{(cn)^{2/3}} \right)
    = \frac{\Ai(x) \Ai'(y) - \Ai'(x) \Ai(y)}{x-y} \]
with $+$ if $x^* = b_j$ and $-$ if $x^* = a_j$ for some $j=1,\ldots, N$. Recall that we are
in the regular case so that the density of
$\rho_1$ vanishes as a square root at $x^*$.
The proofs of these local scaling limits will be omitted here, as they
are very similar to the proofs in \cite{DuiKu2}.

\subsection{About the proof of Theorem \ref{theo:theo2}}

The first step in the proof of Theorem \ref{theo:theo2} is the setup of the Riemann-Hilbert (RH)
problem for biorthogonal polynomial $p_{n,n}$ and its connection with the correlation
kernel $K^{(n)}_{11}$. We use the RH problem of \cite{KuiMcL} which we briefly recall.

The RH problem of \cite{KuiMcL} is based on the observation that the polynomial
$p_{n,n}$ that is characterized by the biorthogonality conditions \eqref{eq:biorthogonality}
can alternatively be characterized by the conditions (we assume $W$ is quartic and $n$ is a multiple of three)
\begin{equation} \label{eq:multipleorthogonal}
    \int_{-\infty}^{\infty} p_{n,n}(x) x^k w_{j,n}(x) dx = 0,
    \qquad k = 0, \ldots, n/3-1, \ j =0,1,2,
    \end{equation}
which involves three varying (i.e., $n$-dependent) weight functions
\begin{equation} \label{eq:weightswj}
    w_{j,n}(x)= {e}^{-n V(x)} \int_{-\infty}^{\infty} y^j {e}^{-n(W(y)-\tau xy)} dy, \qquad j=0,1,2.
\end{equation}
The conditions \eqref{eq:multipleorthogonal} are known as multiple orthogonality conditions
of type II, see e.g.\ \cite{Apt1, Kui2, NikSor, VAs}.

A RH problem for multiple orthogonal polynomials was given by Van Assche, Geronimo and
Kuijlaars in \cite{VAGeKu} as an extension
of the well-known RH problem for orthogonal polynomials of
Fokas, Its, and Kitaev \cite{FoItKi}.
For the multiple orthogonality  \eqref{eq:multipleorthogonal} the RH problem is
of size $4 \times 4$ and it asks for
$Y : \mathbb C \setminus \mathbb R \to \mathbb C^{4 \times 4}$ satisfying
\begin{equation}\label{eq:RHforY}
\left\{\begin{array}{ll}\multicolumn{2}{l}{Y  \textrm{ is analytic
in } \mathbb C\setminus \mathbb R,} \\
Y_+(x)=Y_-(x) \begin{pmatrix}   1& w_{0,n}(x)& w_{1,n}(x)& w_{2,n}(x)\\
                                0&1&0&0\\
                                0&0&1&0\\
                                0&0&0&1
\end{pmatrix},& x\in \mathbb R,\\
Y(z)=(I+\OO(1/z))\begin{pmatrix} z^n & 0&0&0\\
                                0&z^{-n/3}&0&0\\
                                0&0&z^{-n/3}&0\\
                                0&0&0&z^{-n/3}\\
\end{pmatrix}, & z\to \infty.\end{array}\right.
\end{equation}

The RH problem has a unique solution. The first row of $Y$ is given in terms
of the biorthogonal polynomial $p_{n,n}$ as follows
\[ Y_{1,1}(z) = p_{n,n}(z), \qquad
        Y_{1,j+2}(z) = \frac{1}{2\pi i} \int_{-\infty}^{\infty} \frac{p_{n,n}(x) w_{j,n}(x)}{x-z} dx, \quad j=0,1,2, \]
and the other rows are built out of certain polynomials of degree $n-1$ in a similar way,
see \cite{KuiMcL, VAGeKu} for details.

Multiple orthogonal polynomials have a Christoffel-Darboux formula \cite{DaeKui}
which implies that the correlation kernel \eqref{eq:defK11} can be rewritten in the integrable form
\[ \frac{f_1(x) g_1(y) + f_2(x) g_2(y) + f_3(x) g_3(y) + f_4(x) g_4(y)}{x-y} \]
for certain functions $f_j, g_j$, for $j=1,\ldots, 4$, and in fact it has the
following representation
\begin{multline} \label{eq:kernelintermsofY}
    K^{(n)}_{11}(x,y) \\  =
    \frac{1}{2\pi i (x-y)}\begin{pmatrix} 0 & w_{0,n}(y)& w_{1,n}(y) & w_{2,n}(y)
    \end{pmatrix} Y_+^{-1}(y)  Y_+(x) \begin{pmatrix}
    1\\ 0 \\ 0\\ 0 \end{pmatrix},
\end{multline}
for $x, y \in \mathbb R$,
in terms of the solution $Y$ of the RH problem \eqref{eq:RHforY}, see \cite{DaeKui}.

The proof  of Theorem \ref{theo:theo2} is an involved and lengthy steepest descent analysis
of the RH problem \ref{eq:RHforY} in which the vector equilibrium problem is
used in an essential way. This is similar to \cite{DuiKu2} which deals with the case $\alpha = 0$.
Certain complications arise because the formulas for the external field and the constraint
in the vector equilibrium problem are less explicit as in the case $\alpha = 0$.
This not only complicates the analysis of the vector equilibrium problem
in Sections \ref{sec:preliminaries} and \ref{sec:proofTheorem1}, but it will
continue to play a role via the functions $\theta_j$ defined in
Section \ref{subsec:saddlevalues}
and  $\lambda_j$ defined in Section \ref{subsec:lambda} throughout the paper.

We also note that the analysis in \cite{DuiKu2} was restricted to the one-cut
case, which leads to an underlying Riemann surface of genus $0$. This
restriction was removed in \cite{Mo2}. The problem in the higher genus
case is in the construction of the global parametrix.
In Section \ref{sec:global} we give a self-contained account that is
based on the ideas developed in \cite{Mo2} and \cite{KuiMo}, which we think is
of independent interest.

We also wish to stress that in Case IV the Riemann surface
always has genus $\geq 1$, even if $S(\mu_1)$ consists of one interval, see \eqref{eq:genusR} below.
This phenomenon did not appear for $\alpha = 0$.

\subsection{Singular cases}

Although we do not treat the singular cases in this paper
we wish to make some comments about the possible critical behaviors that
we see in the two-matrix model with the quartic potential $W(y) = \frac{1}{4} y^4 + \frac{\alpha}{2} y^2$.

As already discussed in Section \ref{subsec:cases} the singular behavior is
associated with either a singular behavior in the measures $\mu_1$, $\sigma_2 -\mu_2$, or $\mu_3$,
or a singular behavior in the supports. The singular behavior in the measure $\mu_1$ also
appears in the one-matrix model that is described by orthogonal polynomials.
It is known that the critical behavior at a
singular interior point where the density vanishes quadratically is described by the Hastings-McLeod solution of the
Painlev\'e II equation, see \cite{BleIts2, ClaKui, Shc}. This Painlev\'e II transition
is the canonical mechanism by which a gap opens up in the support in the one-matrix model.

The critical behavior at a singular endpoint where the density vanishes with exponent $5/2$
is described by a special
solution of the Painlev\'e I$_2$ equation (the second member of the
Painlev\'e I hierarchy), see \cite{ClaVan}. The critical behavior at
a singular exterior point is described by Hermite functions \cite{BerLee, Cla, Mo1}
and this describes an opening of a new band of eigenvalues (birth of a cut).

We see these critical behaviors also in the two-matrix model with
an even quartic $W$. In particular, the opening of a gap at $0$ in the support
of $\mu_1$ is a Painlev\'e II transition. In our classification of regular cases,
this is a transition from Case I to Case II, or a transition from Case IV to Case~V.
In the phase diagram  of Figure \ref{fig:phasediagram} for $V(y) = \frac{1}{2} y^2$,
this transition is on the part of the parabola $\tau = \sqrt{\alpha + 2}$, with $\alpha > -1$.

A Painlev\'e II transition also appears when either $\sigma_2 - \mu_2$ or $\mu_3$ has
a density that vanishes quadratically at $0$. Then $0$ is a singular interior point
and again a gap can open but now in the support of the measures ``that are on the other sheets''
and have no direct probabilistic meaning. If the density of $\sigma_2 - \mu_2$ vanishes at $0$ then the transition
is from Case III to Case~V. If the density of $\mu_3$ vanishes at $0$ then the
transition is from Case I to Case IV or from Case II to Case~V.
In the phase diagram of Figure \ref{fig:phasediagram} the transition from Case I to
Case IV takes place on the part of the parabola $\tau = \sqrt{\alpha + 2}$, with $-2 < \alpha < -1$.

The cases of singular supports represent critical phenomena that do not
appear in the one-matrix model. What we called Singular Supports I in Section \ref{subsec:cases}
corresponds to a transition from Case III to Case IV. This is a transition when the gap around $0$
in the support of $S(\mu_1)$ closes and simultaneously the gap in the support of
$S(\sigma_2 - \mu_2)$ opens up (or vice versa). On the level of the Riemann surface
it means that the two branch points on the real line that are the endpoints of
the gap in $S(\mu_1)$ come together at $0$, and then split again to become a pair
of complex conjugate branch points. These branch points are then on the imaginary axis
and are the endpoints $\pm ic_2$ of $S(\sigma_2-\mu_2)$. A transition of this type
does not change the genus of the Riemann surface.

This type of transition was observed first in the context of random matrices with external source
and non-intersecting Brownian motions, see \cite{AdlvMo2, BleKu3, BreHik, TraWid},i
where it was described in terms of  Pearcey integrals.  The Pearcey transition is a second
mechanism by which a gap in the support may open up (or close). As it involves
three sheets of the Riemann surface it cannot take place in the one-matrix model
which is connected to a two-sheeted Riemann surface.

The case of Singular Supports II gives a transition from Case II to Case III.
This is a situation where the gap in the support of $\sigma_2 - \mu_2$ closes and simultaneously
the gap in $S(\mu_3)$ opens. This also typically corresponds to a Pearcey transition, but it does not involve the
first sheet of the Riemann surface, which means that this transition is not
visible in the eigenvalue distribution of the random matrix.
In the phase diagram of Figure \ref{fig:phasediagram} the Pearcey transitions
are on the curve $\tau = \sqrt{-1/\alpha}$, $\alpha \neq -1$.

The case of Singular Supports III represents a new critical phenomenon. Here the supports
of all three measures $\mu_1$, $\sigma_2 - \mu_2$ and $\mu_3$ are closed at $0$.
In Figure \ref{fig:phasediagram} this is the case at the multi-critical point $\alpha = -1$
and $\tau=1$ where the Painlev\'e transitions and Pearcey transitions come together.
One may approaches the multi-critical point from the  Case III region, where there
is a gap around $0$ in the supports of both $\mu_1$ and $\mu_3$, while the support of
$\sigma_2 - \mu_2$ is the full imaginary axis. At the multi-critical point the supports
of $\mu_1$ and $\mu_3$ close simultaneously, while also the support of $\sigma_2 - \mu_2$ opens
up, which results in a transition from Case III to Case I.

We conjecture that the case of Singular Supports III is of similar nature as was studied
very recently \cite{AdFevM, DeKuZh} for a critcal case of non-intersecting Brownian motions (or random walks)
with two starting and two ending
points. By fine-tuning the starting and ending points one may create a situation where two
groups of non-intersecting Brownian motions fill out two ellipses which are tangent to each
other at one point. Our conjecture is that the local eigenvalue behavior around $0$
in the multi-critical case is the same as that for the non-intersecting Brownian
motions at the point of tangency.
The conjecture is supported by preliminary calculations that suggest that
the local parametrix of \cite{DeKuZh} can also be used if one tries to extend the
RH analysis of the present paper to the multi-critical situation.

\section{Preliminaries and the proof of Lemma \ref{lem:convex}}
\label{sec:preliminaries}

Before coming to the proof of Theorem \ref{theo:theo1} we study the
equation \eqref{eq:saddlepoint} in more detail.
This equation will also play a role in the proof of Theorem \ref{theo:theo2},
where in the first step of the steepest descent analysis,
we will use functions defined by integrals
\begin{equation} \label{eq:integrals}
    \int_{\Gamma} {e}^{-n(W(s) - \tau z s)} ds, \qquad W(s) = \frac{1}{4}s^4 + \frac{\alpha}{2} s^2
\end{equation}
where $\Gamma$ is an unbounded contour in the complex $z$-plane.

\subsection{Saddle point equation and functions $s_j$} \label{subsec:saddlepoint}

The large $n$ asymptotics of the integrals \eqref{eq:integrals} is determined by the
solutions of the saddle point equation $W'(s) - \tau z = 0$, that is
\begin{equation} \label{eq:saddlepoint2}
    s^3 + \alpha s = \tau z.
    \end{equation}
In \eqref{eq:saddlepoint} we considered this equation for $z = x \in \mathbb R$.
We defined a solution $s_1(x)$ for every $x \in \mathbb R$, and for $\alpha < 0$ and $ |x| < x^*(\alpha)$
we also defined $s_2(x)$ and $s_3(x)$.

We define solution $s_1(z)$, $s_2(z)$ and $s_3(z)$ of \eqref{eq:saddlepoint2}
for complex $z$ as follows. We distinguish between the two cases $\alpha > 0$ and $\alpha < 0$.

\paragraph{Case $\alpha > 0$.}
In case $\alpha > 0$ the saddle point equation \eqref{eq:saddlepoint2}
has branch points $\pm i y^*(\alpha) \in i \mathbb R$ where $y^*(\alpha)$ is given
by \eqref{eq:ystar}. The Riemann surface $\mathcal S$
for the equation \eqref{eq:saddlepoint2} then has three sheets that we choose
as follows
\begin{equation} \label{eq:sheetsS1}
\left\{
\begin{aligned}
    \mathcal S_1 & = \mathbb C \setminus \left( (-i \infty, -i
    y^*(\alpha)]
        \cup [i y^*(\alpha), i \infty) \right), \\
    \mathcal S_2 & = \mathbb C \setminus \left(\mathbb R \cup (-i \infty, -i
    y^*(\alpha)]
        \cup [i y^*(\alpha), i \infty)  \right), \\
    \mathcal S_3 & = \mathbb C \setminus \mathbb R.
    \end{aligned}
    \right.
\end{equation}

We already defined $s_1(x)$ for $x \in \mathbb R$ as the unique real saddle point.
This function has an analytic continuation to $\mathcal S_1$ that we also
denote by $s_1$. Then $s_2$ and $s_3$ are defined by analytic continuation onto $\mathcal S_2$
and $\mathcal S_3$, respectively.

\paragraph{Case $\alpha < 0$.}
In case $\alpha < 0$ the saddle point equation \eqref{eq:saddlepoint2}
has two real branch points $\pm x^*(\alpha)$ with $x^*(\alpha)$ given
by \eqref{eq:xstar}. The three sheets of the Riemann surface $\mathcal S$
for the equation \eqref{eq:saddlepoint2}  are now chosen as follows
\begin{equation} \label{eq:sheetsS2}
\left\{
\begin{aligned}
    \mathcal S_1 & = \mathbb C \setminus i \mathbb R,  \\
    \mathcal S_2 & = \mathbb C \setminus \left((-\infty, -x^*(\alpha)]
        \cup [x^*(\alpha), \infty) \cup i \mathbb R \right), \\
    \mathcal S_3 & = \mathbb C \setminus \left( (-\infty, - x^*(\alpha)]
        \cup [x^*(\alpha), \infty) \right).
    \end{aligned}
    \right.
\end{equation}

In case $\alpha < 0$, we have that $s_1(x)$ is defined for $x \in \mathbb R \setminus \{0\}$.
It is the real saddle point for which $W(s) - \tau x s$ is minimal.
The function $s_1$ has an analytic continuation to $\mathcal S_1$ that we also
denote by $s_1$. Then $s_2$ and $s_3$ are defined by analytic continuation onto $\mathcal S_2$
and $\mathcal S_3$, respectively. It is a straightforward check that
for $x \in (- x^*(\alpha), x^*(\alpha))$
this definition of $s_2(x)$ and $s_3(x)$ coincides with the one earlier given.

\begin{lemma} \label{lem:Res1}
The functions $s_j$ are have the symmetries
\begin{equation} \label{eq:symmetries}
    s_j(-z) = -s_j(z), \qquad s_j(\overline{z}) = \overline{s_j(z)}, \qquad j=1,2,3.
    \end{equation}
In addition we have that
\begin{equation} \label{eq:Res1}
    \Re s_1(z) > 0  \qquad \text{if } \Re z > 0.
    \end{equation}
\end{lemma}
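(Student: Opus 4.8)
The plan is to treat the two symmetry relations and the sign statement separately, and for each to rely only on the defining cubic equation \eqref{eq:saddlepoint2} together with the way the sheets $\mathcal S_j$ were chosen in \eqref{eq:sheetsS1} and \eqref{eq:sheetsS2}. First I would establish the symmetry $s_j(-z) = -s_j(z)$: if $s$ solves $s^3 + \alpha s = \tau z$, then $-s$ solves $(-s)^3 + \alpha(-s) = -\tau z = \tau(-z)$, so $z \mapsto -s_1(-z)$ is an analytic solution of \eqref{eq:saddlepoint2} on the appropriate domain; since the domains $\mathcal S_1$ in both \eqref{eq:sheetsS1} and \eqref{eq:sheetsS2} are symmetric under $z \mapsto -z$, and since on the real axis $-s_1(-x)$ is the real saddle at which $W(s)-\tau x s$ is minimal (using that $W$ is even, so $W(-s) - \tau(-x)(-s) = W(s) - \tau x s$), this function coincides with $s_1$ by the uniqueness of the analytic continuation. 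The same argument on $\mathcal S_2$ and $\mathcal S_3$ (which are likewise symmetric under $z \mapsto -z$) gives the relation for $s_2$ and $s_3$. The relation $s_j(\overline z) = \overline{s_j(z)}$ is handled identically: $\overline{s_1(\overline z)}$ solves \eqref{eq:saddlepoint2} because $\alpha, \tau$ are real, the domains are symmetric under conjugation, and on $\mathbb R$ the function $s_1$ is real-valued by its very definition as a real saddle point; hence $\overline{s_1(\overline z)} = s_1(z)$, and the continuation to the other sheets propagates the identity.

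For the sign statement \eqref{eq:Res1} I would argue that $\Re s_1$ has constant sign on the right half-plane and then pin down that sign by a single evaluation. The key point is that $s_1$ is analytic and non-vanishing on $\{\Re z > 0\}$ (it is contained in both $\mathcal S_1$'s), so $\Re s_1$ is harmonic there and cannot vanish in the interior unless it vanishes identically; to exclude the latter I would check the boundary behaviour on the imaginary axis and at infinity. On the imaginary axis: by the symmetry $s_1(\overline z) = \overline{s_1(z)}$ together with $s_1(-z) = -s_1(z)$ we get $s_1(-\overline z) = -\overline{s_1(z)}$, and since $-\overline z = z$ for $z \in i\mathbb R$ this forces $\Re s_1(z) = 0$ on (the relevant part of) $i\mathbb R$ — consistent with the branch cuts placed there in the $\alpha>0$ case, and in the $\alpha<0$ case $s_1$ actually extends across $\mathbb R$ but the argument only needs the boundary values. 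At infinity, $s_1(z) \sim (\tau z)^{1/3}$ with the principal branch (this is the root that is real and of the same sign as $z$ on the positive real axis), whose real part is positive for $\Re z>0$ and large. Finally, evaluating at a convenient positive real point, say $z = x > 0$, gives $s_1(x) > 0$ directly from the definition (the real saddle with the sign of $x$). Combining: $\Re s_1$ is harmonic on $\{\Re z>0\}$, is $\equiv 0$ on the boundary piece $i\mathbb R$, is positive near $\infty$, and is positive at one interior point, so by the maximum principle for harmonic functions it is strictly positive throughout $\{\Re z > 0\}$.

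The main obstacle I anticipate is making the behaviour at infinity and on the imaginary axis rigorous enough to invoke the maximum principle cleanly — in particular, verifying that $s_1$ genuinely extends continuously up to $i\mathbb R$ from the right (so that $\Re s_1 = 0$ there as a boundary value, using the symmetries above and the fact that on the cut the three roots either stay on $i\mathbb R$ or split into a symmetric pair, and $s_1$ is the one with largest real part, hence on the cut has $\Re s_1 \ge 0$, and one must rule out a jump). One clean way around this is to work instead with the bounded analytic function $1/s_1$ or with $\arg s_1$ and use the argument principle / a winding-number count for the cubic, or simply to invoke that $\Re s_1$ is harmonic and bounded on a slit right half-plane and appeal to the version of the maximum principle that allows finitely many boundary slits provided the function is bounded near them; the boundedness is automatic since $s_1$ is bounded on compact subsets and $O(z^{1/3})$ at infinity. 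Either route reduces the whole statement to the single sign check at $z = x > 0$, which is immediate.
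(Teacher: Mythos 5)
The symmetry statements are handled essentially as the paper intends (the paper simply calls them ``clear''), and that part of your argument is fine. The problem is in your proof of \eqref{eq:Res1}. Your maximum-principle strategy requires controlling the boundary values of $\Re s_1$ on all of $i\mathbb R$, and your claim that $\Re s_1\equiv 0$ there is false precisely on the branch cuts of $s_1$: for $\alpha>0$ these are $(-i\infty,-iy^*]\cup[iy^*,i\infty)$, and for $\alpha<0$ they are all of $i\mathbb R$. On the cuts the three roots are one purely imaginary root together with a pair $\{s,-\overline{s}\}$ with $\Re s>0$, so $\Re s_{1,-}$ is one of $\{\Re s,\,0,\,-\Re s\}$ and is generically nonzero; asserting that $s_{1,-}$ is ``the one with largest real part'' is exactly the boundary case of the inequality you are trying to prove (the paper obtains the identification \eqref{eq:sigma2withs1} as a \emph{consequence} of this lemma), so as written the argument is circular, and fixing it honestly would require a local analysis of which branch $s_{1,-}$ picks up at the branch points $\pm iy^*$. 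Your proposed escape hatches do not close this: the ``maximum principle with boundary slits'' only tolerates exceptional boundary sets of zero capacity (polar sets), whereas a ray has positive capacity, and the argument-principle suggestion is not developed. A smaller but real error: a harmonic function certainly can vanish at interior points without vanishing identically; what you need is first $\Re s_1\ge 0$ on the closure (plus a Phragm\'en--Lindel\"of control at $\infty$) and then the strong minimum principle — and it is precisely the ``$\ge 0$ on the boundary'' step that is missing.

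The paper avoids all boundary considerations with a one-line algebraic observation that you should adopt: if $\Re s_1(z)=0$ at some $z$ with $\Re z>0$, then $s_1(z)$ is purely imaginary, and since $\alpha\in\mathbb R$ the quantity $\tau z=s_1(z)^3+\alpha s_1(z)$ is then purely imaginary as well, contradicting $\Re z>0$. Hence $\Re s_1$ never vanishes on the open right half-plane; since it is continuous there, the half-plane is connected, and $s_1(x)>0$ for real $x>0$, it follows that $\Re s_1>0$ throughout. This replaces your entire harmonicity/boundary/infinity analysis and removes the circularity.
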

\begin{proof}
The symmetries \eqref{eq:symmetries} are clear.

For $z = x \in \mathbb R$ with $x > 0$ we have that $s_1(x) > 0$. Therefore,
by continuity, $\Re s_1(z) > 0$ for $z$ in a neighborhood of the positive
real axis. If $\Re s_1(z) = 0$ for some $z$, so that $s_1(z)$ is purely
imaginary, then
\[ \tau z =  s_1(z)^3 + \alpha s_1(z)  \]
is purely imaginary as well. The inequality $\Re s_1(z) > 0$ therefore
extends into the full right half-plane as claimed in \eqref{eq:Res1}.
\end{proof}

From the lemma it follows that in both
cases the constraint $\sigma_2$, see \eqref{eq:sigma2}, is given
by
\begin{align} \nonumber
    \frac{d\sigma_2(z)}{dz} & = \frac{\tau}{\pi i} \Re s_{1,-}(z) \\
    & = \frac{\tau}{2\pi i} \left(s_{1,-}(z) - s_{1,+}(z) \right),
    \qquad z \in i \mathbb R.
    \label{eq:sigma2withs1}
    \end{align}
The imaginary axis is oriented upwards, so
that $s_{1,-}(z)$ ($s_{1,+}(z)$) for $z \in i\mathbb R$ denotes the limiting value of $s_1$
as we approach $z \in i \mathbb R$ from the right (left) half-plane.

\subsection{Values at the saddles and functions $\theta_j$} \label{subsec:saddlevalues}

We define
\begin{equation} \label{eq:thetaj}
    \theta_j(z) = - W(s_j(z)) + \tau z s_j(z), \qquad j=1,2,3
    \end{equation}
as the value of $-(W(s) -\tau zs)$ at the saddle $s= s_j(z)$.
Note that
\begin{align}
    \theta_j'(z) & = \left( - W'(s_j(z)) + \tau z \right) s_j'(z) + \tau s_j(z)   = \tau s_j(z) \label{eq:thetajder}
        \end{align}
    so that, up to a factor $\tau$, $\theta_j$ is a primitive
    function of $s_j$.

Then $\theta_j$ is defined and analytic on $\mathcal S_j$, see \eqref{eq:sheetsS1} and \eqref{eq:sheetsS2},
and
\begin{equation} \label{eq:thetacontinuations}
\left\{
\begin{aligned}
    \theta_{1,\pm} = \theta_{2,\mp} & \quad \text{ on } (-i \infty, -i y^*(\alpha)]
        \cup [i y^*(\alpha), i \infty),  \\
    \theta_{2,\pm} = \theta_{3,\mp} & \quad \text{ on } (- \infty, -x^*(\alpha)]
        \cup [x^*(\alpha), \infty).
    \end{aligned}
    \right.
    \end{equation}
Recall from \eqref{eq:xstar} and \eqref{eq:ystar} that
we have put $x^*(\alpha) = 0$ if $\alpha > 0$ and $y^*(\alpha) = 0$ if $\alpha < 0$,
so that we can treat the two cases simultaneously in \eqref{eq:thetacontinuations}.

The jumps for the $\theta_j$ functions from \eqref{eq:thetacontinuations},
are taken together in terms of the jumps of the diagonal matrix
\begin{equation} \label{eq:defTheta}
    \Theta(z) = \begin{pmatrix} \theta_1(z) & 0 & 0 \\
    0 & \theta_2(z) & 0 \\ 0 & 0 & \theta_3(z) \end{pmatrix},
        \qquad z \in \mathbb C \setminus (\mathbb R \cup i \mathbb R)
        \end{equation}
as follows.

\begin{corollary} \label{cor:jumpTheta}
 For $x \in \mathbb R$ we have
\begin{equation} \label{eq:jumpThetaonR}
    \left\{
    \begin{aligned}
    \Theta_+(x) & = \Theta_-(x),     && |x| < x^*(\alpha), \\
    \Theta_+(x) & = \begin{pmatrix} 1 & 0 & 0 \\  0 & 0 & 1 \\ 0 & 1 & 0 \end{pmatrix}
     \Theta_-(x) \begin{pmatrix} 1 & 0 & 0 \\  0 & 0 & 1 \\ 0 & 1 & 0 \end{pmatrix},
     && |x| > x^*(\alpha).
     \end{aligned} \right.
     \end{equation}

For $z = iy \in i \mathbb R$ we have
\begin{equation} \label{eq:jumpThetaoniR}
    \left\{
    \begin{aligned}
    \Theta_+(z) & = \Theta_-(z), &&  |y| < y^*(\alpha), \\
    \Theta_+(z) & = \begin{pmatrix} 0 & 1 & 0 \\  1 & 0 & 0 \\ 0 & 0 & 1 \end{pmatrix}
     \Theta_-(z) \begin{pmatrix} 0 & 1 & 0 \\  1 & 0 & 0 \\ 0 & 0 & 1 \end{pmatrix},
      && |y| > y^*(\alpha).
     \end{aligned} \right.
     \end{equation}
\end{corollary}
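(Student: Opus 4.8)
The plan is to read the corollary off directly from the continuation relations \eqref{eq:thetacontinuations}, after rewriting them in terms of the diagonal matrix $\Theta$ from \eqref{eq:defTheta}. First I would record the elementary linear-algebra fact that conjugating a diagonal matrix by a transposition permutes its diagonal entries: with $P_{23} = \begin{pmatrix} 1 & 0 & 0 \\ 0 & 0 & 1 \\ 0 & 1 & 0 \end{pmatrix}$ and $P_{12} = \begin{pmatrix} 0 & 1 & 0 \\ 1 & 0 & 0 \\ 0 & 0 & 1 \end{pmatrix}$ (both equal to their own inverses), one has $P_{23}\,\diag(d_1,d_2,d_3)\,P_{23} = \diag(d_1,d_3,d_2)$ and $P_{12}\,\diag(d_1,d_2,d_3)\,P_{12} = \diag(d_2,d_1,d_3)$. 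Hence the asserted jump relations are equivalent, entry by entry, to the statement that on the relevant part of $\mathbb R$ (resp.\ $i\mathbb R$) the boundary values of $\theta_2$ and $\theta_3$ (resp.\ $\theta_1$ and $\theta_2$) are interchanged while the third $\theta_j$ extends analytically, and that on the complementary segments all three $\theta_j$ extend analytically.

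Next I would dispose of the analyticity statements by inspecting the sheets \eqref{eq:sheetsS1} and \eqref{eq:sheetsS2} on which each $\theta_j$ lives. For $\alpha > 0$: $\theta_1$ is analytic on $\mathcal S_1$, which contains a neighbourhood of $\mathbb R$, so $\theta_1$ has no jump on $\mathbb R$; $\theta_3$ is analytic on $\mathcal S_3 = \mathbb C \setminus \mathbb R$, which contains a neighbourhood of $i\mathbb R \setminus \{0\}$, so $\theta_3$ has no jump on $i\mathbb R$; and for $|y| < y^*(\alpha)$ none of the three cuts meets the open segment, so $\Theta$ is analytic there. For $\alpha < 0$ the roles are mirrored: $\theta_1$ is analytic on $\mathcal S_1 = \mathbb C \setminus i\mathbb R$, hence has no jump on $\mathbb R$; $\theta_3$ is analytic across $i\mathbb R$ by \eqref{eq:sheetsS2}; and for $|x| < x^*(\alpha)$ the matrix $\Theta$ is analytic. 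Using the conventions $x^*(\alpha) = 0$ for $\alpha \geq 0$ and $y^*(\alpha) = 0$ for $\alpha \leq 0$ from \eqref{eq:xstar}--\eqref{eq:ystar}, these observations package into $\Theta_+ = \Theta_-$ on $\{|x| < x^*(\alpha)\}$ and on $\{|y| < y^*(\alpha)\}$, together with $\theta_{1,+} = \theta_{1,-}$ on $\mathbb R$ and $\theta_{3,+} = \theta_{3,-}$ on $i\mathbb R$.

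It then remains only to match the nontrivial transpositions with \eqref{eq:thetacontinuations}. On $(-\infty,-x^*(\alpha)] \cup [x^*(\alpha),\infty)$ the relation $\theta_{2,\pm} = \theta_{3,\mp}$ says precisely $\theta_{2,+} = \theta_{3,-}$ and $\theta_{3,+} = \theta_{2,-}$, which combined with $\theta_{1,+} = \theta_{1,-}$ is exactly $\Theta_+ = P_{23}\Theta_- P_{23}$, i.e.\ the second line of \eqref{eq:jumpThetaonR}; similarly, on $(-i\infty,-iy^*(\alpha)] \cup [iy^*(\alpha),i\infty)$ the relation $\theta_{1,\pm} = \theta_{2,\mp}$ together with $\theta_{3,+} = \theta_{3,-}$ is $\Theta_+ = P_{12}\Theta_- P_{12}$, i.e.\ the second line of \eqref{eq:jumpThetaoniR}. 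The only point that requires mild care — and the closest thing here to an obstacle — is the orientation bookkeeping: one must check that the symbols $\pm$/$\mp$ in \eqref{eq:thetacontinuations} are consistent with $\mathbb R$ oriented from left to right and $i\mathbb R$ oriented upward (the convention fixed in Section \ref{subsec:saddlepoint}). This is a finite verification, carried out by evaluating at one convenient point of each ray and comparing with the construction of $s_1,s_2,s_3$ by analytic continuation; once it is confirmed, the corollary follows.
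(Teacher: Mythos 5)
Your proposal is correct and follows exactly the route the paper intends: the corollary is stated as an immediate repackaging of the continuation relations \eqref{eq:thetacontinuations} together with the analyticity of the remaining $\theta_j$ across the relevant axis, which you verify correctly from the sheet structure \eqref{eq:sheetsS1}--\eqref{eq:sheetsS2}. The paper offers no separate proof, so your more detailed write-up (including the transposition-conjugation bookkeeping) is simply an expanded version of the same argument.
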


Also note that by \eqref{eq:V1},  the definition of $s_1$ and \eqref{eq:thetaj}, we have
\begin{equation} \label{eq:V1bis}
    V_1(x) = V(x) - \theta_1(x), \qquad x \in \mathbb R,
    \end{equation}
and by \eqref{eq:V3a}-\eqref{eq:V3b}, the definition of $s_2$ and $s_3$,
and \eqref{eq:thetaj}
\begin{equation} \label{eq:V3bis}
    V_3(x) = \left\{ \begin{array}{cl}
    \theta_2(x) - \theta_3(x), &  \text{for } x \in (-x^*(\alpha), x^*(\alpha)), \\
    0, & \text{elsewhere.}
    \end{array} \right.
    \end{equation}

\subsection{Large $z$ asymptotics}

In what follows we will need the behavior of $s_j(z)$ and $\theta_j(z)$ as $z \to \infty$.

Throughout the paper we define fractional exponents with a branch cut
along the negative real axis. We use $I$, $II$, $III$ and $IV$ to
denote the four quadrants of the complex $z$-plane.
We also put
\[ \omega = {e}^{2\pi i/3}. \]

We state the following lemma without proof. It follows easily
from the saddle point equation \eqref{eq:saddlepoint2}.
\begin{lemma} \label{lem:sjasymptotics}
We have as $z \to \infty$
\begin{align*}
   s_1(z) & = \begin{cases}
   (\tau z)^{1/3} - \frac{\alpha}{3} (\tau z)^{-1/3} + \frac{\alpha^3}{81} (\tau z)^{-5/3} + \OO(z^{-7/3}), & \text{in } I \cup IV, \\
    \omega (\tau z)^{1/3} - \frac{\alpha}{3} \omega^2 (\tau z)^{-1/3} + \frac{\alpha^3}{81} \omega (\tau z)^{-5/3} +\OO(z^{-7/3}),
        & \text{in } II, \\
    \omega^2 (\tau z)^{1/3} - \frac{\alpha}{3} \omega (\tau z)^{-1/3} + \frac{\alpha^3}{81} \omega^2 (\tau z)^{-5/3} +\OO(z^{-7/3}),
        & \text{in } III,
        \end{cases} \\
   s_2(z) & = \begin{cases}
   \omega (\tau z)^{1/3} - \frac{\alpha}{3} \omega^2 (\tau z)^{-1/3} + \frac{\alpha^3}{81} \omega (\tau z)^{-5/3} +\OO(z^{-7/3}), & \text{in } I \\
    (\tau z)^{1/3} - \frac{\alpha}{3} (\tau z)^{-1/3} + \frac{\alpha^3}{81} (\tau z)^{-5/3} +\OO(z^{-7/3}),
        & \text{in } II \cup III, \\
    \omega^2 (\tau z)^{1/3} - \frac{\alpha}{3} \omega (\tau z)^{-1/3} + \frac{\alpha^3}{81} \omega^2 (\tau z)^{-5/3} +\OO(z^{-7/3}),
        & \text{in } IV,
        \end{cases} \\
   s_3(z) & = \begin{cases}
   \omega^2 (\tau z)^{1/3} - \frac{\alpha}{3} \omega (\tau z)^{-1/3} + \frac{\alpha^3}{81} \omega^2 (\tau z)^{-5/3} +\OO(z^{-7/3}), & \text{in } I \cup II \\
   \omega (\tau z)^{1/3} - \frac{\alpha}{3} \omega^2 (\tau z)^{-1/3} + \frac{\alpha^3}{81} \omega (\tau z)^{-5/3} +\OO(z^{-7/3}),
        & \text{in } III \cup IV.
        \end{cases}
\end{align*}
\end{lemma}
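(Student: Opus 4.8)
The plan is to solve the cubic $s^{3}+\alpha s=\tau z$ by a perturbative expansion for large $z$, thereby producing three branches each of the stated form, and then to match these branches to the functions $s_{1},s_{2},s_{3}$ in each quadrant using the sheet structure \eqref{eq:sheetsS1}--\eqref{eq:sheetsS2} and the symmetries from Lemma \ref{lem:Res1}.

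\emph{Step 1: perturbative solution.} Put $w=(\tau z)^{1/3}$ with the branch cut on the negative real axis. The roots of $s^{3}+\alpha s=w^{3}$ satisfy $s\sim\zeta w$ with $\zeta^{3}=1$, so I substitute the ansatz $s=\zeta w+a_{1}w^{-1}+a_{2}w^{-3}+a_{3}w^{-5}+\OO(w^{-7})$ into $s^{3}+\alpha s$ and expand in descending powers of $w$. Matching the coefficient of $w$ gives $a_{1}=-\tfrac{\alpha}{3}\zeta^{2}$; matching $w^{-1}$ forces $a_{2}=0$ (this is why no $z^{-1}$-type term appears); matching $w^{-3}$ gives $a_{3}=\tfrac{\alpha^{3}}{81}\zeta$. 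Hence each root has the form
\[
 s=\zeta(\tau z)^{1/3}-\frac{\alpha}{3}\zeta^{2}(\tau z)^{-1/3}+\frac{\alpha^{3}}{81}\zeta(\tau z)^{-5/3}+\OO(z^{-7/3}),\qquad z\to\infty,
\]
for a suitable cube root of unity $\zeta$, uniformly on closed subsectors. Since the discriminant of the cubic is nonzero for large $z$, the three roots are distinct and depend analytically on $z^{-1/3}$ near $\infty$ (Puiseux); the formal series obtained by matching is therefore the genuine asymptotic expansion, which legitimizes the error estimate.

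\emph{Step 2: identification of the branches.} It remains to pin down $\zeta=\zeta_{j}(z)$ for each $s_{j}$ in each of the four quadrants. On the positive real axis $s_{1}(x)\sim(\tau x)^{1/3}>0$ by the definition of $s_{1}$, so $\zeta_{1}=1$ there; combined with $\Re s_{1}>0$ in the right half plane (Lemma \ref{lem:Res1}) this forces $\zeta_{1}=1$ in quadrants $I$ and $IV$. Continuing $s_{1}$ into the left half plane (through the segment of $i\mathbb{R}$ between $\pm iy^{*}(\alpha)$ when $\alpha>0$, or through $\mathbb{R}$ when $\alpha<0$), and keeping in mind that the cut of $(\tau z)^{1/3}$ sits on $\mathbb{R}_{-}$, one finds $\zeta_{1}=\omega$ in $II$ and $\zeta_{1}=\omega^{2}$ in $III$; the two one-sided limits on $\mathbb{R}_{-}$ both equal $-(\tau|z|)^{1/3}$, consistent with analyticity of $s_{1}$ there and with $s_{1}(-z)=-s_{1}(z)$. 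The branches of $s_{2}$ and $s_{3}$ follow from their definitions as the continuations of $s_{1}$ onto $\mathcal S_{2}$ and $\mathcal S_{3}$ across the cuts, encoded by \eqref{eq:thetacontinuations}: crossing $[iy^{*}(\alpha),i\infty)$ interchanges the labels of $s_{1}$ and $s_{2}$, crossing $[x^{*}(\alpha),\infty)$ interchanges those of $s_{2}$ and $s_{3}$. Propagating the cube-root labels through these interchanges, quadrant by quadrant, produces the tables for $s_{2}$ and $s_{3}$; the symmetries $s_{j}(-z)=-s_{j}(z)$ and $s_{j}(\overline z)=\overline{s_{j}(z)}$ halve the work.

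\emph{Main obstacle.} The only genuine difficulty is the bookkeeping in Step 2: correctly transporting the cube-root label $\zeta_{j}$ across the various branch cuts. Because the cut of $(\tau z)^{1/3}$ (on $\mathbb{R}_{-}$) does not coincide with the cuts of the $s_{j}$ (on $i\mathbb{R}$ and/or parts of $\mathbb{R}$), the closed-form asymptotic expression changes its analytic shape across $\mathbb{R}_{-}$ even though each $s_{j}$ is analytic there. I would handle the cases $\alpha>0$ and $\alpha<0$ separately, use the symmetries to reduce to the closed first quadrant, and check the one-sided matchings on the coordinate axes at each step; this is exactly the ``straightforward check'' alluded to after \eqref{eq:sheetsS2}.
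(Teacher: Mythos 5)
Your proposal is correct: the coefficient computations in Step 1 check out (with $\zeta^{-1}=\zeta^{2}$ giving $a_{1}=-\tfrac{\alpha}{3}\zeta^{2}$, $a_{2}=0$, $a_{3}=\tfrac{\alpha^{3}}{81}\zeta$, and the substitution $t=s(\tau z)^{-1/3}$ reducing the cubic to $t^{3}+\alpha u^{2}t=1$ with three simple roots at $u=0$, which rigorously justifies the expansion in odd powers of $(\tau z)^{-1/3}$ and the error term), and the branch bookkeeping in Step 2 is consistent with the sheet structure \eqref{eq:sheetsS1}--\eqref{eq:sheetsS2}, the continuations \eqref{eq:thetacontinuations}, and Lemma \ref{lem:Res1}. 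The paper states this lemma explicitly without proof, remarking only that it ``follows easily from the saddle point equation,'' so your argument simply supplies the omitted details in the intended way.
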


We have a similar result for the asymptotics of $\theta_j$.
Note that the following asymptotic behaviors are consistent
with the property that $\theta_j' = \tau s_j$, see \eqref{eq:thetajder}.

\begin{lemma} \label{lem:thetajasymptotics}
We have as $z \to \infty$
\begin{align*}
   \theta_1(z) & = \begin{cases}
   \frac{3}{4} (\tau z)^{4/3} - \frac{\alpha}{2} (\tau z)^{2/3} + \frac{\alpha^2}{6} - \frac{\alpha^3}{54} (\tau z)^{-2/3} + \OO(z^{-4/3}), & \text{in } I \cup IV \\
   \frac{3}{4} \omega (\tau z)^{4/3} - \frac{\alpha}{2} \omega^2 (\tau z)^{2/3} + \frac{\alpha^2}{6} - \frac{\alpha^3}{54} \omega (\tau z)^{-2/3} + \OO(z^{-4/3}),
        & \text{in } II, \\
   \frac{3}{4} \omega^2 (\tau z)^{4/3} - \frac{\alpha}{2} \omega (\tau z)^{2/3} + \frac{\alpha^2}{6} - \frac{\alpha^3}{54} \omega^2 (\tau z)^{-2/3} + \OO(z^{-4/3}),
        & \text{in } III,
        \end{cases} \\
   \theta_2(z) & = \begin{cases}
   \frac{3}{4} \omega (\tau z)^{4/3} - \frac{\alpha}{2} \omega^2 (\tau z)^{2/3} + \frac{\alpha^2}{6} - \frac{\alpha^3}{54} \omega (\tau z)^{-2/3} + \OO(z^{-4/3}), & \text{in } I \\
   \frac{3}{4} (\tau z)^{4/3} - \frac{\alpha}{2} (\tau z)^{2/3} + \frac{\alpha^2}{6} - \frac{\alpha^3}{54}  (\tau z)^{-2/3} + \OO(z^{-4/3}),
        & \text{in } II \cup III, \\
   \frac{3}{4} \omega^2 (\tau z)^{4/3} - \frac{\alpha}{2} \omega (\tau z)^{2/3} + \frac{\alpha^2}{6} - \frac{\alpha^3}{54} \omega^2 (\tau z)^{-2/3} + \OO(z^{-4/3}),
        & \text{in } IV,
        \end{cases} \\
   \theta_3(z) & = \begin{cases}
   \frac{3}{4} \omega^2 (\tau z)^{4/3} - \frac{\alpha}{2} \omega (\tau z)^{2/3} + \frac{\alpha^2}{6} - \frac{\alpha^3}{54} \omega^2 (\tau z)^{-2/3} + \OO(z^{-4/3}), & \text{in } I \cup II \\
   \frac{3}{4} \omega (\tau z)^{4/3} - \frac{\alpha}{2} \omega^2 (\tau z)^{2/3} + \frac{\alpha^2}{6} - \frac{\alpha^3}{54} \omega (\tau z)^{-2/3} + \OO(z^{-4/3}),
        & \text{in } III \cup IV.
        \end{cases}
\end{align*}
\end{lemma}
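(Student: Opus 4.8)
The plan is to obtain the expansion of $\theta_j$ directly from that of $s_j$ given in Lemma \ref{lem:sjasymptotics}, after first simplifying the defining formula \eqref{eq:thetaj} algebraically. On the saddle point equation \eqref{eq:saddlepoint2} we have $\tau z = s_j(z)^3 + \alpha s_j(z)$, hence $\tau z\, s_j(z) = s_j(z)^4 + \alpha s_j(z)^2$; substituting this into
\[
    \theta_j(z) = - W(s_j(z)) + \tau z s_j(z) = -\tfrac14 s_j(z)^4 - \tfrac{\alpha}{2} s_j(z)^2 + \tau z\, s_j(z)
\]
yields the clean identity
\[
    \theta_j(z) = \tfrac34 s_j(z)^4 + \tfrac{\alpha}{2} s_j(z)^2, \qquad j=1,2,3.
\]
This reduces the lemma to inserting the known asymptotics of $s_j$ and expanding a fourth and a second power.

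Next I would carry out the substitution in one representative case, say $\theta_1$ in $I\cup IV$. Writing $u=(\tau z)^{1/3}$ with the branch cut along the negative reals (as fixed in the paper) and using $s_1(z)= u - \tfrac{\alpha}{3} u^{-1} + \tfrac{\alpha^3}{81} u^{-5} + \OO(u^{-7})$, one computes
\[
    s_1^2 = u^2 - \tfrac{2\alpha}{3} + \tfrac{\alpha^2}{9} u^{-2} + \OO(u^{-4}), \qquad
    s_1^4 = u^4 - \tfrac{4\alpha}{3} u^2 + \tfrac{2\alpha^2}{3} - \tfrac{8\alpha^3}{81} u^{-2} + \OO(u^{-4}),
\]
and then
\[
    \theta_1(z) = \tfrac34 s_1^4 + \tfrac{\alpha}{2} s_1^2 = \tfrac34 u^4 - \tfrac{\alpha}{2} u^2 + \tfrac{\alpha^2}{6} - \tfrac{\alpha^3}{54} u^{-2} + \OO(u^{-4}),
\]
which is the asserted formula after recalling $u^4=(\tau z)^{4/3}$, $u^2=(\tau z)^{2/3}$, $u^{-2}=(\tau z)^{-2/3}$ and $\OO(u^{-4})=\OO(z^{-4/3})$. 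The error order is consistent: the $\OO(u^{-7})$ remainder of $s_1$, multiplied by the leading $4u^3$ in the binomial expansion of $s_1^4$, contributes exactly at order $\OO(u^{-4})$, while its contribution through $s_1^2$ is of smaller order; this is precisely why Lemma \ref{lem:sjasymptotics} was stated to order $\OO(z^{-7/3})$.

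For the remaining quadrants and for $s_2$ and $s_3$ the same computation applies verbatim: the sub-leading coefficients in Lemma \ref{lem:sjasymptotics} are arranged so that, in each quadrant, $s_j$ equals the same universal series $v - \tfrac{\alpha}{3}v^{-1} + \tfrac{\alpha^3}{81}v^{-5} + \OO(v^{-7})$ evaluated at $v=u$, $v=\omega u$, or $v=\omega^2 u$ (using $\omega^{-1}=\omega^2$, $\omega^3=1$). Hence $\theta_j = \tfrac34 v^4 - \tfrac{\alpha}{2} v^2 + \tfrac{\alpha^2}{6} - \tfrac{\alpha^3}{54} v^{-2} + \OO(v^{-4})$, and since $\omega^3=1$ the constant term $\tfrac{\alpha^2}{6}$ is unchanged while the $v^4$, $v^2$, $v^{-2}$ terms acquire the factors $\omega$, $\omega^2$, $\omega$ (resp. $\omega^2$, $\omega$, $\omega^2$) displayed in the statement. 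As consistency checks one verifies $\theta_j'=\tau s_j$ term by term (matching \eqref{eq:thetajder}, with the constant $\tfrac{\alpha^2}{6}$ being invisible to differentiation and fixed only by the identity above) and that the matching relations \eqref{eq:thetacontinuations} across $\mathbb R$ and $i\mathbb R$ are respected by the expansions. There is no substantial obstacle here; the only point needing care is the bookkeeping of cube-root branches across quadrant boundaries and confirming that all neglected terms are genuinely $\OO(z^{-4/3})$.
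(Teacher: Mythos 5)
Your proof is correct, and the computations check out: the identity $\theta_j = \tfrac34 s_j^4 + \tfrac{\alpha}{2} s_j^2$ follows immediately from substituting $\tau z\,s_j = s_j^4 + \alpha s_j^2$ into \eqref{eq:thetaj}, the expansions of $s_1^2$ and $s_1^4$ are right, the coefficients combine to $\tfrac34 u^4 - \tfrac{\alpha}{2}u^2 + \tfrac{\alpha^2}{6} - \tfrac{\alpha^3}{54}u^{-2}$, and the error bookkeeping ($\OO(u^{-7})\cdot 4u^3 = \OO(u^{-4})$) is exactly what makes the $\OO(z^{-7/3})$ precision of Lemma \ref{lem:sjasymptotics} sufficient. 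The paper itself gives no proof of this lemma: it only remarks that the stated expansions are consistent with $\theta_j' = \tau s_j$. Your route is arguably better than integrating that relation, since termwise integration of $\tau s_j$ cannot determine the constant term $\tfrac{\alpha^2}{6}$, whereas your exact algebraic identity fixes it with no extra input; the observation that in every quadrant $s_j$ is the one universal series in $v = \omega^k (\tau z)^{1/3}$ (using $\omega^{2k}=\omega^{-k}$ and $-5k\equiv k \pmod 3$) then disposes of all nine cases at once.
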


\subsection{Two special integrals}

As a final preparation for the proof of Lemma \ref{lem:convex}
we need the  evaluation of  the following two definite integrals.
\begin{lemma}
We have for $x > 0$
\begin{equation} \label{eq:integral1}
    \int_{i \mathbb R} \frac{d\sigma_2(z)}{(x-z)^2}
        = - \tau s_1'(x), \qquad \text{and} \qquad
    \int_{i \mathbb R} \frac{d\sigma_2(z)}{x - z^2}
        = \frac{\tau s_1(\sqrt{x})}{\sqrt{x}}.
\end{equation}
\end{lemma}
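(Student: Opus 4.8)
The natural approach is to express both integrals as contour integrals involving $s_1$ and then evaluate by residues, exploiting the analyticity and symmetry of $s_1$ together with its behaviour at infinity from Lemma~\ref{lem:sjasymptotics}. Recall from \eqref{eq:sigma2withs1} that on the imaginary axis (oriented upwards)
\[
    d\sigma_2(z) = \frac{\tau}{2\pi i}\bigl(s_{1,-}(z) - s_{1,+}(z)\bigr)\,dz,
\]
so that for any function $f$ analytic in a neighbourhood of $i\mathbb{R}$ we have
\[
    \int_{i\mathbb{R}} f(z)\,d\sigma_2(z)
    = \frac{\tau}{2\pi i}\int_{i\mathbb{R}} f(z)\bigl(s_{1,-}(z) - s_{1,+}(z)\bigr)\,dz
    = \frac{\tau}{2\pi i}\oint_{\Gamma} f(z)\,s_1(z)\,dz,
\]
where $\Gamma$ is a contour encircling the branch cut of $s_1$ on $i\mathbb{R}$ (for $\alpha>0$ the cut is $i\mathbb{R}\setminus(-iy^*,iy^*)$; for $\alpha<0$, $s_1$ is analytic off $i\mathbb{R}$, and $\Gamma$ is a thin loop around the whole imaginary axis). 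The orientation of $\Gamma$ must be arranged so that the right half-plane part carries the $s_{1,-}$ boundary value; one checks that $\Gamma$ should be traversed counterclockwise around the cut.

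\textbf{First integral.} Take $f(z) = (x-z)^{-2}$ with $x>0$ fixed, so the only singularity of $f$ off the imaginary axis is the double pole at $z=x$ in the right half-plane. Deform $\Gamma$ outward to infinity. Using the asymptotics $s_1(z)=(\tau z)^{1/3}+\OO(z^{-1/3})$ in quadrants $I\cup IV$ (and the corresponding rotated asymptotics elsewhere), the integrand $f(z)s_1(z)$ decays like $z^{-2}\cdot z^{1/3}=z^{-5/3}$, so the contribution from the large circle vanishes. What remains is (minus) the residue at $z=x$:
\[
    \int_{i\mathbb{R}}\frac{d\sigma_2(z)}{(x-z)^2}
    = \frac{\tau}{2\pi i}\oint_{\Gamma}\frac{s_1(z)}{(x-z)^2}\,dz
    = -\tau\,\Res_{z=x}\frac{s_1(z)}{(x-z)^2}
    = -\tau\, s_1'(x),
\]
the sign working out because $\Res_{z=x}(x-z)^{-2}g(z) = -g'(x)$ and because pulling $\Gamma$ through the pole reverses orientation. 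This gives the first identity in \eqref{eq:integral1}.

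\textbf{Second integral.} For $f(z) = (x-z^2)^{-1}$ the poles off $i\mathbb{R}$ are simple, at $z=\pm\sqrt{x}$, with $\sqrt{x}>0$ in the right half-plane and $-\sqrt{x}$ in the left. Again deform $\Gamma$ to infinity; now $f(z)s_1(z)\sim z^{-2}\cdot z^{1/3}$ decays and the outer contribution vanishes. Collecting residues at both poles,
\[
    \int_{i\mathbb{R}}\frac{d\sigma_2(z)}{x-z^2}
    = -\tau\Bigl(\Res_{z=\sqrt{x}} + \Res_{z=-\sqrt{x}}\Bigr)\frac{s_1(z)}{x-z^2}
    = -\tau\left(\frac{s_1(\sqrt{x})}{-2\sqrt{x}} + \frac{s_1(-\sqrt{x})}{2\sqrt{x}}\right),
\]
and the oddness $s_1(-\sqrt{x}) = -s_1(\sqrt{x})$ from \eqref{eq:symmetries} combines the two terms into $\tau s_1(\sqrt{x})/\sqrt{x}$, which is the second identity.

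\textbf{Main obstacle.} The routine part is the residue bookkeeping; the point requiring genuine care is the justification of the contour deformation — in particular that $\Gamma$ may be pushed all the way to infinity across the relevant half-planes without obstruction. One must verify that $s_1$ is analytic on $\mathbb{C}\setminus i\mathbb{R}$ for $\alpha<0$, respectively on $\mathbb{C}\setminus\bigl((-i\infty,-iy^*]\cup[iy^*,i\infty)\bigr)$ for $\alpha>0$ (which is exactly the choice of sheet $\mathcal{S}_1$ in \eqref{eq:sheetsS1}--\eqref{eq:sheetsS2}), so that apart from the poles of $f$ no singularities are crossed, and that the decay estimate from Lemma~\ref{lem:sjasymptotics} is uniform enough in all quadrants to kill the arc at infinity. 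For $\alpha>0$ one should also check that $f$ has no pole on the cut itself, i.e.\ that $x>0$ and $x\neq$ any point of $i\mathbb{R}$ — automatic here — and keep track of the $\pm iy^*$ endpoints, where $s_1$ has only an integrable (square-root) singularity so the loop $\Gamma$ closes up harmlessly. Finally, one must get the orientation of $\Gamma$ right relative to \eqref{eq:sigma2withs1}; a convenient check is to confirm the known $\alpha=0$ formula \eqref{eq:sigma2alpha0} by direct computation, which fixes the sign once and for all.
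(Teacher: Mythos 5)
Your proof is correct and follows essentially the same route as the paper: both use \eqref{eq:sigma2withs1} to rewrite the integral as a contour integral of $s_1$, deform the contour using the $\OO(z^{1/3})$ growth from Lemma~\ref{lem:sjasymptotics}, and pick up the residue at $z=x$ (resp.\ at $z=\pm\sqrt{x}$, combined via the oddness of $s_1$); the only cosmetic difference is that the paper closes the $s_{1,-}$ and $s_{1,+}$ integrals separately in the right and left half-planes instead of pushing a single loop to infinity. One small slip in your parenthetical: since $(x-z)^2=(z-x)^2$, the residue of $g(z)/(x-z)^2$ at $z=x$ is $+g'(x)$, not $-g'(x)$ — your displayed chain and final signs are nonetheless correct.
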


\begin{proof}
Because of the formula \eqref{eq:sigma2withs1} for $\sigma_2$ we have
\[ \int_{i \mathbb R} \frac{d\sigma_2(z)}{(x-z)^2}
    = \frac{\tau}{2\pi i} \int_{i\mathbb R}
        \frac{s_{1,-}(z) - s_{1,+}(z)}{(x-z)^2} dz \]
Since $s_1$ is analytic in $\mathbb C \setminus i \mathbb R$
and $s_1(z) = \OO(z^{1/3})$ as $z \to \infty$, see Lemma \ref{lem:sjasymptotics},
we can evaluate the integral using contour integration
and residue calculus. It
follows that
\[ \frac{\tau}{2\pi i} \int_{i\mathbb R}
        \frac{s_{1,-}(z)}{(x-z)^2} dz = - \tau s_1'(x), \qquad \text{and} \qquad
        \frac{\tau}{2\pi i} \int_{i\mathbb R}
        \frac{s_{1,+}(z)}{(x-z)^2} dz = 0, \]
and the first integral in \eqref{eq:integral1} is proved.

The second integral follows by a similar calculation,
where we also use the fact that $s_1$ is an odd function.
\end{proof}

\subsection{Proof of Lemma \ref{lem:convex}}

Now we come to the proof of Lemma \ref{lem:convex}.
\subsubsection{Proof of part (a)}  

\begin{proof}

Integrating the first formula in \eqref{eq:integral1}
two times with respect to $x$, and using
the fact that $\theta_1' = \tau s_1$, we find that for some
constants $A$ and $B$,
\[ \int_{i\mathbb R} \left( \log |z-x| - \log |z| \right)
    d\sigma_2(z) = \theta_1(x) + Ax + B, \qquad x > 0. \]
Thus
\begin{align} \nonumber
    V_1(x) - U^{\nu_2}(x)  & =
    V(x) - \theta_1(x) + \int \log |z-x| \, d\nu_2(x)  \\
    & = \label{eq:lemconvexa}
        V(x) -
    \int_{i \mathbb R} \left( \log |z-x| - \log |z| \right)
        d(\sigma_2 - \nu_2)(z) - Ax - B'
        \end{align}
with a different constant $B' = B - \int \log |z| \, d\nu_2(z)$.

Since $0 \not\in S(\sigma_2 - \nu_2)$ there exists
a $c > 0$ such that $S(\sigma_2 - \nu_2) \subset i \mathbb R \setminus (-ic, ic)$
and so the integral in the right-hand side of \eqref{eq:lemconvexa}
defines a real analytic function of $x$.
Then \eqref{eq:lemconvexa} proves that $V_1 - U^{\mu_2}$ is
real analytic on $\mathbb R$, since $V$ is a polynomial.
\end{proof}

\subsubsection{Proof of part (b)} 

\begin{proof}
 We have by \eqref{eq:sigma2withs1}
\[ \frac{d \sigma_2}{|dz|} (iy) = \frac{\tau}{\pi} \Re s_{1,-}(iy) \]

Since $s_1$ is a solution of $s^3 + \alpha s = \tau z$, we have
$(3s_1^2 + \alpha ) s_1' = \tau$, so that
\begin{equation} \label{eq:lemconvexb}
    \frac{d}{dy} \Re s_{1,-}(iy) =  -  \tau \Im  \frac{1}{3s_{1,-}^2(iy) + \alpha}.
    \end{equation}
For $y > 0$ we have $\Re s_{1,-}(iy) \geq 0$ by \eqref{eq:Res1}.
We also have $\Im s_{1,-}(iy) > 0$, so that  $\Im (s_{1,-}^2(iy)) \geq 0$. This implies
$\Im (1/ ( 3s_{1,-}^2(iy) + \alpha)) \leq 0$
and so indeed by \eqref{eq:lemconvexb}
\[ \frac{d}{dy} \Re s_{1,-}(iy) \geq 0, \qquad y > 0, \]
which proves part (b) of the lemma.
\end{proof}

\subsubsection{Proof of part (c)} 

\begin{proof}
Let $\alpha < 0$ and let $\nu_2$ be as in Lemma \ref{lem:convex} (c).
Since $\nu_2$ is a measure on $i \mathbb R$ we have for $x \in \mathbb R$, $x > 0$,
\[ - U^{\nu_2}(\sqrt{x}) = \frac{1}{2} \int_{i\mathbb R} \log(x-z^2) d\nu_2(z),
    \qquad x  > 0. \]
Hence
\[ \frac{d^2}{dx^2} \left(- U^{\nu_2}(\sqrt{x}) \right)
    = -\frac{1}{2} \int_{i\mathbb R} \frac{1}{(x-z^2)^2} d\nu_2(z),
    \qquad x > 0. \]
Since $\nu_2 \leq \sigma_2$ and the integrand is positive
for every $z \in i \mathbb R$, we have
\begin{align*}
    \frac{d^2}{dx^2} \left(- U^{\nu_2}(\sqrt{x}) \right)
    & > - \frac{1}{2} \int_{i\mathbb R} \frac{1}{(x-z^2)^2} d\sigma_2(z) \\
    & = \frac{1}{2} \frac{d}{dx}
        \left(\int_{i\mathbb R} \frac{1}{x-z^2} d\sigma_2(z) \right) \\
        & = \frac{\tau}{2} \frac{d}{dx}
        \left( \frac{s_1(\sqrt{x})}{\sqrt{x}} \right).
\end{align*}
where we used the second integral in \eqref{eq:integral1}.

Since $\theta_1' = \tau s_1$ we see that
\begin{align*}
    \frac{d^2}{dx^2} \left(- U^{\nu_2}(\sqrt{x}) \right)
    & > \frac{d^2}{dx^2} \left( \theta_1(\sqrt{x}) \right)
    \end{align*}
Since $V_3 = \theta_2 - \theta_3$, see \eqref{eq:V1bis}, we have for
$0 < x < (x^*(\alpha))^2$,
\begin{multline*}
    \frac{d^2}{dx^2} \left( V_3(\sqrt{x}) - U^{\nu_2}(\sqrt{x}) \right)
    >
    \frac{d^2}{dx^2} \left( \theta_2(\sqrt{x}) - \theta_3(\sqrt{x}) +
    \theta_1(\sqrt{x}) \right), \\
        \qquad 0 < x \leq (x^*(\alpha))^2.
    \end{multline*}
Since $\theta_1 + \theta_2 + \theta_3 = \frac{1}{2} \alpha^2$,
it now also follows that
\begin{align} \label{eq:estimateonV3sqrtx}
    \frac{d^2}{dx^2} \left( V_3(\sqrt{x}) - U^{\nu_2}(\sqrt{x}) \right)
    & > - 2
    \frac{d^2}{dx^2} \left( \theta_3(\sqrt{x}) \right),
        \qquad 0 < x \leq (x^*(\alpha))^2.
    \end{align}

Recall that $s_3(z)$ is the solution of $s^3 + \alpha s = \tau z$
with $s_3(0) = 0$. Since $\alpha < 0$ we have
that $s_3$ is an odd function which is analytic in a neighborhood of $0$.
Inserting the Taylor series
\[ s_3(z) = - \sum_{k=0}^{\infty} c_k z^{2k+1}, \qquad
    |z| < x^*(\alpha), \]
with $c_0 = - \frac{\tau}{\alpha} > 0$ into $\left[s_3(z) \right]^3
=  - \alpha s_3(z) + \tau z$ and comparing coefficients of $z$, it
is easy to show inductively that $c_k > 0$ for every $k$. Since
$\theta_3'(z) = \tau s_3(z)$ with $\theta_3(0) = 0$, we then also
have
\[ - \theta_3(z) = \tau
    \sum_{k=0}^{\infty} \frac{c_k}{2k+2} z^{2k+2},
        \qquad |z| < x^*(\alpha) \]
and
\begin{equation} \label{eq:estimateontheta3sqrtx}
    -2 \frac{d^2}{dx^2} \theta_3(\sqrt{x})
    = \tau
        \sum_{k=1}^{\infty} k c_k x^k > 0,
            \qquad 0 < x < (x^*(\alpha))^2
            \end{equation}
since all $c_k > 0$. The two inequalities  \eqref{eq:estimateonV3sqrtx}
and \eqref{eq:estimateontheta3sqrtx} give the convexity
of $V_3(\sqrt{x})-U^{\nu_2}(\sqrt{x})$, which completes the proof of
part (c) of Lemma \ref{lem:convex}.
\end{proof}

\section{Proof of Theorem \ref{theo:theo1}} \label{sec:proofTheorem1}

We basically follow Section 4 of \cite{DuiKu2} where Theorem \ref{theo:theo1}
was proved for the case $\alpha = 0$. However, we need more
additional results
from potential theory.

\subsection{Results from potential theory}

We use a number of results and notions from logarithmic potential
theory. The main reference is \cite{SafTot}. Most results
in \cite{SafTot} are stated for measures with compact
support, while we are also dealing with measures with
unbounded support, namely the real line or the
imaginary axis. Therefore we need a number of results from \cite{SafTot}
in a slightly stronger form that allows for measures with
unbounded supports.

The following theorem is known as the principle
of domination, and it is stated in  \cite[Theorem II.3.2]{SafTot} for the case where $\mu$
and $\nu$ have compact supports.

\begin{theorem} \label{thm:domination}
Suppose $\mu$ and $\nu$ are finite Borel measures
with $\int d\nu \leq \int d\mu$. Suppose that
$\mu$ has finite logarithmic energy and that $S(\mu) \neq \mathbb C$.
If for some constant $c$ the inequality
\begin{equation} \label{eq:dominationinequality}
    U^{\mu}(z) \leq U^{\nu}(z) + c
    \end{equation}
holds $\mu$-almost everywhere, then it holds for all $z \in \mathbb C$.
\end{theorem}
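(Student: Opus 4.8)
The plan is to deduce the statement from the compactly supported version \cite[Theorem II.3.2]{SafTot} by means of an inversion of the plane. Since $S(\mu)\neq\mathbb{C}$, I would fix a point $z_0\in\mathbb{C}\setminus S(\mu)$, set $\phi(z)=1/(z-z_0)$, and pass to the pushforward measures $\hat\mu=\phi_*\mu$ and $\hat\nu=\phi_*\nu$. Because $z_0\notin S(\mu)$, the support of $\hat\mu$ is bounded, and it is routine to check that $\hat\mu$ still has finite logarithmic energy (the energy changes only by the finite correction coming from $\int\log|s-z_0|\,d\mu(s)$, which is finite since $\mu$ has finite energy, and a finite-energy $\mu$ automatically has $\int\log^{+}|s|\,d\mu(s)<\infty$). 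The computational heart of the reduction is the transformation law for logarithmic potentials: from $|\phi(z_1)-\phi(z_2)|=|z_1-z_2|\,/\,(|z_1-z_0|\,|z_2-z_0|)$ one gets, for $z\neq z_0$,
\[
    U^{\hat\mu}(\phi(z)) = U^{\mu}(z)+\mu(\mathbb{C})\log|z-z_0|+\int\log|s-z_0|\,d\mu(s),
\]
and similarly for $\nu$.

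Using $\log|z-z_0|=U^{\delta_0}(\phi(z))$, and the fact that $\phi$ is a homeomorphism carrying $\mu$-null sets to $\hat\mu$-null sets, the hypothesis $U^{\mu}\leq U^{\nu}+c$ ($\mu$-a.e.) transforms into
\[
    U^{\hat\mu}(w)\leq U^{\hat\nu'}(w)+c', \qquad \hat\mu\text{-a.e. }w,
\]
where $\hat\nu'=\hat\nu+\bigl(\mu(\mathbb{C})-\nu(\mathbb{C})\bigr)\delta_0$ and $c'$ is an explicit constant. The coefficient of $\delta_0$ is nonnegative \emph{precisely because} $\nu(\mathbb{C})\leq\mu(\mathbb{C})$, so $\hat\nu'$ is a genuine positive measure, with $\hat\nu'(\mathbb{C})=\mu(\mathbb{C})=\hat\mu(\mathbb{C})$. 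If $z_0$ happens to lie outside $S(\nu)$ as well, then $\hat\nu'$ has compact support and \cite[Theorem II.3.2]{SafTot} applies verbatim, giving $U^{\hat\mu}\leq U^{\hat\nu'}+c'$ everywhere, and transforming back yields $U^{\mu}\leq U^{\nu}+c$ on $\mathbb{C}\setminus\{z_0\}$.

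In general $\hat\nu$ need not have compact support, and this is the step I expect to be the main obstacle. To handle it I would first observe that there is nothing to prove unless $\nu$ has a finite logarithmic moment at infinity, i.e.\ $\int\log^{+}|s|\,d\nu(s)<\infty$: otherwise $U^{\nu}\equiv-\infty$, which together with the finiteness of $I(\mu)$ makes the hypothesis vacuous. Choosing $z_0$ generically (off a set of area measure zero) one may also arrange $\int\log^{+}\tfrac{1}{|z-z_0|}\,d\nu(z)<\infty$, so that $\hat\nu$ has a finite logarithmic moment. Then I would truncate: for large $R$, $\hat\nu_R=\hat\nu|_{\overline{D(0,R)}}$ is compactly supported, the measure $\hat\nu_R+\bigl(\mu(\mathbb{C})-\nu(\mathbb{C})\bigr)\delta_0$ still satisfies the mass constraint, and, since $\int_{|w'|>R}\log|w-w'|\,d\hat\nu(w')\geq0$ for $w$ in a fixed disk once $R$ is large, one has $U^{\hat\nu}\leq U^{\hat\nu_R}$ on $S(\hat\mu)$, so the truncated measure still dominates $U^{\hat\mu}$ in the $\hat\mu$-a.e. sense there. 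Applying \cite[Theorem II.3.2]{SafTot} for each large $R$ and letting $R\to\infty$ — the truncation correction tends to $0$ pointwise by dominated convergence against the finite logarithmic moment — gives $U^{\hat\mu}\leq U^{\hat\nu'}+c'$ on all of $\mathbb{C}$.

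Transforming back by the same potential identities undoes every correction term and gives $U^{\mu}(z)\leq U^{\nu}(z)+c$ for every $z\in\mathbb{C}\setminus\{z_0\}$. Since $\mathbb{C}\setminus S(\mu)$ has positive area, for any prescribed $z_1\in\mathbb{C}$ one can pick the auxiliary point $z_0\neq z_1$ also satisfying the genericity conditions above, whence the inequality holds at $z_1$ too, and therefore everywhere. Apart from this, the argument uses no potential-theoretic input beyond the compactly supported principle of domination; the work is entirely bookkeeping — tracking the point mass $\bigl(\mu(\mathbb{C})-\nu(\mathbb{C})\bigr)\delta_0$ produced at the image of $\infty$, and taming the possible non-compactness of $\hat\nu$ by the truncation-and-limit argument.
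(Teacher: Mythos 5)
Your argument is correct, and its skeleton --- inversion at a point $z_0\notin S(\mu)$, the transformation law $U^{\hat\mu}(\phi(z))=U^{\mu}(z)+\mu(\mathbb C)\log|z-z_0|+\mathrm{const}$, and the compensating point mass $\bigl(\mu(\mathbb C)-\nu(\mathbb C)\bigr)\delta_0$ at the image of infinity --- is exactly the one used in the paper. The only genuine divergence is how the possible unboundedness of $S(\nu)$ is tamed. The paper does this \emph{before} inverting, by first proving the intermediate case ``$S(\mu)$ compact, $\nu$ arbitrary'': it replaces $\nu$ by its balayage $\widehat\nu=\Bal(\nu,D_R)$ onto a disk $D_R\supset S(\mu)$, which changes $U^{\nu}$ only by an additive constant on $D_R$ and only decreases it (up to that constant) elsewhere, so the compactly supported Theorem II.3.2 of Saff--Totik applies to $\mu$ and $\widehat\nu$ and the inequality transfers back; the inversion step then simply invokes this intermediate case for the non-compactly supported measure $\hat\nu+\bigl(\mu(\mathbb C)-\nu(\mathbb C)\bigr)\delta_0$. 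You instead work after the inversion and truncate $\hat\nu$ to $\overline{D(0,R)}$, apply the compactly supported theorem for each $R$, and let $R\to\infty$; this obliges you to add the (correct) preliminary reductions that the hypothesis is vacuous unless $\int\log^{+}|s|\,d\nu(s)<\infty$, and that $z_0$ may be chosen off an exceptional set of zero capacity so that $\hat\nu$ inherits a finite logarithmic moment, which is what makes the dominated-convergence passage to the limit legitimate. Both routes are sound: the balayage route is shorter because the inequality $U^{\Bal(\nu,D_R)}\le U^{\nu}+\ell$ comes packaged with the balayage construction, while yours avoids balayage altogether at the price of the genericity-of-$z_0$ and truncation bookkeeping, and of having to vary $z_0$ at the end to recover the inequality at the excluded point.
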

\begin{proof}
Let us first establish Theorem~\ref{thm:domination} under the assumption
that $\mu$ has compact support, say $S(\mu) \subset D_R = \{ z \mid |z| \leq R \}$.
Let $\hat \nu$ be the balayage of $\nu$ onto $D_R$. By the properties
of balayage, we then have $\int d\hat{\nu} = \int d\nu$ and for certain constant $\ell \geq 0$,
\begin{equation} \label{eq:balayagedom}
\begin{aligned}
    U^{\hat{\nu}}(z) & = U^{\nu}(z) + \ell, \qquad z \in D_R, \\
    U^{\hat{\nu}}(z) &\leq U^{\nu}(z) + \ell, \qquad z \in \mathbb C.
    \end{aligned}
    \end{equation}
Then if \eqref{eq:dominationinequality} holds $\mu$-a.e., we
find from the equality in \eqref{eq:balayagedom} and the fact that $S(\mu) \subset D_R$ that
\[ U^{\mu}(z) \leq U^{\hat \nu}(z) + c-\ell \qquad \mu\text{-a.e.}. \]
Thus by the principle of domination for measures with compact supports,
see \cite[Theorem II.3.2]{SafTot}, we find
$U^{\mu} \leq U^{\hat \nu} + c - \ell$ on $\mathbb C$, which in view of the
inequality in \eqref{eq:balayagedom} leads to $U^{\mu} \leq U^{\nu} + c$ on $\mathbb C$, as required.

We next assume that $\mu$ is as in the statement of the theorem. As $S(\mu) \neq \mathbb C$,
there is some disk $D(z_0, r) = \{z \in \mathbb C \mid |z-z_0| < r \}$ with $r > 0$ that is disjoint from $S(\mu)$.
By translation and dilation invariance of the statement in Theorem~\ref{thm:domination}
we may assume that $D(0,1)$ is disjoint from $S(\mu)$. We may also assume
that $U^{\nu}(0) < \infty$.

Let $\tilde{\mu}$ be the image of $\mu$ under the inversion $z \mapsto 1/z$.
Then $\tilde{\mu}$ has compact support and straightforward calculations shows that
\begin{equation} \label{Umutilde}
    U^{\tilde \mu}(z) = U^{\mu}(1/z) - \log |z| \int d\mu - U^{\mu}(0)
    \end{equation}
    and $I(\tilde \mu) = I(\mu)$.
If $\tilde{\nu}$ is the image of $\nu$ under the inversion $z \mapsto 1/z$,
then we likewise have
\begin{equation} \label{Unutilde}
    U^{\tilde \nu}(z) = U^{\nu}(1/z) - \log |z| \int d\nu - U^{\nu}(0).
    \end{equation}
Then from \eqref{eq:dominationinequality}, \eqref{Umutilde} and \eqref{Unutilde}, we get
\[ U^{\tilde \mu} \leq U^{\tilde \nu} -  \log |z| \left( \int d\mu - \int d \nu\right) + c - U^{\mu}(0) + U^{\nu}(0),
    \qquad \tilde \mu-\text{a.e.} \]
Thus $U^{\tilde \mu} \leq U^{\nu_1} + c_1$, $\tilde \mu$-a.e.\ where $c_1 =  c - U^{\mu}(0) + U^{\nu}(0)$ and
$\nu_1 =  \tilde \nu + (\int d\mu - \int d\nu) \delta_0$ is a finite
positive measure with the same total mass as $\tilde \mu$. By the principle
of domination for compactly supported measures $\mu$ and arbitrary $\nu$, that we just proved,
we find $U^{\tilde \mu} \leq U^{\nu_1} + c_1$ everywhere,
which in turn by \eqref{Umutilde} and \eqref{Unutilde} leads to $U^{\mu} \leq U^{\nu} + c$.
This proves the theorem.
\end{proof}

The following result is stated for compactly supported measures in
\cite[Theorem IV.4.5]{SafTot}, see also \cite{TotUll} where the result is attributed
to de la Vall\'ee Poussin \cite{dlVP}.

\begin{theorem} \label{thm:delaValleePoussin}
Let $\mu$ and $\nu$ be measures on $\mathbb C$ with $\int d\nu \leq \int d\mu$,
$S(\mu) \neq \mathbb C$, and finite logarithmic potentials $U^{\mu}$ and $U^{\nu}$. Suppose
that for some $c \in \mathbb R$, we have
\begin{equation} \label{Umudomination}
    U^{\mu}(z) \leq U^{\nu}(z) + c, \qquad z \in S(\mu).
    \end{equation}
Let
\[ A = \{ z \in \mathbb C \mid U^{\mu}(z) = U^{\nu}(z) + c \}. \]
Then
\[ \nu \mid_A \leq \mu \mid_A \]
in the sense that $\nu(B) \leq \mu(B)$ for every Borel set $B \subset A$.
\end{theorem}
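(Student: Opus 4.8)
The plan is to reduce the de la Vall\'ee Poussin--type statement (Theorem~\ref{thm:delaValleePoussin}) to the compactly supported case already available in \cite[Theorem IV.4.5]{SafTot}, using exactly the same two-step strategy as in the proof of the principle of domination (Theorem~\ref{thm:domination}): first handle $\mu$ with compact support via balayage of $\nu$, then remove the compact support hypothesis on $\mu$ by an inversion $z \mapsto 1/z$.

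\emph{Step 1: compactly supported $\mu$.} Assume $S(\mu) \subset D_R$. Let $\hat\nu$ be the balayage of $\nu$ onto $D_R$, so that $\int d\hat\nu = \int d\nu \leq \int d\mu$ and, for a constant $\ell \geq 0$, $U^{\hat\nu} = U^\nu + \ell$ on $D_R$ and $U^{\hat\nu} \leq U^\nu + \ell$ on $\mathbb{C}$. On $S(\mu) \subset D_R$ the hypothesis \eqref{Umudomination} becomes $U^\mu \leq U^{\hat\nu} + (c - \ell)$, so the compactly supported version of the theorem applies and yields $\hat\nu|_{\hat A} \leq \mu|_{\hat A}$, where $\hat A = \{U^\mu = U^{\hat\nu} + c - \ell\}$. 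The key point is to compare $\hat A$ with $A = \{U^\mu = U^\nu + c\}$. On $D_R$ we have $U^{\hat\nu} = U^\nu + \ell$, so $\hat A \cap D_R = A \cap D_R$; since $S(\mu) \subset D_R$, both $\mu|_A$ and $\mu|_{\hat A}$ are concentrated there, and one checks that $\nu|_A$ and $\hat\nu|_{\hat A}$ agree on $D_R$ as well because balayage does not move mass already inside $D_R$ — more carefully, for a Borel set $B \subset A \cap D_R$ one has $\nu(B) \leq \hat\nu(B)$ by the defining property of balayage restricted to $D_R$ (mass is only pushed inward), hence $\nu(B) \leq \hat\nu(B) \leq \mu(B)$. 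This gives $\nu|_A \leq \mu|_A$ in the compactly supported-$\mu$ case.

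\emph{Step 2: general $\mu$ with $S(\mu) \neq \mathbb{C}$.} By translation and dilation invariance we may assume the open unit disk $D(0,1)$ is disjoint from $S(\mu)$ and that $U^\nu(0), U^\mu(0)$ are finite. Apply the inversion $z \mapsto 1/z$: let $\tilde\mu, \tilde\nu$ be the pushforwards of $\mu, \nu$, so $\tilde\mu$ has compact support and the Kelvin-type identities \eqref{Umutilde}, \eqref{Unutilde} hold. Setting $\nu_1 = \tilde\nu + \bigl(\int d\mu - \int d\nu\bigr)\delta_0$, the hypothesis \eqref{Umudomination} transforms into $U^{\tilde\mu} \leq U^{\nu_1} + c_1$ on $S(\tilde\mu)$ for an explicit constant $c_1$, with $\int d\nu_1 = \int d\tilde\mu$. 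Step~1 applies to the pair $(\tilde\mu, \nu_1)$ and gives $\nu_1|_{\tilde A} \leq \tilde\mu|_{\tilde A}$ where $\tilde A = \{U^{\tilde\mu} = U^{\nu_1} + c_1\}$. Pulling back through the inversion, $\tilde A$ corresponds to $A$ (away from $0$, and $0 \notin S(\tilde\mu)$ is irrelevant since the added point mass $\delta_0$ sits where $\tilde\mu$ has no mass), and the restriction inequality transports to $\nu|_A \leq \mu|_A$.

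The main obstacle I anticipate is the bookkeeping in Step~1 of relating the equality set $\hat A$ for the balayage measure to the original equality set $A$, and in Step~2 making sure the inversion correctly matches $\tilde A$ with the image of $A$ while the artificial mass $(\int d\mu - \int d\nu)\delta_0$ does not spuriously contribute to the restriction inequality — this is handled by observing that $0 \notin S(\tilde\mu)$, so on any Borel $B \subset \tilde A$ one may as well discard the point $0$, i.e. replace $B$ by $B \setminus \{0\}$, on which $\nu_1 = \tilde\nu$. Everything else is a routine transcription of the compactly supported proof in \cite{SafTot}, exactly parallel to how Theorem~\ref{thm:domination} was extended above.
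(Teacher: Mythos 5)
Your argument is correct, but it follows a different reduction than the paper's. The paper does not invert for this theorem at all: given a bounded Borel set $B \subset A$, it first globalizes \eqref{Umudomination} via Theorem~\ref{thm:domination}, then chooses $R$ so large that $B$ lies well inside $D_R$ and takes the balayage of \emph{both} $\mu$ and $\nu$ onto $D_R$ — so that $\hat\mu$ automatically has compact support even when $S(\mu)$ is unbounded — applies \cite[Theorem IV.4.5]{SafTot} to the pair $(\hat\mu,\hat\nu)$, and transfers the conclusion back using that both balayages coincide with the original measures on the interior of $D_R$. You balayage only $\nu$, which forces you to keep $S(\mu)$ compact in Step~1 and hence to add the inversion of Step~2 for general $\mu$ with $S(\mu)\neq\mathbb C$; the cost is the extra step and the bookkeeping with the artificial atom at $0$, the gain is that your transfer back is the clean one-sided monotonicity $\nu(B)\leq\hat\nu(B)$ (balayage only adds mass inside $D_R$) rather than an equality-in-the-interior argument, and your proof does not need the extended domination principle of Theorem~\ref{thm:domination} as a prerequisite. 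Two points to tighten: in Step~1 your conclusion as written only covers $B \subset A\cap D_R$, so you should let $R$ grow (it is only constrained to contain $S(\mu)$) to reach arbitrary bounded $B\subset A$ and then exhaust unbounded $B$ — this mirrors the $B$-dependent choice of $R$ in the paper; and in Step~2, when $\int d\mu > \int d\nu$ the measure $\nu_1$ has infinite potential at $0$, so one should note explicitly that $0\notin\tilde A$ in that case (since $U^{\tilde\mu}(0)$ is finite), which is consistent with your remark that the atom never contributes to any $B\subset\tilde A$.
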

\begin{proof}
By  Theorem \ref{thm:domination} we obtain from \eqref{Umudomination} that
\begin{equation} \label{eq:proofdlVP1}
    U^{\mu}(z) \leq U^{\nu}(z) + c, \qquad z \in \mathbb C.
    \end{equation}

It is enough to consider bounded Borel sets $B \subset A$.
Given such a $B$ we choose $R > 0$ such that $|z| < R/2$ for every
$z \in B$.
Let $\hat{\mu}$ and $\hat{\nu}$ be the balayages of $\mu$
and $\nu$ onto the closed disk $D_R := \{ z \in \mathbb C \mid |z| \leq |R|\}$.
By the properties of balayage we have, for certain constants $\ell_1$ and $\ell_2$,
\[ U^{\hat{\mu}}(z) = U^{\mu}(z) + \ell_1, \qquad U^{\hat{\nu}}(z) = U^{\nu}(z) + \ell_2,
    \qquad z \in D_R. \]
It then follows from \eqref{eq:proofdlVP1} that
\begin{equation} \label{eq:proofdlVP2}
    U^{\hat{\mu}}(z) \leq U^{\hat{\nu}}(z) + c + \ell_1 - \ell_2, \qquad z \in D_R
    \end{equation}
and again by Theorem \ref{thm:domination}
the inequality extends to all of $\mathbb C$, since $S(\hat{\mu}) \subset D_R$.
Equality holds in \eqref{eq:proofdlVP2} for $z \in D_R \cap A$,
and so in particular for $z \in B$.

Then by  \cite[Theorem IV.4.5]{SafTot}, we have that
$\hat{\nu}(B) \leq \hat{\mu}(B)$. Then also $\nu(B) \leq \mu(B)$
since $B$ is contained in the interior of $D_R$ and on $D_R$ the balayage
measures $\hat{\mu}$ and $\hat{\nu}$ differ
from $\mu$ and $\nu$ only on the boundary $\partial D_R =  \{ z \in \mathbb C \mid |z| = R\}$.
\end{proof}

We do not know if the condition $S(\mu) \neq \mathbb C$ is necessary in
Theorems~\ref{thm:domination} and \ref{thm:delaValleePoussin}. The condition is more
than sufficient for the purposes of this paper, since we will only
be dealing with  measures that are supported on either the real line or
the imaginary axis.

\subsection{Equilibrium problem for $\nu_3$} \label{subsec:equilibriumnu3}

Given a measure $\nu_2 \leq \sigma_2$ on $i \mathbb R$ with finite logarithmic energy
and $\int d\nu_2 = 2/3$,
the equilibrium problem for $\nu_3$ is to minimize
\begin{equation} \label{eq:nu3problem}
    I(\nu) + \int (V_3(x) - U^{\nu_2}(x)) \, d\nu(x)
    \end{equation}
among all measures $\nu$ on $\mathbb R$ with
$\int d\nu = 1/3$. In case $\alpha > 0$ we have
$V_3 \equiv 0$ and then we have that the minimizer $\nu_3$
of \eqref{eq:nu3problem} is equal to
\[ \nu_3 = \frac{1}{2} \hat{\nu}_2 = \frac{1}{2} \Bal(\nu_2, \mathbb R) \]
where $\hat{\nu}_2$ denotes the balayage of $\nu_2$
onto $\mathbb R$. Then $\nu_3$ has the density
\[ \frac{d\nu_3}{dx} = \frac{1}{\pi} \int_{i \mathbb R} \frac{|z|}{x^2 + |z|^2} d\nu_2(z) \]
and the support of $\nu_3$ is the full real line.
This is similar to what is happening
for the case $\alpha = 0$ in \cite[Section 4.2]{DuiKu2}.

In case $\alpha < 0$, the external field $V_3$ is positive on the interval
$(-x^*(\alpha), x^*(\alpha))$ and zero outside. We use this fact
to prove the following inequalities for $\nu_3$.

\begin{lemma} \label{lem:nu3balayage}
Let $\nu_3$ be the minimizer for \eqref{eq:nu3problem} among measures
on $\mathbb R$ with $\int d\nu = 1/3$.
Let $c \geq x^*(\alpha)$, and let
\begin{equation} \label{eq:defAc}
    A_c = \mathbb R \setminus (-c, c).
    \end{equation}
Then we have
\begin{equation} \label{eq:nu3estimate1}
    (\nu_3) \mid_{A_c} \,  \geq \frac{1}{2} \Bal(\nu_2, \mathbb R) \mid_{A_c}.
    \end{equation}
and
\begin{equation} \label{eq:nu3estimate2}
    (\nu_3) \mid_{A_c} \, \leq \frac{1}{2} \Bal(\nu_2, A_c).
    \end{equation}
\end{lemma}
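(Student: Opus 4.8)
The plan is to exploit the variational (Euler–Lagrange) characterization of the minimizer $\nu_3$ of the quadratic functional \eqref{eq:nu3problem}. Since $\nu_3$ minimizes $I(\nu) + \int (V_3 - U^{\nu_2})\,d\nu$ over measures on $\mathbb R$ with total mass $1/3$, there is a constant $\ell$ such that $2U^{\nu_3}(x) + V_3(x) - U^{\nu_2}(x) = \ell$ on $S(\nu_3)$ and $\geq \ell$ on $\mathbb R\setminus S(\nu_3)$. The point of taking $c \geq x^*(\alpha)$ is that on $A_c = \mathbb R\setminus(-c,c)$ one has $V_3 \equiv 0$ (by \eqref{eq:V3b}, since $V_3$ is supported in $(-x^*(\alpha),x^*(\alpha))\subset(-c,c)$), so on $A_c$ the equilibrium condition reads $2U^{\nu_3} - U^{\nu_2} = \ell$ (resp.\ $\geq \ell$) — i.e.\ on $A_c$ the measure $\nu_3$ behaves like a genuine (constrained/obstacle-free) equilibrium measure in the external field $-\tfrac12 U^{\nu_2}$.

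For the first inequality \eqref{eq:nu3estimate1}, I would compare $\nu_3$ with $\tfrac12\Bal(\nu_2,\mathbb R)$. The balayage measure satisfies $U^{\Bal(\nu_2,\mathbb R)} = U^{\nu_2} + c_0$ on $\mathbb R$ for some constant $c_0\geq 0$ (properties of balayage, \cite{SafTot}), so $2U^{\frac12\Bal(\nu_2,\mathbb R)} - U^{\nu_2} = c_0$ on all of $\mathbb R$; in particular it equals the constant $c_0$ everywhere on $A_c$ while $2U^{\nu_3}-U^{\nu_2}\geq\ell$ there with equality on $S(\nu_3)\cap A_c$. Consider the signed measure $\mu := \nu_3 - \tfrac12\Bal(\nu_2,\mathbb R)$ restricted appropriately; then $U^{\mu}$ is, up to a constant, $U^{\nu_3} - U^{\frac12\Bal(\nu_2,\mathbb R)}$, which is $\geq$ const on $A_c$ with equality on $S(\nu_3)\cap A_c$. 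Applying the de la Vallée Poussin–type comparison (Theorem \ref{thm:delaValleePoussin}) — or rather its mechanism — on the set where equality holds gives that on $A_c$ the mass of $\nu_3$ dominates that of $\tfrac12\Bal(\nu_2,\mathbb R)$, which is exactly \eqref{eq:nu3estimate1}. The technical care here is to handle the fact that $\nu_3$ and $\Bal(\nu_2,\mathbb R)$ have equal total mass $1/3$ but $\nu_3$ may charge $(-c,c)$, so one argues on the restriction and uses that outside $A_c$ the potential comparison still runs the right way.

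For the second inequality \eqref{eq:nu3estimate2}, I would instead compare with $\tfrac12\Bal(\nu_2,A_c)$, the balayage of $\nu_2$ out of the gap $(-c,c)$ onto $A_c$. This measure is supported on $A_c$ and satisfies $U^{\Bal(\nu_2,A_c)} \leq U^{\nu_2} + c_1$ on $\mathbb C$ with equality on $A_c$ (with $c_1 = 0$ in the unbounded-support convention, or a constant). Hence $2U^{\frac12\Bal(\nu_2,A_c)} - U^{\nu_2}$ equals a constant on $A_c$, while $2U^{(\nu_3)|_{A_c}} - U^{\nu_2}$ has a definite sign relation coming from the obstacle problem. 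Again one forms the difference of potentials and invokes Theorem \ref{thm:delaValleePoussin} (principle of domination plus de la Vallée Poussin) on the coincidence set; the inequality of masses comes out with the opposite orientation because now the comparison measure is the one that is ``swept onto'' $A_c$ rather than ``swept in from'' it. Both halves rely on Theorems \ref{thm:domination} and \ref{thm:delaValleePoussin} being available for our measures, whose supports are contained in $\mathbb R$ or $i\mathbb R$ (hence $\neq\mathbb C$), so the hypotheses are met.

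The main obstacle I anticipate is bookkeeping of the additive constants ($\ell$, $c_0$, $c_1$, the $U^{\nu_2}(0)$-type shifts from balayage on the imaginary axis) and, more substantively, correctly restricting to $A_c$: the minimizer $\nu_3$ is globally defined on $\mathbb R$ and the variational inequality involves $V_3$ which is only zero on $A_c$, so one must verify that the comparison arguments are not polluted by what happens inside $(-c,c)$. Concretely, I would isolate $(\nu_3)|_{A_c}$ versus $(\nu_3)|_{(-c,c)}$, note that on $A_c$ the relevant potential of the full $\nu_3$ still appears, and check that the extra mass sitting in $(-c,c)$ only helps (it makes $U^{\nu_3}$ larger on $A_c$), which is consistent with the direction of \eqref{eq:nu3estimate1} and does not break \eqref{eq:nu3estimate2} once one works with $\Bal(\nu_2,A_c)$ supported entirely on $A_c$.
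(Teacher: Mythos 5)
Your overall strategy is the paper's: exploit the Euler--Lagrange conditions for \eqref{eq:nu3problem}, use that $V_3\equiv 0$ on $A_c$ when $c\geq x^*(\alpha)$, and compare $2\nu_3$ with $\Bal(\nu_2,\mathbb R)$ for the lower bound and with $\Bal(\nu_2,A_c)$ for the upper bound via Theorems \ref{thm:domination} and \ref{thm:delaValleePoussin}. However, as written your argument for \eqref{eq:nu3estimate1} has a gap at the decisive step. You correctly record that $2U^{\nu_3}-U^{\hat\nu_2}=\ell$ only on $S(\nu_3)\cap A_c$ and $\geq\ell$ on the rest of $A_c$, and then invoke de la Vall\'ee Poussin ``on the set where equality holds'' to conclude the mass inequality ``on $A_c$.'' But Theorem \ref{thm:delaValleePoussin} only yields $\Bal(\nu_2,\mathbb R)\mid_A\leq 2\nu_3\mid_A$ on the coincidence set $A$; if $A\cap A_c$ were only $S(\nu_3)\cap A_c$, then on a Borel set $B\subset A_c\setminus S(\nu_3)$ you would have $\nu_3(B)=0$ while $\Bal(\nu_2,\mathbb R)(B)>0$ (the balayage onto $\mathbb R$ has positive density everywhere), and \eqref{eq:nu3estimate1} would fail there. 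The missing step is to prove that \emph{all} of $A_c$ lies in the coincidence set. This is done by a two-sided argument: from the variational equality on $S(\nu_3)$ and $V_3\geq 0$ one gets $2U^{\nu_3}\leq U^{\hat\nu_2}+\ell$ on $S(\nu_3)$, which the principle of domination (Theorem \ref{thm:domination}, applicable since $2\nu_3$ and $\hat\nu_2$ have equal mass $2/3$) extends to all of $\mathbb C$; combining with the variational inequality $2U^{\nu_3}\geq U^{\nu_2}+\ell=U^{\hat\nu_2}+\ell$ valid on all of $A_c$ (because $V_3=0$ there) forces equality on the whole of $A_c$. Only then does Theorem \ref{thm:delaValleePoussin} deliver \eqref{eq:nu3estimate1}.

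A related point affects your sketch of \eqref{eq:nu3estimate2}. To apply Theorem \ref{thm:delaValleePoussin} with $\mu=\Bal(\nu_2,A_c)$ you need the relation $U^{\Bal(\nu_2,A_c)}=U^{\nu_2}=2U^{\nu_3}-\ell$ to hold on $S(\Bal(\nu_2,A_c))=A_c$, i.e.\ you need the variational \emph{equality} (not just the inequality) on every point of $A_c$. This requires $A_c\subset S(\nu_3)$, which is not automatic but is a consequence of the already-proved \eqref{eq:nu3estimate1} together with the strict positivity of the density of $\Bal(\nu_2,\mathbb R)$. Your plan does not make this dependence of the second inequality on the first explicit, and without it the ``opposite orientation'' of the comparison does not get off the ground. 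With these two repairs the argument closes and coincides with the paper's proof.
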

\begin{proof}
The variational conditions associated with the minimization
problem for \eqref{eq:nu3problem} are
\begin{equation} \label{eq:nu3estimate3}
\left\{
\begin{aligned} 2 U^{\nu_3}(x)  + V_3(x) - U^{\nu_2}(x) & = \ell,
    \qquad x \in S(\nu_3), \\
    2 U^{\nu_3}(x)  + V_3(x) - U^{\nu_2}(x) & \geq \ell,
    \qquad x \in \mathbb R,
    \end{aligned}
    \right.
\end{equation}
where $\ell$ is a constant.
Let $\hat{\nu}_2 = \Bal(\nu_2, \mathbb R)$ be the balayage of $\nu_2$ onto $\mathbb R$.
Then $U^{\hat{\nu}_2} = U^{\nu_2}$ on $\mathbb R$,
so that it follows from \eqref{eq:nu3estimate3} that
\[ 2 U^{\nu_3}(x) = U^{\hat{\nu}_2}(x) - V_3(x) + \ell, \qquad x \in S(\nu_3). \]
Since $V_3(x) \geq 0$, we conclude that
\[
    2 U^{\nu_3}(x) \leq U^{\hat{\nu}_2}(x) + \ell, \qquad x \in S(\nu_3),
    \]
By the principle of domination, see Theorem \ref{thm:domination} (note that
the total masses of $2 \nu_3$ and $\hat{\nu}_2$ are equal),
we have that the inequality holds for every $x \in \mathbb C$,
\begin{equation} \label{eq:nu3estimate4}
    2 U^{\nu_3}(x) \leq U^{\hat{\nu}_2}(x) + \ell, \qquad x \in \mathbb C.
    \end{equation}

For $x \in A_c$, we have $V_3(x) = 0$ because of the definition \eqref{eq:defAc} with $c \geq x^*(\alpha)$.
Hence
\[ 2 U^{\nu_3}(x) \geq U^{\nu_2}(x) + \ell = U^{\hat{\nu}_2}(x) + \ell, \qquad x \in A_c, \]
because of the inequality in \eqref{eq:nu3estimate3}.
Then by inequality \eqref{eq:nu3estimate4}, we find that equality holds.
Thus
\[ A_c \subset \{ x \mid 2 U^{\nu_3}(x) = U^{\hat{\nu}_2}(x) + \ell \}, \]
and the inequality \eqref{eq:nu3estimate1} follows because of
\eqref{eq:nu3estimate4} and Theorem \ref{thm:delaValleePoussin}.

\medskip

The second inequality \eqref{eq:nu3estimate2} follows in a similar (even simpler) way.
Now we redefine $\hat{\nu}_2$ as the balayage of $\nu_2$ onto $A_c$:
\[ \hat{\nu}_2 = \Bal(\nu_2, A_c). \]

Let $x \in A_c$. Then $x \in S(\nu_3)$ because of \eqref{eq:nu3estimate1}, which has
already been proved.
Then we have, by the property of balayage and \eqref{eq:nu3estimate3}, since $V_3(x) = 0$ for $x \in A_c$,
\[ U^{\hat{\nu}_2}(x) = U^{\nu_2}(x)  = 2 U^{\nu_3}(x) - \ell, \qquad x \in A_c = S(\hat{\nu}_2). \]
Thus by another application
of Theorem \ref{thm:delaValleePoussin} we find \eqref{eq:nu3estimate2}.
\end{proof}

We note that it follows from \eqref{eq:nu3estimate1} with $c = x^*(\alpha)$, that
\begin{equation} \label{eq:Snu3inclusion}
    (-\infty, - x^*(\alpha)] \cup [x^*(\alpha), \infty) \subset S(\nu_3).
    \end{equation}
For every $c \geq x^*(\alpha)$, we find from \eqref{eq:nu3estimate2}
and the explicit expression for
the balayage onto $A_c$, that
\begin{equation} \label{eq:nu3densityestimate}
    \frac{d\nu_3}{dx} \leq \frac{1}{2 \pi} \frac{|x|}{\sqrt{x^2-c^2}}
        \int_{i\mathbb R} \frac{\sqrt{c^2+|z|}}{x^2 + |z|^2} \, d\nu_2(z),
            \qquad x \in \mathbb R, \, |x| > c.
        \end{equation}

In the case $\alpha < 0$ we make use of Lemma \ref{lem:convex} (c)
and Lemma \ref{lem:nu3balayage}
to conclude that the support of the minimizer $\nu_3$
is of the desired form.
This is done in the next lemma.

\begin{proposition} \label{prop:nu3}
Let $\nu_2$ be a measure on $i \mathbb R$ with
$\int d\nu_2 = 2/3$ and $\nu_2 \leq \sigma_2$. Assume $\nu_2$
has finite logarithmic energy. Let $\nu_3$ be the minimizer
of \eqref{eq:nu3problem} among all measures on $\mathbb R$
with total mass $1/3$. Then the support of  $\nu_3$ is of the form
\begin{equation} \label{eq:Snu3identity}
    S(\nu_3) = \mathbb R \setminus (-c_3, c_3)
    \end{equation}
for some $c_3 \geq 0$.

If $\alpha < 0$ then $c_3 < x^*(\alpha)$, and if $c_3 > 0$
then the density of $\nu_3$ vanishes as a square root at $\pm c_3$.
\end{proposition}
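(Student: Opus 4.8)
The plan is to prove Proposition~\ref{prop:nu3} by combining the potential-theoretic estimates of Lemma~\ref{lem:nu3balayage} with the convexity statement of Lemma~\ref{lem:convex}~(c). First I would dispose of the case $\alpha \geq 0$: there $V_3 \equiv 0$, so the minimizer is $\nu_3 = \tfrac12\Bal(\nu_2,\mathbb R)$, whose density is strictly positive on all of $\mathbb R$; hence $S(\nu_3)=\mathbb R$ and $c_3=0$, which is the assertion with $c_3=0$. So the substance is the case $\alpha<0$, and the goal is to show that the contact set $\{x : 2U^{\nu_3}(x) + V_3(x) - U^{\nu_2}(x) = \ell\}$, which equals $S(\nu_3)$, is the complement of a symmetric interval $(-c_3,c_3)$ with $c_3 < x^*(\alpha)$.

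The key idea is to pass to the variable $x \mapsto \sqrt{x}$ and exploit symmetry. By uniqueness of the minimizer and the evenness of $V_3$ and of $U^{\nu_2}$ (since $\nu_2$ lives on $i\mathbb R$ and is symmetric), $\nu_3$ is symmetric about $0$, so it is enough to understand $S(\nu_3)\cap[0,\infty)$. I would introduce the function
\[
    F(x) = 2U^{\nu_3}(\sqrt{x}) + V_3(\sqrt{x}) - U^{\nu_2}(\sqrt{x}), \qquad x>0,
\]
and note that the variational inequality \eqref{eq:nu3estimate3} says $F(x) \geq \ell$ for $x>0$ with equality precisely on $S(\nu_3)^2\cap(0,\infty)$. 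Now $x \mapsto U^{\nu_3}(\sqrt{x})$ is superharmonic-type and in particular $x\mapsto 2U^{\nu_3}(\sqrt x)$ is concave on any interval free of the support, while Lemma~\ref{lem:convex}~(c) tells us $x \mapsto V_3(\sqrt{x}) - U^{\nu_2}(\sqrt{x})$ is convex and decreasing on $(0,(x^*(\alpha))^2)$. From the inclusion \eqref{eq:Snu3inclusion} we already know $[x^*(\alpha),\infty)\subset S(\nu_3)$, i.e.\ $F \equiv \ell$ on $[(x^*(\alpha))^2,\infty)$. The contact set $\{F=\ell\}$ is closed; if it does not contain a neighborhood of $0$ then its complement in $(0,\infty)$ is a nonempty open set, and I would argue it must be a single interval $(0,c_3^2)$: on any bounded complementary component $(a^2,b^2)$ the function $2U^{\nu_3}(\sqrt x)$ is harmonic hence its second derivative is given by an explicit positive expression, and on $(0,c_3^2)$ with $c_3\le x^*(\alpha)$ convexity of the remaining part forces $F-\ell$ to be convex there, so it cannot return to a zero; combined with $F(0^+)\ge \ell$ and the equality on $[(x^*(\alpha))^2,\infty)$ this pins down a single gap $(-c_3,c_3)$. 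The strict inequality $c_3 < x^*(\alpha)$ then follows because on $(0,(x^*(\alpha))^2)$ the function $V_3(\sqrt x)-U^{\nu_2}(\sqrt x)$ is \emph{strictly} convex (the inequality in Lemma~\ref{lem:convex}~(c) is strict, cf.\ \eqref{eq:estimateonV3sqrtx}), so $F-\ell$ cannot vanish on a whole subinterval adjacent to $x^*(\alpha)^2$ unless it vanishes identically past some point strictly less than $x^*(\alpha)^2$; pushing this argument shows the contact set reaches strictly below $x^*(\alpha)$.

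For the square-root vanishing at the endpoints $\pm c_3$ when $c_3>0$, I would use the standard local analysis of equilibrium measures with real-analytic external field: near $c_3$, the effective external field $V_3 - U^{\nu_2}$ is real analytic (here one needs $0\notin S(\sigma_2-\nu_2)$ so that Lemma~\ref{lem:convex}~(a)-type analyticity applies, or directly that $\nu_2$ has analytic potential near the real axis away from $0$), and $c_3$ is a soft edge of $S(\nu_3)$ at which the density generically vanishes like a square root; the convexity/strict-convexity information rules out the degenerate (higher-order) vanishing. Concretely, one writes $\frac{d\nu_3}{dx} = \frac1\pi h(x)\sqrt{x^2-c_3^2}$ near $c_3$ with $h$ analytic and $h(c_3)\ge 0$, and shows $h(c_3)>0$ using that equality $F(x)=\ell$ fails strictly for $x$ slightly less than $c_3^2$ — which is exactly where the strict convexity of $V_3(\sqrt x)-U^{\nu_2}(\sqrt x)$ enters.

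The main obstacle I anticipate is the second step: rigorously upgrading "the complement of the contact set is open" to "the complement is a single symmetric interval," i.e.\ ruling out additional gaps inside $[x^*(\alpha),\infty)$ as well as oscillation of $F-\ell$ near $0$. The clean way is the convexity argument in the $\sqrt{\cdot}$ variable on $(0,(x^*(\alpha))^2)$ together with the known full-support behavior \eqref{eq:Snu3inclusion} on $[x^*(\alpha),\infty)$; making the interplay between concavity of the potential term off the support and convexity of the field term fully precise — and handling the possible non-analyticity of $U^{\nu_2}$ at $0$ when $0\in S(\sigma_2-\nu_2)$ — is the delicate part. I expect this to mirror closely the corresponding argument in Section~4 of \cite{DuiKu2}, with Lemma~\ref{lem:convex}~(c) doing the extra work needed because of the new external field $V_3$.
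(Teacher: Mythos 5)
Your overall route coincides with the paper's: pass to the variable $x\mapsto x^2$ using the evenness of the effective field, use the convexity of $x\mapsto V_3(\sqrt{x})-U^{\nu_2}(\sqrt{x})$ from Lemma~\ref{lem:convex}(c) to force the support to be the complement of a single symmetric interval, and anchor everything with the inclusion \eqref{eq:Snu3inclusion}. The paper simply quotes \cite[Theorem IV.1.10 (b) and (f)]{SafTot} for the interval property, whereas you rederive it via the contact set; that rederivation is workable, but note the sign slip: $U^{\nu_3}$ is \emph{convex}, not concave, on intervals disjoint from $S(\nu_3)$ (its second derivative there equals $\int (x-s)^{-2}\,d\nu_3(s)>0$) — which is in fact exactly what your ``cannot return to a zero'' step requires, and which you do use correctly a line later. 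You would also need to rule out gaps of the \emph{support} sitting inside the contact set (strict convexity of $2U^{\nu_3}$ on such a gap contradicts $F\equiv\ell$ there), since only then does the argument control $S(\nu_3)$ and not merely the set where equality holds.

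The genuine gap is the strict inequality $c_3<x^*(\alpha)$. Strict convexity of $V_3(\sqrt{x})-U^{\nu_2}(\sqrt{x})$ does not force the contact set to reach below $x^*(\alpha)$: a convex function $F-\ell\ge 0$ on $(0,(x^*(\alpha))^2)$ that vanishes identically on $[(x^*(\alpha))^2,\infty)$ is entirely consistent (take it decreasing and convex), so the scenario $c_3=x^*(\alpha)$ is not excluded by any convexity consideration, and your sentence about ``vanishing on a whole subinterval adjacent to $x^*(\alpha)^2$'' does not produce a contradiction. The paper's argument is of a different nature: if $c_3=x^*(\alpha)>0$, then $V_3\equiv 0$ on $S(\nu_3)$, so the equality in \eqref{eq:nu3variational} forces $2\nu_3=\Bal\bigl(\nu_2,\mathbb R\setminus(-c_3,c_3)\bigr)$; the density of this balayage has an inverse square-root \emph{singularity} at $\pm c_3$, whence $\frac{d}{dx}U^{\nu_3}(x)=\int\frac{d\nu_3(s)}{s-x}\to+\infty$ as $x\to c_3^+$, which is incompatible with $2U^{\nu_3}=U^{\nu_2}-V_3+\ell$ holding on $S(\nu_3)$ because the right-hand side is differentiable at $c_3$. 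Some argument of this kind is indispensable; without it neither $c_3<x^*(\alpha)$ nor the square-root \emph{vanishing} (as opposed to blow-up) of the density at $\pm c_3$ is established.
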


\begin{proof}
If $\alpha \geq 0$, then $S(\nu_3) = \mathbb R$ and we have \eqref{eq:Snu3identity}
with $c_3 = 0$.

For $\alpha < 0$, we use the fact that the external field $V_3(x) - U^{\nu_2}(x)$
is even. So it follows from
\cite[Theorem IV.1.10 (f)]{SafTot} that
$d\nu_3(x) = d \tilde{\nu}_3(x^2)/2$ defines a measure $\tilde{\nu}_3$
which is
the minimizer of the functional
\[ I(\nu) + 2 \int (V_3(\sqrt{x}) - U^{\nu_2}(\sqrt{x})) d\nu(x) \]
among measures $\nu$ on $[0,\infty)$ with $\int d\nu = 1/3$.
Note that
\begin{equation} \label{eq:Stildenu3}
    S(\tilde{\nu}_3) = \{ x^2 \mid x \in S(\nu_3) \}.
    \end{equation}
The external field $V_3(\sqrt{x}) - U^{\nu_2}(\sqrt{x})$ is convex
on $[0,(x^*(\alpha))^2]$ by Lemma \ref{lem:convex} (c),  which implies
by \cite[Theorem IV.1.10 (b)]{SafTot}
that $S(\tilde{\nu}_3) \cap [0,(x^*(\alpha))^2]$ is an interval.

By \eqref{eq:Snu3inclusion} we already know that $x^*(\alpha) \in
S(\nu_3)$. Then $x^*(\alpha)^2 \in S(\tilde{\nu}_3)$, and it follows
that
\[ S(\tilde{\nu}_3) \cap [0, x^*(\alpha)^2] = [c_3^2, x^*(\alpha)^2],
    \qquad \text{for some } c_3 \in [0, x^*(\alpha)]. \]
Combining this with \eqref{eq:Snu3inclusion}, \eqref{eq:Stildenu3},
we find \eqref{eq:Snu3identity}.

Now suppose $c_3 = x^*(\alpha) > 0$.
The minimizer $\nu_3$ is characterized by the variational conditions
\begin{equation} \label{eq:nu3variational}
\left\{
\begin{aligned}
    2U^{\nu_3}(x) & = U^{\nu_2}(x) - V_3(x), \qquad x\in S(\nu_3) = (-\infty, -c_3] \cup [c_3, \infty), \\
    2U^{\nu_3}(x) & > U^{\nu_2}(x) - V_3(x), \qquad x \in \mathbb R \setminus S(\nu_3),
\end{aligned}
\right.
\end{equation}
where the inequality on $\mathbb R \setminus S(\nu_3)$ is indeed strict
due to the convexity of $V_3(\sqrt{x}) - U^{\nu_2}(\sqrt{x})$.
Since $c_3 = x^*(\alpha)$, we have that $2 U^{\nu_3} = U^{\nu_2}$ on $S(\nu_3)$ and so
$2 \nu_3$ is the balayage of $\nu_2$ onto $S(\nu_3)$. Then the density of $\nu_3$
has a square root singularity at $\pm c_3$ and then it easily follows that
\[ \frac{d}{dx} U^{\nu_3}(x) = \int \frac{d\nu_3(s)}{s-x} \to + \infty
    \qquad  \textrm{as } x \to c_3+. \]
This is not compatible with \eqref{eq:nu3variational}, since $U^{\nu_2}(x) - V_3(x)$
is differentiable at $x=c_3$. Therefore $c_3 < x^*(\alpha)$ in case $x^*(\alpha) > 0$.

Finally, if $c_3 > 0$ then the density of $\nu_3$ vanishes as a square root
at $\pm c_3$ as a result of the convexity of  $V_3(\sqrt{x}) - U^{\nu_2}(\sqrt{x})$ again.
The proposition is proved.
\end{proof}

\subsection{Equilibrium problem for $\nu_1$} \label{subsec:equilibriumnu1}

Given $\nu_2$ and $\nu_3$ the equilibrium problem for $\nu_1$
is to minimize
\begin{equation} \label{eq:nu1problem}
    I(\nu) + \int (V(x) - \theta_1(x) - U^{\nu_2}(x)) d\nu(x)
    \end{equation}
among all probability measures $\nu$ on $\mathbb R$.
The minimizer exists and its support is contained
in an interval $[-X,X]$ that is independent of $\nu_2$.

This can be proved as in \cite{DuiKu2}, where weighted polynomials
were used.
Here we give a  proof using  potential theory.

\begin{lemma} \label{lem:nu1support}
Let $\tilde{\nu}_1$ be the minimizer of the weighted energy
\begin{equation} \label{eq:tildenu1problem}
    I(\nu) + \int (V(x) - \theta_1(x)) d\nu(x)
    \end{equation}
among probability measures on $\mathbb R$, and suppose
that
\begin{equation} \label{eq:tildenu1support}
    S(\tilde{\nu}_1) \subset [-X_1,X_1]
    \end{equation}
for some $X_1 > 0$. Let $\nu_2$ be a measure on
$i\mathbb R$, with finite potential $U^{\nu_2}$ and suppose
that $\nu_1$ is the minimizer for \eqref{eq:nu1problem}.
Then also
\begin{equation} \label{eq:nu1support}
    S(\nu_1) \subset [-X_1,X_1].
    \end{equation}
\end{lemma}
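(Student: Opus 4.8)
The plan is to compare the equilibrium problem \eqref{eq:nu1problem} for $\nu_1$ with the problem \eqref{eq:tildenu1problem} for $\tilde\nu_1$: the two have the same external field apart from the extra term $-U^{\nu_2}$ in \eqref{eq:nu1problem}, and I would argue that this extra term acts as an additional confining potential which cannot enlarge the support. First I would record the Euler--Lagrange conditions: $\nu_1$ satisfies $2U^{\nu_1}+V-\theta_1-U^{\nu_2}=\ell_1$ on $S(\nu_1)$ and $\ge\ell_1$ on $\mathbb R$, while $\tilde\nu_1$ satisfies $2U^{\tilde\nu_1}+V-\theta_1=\tilde\ell$ on $S(\tilde\nu_1)$ and $\ge\tilde\ell$ on $\mathbb R$. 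The two elementary facts that drive the argument are: (i) on the real line $-U^{\nu_2}(x)=\int_{i\mathbb R}\log|x-z|\,d\nu_2(z)$ depends only on $|x|$, since $|x\pm iy|=\sqrt{x^2+y^2}$; hence it is even, and as $V$ and $\theta_1$ are even (the latter by the oddness of $s_1$, Lemma~\ref{lem:Res1}) the external field in \eqref{eq:nu1problem} is even, so $S(\nu_1)$ is symmetric about $0$; and (ii) $\tfrac{d}{dx}\bigl(-U^{\nu_2}\bigr)(x)=\int_{i\mathbb R}\frac{x}{x^2+|z|^2}\,d\nu_2(z)$ has the sign of $x$, so $-U^{\nu_2}$ is strictly increasing on $(0,\infty)$: it is a potential well centred at the origin.

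I would then pass to the half-line by the substitution $x\mapsto x^2$, exactly as in \cite[Thm.~IV.1.10]{SafTot} and already used in the proof of Proposition~\ref{prop:nu3}. Under this substitution $\nu_1$ and $\tilde\nu_1$ correspond to minimizers of weighted-energy functionals on $[0,\infty)$ whose external fields differ by $-2U^{\nu_2}(\sqrt t)=\int_{i\mathbb R}\log\bigl(t+|z|^2\bigr)\,d\nu_2(z)$, and this difference is an \emph{increasing} function of $t\ge0$, since its $t$-derivative is $\int_{i\mathbb R}(t+|z|^2)^{-1}\,d\nu_2(z)>0$; moreover the image of $S(\tilde\nu_1)$ lies in $[0,X_1^2]$. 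It therefore suffices to prove the following comparison statement on the half-line: if the support of the minimizer for some half-line external field is contained in $[0,R]$, then adding a function that is non-decreasing on $[0,\infty)$ leaves the support inside $[0,R]$. This is the potential-theoretic form of the intuitively clear fact that extra confinement pushes mass inward, and one can establish it by letting $\nu^*$ be the equilibrium measure for the perturbed field \emph{constrained} to $[0,R]$ and showing, by comparing the variational conditions for $\nu^*$ with those for the unperturbed minimizer and invoking the principle of domination (Theorem~\ref{thm:domination}) together with the de~la~Vall\'ee~Poussin theorem (Theorem~\ref{thm:delaValleePoussin}), that the variational inequality for $\nu^*$ in fact holds on all of $[0,\infty)$; by uniqueness $\nu^*$ is then the unconstrained minimizer. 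Transferring back through $x\mapsto x^2$ gives $S(\nu_1)\subset[-X_1,X_1]$.

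I expect the comparison step to be the main obstacle. The classical support-comparison results in \cite{SafTot} are stated for compactly supported external fields, whereas here the perturbation $-U^{\nu_2}$ is itself the logarithmic potential of a measure on the unbounded set $i\mathbb R$; this is precisely why the unbounded-support versions of the domination and de~la~Vall\'ee~Poussin principles (Theorems~\ref{thm:domination} and \ref{thm:delaValleePoussin}) were set up beforehand, and some care is needed to combine them correctly with the even-function reduction and to handle the possibility of singular exterior points of $\tilde\nu_1$ (which, by the growth of $V-\theta_1$ at infinity, are in any case confined to a bounded set independent of $\nu_2$). The single analytic input that makes the whole argument work is the monotonicity of $x\mapsto-U^{\nu_2}(x)$ on $(0,\infty)$, i.e.\ the fact that $\nu_2$ is carried by the imaginary axis.
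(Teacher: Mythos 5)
Your argument is correct in substance and rests on exactly the same analytic input as the paper's proof --- that $x\mapsto -U^{\nu_2}(x)$ is even and strictly increasing in $|x|$ because $\nu_2$ lives on $i\mathbb R$, combined with the principle of domination for measures with unbounded support (Theorem~\ref{thm:domination}) --- but it takes a longer route. The paper never passes to the half-line: it simply subtracts the two sets of Euler--Lagrange conditions to get $2U^{\tilde\nu_1}-2U^{\nu_1}+U^{\nu_2}\le\tilde\ell_1-\ell_1$ on $S(\tilde\nu_1)$ and $2U^{\nu_1}-2U^{\tilde\nu_1}-U^{\nu_2}\le\ell_1-\tilde\ell_1$ on $S(\nu_1)$, bounds $U^{\nu_2}(x)\ge U^{\nu_2}(X_1)$ on $S(\tilde\nu_1)\subset[-X_1,X_1]$, extends the resulting inequality $2U^{\tilde\nu_1}\le 2U^{\nu_1}+\tilde\ell_1-\ell_1-U^{\nu_2}(X_1)$ to all of $\mathbb C$ by domination, and then reads off $-U^{\nu_2}(x)\le -U^{\nu_2}(X_1)$ for $x\in S(\nu_1)$, forcing $|x|\le X_1$. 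Your $x\mapsto x^2$ reduction and the intermediate half-line comparison lemma are therefore unnecessary scaffolding; evenness is not actually used beyond the monotonicity of $-U^{\nu_2}$ in $|x|$, so nothing is gained by symmetrizing. That said, your comparison lemma is true and your proposed proof of it does close: with $\nu^*$ the minimizer constrained to $[0,R]$ and $\mu$ the unperturbed minimizer, the variational conditions give $2U^{\mu}-2U^{\nu^*}\le F_\mu-F^*+\Delta(R)$ on $S(\mu)$, domination extends this everywhere, and for $x>R$ the monotonicity $\Delta(x)\ge\Delta(R)$ yields the missing variational inequality for $\nu^*$; note that only Theorem~\ref{thm:domination} is needed here, not Theorem~\ref{thm:delaValleePoussin}. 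Since you left this step as a sketch and flagged it as the main obstacle, be aware that it is in fact the easy part, and that executing it amounts to rediscovering the paper's direct two-step domination argument in disguise.
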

\begin{proof}
The variational conditions associated with the equilibrium
problems for \eqref{eq:tildenu1problem} and \eqref{eq:nu1problem}
are
\begin{equation} \label{eq:conditionstildenu1}
\left\{
\begin{aligned}
    2 U^{\tilde{\nu}_1}(x) + V_1(x) & = \tilde{\ell}_1, && x \in S(\tilde{\nu}_1), \\
    2 U^{\tilde{\nu}_1}(x) + V_1(x) & \geq \tilde{\ell}_1, &&  x \in \mathbb R,
    \end{aligned}
    \right.
    \end{equation}
and
\begin{equation} \label{eq:conditionsnu1}
\left\{
\begin{aligned}
    2 U^{\nu_1}(x) + V_1(x) - U^{\nu_2}(x) & = \ell_1, &&  x \in S(\nu_1), \\
    2 U^{\nu_1}(x) + V_1(x) - U^{\nu_2}(x) & \geq \ell_1, && x \in \mathbb R,
    \end{aligned}
    \right.
    \end{equation}
where $\tilde{\ell}_1$ and $\ell_1$ are certain constants.
Combining the relatons \eqref{eq:conditionstildenu1} and \eqref{eq:conditionsnu1}, we find
\begin{equation} \label{eq:conditionsnu1andtildenu1}
    \left\{
\begin{aligned}  2 U^{\tilde{\nu}_1}(x) - 2 U^{\nu_1}(x) + U^{\nu_2}(x)  & \leq  \tilde{\ell}_{1} - \ell_1,
    && x \in S(\tilde{\nu}_1), \\
    2 U^{\nu_1}(x) -  2 U^{\tilde{\nu}_1}(x) - U^{\nu_2}(x) & \leq \ell_1 - \tilde{\ell}_1,
    && x \in S(\nu_1).
    \end{aligned}
    \right.
\end{equation}

Since $ x \mapsto -U^{\nu_2}(x) = \int \log |x-z| \, d\nu_2(z)$ is strictly increasing
as $|x|$ increases, and since $S(\tilde{\nu}_1) \subset [-X_1,X_1]$, it follows
from the first inequality in \eqref{eq:conditionsnu1andtildenu1}
that
\[ 2 U^{\tilde{\nu}_1}(x) \leq 2 U^{\nu_1}(x) + \tilde{\ell}_{1} - \ell_1 - U^{\nu_2}(X_1),
    \qquad x \in S(\tilde{\nu}_1). \]
By the principle of domination, Theorem \ref{thm:domination}, the inequality holds everywhere
\[ 2 U^{\tilde{\nu}_1}(x) \leq 2 U^{\nu_1}(x) + \tilde{\ell}_{1} - \ell_1 - U^{\nu_2}(X_1),
    \qquad x \in \mathbb C. \]
Then for $x \in S(\nu_1)$ we find by combining this with the second inequality
in \eqref{eq:conditionsnu1andtildenu1}
\[ - U^{\nu_2}(x)  \leq  2 U^{\tilde{\nu_1}}(x) - 2 U^{\nu_1}(x) + \ell_1 - \tilde{\ell}_1
\leq - U^{\nu_2}(X_1), \]
which implies that $|x| \leq X_1$, since $x\mapsto -U^{\nu_2}(x)$ is even and strictly
increasing as $|x|$ increases.
\end{proof}

\subsection{Equilibrium problem for $\nu_2$}  \label{subsec:equilibriumnu2}

Given $\nu_1$ and $\nu_3$ on $\mathbb R$ with total masses
$\int d\nu_1 = 1$, $\int d\nu_3 = 1/3$, the equilibrium problem for $\nu_2$
is to minimize
\begin{equation} \label{eq:nu2problem}
    I(\nu) - \int (U^{\nu_1} + U^{\nu_3}) d\nu
\end{equation}
among all measures on $i \mathbb R$ with $\nu \leq \sigma_2$
and $\int d \nu_2 = 2/3$.

Recall that by Lemma \ref{lem:convex} (b) the density $d\sigma_2(iy)/|dz|$ increases
as $y > 0$ increases. Then we can use exactly the same
arguments as in Lemma 4.5 of \cite{DuiKu2} to conclude that
\begin{equation} \label{eq:Snu2}
    S(\nu_2) = S(\sigma_2)
    \end{equation}
    and
\begin{equation} \label{eq:Ssigma2minnu2}
    S(\sigma_2 - \nu_2) = (-i\infty, - ic_2] \cup [ic_2, i\infty)
    \end{equation}
for some $c_2 \geq 0$. The proof is based on iterated balayage introduced in \cite{KuiDra}.
This proof also shows the following analogue of Lemma \ref{lem:nu3balayage}.

\begin{lemma} \label{lem:nu2balayage}
Let $\nu_1$ and $\nu_3$ be measures on $\mathbb R$ with
$\int d\nu_1 = 1$ and $\int d\nu_3 = 1/3$, and having
finite logarithmic energy. Let $\nu_2$ be the minimizer
of \eqref{eq:nu2problem} among all measures $\nu$ on $i \mathbb R$
with total mass $2/3$ and $\nu \leq \sigma_2$.

Let $c \geq c_2$ and put
\begin{equation} \label{eq:defBc}
    B_c = (-i\infty, -ic] \cup [ic, i\infty).
    \end{equation}
Then we have
\begin{equation} \label{eq:nu2estimate1}
    (\nu_2) \mid_{B_c} \,  \geq \frac{1}{2} \Bal(\nu_1 + \nu_3, i \mathbb R) \mid_{B_c}.
    \end{equation}
and
\begin{equation} \label{eq:nu2estimate2}
    (\nu_2) \mid_{B_c} \, \leq \frac{1}{2} \Bal(\nu_1 + \nu_3, B_c).
    \end{equation}
\end{lemma}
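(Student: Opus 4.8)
The plan is to mirror the argument of Lemma \ref{lem:nu3balayage}, with the roles of the real and imaginary axes interchanged and with the external field $-U^{\nu_1}-U^{\nu_3}$ playing the role of $V_3 - U^{\nu_2}$. First I would write down the Euler--Lagrange conditions for the constrained minimizer $\nu_2$ of \eqref{eq:nu2problem}: there is a constant $\ell$ such that $2U^{\nu_2}(z) - U^{\nu_1}(z) - U^{\nu_3}(z) = \ell$ on $S(\sigma_2 - \nu_2) = B_{c_2}$, while $2U^{\nu_2}(z) - U^{\nu_1}(z) - U^{\nu_3}(z) \geq \ell$ on $S(\nu_2)$ (where $\nu_2 < \sigma_2$ is possible, i.e.\ on the part where the constraint is not saturated). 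Together with $\nu_2 \leq \sigma_2$ this gives $2U^{\nu_2}(z) - U^{\nu_1}(z)-U^{\nu_3}(z) \geq \ell$ on all of $i\mathbb R$, with equality on $B_{c_2} \supset B_c$.

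For the lower bound \eqref{eq:nu2estimate1}: let $\widehat{\nu_{13}} = \Bal(\nu_1+\nu_3, i\mathbb R)$, so that $U^{\widehat{\nu_{13}}} = U^{\nu_1}+U^{\nu_3}$ on $i\mathbb R$ (up to the usual additive balayage constant, which I absorb into $\ell$). Then on $S(\nu_2) \supset B_c$ the equilibrium inequality reads $2U^{\nu_2}(z) \geq U^{\widehat{\nu_{13}}}(z) + \ell$, and since $2\nu_2$ and $\widehat{\nu_{13}}$ have equal total mass $4/3$ and $S(\nu_2)\neq\mathbb C$, the principle of domination (Theorem \ref{thm:domination}) upgrades the reverse inequality $2U^{\nu_2} \leq U^{\widehat{\nu_{13}}} + \ell$ (valid on $S(\nu_2)$ from the full-axis bound, then everywhere) to all of $\mathbb C$; hence equality $2U^{\nu_2}(z) = U^{\widehat{\nu_{13}}}(z)+\ell$ holds on $B_c$. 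Theorem \ref{thm:delaValleePoussin} (de la Vall\'ee Poussin) then yields $\widehat{\nu_{13}}\mid_{B_c} \leq 2\nu_2\mid_{B_c}$, which is \eqref{eq:nu2estimate1}.

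For the upper bound \eqref{eq:nu2estimate2}, redefine $\widehat{\nu_{13}} = \Bal(\nu_1+\nu_3, B_c)$. By \eqref{eq:nu2estimate1} already proved, $B_c \subset S(\nu_2)$, so the equilibrium equality $2U^{\nu_2} = U^{\nu_1}+U^{\nu_3} + \ell$ holds on $B_c$; but on $B_c = S(\widehat{\nu_{13}})$ the balayage property gives $U^{\widehat{\nu_{13}}} = U^{\nu_1}+U^{\nu_3}$, so $2U^{\nu_2} = U^{\widehat{\nu_{13}}} + \ell$ there, while $2U^{\nu_2} \leq U^{\widehat{\nu_{13}}} + \ell$ on $\mathbb C$ by another application of Theorem \ref{thm:domination}. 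A final application of Theorem \ref{thm:delaValleePoussin} gives $2\nu_2\mid_{B_c} \leq \widehat{\nu_{13}}\mid_{B_c}$, i.e.\ \eqref{eq:nu2estimate2}. The main obstacle is purely bookkeeping: one must check carefully that $S(\nu_2-\sigma_2)$-type reasoning with the \emph{upper} constraint still produces the full-imaginary-axis inequality needed to feed into the principle of domination, and that all the measures involved have finite logarithmic energy and non-full support in $\mathbb C$ so that Theorems \ref{thm:domination} and \ref{thm:delaValleePoussin} apply; since everything lives on $i\mathbb R$ these hypotheses hold, exactly as in \cite[Lemma 4.5]{DuiKu2}.
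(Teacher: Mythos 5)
Your overall strategy is legitimate: the paper's own ``proof'' of this lemma is a one-line appeal to iterated balayage, and it explicitly offers the route you take (repeating the variational argument of Lemma \ref{lem:nu3balayage}) as a valid alternative, so fleshing that out is exactly the right thing to do. The skeleton --- variational conditions, then Theorem \ref{thm:domination} to globalize an inequality between $U^{2\nu_2}$ and the potential of a balayage of $\nu_1+\nu_3$, then Theorem \ref{thm:delaValleePoussin} on the equality set $B_{c_2}\supset B_c$ --- is the correct adaptation.

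However, you have the variational inequalities for the upper-constrained problem in the wrong direction, and this makes the write-up internally inconsistent. For the minimizer of \eqref{eq:nu2problem} subject to $\nu\leq\sigma_2$, the correct conditions (compare \eqref{eq:varmu2}, or the bathtub principle: mass saturates the constraint where the variational quantity is \emph{below} the Lagrange level) are
\[
2U^{\nu_2}-U^{\nu_1}-U^{\nu_3}\;\leq\;\ell \ \text{ on } S(\nu_2), \qquad
2U^{\nu_2}-U^{\nu_1}-U^{\nu_3}\;\geq\;\ell \ \text{ on } S(\sigma_2-\nu_2),
\]
hence equality on $S(\sigma_2-\nu_2)=B_{c_2}$ and the inequality $\leq\ell$ (strict where the constraint is active) on the rest of $S(\nu_2)$. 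You instead assert ``$\geq\ell$ on $S(\nu_2)$'' and then ``$\geq\ell$ on all of $i\mathbb R$,'' which is false precisely on the saturated set $S(\nu_2)\setminus B_{c_2}$. The inequality you actually need to feed into Theorem \ref{thm:domination} and then Theorem \ref{thm:delaValleePoussin} (with $\mu=2\nu_2$, $\nu=\Bal(\nu_1+\nu_3,i\mathbb R)$) is the $\leq$ direction on $S(\nu_2)$; you do invoke it, but you justify it ``from the full-axis bound,'' i.e.\ from the $\geq$ inequality you just derived, which is circular and contradictory. Once the signs are fixed, part one goes through exactly as in the proof of \eqref{eq:nu3estimate1}. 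For part two, the equality $U^{\Bal(\nu_1+\nu_3,B_c)}=U^{2\nu_2}+{\rm const}$ on $B_c=S(\Bal(\nu_1+\nu_3,B_c))$ already satisfies the hypothesis of Theorem \ref{thm:delaValleePoussin} with $\mu=\Bal(\nu_1+\nu_3,B_c)$; the additional global domination claim ``$2U^{\nu_2}\leq U^{\widehat{\nu_{13}}}+\ell$ on $\mathbb C$'' is both unjustified and unnecessary (and points in the wrong direction for the conclusion you want). Fix the inequality directions throughout and the proof is correct.
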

\begin{proof}
This follows from the iterated balayage.
Alternatively, it could be proved from the variational
conditions associated with the minimization problems, as
we did for Lemma \ref{lem:nu3balayage}. We omit details.
\end{proof}

\subsection{Uniqueness of the minimizer} \label{subsec:uniqueness}
We write the energy functional \eqref{eq:energy} as
\begin{multline} \label{eq:energyfunctional2}
    E(\nu_1, \nu_2, \nu_3) =
    \frac{2}{3} I(\nu_1) + \frac{1}{12} I(2\nu_1 - 3 \nu_2) + \frac{1}{4} I(\nu_2 - 2 \nu_3) \\
    + \int V_1(x) d\nu_1(x) + \int V_3(x) d\nu_3(x).
    \end{multline}
    where $2 \nu_1 - 3 \nu_2$ and $\nu_2 - 2 \nu_3$ are signed
    measures with vanishing integral.
From this the uniqueness follows as in Section 4.5 of \cite{DuiKu2}.
Note that there is a mistake in formula (4.21) of \cite{DuiKu2},
since the coefficients of $I(\nu_1)$ and $I(2\nu_1 - 3\nu_2)$
are incorrect. However, the only important issue to
establish uniqueness is that
the coefficients are positive.

\subsection{Existence of the minimizer} \label{subsec:existence}

After these preparations we are able to show that the
minimizer exists. The proof follows along the lines of Section 4.6 of \cite{DuiKu2}.

We fix $p \in (1, 5/3)$. We are going to minimize the energy functional
\eqref{eq:energyfunctional2}
among all measures $\nu_1$, $\nu_2$, $\nu_3$ as before, but with
the additional restrictions that for certain given $X > 0$ and $K > 0$,
\begin{equation} \label{eq:nu1restriction}
    S(\nu_1) \subset [-X,X],
    \end{equation}
\begin{equation} \label{eq:nu2restriction}
    \frac{d\nu_2}{|dz|} \leq \frac{K}{|z|^p} \qquad \text{for } z \in i \mathbb R, \, |z| \geq X,
    \end{equation}
\begin{equation} \label{eq:nu3restriction}
    \frac{d\nu_3}{dx} \leq \frac{K}{x^p} \qquad \text{for } x \in \mathbb R, \, |x| \geq X.
    \end{equation}
The constants $X$ and $K$ are at our disposal, and later we will choose them large enough.

For any choice of $X$ and $K$ there is a unique vector of measures
that minimizes the energy functional subject to the usual constraints
as well as the additional restrictions \eqref{eq:nu1restriction}, \eqref{eq:nu1restriction},
\eqref{eq:nu1restriction}. Indeed, the additional restrictions yield
that the measures are restricted to a tight sets of measures. The
energy functional \eqref{eq:energyfunctional2} is strictly convex,
and so there is indeed a unique minimizer.

Our strategy of proof is now to show that for large enough $X$ and $K$
the additional restrictions are not effective.

\paragraph{Restriction \eqref{eq:nu1restriction}}
This is easy to do for \eqref{eq:nu1restriction}. Indeed because of
Lemma~\ref{eq:nu1support} it suffices to choose
\[ X > X_1 \]
where $X_1$ is as in the lemma.
Then it is easy to see that \eqref{eq:nu1restriction}
provides no extra restriction.

\paragraph{Restriction \eqref{eq:nu3restriction}}

The assumption $p \in (1, 5/3)$ ensures that
\begin{equation} \label{eq:Cp}
    C_p = \frac{1}{\pi} \int_0^{\infty} \frac{s^{1-p}}{1+s^2} ds  = \frac{1}{2 \sin (p\pi/2)} < 1.
    \end{equation}
Choose $\varepsilon_p > 0$ such that
\begin{equation} \label{eq:epsilonp}
    (1+ \varepsilon_p)^2 C_p < 1.
    \end{equation}
Pick $c \geq 0$ such that
\[ c  \geq x^*(\alpha) \qquad \text{ and } \qquad \sigma_2([-ic, ic]) \geq 2/3. \]
We take  $X > X_1$ such that
\begin{equation} \label{eq:Xbound1}
    \frac{|x|}{\sqrt{x^2 - c^2}} \leq 1 + \varepsilon_p, \qquad |x| \geq X,
    \end{equation}
Then $X > c$ and also
\begin{equation} \label{eq:Xbound2}
    \frac{\sqrt{c^2 + x^2}}{|x|} \leq 1 + \varepsilon_p, \qquad |x| \geq X.
\end{equation}

Assume that $\nu_2$ is a measure on $i\mathbb R$, symmetric around
the origin with $\int d\nu_2$ and satisfying the restriction
\eqref{eq:nu2restriction} as well as $\nu_2 \leq \sigma_2$. Let
$\nu_3$ be the minimizer of \eqref{eq:nu3problem} among measures
$\nu$ on $\mathbb R$ with $\int d\nu = 1/3$. Note that we do not
impose the restriction \eqref{eq:nu3restriction}.

Since $c \geq x^*(\alpha)$ we have the inequality \eqref{eq:nu3densityestimate}
which due to the symmetry of $\nu_2$ may be written as
\begin{equation} \label{eq:nu3estimate5}
    \frac{d\nu_3(x)}{dx} \leq
         \frac{1}{\pi} \frac{|x|}{\sqrt{x^2 - c^2}}
        \int_{0}^{i\infty} \frac{\sqrt{c^2 + |z|^2}}{x^2 + |z|^2} \, d\nu_2(z),
        \qquad x \in \mathbb R, \, |x| > c.
        \end{equation}

Let $x \geq X > c$. We split the integral in \eqref{eq:nu3estimate5}
into an integral from $0$ to $iX$ and from $iX$ to $i\infty$, and we
estimate using \eqref{eq:Xbound1}
\begin{align*}
 \frac{1}{\pi} \frac{x}{\sqrt{x^2-c^2}}
    \int_0^{iX} \frac{\sqrt{c^2 + |z|^2}}{x^2 + |z|^2} d\nu_2(z) & \leq
        \frac{1+\varepsilon_p}{\pi} \left(
    \max_{z \in [0,iX]} \frac{\sqrt{c^2 + |z|^2}}{x^2 + |z|^2} \right) \nu_2([0,iX]) \\
    & \leq \frac{1+\varepsilon_p}{3 \pi}  \frac{\sqrt{c^2+X^2}}{x^2},
    \end{align*}
and  using \eqref{eq:Xbound1} and \eqref{eq:Xbound2}
\begin{align*}
 \frac{1}{\pi} \frac{x}{\sqrt{x^2-c^2}}
    \int_{iX}^{i \infty} \frac{\sqrt{c^2 + |z|^2}}{x^2 + |z|^2} d\nu_2(z)
    & \leq \frac{(1+\varepsilon_p)^2}{\pi} \int_{iX}^{i\infty} \frac{|z|}{x^2 + |z|^2} d\nu_2(z) \\
    & \leq \frac{(1+\varepsilon_p)^2}{\pi} \int_{iX}^{i\infty} \frac{|z|}{x^2 + |z|^2}
    \frac{K}{|z|^p} |dz|
    \end{align*}
    where the last inequality holds since $\nu_2$ satisfies \eqref{eq:nu2restriction}.
This leads to
\begin{align*}
 \frac{1}{\pi} \frac{x}{\sqrt{x^2-c^2}}
    \int_{iX}^{i \infty} \frac{\sqrt{c^2 + |z|^2}}{x^2 + |z|^2} d\nu_2(z)
    & \leq \frac{(1+\varepsilon_p)^2 K}{\pi}
        \int_{0}^{i\infty} \frac{|z|^{1-p}}{x^2 + |z|^2}  |dz| \\
    & = \frac{(1+\varepsilon_p)^2 K}{\pi |x|^p} \int_0^{\infty} \frac{s^{1-p}}{1 + s^2} ds
    \end{align*}
where we made the change of variables $z = is$, $s \geq 0$.
Thus by \eqref{eq:Cp} we have
\begin{align*}
 \frac{1}{\pi} \frac{x}{\sqrt{x^2-c^2}}
    \int_{iX}^{i \infty} \frac{\sqrt{c^2 + |z|^2}}{x^2 + |z|^2} d\nu_2(z)
    \leq \frac{(1+\varepsilon_p)^2 C_p K}{|x|^p}, \qquad x \in \mathbb R, \, |x| \geq X.
    \end{align*}

In total we get
\begin{equation}
    \frac{d\nu_3(x)}{dx}
    \leq  \frac{1+\varepsilon_p}{3 \pi}  \frac{\sqrt{c^2+X^2}}{x^2}
      + \frac{(1+\varepsilon_p)^2 C_p K}{|x|^p}, \qquad x \in \mathbb R, \, |x| \geq X.
      \end{equation}

Since $(1+\varepsilon_p)^2 C_p < 1$ and $p < 5/3 < 2$, it is possible to take now $K$ sufficiently large,
say $K \geq K_3$, such that
\[ \frac{(1+\varepsilon_p)^2}{3 \pi}  \frac{\sqrt{c^2+X^2}}{x^2}
      + \frac{(1+\varepsilon_p)^2 C_p K}{|x|^p} \leq \frac{K}{|x|^p},
        \qquad x \in \mathbb R, \, |x| \geq X. \]
Then the additional restriction  \eqref{eq:nu3restriction} is satisfied.

\paragraph{Restriction \eqref{eq:nu2restriction}}

A similar argument shows the following. Choose
$\nu_1$ and $\nu_3$ on $\mathbb R$ with $\int d\nu_1 = 1$, $\int d\nu_3 = 1/3$
and satisfying the additional restrictions \eqref{eq:nu1restriction}
and \eqref{eq:nu3restriction}. Let $\nu_2$ be the minimizer
for \eqref{eq:nu2problem} for $\nu_2$ on $i \mathbb R$ with $\int d\nu_2 = 2/3$
and $\nu_2 \leq \sigma_2$. However, we do not impose \eqref{eq:nu2restriction}.

Then with the same choice for $X > X_1$ as above (so that
\eqref{eq:Xbound1} holds), we will find that for $K$ large enough, say $K \geq K_2$
we have
that
\[ \frac{d\nu_2}{|dz|} \leq \frac{K}{|z|^p}, \qquad z \in i \mathbb R, \, |z| \geq X. \]
That is, the restriction \eqref{eq:nu2restriction} is satisfied.

This completes the proof of existence of the minimizer for the vector
equilibrium problem.

\subsection{Proof of Theorem \ref{theo:theo1}}

After all this work the proof of Theorem \ref{theo:theo1} is short.
\begin{proof}
Existence and uniqueness of the minimizer is proved in Sections
\ref{subsec:uniqueness} and \ref{subsec:existence}.
We denote the minimizer by $(\mu_1, \mu_2, \mu_3)$.

The measure $\mu_1$ is the equilibrium measure in external field
$V_1 - U^{\mu_2}$,  which is real analytic on $\mathbb R \setminus \{0\}$.
If $0 \not\in S(\mu_1)$, then this implies by a result of Deift, Kriecherbauer and McLaughlin \cite{DeKrMc}
that $S(\mu_1)$ is a finite union of intervals with a density that has
the form \eqref{eq:rho1}. If $0\not\in S(\sigma_2-\mu_2)$ then $V_1 - U^{\mu_2}$
is real analytic on $\mathbb R$ (also at $0$) by Lemma \ref{lem:convex} (b), and again by \cite{DeKrMc}
we find that $S(\mu_1)$ is a finite union of intervals with a density \eqref{eq:rho1}.
The conditions \eqref{eq:varmu1} are the Euler-Lagrange conditions associated
with the minimization in external field, and they are valid in all cases.
This proves part (a).

The statements \eqref{eq:Smu2} about the supports of $\mu_2$ and $\sigma_2 - \mu_2$ were already proved
in Section \ref{subsec:equilibriumnu2}, see \eqref{eq:Snu2} and \eqref{eq:Ssigma2minnu2}.
The Euler-Lagrange  conditions \eqref{eq:varmu2} also follow from this.
The fact that $\rho_2$ vanishes as a square root at $\pm i c_2$ in case $c_2 > 0$
follows as in the proof of Lemma 4.5 of \cite{DuiKu2}.
The other statements in part (b) are obvious.

Part (c) follows from Proposition \ref{prop:nu3}, see also
\eqref{eq:nu3variational}. This completes the proof of Theorem
\ref{theo:theo1}.
\end{proof}

\section{A Riemann surface} \label{sec:RiemannSurface}

The rest of the paper is aimed at the proof of Theorem \ref{theo:theo2}. We
assume from now on that $(V,W, \tau)$ is regular, which means in particular that
$S(\mu_1)$ and $S(\sigma_2 - \mu_2)$ are disjoint. Then by part (a) of
Theorem \ref{theo:theo1} we have that $S(\mu_1)$ consists of a finite
union of disjoint intervals. We use this structure as well as that of the
supports of the other measures to build a Riemann surface in this section.

We start by collecting consequences of the vector equilibrium problem
in the form of properties of the $g$-functions. These will be used in
the construction of a meromorphic function $\xi$ on the Riemann surface.

\subsection{The $g$-functions}

The Euler-Lagrange variational conditions \eqref{eq:varmu1},
\eqref{eq:varmu2}, \eqref{eq:varmu3}, can be rewritten in terms of the $g$-functions
\begin{equation} \label{eq:defgj}
    g_j(z) = \int \log(z-s) \, d\mu_j(s), \qquad j=1,2,3
    \end{equation}
    that are defined as follows.

\begin{definition} For $j =1,2,3$ we define $g_j$ by the formula \eqref{eq:defgj}
with the following choice of branch for $\log(z-s)$ with $s \in S(\mu_j)$.
\begin{enumerate}
\item[\rm (a)] For $j=1,3$, we define $\log(z-s)$ for $s\in \mathbb R$
with a branch cut along $(-\infty, s]$ on the real line.
\item[\rm (b)] For $j=2$, we define $\log (z-s)$ for $s \in i\mathbb R$
with a branch cut along $(-\infty, 0] \cup [0,s]$, which is partly on the
real line and partly on the imaginary axis.
\end{enumerate}
In all cases the definition of $\log(z-s)$ is such that
\[ \log(z-s) \sim \log |z| + i \arg z, \qquad -\pi < \arg z < \pi \]
as $z \to \infty$.
\end{definition}

As a result we have that $g_1$ is defined and analytic
on $\mathbb C \setminus (-\infty, b_N]$, $g_2$ on $\mathbb C \setminus (i \mathbb R \cup \mathbb R^-)$,
and $g_3$ on $\mathbb C \setminus \mathbb R$.

In what follows we will frequently use the numbers
$\alpha_k$ given in the following definition.
\begin{definition}
We define
\begin{equation} \label{eq:defalphak}
    \alpha_k = \mu_1([a_{k+1}, +\infty)), \qquad k = 0, \ldots, N-1,
    \end{equation}
  and  $\alpha_{N}=0$.
\end{definition}
Recall that $\mu_1$ is supported on $\bigcup_{k=1}^N [a_k,b_k]$ with
$a_1 < b_1 < a_2 < \cdots < a_N < b_N$, and so
\[ \alpha_1 = 1 > \alpha_2 > \cdots > \alpha_{N-1} > \alpha_N = 0. \]

The above definitions are
such that the following hold.

\begin{lemma} \label{lem:jumpsg1}
\begin{enumerate}
\item[\rm (a)]
We have
\begin{equation} \label{eq:jumpg1onR0}
g_{1,+}(x)-g_{1,-}(x)=2\pi i \mu_1([x,\infty)) \quad \text{ for } x \in \mathbb R.
\end{equation}
In particular
\begin{equation} \label{eq:jumpg1ongap}
    g_{1,+}- g_{1,-} = 2 \pi i \alpha_k, \quad \text{ on } (b_k, a_{k+1})
\end{equation}
with $\alpha_k$ as in \eqref{eq:defalphak}.
Here we have put $b_0 = -\infty$ and $a_{N+1} = +\infty$.
\item[\rm (b)] We have
\[ g_{2,\pm}(x) = \int \log |x-s| \, d\mu_2(s) \pm \frac{2}{3} \pi i, \qquad x \in \mathbb R^-, \]
so that
    \begin{align} \label{eq:jumpg2onRmin}
        g_{2,+} - g_{2,-} = \frac{4}{3} \pi i, \qquad \text{on } \mathbb R^-.
        \end{align}
\item[\rm (c)] If $c_2 > 0$ then
    \begin{equation} \label{eq:jumpg2ongap}
    \left\{
    \begin{aligned}
        g_{2,+}(z) - g_{2,-}(z) & = \frac{2}{3} \pi i - 2\pi i \sigma_2([0,z]),
        && z \in (0, ic_2), \\
        g_{2,+}(z) - g_{2,-}(z) & = - \frac{2}{3} \pi i + 2 \pi i \sigma_2([z,0]),
        && z \in (-ic_2, 0).
        \end{aligned} \right.
        \end{equation}
\item[\rm (d)] We have \begin{align} \label{eq:jumpg3onR0} g_{3,+}(x)-g_{3,-}(x)=2\pi i \mu_3([x,\infty)), \qquad x\in \mathbb R. \end{align}
\item[\rm (e)] If $c_3 > 0$, then
\[ g_{3,\pm}(x) = \int \log |x-s| \, d\mu_3(s) \pm \frac{1}{6} \pi i, \qquad x \in (-c_3,c_3), \]
so that
    \begin{align} \label{eq:jumpg3ongap}
        g_{3,+} - g_{3,-} = \frac{1}{3} \pi i, \qquad \text{on } (-c_3,c_3).
        \end{align}

    \end{enumerate}
\end{lemma}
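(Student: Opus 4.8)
\emph{Strategy.} The plan is to derive all five parts from a single elementary computation — the jump of an individual logarithm $z\mapsto\log(z-s)$, for fixed $s$, across the arc along which that branch is cut — and then integrate against $\mu_j$. First I would note that $\operatorname{Re}g_j(z)=-U^{\mu_j}(z)$ is continuous on all of $\mathbb C$, since each $\mu_j$ has a bounded density by Theorem~\ref{theo:theo1}; hence every boundary jump $g_{j,+}-g_{j,-}$ is purely imaginary, and it equals $\int\bigl(\log_+(z-s)-\log_-(z-s)\bigr)\,d\mu_j(s)$.

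\emph{Two branch computations.} For $j=1$ and $j=3$ the cut of $\log(z-s)$ is $(-\infty,s]\subset\mathbb R$, and inspecting $\arg(z-s)$ for $z=x\pm i0$ with $x\in\mathbb R$ gives
\[
\log_+(x-s)-\log_-(x-s)=\begin{cases}2\pi i & \text{if } s>x,\\ 0 & \text{if } s<x.\end{cases}
\]
For $j=2$, with $s=iy\in i\mathbb R$, the cut of $\log(z-s)$ is the broken line $(-\infty,0]\cup[0,s]$; following that branch from $z=\infty$ around the branch point $z=s$ while avoiding the cut, I would check that (i) for $x\in\mathbb R^-$ one has $\log_+(x-s)-\log_-(x-s)=2\pi i$ for every $s\in i\mathbb R$, where here $+,-$ are the limits from the upper, lower half-plane; and (ii) for $z=it_0\in i\mathbb R\setminus\{0\}$, with $+,-$ now the limits from the left, right half-plane (matching the upward orientation of $i\mathbb R$ fixed after \eqref{eq:sigma2withs1}), one has $\log_+(it_0-s)-\log_-(it_0-s)=2\pi i$ precisely when $it_0$ lies in the open segment from $0$ to $s$ (that is, $0<t_0<y$ or $0>t_0>y$) and $=0$ otherwise.

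\emph{Assembling the statements.} Integrating the displayed formula against $\mu_1$ and $\mu_3$ respectively, and using that these measures are atomless, gives $g_{j,+}(x)-g_{j,-}(x)=2\pi i\,\mu_j([x,\infty))$ for $x\in\mathbb R$, which is \eqref{eq:jumpg1onR0} and \eqref{eq:jumpg3onR0}. On a gap $(b_k,a_{k+1})$ of $\mu_1$ one has $\mu_1([x,\infty))=\mu_1([a_{k+1},\infty))=\alpha_k$ by \eqref{eq:defalphak}, giving \eqref{eq:jumpg1ongap}; on $(-c_3,c_3)$ one has $\mu_3([x,\infty))=\mu_3([c_3,\infty))=\tfrac12\int d\mu_3=\tfrac16$, using $S(\mu_3)=\mathbb R\setminus(-c_3,c_3)$ and the symmetry $\mu_3(A)=\mu_3(-A)$ from Theorem~\ref{theo:theo1}(d), giving the jump in \eqref{eq:jumpg3ongap}. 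For $g_2$: integrating (i) against $\mu_2$ gives $g_{2,+}-g_{2,-}=2\pi i\int d\mu_2=\tfrac43\pi i$ on $\mathbb R^-$, i.e.\ \eqref{eq:jumpg2onRmin}; integrating (ii) against $\mu_2$ at $z=it_0$ with $0<t_0<c_2$, and using that $\mu_2$ carries mass $\tfrac13$ on the positive imaginary axis (symmetry) together with $\mu_2=\sigma_2$ on $(-ic_2,ic_2)$ (because $S(\sigma_2-\mu_2)=i\mathbb R\setminus(-ic_2,ic_2)$), gives $g_{2,+}(z)-g_{2,-}(z)=2\pi i\bigl(\tfrac13-\sigma_2([0,z])\bigr)=\tfrac23\pi i-2\pi i\,\sigma_2([0,z])$, the first line of \eqref{eq:jumpg2ongap}; its second line then follows from the conjugation symmetry $\overline{g_2(\bar z)}=g_2(z)$ — valid because both $\mu_2$ and the chosen branches of $\log$ are invariant under $z\mapsto\bar z$ — together with the symmetry of $\sigma_2$. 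Finally, the ``$\pm$'' identities in parts (b) and (e) come from $\operatorname{Re}g_{j,\pm}(x)=\int\log|x-s|\,d\mu_j(s)$ combined with $g_{j,-}(x)=\overline{g_{j,+}(x)}$ on $\mathbb R$, which is again a consequence of $\overline{g_j(\bar z)}=g_j(z)$.

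\emph{Main obstacle.} The only step requiring genuine care is computation (ii), and to a lesser extent (i): the logarithms entering $g_2$ are cut along the rerouted broken line $(-\infty,0]\cup[0,s]$ rather than along the standard horizontal ray through $s$, so one must follow $\arg(z-s)$ along an explicit path from infinity that respects this rerouting, and fix once and for all which half-plane counts as the ``$+$'' side. Once the jump formulas of the second step are pinned down correctly, everything else is pure bookkeeping.
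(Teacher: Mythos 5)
Your overall strategy — reduce everything to the jump of a single branched logarithm across its (rerouted) cut and then integrate against $\mu_j$, finishing with the symmetry and total-mass data from Theorem~\ref{theo:theo1} — is the same approach the paper takes, and parts (a), (b), (d), (e) and the first line of \eqref{eq:jumpg2ongap} are assembled correctly.

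There is, however, a sign error in your branch computation (ii). With the convention that $+$ is the limit from the left half-plane and $-$ from the right, the jump of $\log(z-s)$ across the segment between $0$ and $s$ is $+2\pi i$ only when $s\in i\mathbb R^+$; when $s=iy$ with $y<0$ it equals $-2\pi i$. One way to see this: orient the cut $\Gamma$ from the branch point $s$ to $\infty$ (i.e.\ $s\to 0\to-\infty$); the jump $\log_{\mathrm{left\ of\ }\Gamma}-\log_{\mathrm{right\ of\ }\Gamma}$ is constant along $\Gamma$ and equals $-2\pi i$ on the far negative real axis (where left of $\Gamma$ is the lower half-plane and the asymptotics give $-i\pi-(+i\pi)$). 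For $s$ in the upper half-plane the portion $[s,0]$ of $\Gamma$ is traversed downward, so the left half-plane is the \emph{right} of $\Gamma$ and the jump is $+2\pi i$; for $s$ in the lower half-plane that portion is traversed upward, the left half-plane is the left of $\Gamma$, and the jump is $-2\pi i$. Had you used your stated (ii) directly on $(-ic_2,0)$ you would have obtained $+\tfrac23\pi i-2\pi i\,\sigma_2([z,0])$, the negative of the correct second line of \eqref{eq:jumpg2ongap} (and inconsistent with $d\mu_2/dz=d\sigma_2/dz>0$ there via \eqref{eq:jumpFj}). Fortunately your actual derivation of that line goes through the conjugation symmetry $\overline{g_2(\bar z)}=g_2(z)$, which is valid and yields the correct sign, so the error does not propagate into any of the asserted conclusions — but statement (ii) should be corrected to carry the sign $\sign(\operatorname{Im}s)\,2\pi i$.
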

\begin{proof}
Parts (a) and (d) are immediate from \eqref{eq:defgj}.

Part (b) follows from the definition of $g_2$, the symmetry
of the measure $\mu_2$ on $i \mathbb R$, and the fact that $\int d \mu_2 = 2/3$.

For part (c), we note that for $z \in i \mathbb R^+$,
\begin{align*}
 g_{2,\pm}(z) = - U^{\mu_2}(z) + \frac{1}{3} \pi i \pm  \pi i \mu_2([z, i \infty)), \end{align*}
such that
\begin{align}\label{eq:jumpg2oniR0}
 g_{2,+}(z) - g_{2,-}(z) = 2 \pi i \mu_2([z, i \infty))
    = \frac{2}{3} \pi i - 2 \pi i \mu_2([0, z])
    \end{align}
since $\mu_2$ has total mass $1/3$ on $i \mathbb R^+$.
If $z \in [0, ic_2]$, then $\mu_2([0,z]) = \sigma_2([0,z])$
and the first equation in \eqref{eq:jumpg2ongap} follows.
The second equation follows in a similar way.

Part (e) follows from the definition of $g_3$, together with the fact
that $\mu_3$ is symmetric on $\mathbb R$ with total mass $1/3$, so
that $\mu_3([c_3, \infty)) = 1/6$.
\end{proof}

We have the following jump properties.
\begin{lemma} \label{lem:jumpsg2}
\begin{enumerate}
\item[\rm (a)] We have
\begin{equation} \label{eq:jumpg1onR}
\left\{
\begin{aligned}
    g_{1,+} + g_{1,-} - g_2 & = V_1 - \ell_1, && \text{ on } S(\mu_1) \cap \mathbb R^+, \\
    g_{1,+} + g_{1,-} - g_{2,\pm} & = V_1 - \ell_1 \mp \frac{2}{3}\pi i, &&
        \text{ on } S(\mu_1) \cap \mathbb R^-, \\
    \Re \left( g_{1,+} + g_{1,-} - g_2 \right) & \leq V_1 - \ell_1,
        && \text{ on } \mathbb R \setminus S(\mu_1).
        \end{aligned} \right.
        \end{equation}
\item[\rm (b)] On $i \mathbb R$ we have
\begin{equation} \label{eq:jumpg2oniR}
\left\{
\begin{aligned}
    g_{2,+} + g_{2,-}  & = g_1 + g_3, &&   \text{ on } S(\sigma_2-\mu_2), \\
   \Re \left(g_{2,+} + g_{2,-} \right) & > \Re \left(g_1 + g_3  \right),
    && \text{ on } i \mathbb R \setminus S(\sigma_2 - \mu_2).
    \end{aligned} \right.
    \end{equation}
\item[\rm (c)] We have on $\mathbb R$,
\begin{equation} \label{eq:jumpg3onR}
\left\{
\begin{aligned}
    g_{3,+} + g_{3,-} - g_2 & = V_3,  && \text{ on } S(\mu_3) \cap \mathbb R^+,  \\
    g_{3,+} + g_{3,-} - g_{2,\pm}  & = V_3 \mp \frac{2}{3}\pi i,  &&   \text{ on }  S(\mu_3) \cap \mathbb R^-, \\
    \Re \left(g_{3,+} + g_{3,-} - g_2 \right)  & < V_3,  && \text{ on } \mathbb R \setminus S(\mu_3).
    \end{aligned} \right.
    \end{equation}
\end{enumerate}
\end{lemma}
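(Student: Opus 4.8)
The plan is to read off each of (a)--(c) from the Euler--Lagrange conditions of Theorem~\ref{theo:theo1} by separating the real and imaginary parts of the $g$-functions. The real part is immediate: since by \eqref{eq:defgj} one has $\Re g_j(z) = \int \log|z-s|\,d\mu_j(s) = -U^{\mu_j}(z)$, and since on $S(\mu_j)$ (and, for $j=2$, on $\mathbb R^{-}$) the two boundary values $g_{j,\pm}$ differ only by a purely imaginary quantity by Lemma~\ref{lem:jumpsg1}, we have $\Re g_{j,+} = \Re g_{j,-} = -U^{\mu_j}$ there as well. Consequently $\Re(g_{1,+}+g_{1,-}-g_2) = -2U^{\mu_1}+U^{\mu_2}$ on $\mathbb R$, which by \eqref{eq:varmu1} equals $V_1-\ell_1$ on $S(\mu_1)$ and is $\le V_1-\ell_1$ off $S(\mu_1)$; likewise $\Re(g_{2,+}+g_{2,-}) = -2U^{\mu_2}$ compares on $i\mathbb R$ with $\Re(g_1+g_3) = -U^{\mu_1}-U^{\mu_3}$ via the equality and strict inequality in \eqref{eq:varmu2}; and $\Re(g_{3,+}+g_{3,-}-g_2) = -2U^{\mu_3}+U^{\mu_2}$ compares on $\mathbb R$ with $V_3$ via \eqref{eq:varmu3}. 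This already yields the full inequality in the third line of each of \eqref{eq:jumpg1onR}, \eqref{eq:jumpg2oniR}, \eqref{eq:jumpg3onR}.

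It then remains to pin down the imaginary parts on the supports. From the branch of $\log(x-s)$, $s\in\mathbb R$, used in the definition of $g_1$ and $g_3$ one gets $\Im g_{1,\pm}(x) = \pm\pi\mu_1((x,\infty))$ and $\Im g_{3,\pm}(x) = \pm\pi\mu_3((x,\infty))$, so that $\Im(g_{1,+}+g_{1,-}) \equiv 0$ and $\Im(g_{3,+}+g_{3,-}) \equiv 0$ on $\mathbb R$. Using the symmetry $\mu_2(A)=\mu_2(-A)$ of Theorem~\ref{theo:theo1}(d), the contributions of $s$ and $-s$ to $\Im g_2(x)=\int\arg(x-s)\,d\mu_2(s)$ cancel for $x>0$, giving $\Im g_2(x)=0$ there, while on $\mathbb R^{-}$ Lemma~\ref{lem:jumpsg1}(b) gives $\Im g_{2,\pm}(x) = \pm\frac{2}{3}\pi$. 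Combining with the real-part identities above yields the first two lines of \eqref{eq:jumpg1onR} and of \eqref{eq:jumpg3onR}, including the $\mp\frac23\pi i$ shift on $\mathbb R^{-}$.

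For the imaginary part on $i\mathbb R$ I would symmetrize across $z\mapsto -z$ once more: for $z\in i\mathbb R^{+}$, pairing $s$ with $-s$ shows $\Im g_1(z)=\pi\,\mu_1(\mathbb R^{+})=\frac{\pi}{2}$ and $\Im g_3(z)=\pi\,\mu_3(\mathbb R^{+})=\frac{\pi}{6}$ (using $\int d\mu_1=1$, $\int d\mu_3=\frac13$ and symmetry), hence $\Im(g_1+g_3)(z)=\frac{2}{3}\pi$; on the other hand the formula for $g_{2,\pm}$ on $i\mathbb R^{+}$ derived in the proof of Lemma~\ref{lem:jumpsg1} gives $\Im(g_{2,+}+g_{2,-})(z)=\frac{2}{3}\pi$ as well. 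Since the real parts of the two sides of \eqref{eq:jumpg2oniR} already agree on $S(\sigma_2-\mu_2)$, the equality in \eqref{eq:jumpg2oniR} follows, and the strict inequality off $S(\sigma_2-\mu_2)$ is the real-part inequality already noted.

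The only genuinely fiddly point is the branch-cut bookkeeping---in particular the ``L-shaped'' cut $(-\infty,0]\cup[0,s]$ for $g_2$, the cut $(-\infty,b_N]$ for $g_1$, and keeping straight which boundary value $g_{2,\pm}$ pairs with which side of $\mathbb R^{-}$. All of this, however, has already been isolated in Lemma~\ref{lem:jumpsg1}, so once that lemma is in hand the present lemma reduces to comparing real and imaginary parts term by term; no further estimates are needed.
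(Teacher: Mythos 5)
Your proposal is correct and follows essentially the same route as the paper: the paper's proof likewise substitutes the identities $2U^{\mu_1}=-(g_{1,+}+g_{1,-})$, $U^{\mu_2}=-g_{2,\pm}\pm\frac{2}{3}\pi i$ on $\mathbb R^-$, $U^{\mu_1}=-g_1\pm\frac{1}{2}\pi i$ and $U^{\mu_3}=-g_3\pm\frac{1}{6}\pi i$ on $i\mathbb R^{\pm}$, etc., into the variational conditions \eqref{eq:varmu1}--\eqref{eq:varmu3}, which is exactly your real/imaginary-part bookkeeping written in condensed form.
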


\begin{proof}
Part (a) follows from \eqref{eq:varmu1}. Indeed, we have
\begin{align*}
    2 U^{\mu_1}(x) & = - g_{1,+}(x) - g_{1,-}(x), \quad \text{ for } x \in \mathbb R, \\
     U^{\mu_2}(x) & =
    \begin{cases}
    - g_2(x), & \quad \text{ for } x \in \mathbb R^+, \\
    - g_{2,\pm}(x) \pm \frac{2}{3} \pi i, & \quad \text{ for } x \in \mathbb R^-,
    \end{cases}  \end{align*}
so that \eqref{eq:varmu1} indeed leads to \eqref{eq:jumpg1onR}.

For part (b) we note that
\[ 2 U^{\mu_2}(z) = -g_{2,+}(z) - g_{2,-}(z) \pm \frac{2}{3} \pi i,
    \qquad z \in i \mathbb R^{\pm}. \]
Also because of symmetry of $g_1$ and $g_3$ around $0$,
\[
\left\{
\begin{aligned} U^{\mu_1}(z) & = -g_1(z) \pm \frac{1}{2} \pi i, \\
    U^{\mu_3}(z) & = - g_3(z) \pm \frac{1}{6} \pi i,
    \end{aligned} \right.
    \qquad \text{for } z \in i \mathbb R^{\pm}. \]
Using this in \eqref{eq:varmu2} we obtain \eqref{eq:jumpg2oniR}.

Part (c) follows from \eqref{eq:varmu3} in the same way
that we obtained part (a) from \eqref{eq:varmu1}.
\end{proof}

In what follows we also use the derivatives of the $g$-function,
which we denote by $F_1, F_2, F_3$.

\begin{definition} \label{def:Fj}
We define
\begin{equation} \label{eq:defFj}
    F_j(z) = g_j'(z) = \int \frac{d\mu_j(s)}{z-s} , \qquad j=1,2,3,
    \end{equation}
which is defined and analytic for  $z \in \mathbb C \setminus S(\mu_j)$.
\end{definition}

The jump of $F_j$ gives us the density of $\mu_j$, since we have
\begin{equation} \label{eq:jumpFj}
\begin{aligned}
    \frac{d\mu_1}{dx} & = - \frac{1}{2\pi i} \left( F_{1,+}(x) - F_{1,-}(x) \right), \qquad x \in \mathbb R, \\
    \frac{d\mu_2}{dz} & = - \frac{1}{2\pi i} \left( F_{2,+}(z) - F_{2,-}(z) \right), \qquad z \in i \mathbb R, \\
    \frac{d\mu_3}{dx} & = - \frac{1}{2\pi i} \left( F_{3,+}(x) - F_{3,-}(x) \right), \qquad x \in \mathbb R.
    \end{aligned}
    \end{equation}

\subsection{Riemann surface $\mathcal R$ and $\xi$-functions}
We construct a four sheeted Riemann surface $\mathcal R$ in the following way.
Four sheets $\mathcal R_j$ defined as
\begin{equation} \label{eq:Riemannsurface}
\left\{
\begin{aligned}
 \mathcal R_1 & = \mathbb C \setminus S(\mu_1), \\
 \mathcal R_2 & = \mathbb C \setminus (S(\mu_1) \cup S(\sigma_2-\mu_2)), \\
 \mathcal R_3 & = \mathbb C \setminus (S(\sigma_2-\mu_2)\cup S(\mu_3)), \\
 \mathcal R_4 & = \mathbb C \setminus S(\mu_3),
\end{aligned} \right. \end{equation}
are connected as follows: $\mathcal R_1$ is
connected to $\mathcal R_2$ via $S(\mu_1)$, $\mathcal R_2$ is
connected to $\mathcal R_3$ via $S(\sigma_2-\mu_2)$ and $\mathcal
R_3$ is connected to $\mathcal R_4$ via $S(\mu_3)$. Every
connection is  in the usual crosswise manner. We compactify
the Riemann surface by adding a point at infinity to the
first sheet $\mathcal R_1$, and a second point at infinity
which is common to the other three sheets.

The genus of the Riemann surface is (recall that $S(\mu_1)$
consists of $N$ intervals and recall the classification of the cases
in Section \ref{subsec:cases})
\begin{equation} \label{eq:genusR}
    g(\mathcal R) =
    \begin{cases} N-1, &   \quad \text{in Cases I, II, and III,} \\ 
        N,  &  \quad \text{in Cases IV and V.} 
        \end{cases}
        \end{equation}

Using the functions $F_j$ defined in Definition \ref{def:Fj} we
define the $\xi$ functions.
\begin{definition}
We define functions $\xi_j$, $j=1,2,3,4$ by
\begin{equation} \label{eq:defxij}
\left\{
\begin{aligned}
    \xi_1 & = V' - F_1, && \text{on } \mathcal R_1, \\
    \xi_2 & = F_1 - F_2 + \theta_1', && \text{on } \mathcal R_2 \setminus i \mathbb R, \\
    \xi_3 & = F_2 - F_3 + \theta_2', && \text{on } \mathcal R_3 \setminus (\mathbb R \cup i \mathbb R), \\
    \xi_4 & = F_3 + \theta_3', &&  \text{on } \mathcal R_4 \setminus \mathbb R.
\end{aligned} \right.
\end{equation}
\end{definition}

We first prove that $\xi_2$, $\xi_3$, and $\xi_4$ are analytic on their full respective sheets.

\begin{lemma}
\begin{enumerate}
\item[\rm (a)] We have $\xi_{2,+} = \xi_{2,-}$ on $(-ic_2, ic_2)$,
    and so $\xi_2$ has an analytic extension to $\mathcal R_2$.
\item[\rm (b)] We have $\xi_{3,+} = \xi_{3,-}$ on $(-ic_2, ic_2)$ and on $(-c_3,c_3)$,
    and so $\xi_3$ has an analytic extension to $\mathcal R_3$.
\item[\rm (c)] We have $\xi_{4,+} = \xi_{4,-}$ on $(-c_3,c_3)$,
    and so $\xi_4$ has an analytic extension to $\mathcal R_4$.
\end{enumerate}
\end{lemma}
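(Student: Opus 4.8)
The plan is to show, for each $j\in\{2,3,4\}$, that the jump of $\xi_j$ across the portion of $\mathbb{R}\cup i\mathbb{R}$ that lies on the sheet $\mathcal{R}_j$ is identically zero; the analytic continuation to $\mathcal{R}_j$ then follows from Morera's theorem. I would use that the three building blocks of the $\xi_j$ in \eqref{eq:defxij} have controlled jumps: $V'$ is entire; $F_k=g_k'$ is analytic off $S(\mu_k)$ with jump given by \eqref{eq:jumpFj}; and $\theta_k'=\tau s_k$ is analytic off the cuts of the surface $\mathcal{S}_k$, its jumps across $\mathbb{R}\cup i\mathbb{R}$ being recorded in Corollary~\ref{cor:jumpTheta} together with the identity \eqref{eq:sigma2withs1} linking $s_1$ to $\sigma_2$. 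The arcs to be checked are, on the imaginary axis, the gap $(-ic_2,ic_2)$ of $S(\sigma_2-\mu_2)$ (relevant for $\xi_2$ and $\xi_3$), and, on the real axis, the gap $(-c_3,c_3)$ of $S(\mu_3)$ (relevant for $\xi_3$ and $\xi_4$); these are non-empty only when $c_2>0$, respectively $c_3>0$.

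First I would handle the imaginary arc $(-ic_2,ic_2)$ away from the origin. There $F_1$ and $F_3$ have no jump, because $S(\mu_1)$ and $S(\mu_3)$ lie on $\mathbb{R}$. Since $c_2>0$, the measure $\sigma_2-\mu_2$ vanishes on a neighbourhood of this arc, so $\mu_2$ coincides with $\sigma_2$ there; combining \eqref{eq:jumpFj} with \eqref{eq:sigma2withs1} gives $F_{2,+}-F_{2,-}=\theta_{1,+}'-\theta_{1,-}'$ on the arc. Hence $F_1-F_2+\theta_1'$ is continuous across it, which proves (a). For (b) on the same arc one needs in addition the jump of $\theta_2'$: by the $|y|>y^*(\alpha)$ case of \eqref{eq:jumpThetaoniR} (which covers the whole arc, since $c_2>y^*(\alpha)$ for $\alpha\ge0$ by Theorem~\ref{theo:theo1}(b) and $y^*(\alpha)=0$ for $\alpha<0$) we have $\theta_{2,\pm}=\theta_{1,\mp}$, so $\theta_{2,+}'-\theta_{2,-}'=-(\theta_{1,+}'-\theta_{1,-}')$, which cancels the jump of $F_2$ in $\xi_3=F_2-F_3+\theta_2'$; on the subinterval $(-iy^*(\alpha),iy^*(\alpha))$, if any, all of $F_2$, $\theta_2'$, $F_3$ are already analytic.

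Next I would handle the real arc $(-c_3,c_3)$, which is empty unless $\alpha<0$ (for $\alpha\ge0$, $c_3=0$ by Theorem~\ref{theo:theo1}(c)). Assume $\alpha<0$; then $c_3<x^*(\alpha)$ by the same theorem, so the arc lies inside $(-x^*(\alpha),x^*(\alpha))$, which is disjoint from the cuts of $\mathcal{S}_2$ and of $\mathcal{S}_3$ (recall that for $\alpha<0$ the branch $s_3$ is analytic in a neighbourhood of the origin). Also $F_3$ has no jump on the arc since it is disjoint from $S(\mu_3)$, and $F_2$ has no jump there since $S(\mu_2)\subset i\mathbb{R}$. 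Therefore $\theta_2'=\tau s_2$ and $\theta_3'=\tau s_3$, and with them $\xi_3=F_2-F_3+\theta_2'$ and $\xi_4=F_3+\theta_3'$, are analytic across $(-c_3,c_3)\setminus\{0\}$, and $\xi_4$ is in fact analytic at $0$ as well.

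The one genuinely delicate point, and the step I expect to be the main obstacle, is the origin in the statement for $\xi_3$ (and, to a lesser extent, for $\xi_2$). When $c_3>0$ and $c_2>0$ both hold (so $\alpha<0$), the point $0$ belongs to $\mathcal{R}_3$ but also to $S(\mu_2)=i\mathbb{R}$ and to the imaginary-axis cut of $\mathcal{S}_2$, so $F_2$ and $\theta_2'$ each have a genuine jump at $0$ along $\mathbb{R}$; these jumps do cancel in $\xi_3$ (thanks to $\frac{d\sigma_2}{|dz|}(0)=\frac{\tau}{\pi}\sqrt{-\alpha}$), but I would prefer to avoid this computation and instead argue by removability: the two preceding paragraphs show that $\xi_3$ extends analytically across $\mathbb{R}\cup i\mathbb{R}$ on a punctured neighbourhood of $0$, while $F_2$, $F_3$, $\theta_2'$ and $V'$ are all bounded near $0$, so $\xi_3$ is bounded and analytic on a punctured disc and the singularity at $0$ is removable. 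The same argument settles the origin for $\xi_2$ when $0\notin S(\mu_1)$ (if instead $0\in S(\mu_1)$ then $0\notin\mathcal{R}_2$ and nothing needs to be checked). This completes the proof.
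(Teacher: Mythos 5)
Your proof is correct and follows essentially the same route as the paper's: the jump of $F_2$ across $(-ic_2,ic_2)$ is computed from $\mu_2=\sigma_2$ on that arc together with \eqref{eq:sigma2withs1} and cancels against the jump of $\theta_1'$ (resp.\ of $\theta_2'$, via $\theta_{1,\pm}=\theta_{2,\mp}$), while on $(-c_3,c_3)$ one uses $c_3<x^*(\alpha)$ and the continuity of $\Theta$ across $(-x^*(\alpha),x^*(\alpha))$ from Corollary \ref{cor:jumpTheta}. Your explicit removable-singularity argument at the origin (in Cases IV and V, where $0$ lies on $S(\mu_2)$ and on the imaginary cut of $\mathcal S_2$) addresses a point the paper passes over in silence, so it is a welcome extra step rather than a deviation.
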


\begin{proof}
On $(-ic_2, ic_2)$ we have by \eqref{eq:jumpg2ongap} and \eqref{eq:defFj}
that
\begin{align*}
    F_{2,+}(z) - F_{2,-}(z) & = \frac{d}{dz} \left(g_{2,+}(z) - g_{2,-}(z) \right) \\
        & = -2\pi i \frac{d\sigma_2}{dz} \\
        & =   \tau (s_{1,+}(z) - s_{1,-}(z)), \qquad z \in (-ic_2,ic_2),
        \end{align*}
see \eqref{eq:sigma2withs1}.
Since $\tau s_1 = \theta_1'$, we get
\[ F_{2,+} - \theta_{1,+} = F_{2,-} - \theta'_{1,-}, \qquad \text{ on } (-ic_2, ic_2), \]
and also
\[ \xi_{2,+} = \xi_{2,-}, \qquad \text{ on } (-ic_2, ic_2) \]
since $F_1$ is analytic on the imaginary axis. This proves part (a).

We also have
\[ s_{1,+}(z) - s_{1,-}(z) = s_{2,-}(z) - s_{2,+}(z), \qquad z \in i \mathbb R. \]
Then above argument also shows that
\[ \xi_{3,+} = \xi_{3,-}, \qquad \text{ on } (-ic_2, ic_2) \]
since $F_1$ and $F_3$ are analytic on the imaginary axis.

If $c_3 > 0$ (which can only happen if $\alpha < 0$), then  $F_3$
is analytic across $(-c_3,c_3)$. Both $\xi_2$ and $\xi_3$ are analytic
across $(-x^*(\alpha), x^*(\alpha))$, and so a fortiori across $(-c_3, c_3)$,
since $c_3 < x^*(\alpha)$. This proves part (c) and the remaining statement of part (b).
\end{proof}

We continue to denote the analytic extension by $\xi_j$,
$j=2,3,4$.

\begin{proposition} \label{prop:globalmeromorphic}
The function
\[ \xi : \bigcup_{j=1}^4 \mathcal R_j \to \mathbb C \] given
by $\xi(z) = \xi_j(z)$ for $z \in \mathcal R_j$
extends to a meromorphic function (also denoted by $\xi$)
on $\mathcal R$. The meromorphic function has a pole of order
$\deg V - 1$ at infinity on the first sheet, and a simple pole at
the other point at infinity.
\end{proposition}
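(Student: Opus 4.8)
The plan is to show that $\xi$, defined sheet-by-sheet, has no jumps across the branch cuts (so it is meromorphic on $\mathcal R$ minus the two points at infinity), and then to determine its behavior at the two points at infinity using the asymptotic expansions of $F_j$ and of $\theta_j'=\tau s_j$. The key observation driving the cancellation of jumps is that the Euler--Lagrange conditions of Theorem~\ref{theo:theo1}, when differentiated, relate the boundary values of the $F_j$ exactly to the jumps built into the $\theta_j$ via Corollary~\ref{cor:jumpTheta} and \eqref{eq:sigma2withs1}.

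First I would verify analyticity across the three families of cuts. Across $S(\mu_1)$: differentiating the first line of \eqref{eq:jumpg1onR} gives $F_{1,+}+F_{1,-}=V'_1{}'+\dots$; more directly, $F_{1,+}-F_{1,-}=-2\pi i\,\mu_1'$ together with the identity $\xi_1+\xi_2 = V'-F_2+\theta_1'$ shows that $\xi_{1,\pm}=\xi_{2,\mp}$ on $S(\mu_1)$, using $V_1=V-\theta_1$ from \eqref{eq:V1bis} and the fact that $F_2,\theta_1'$ are analytic there. Across $S(\sigma_2-\mu_2)$: here one uses \eqref{eq:jumpg2oniR}, i.e. $g_{2,+}+g_{2,-}=g_1+g_3$, differentiated to $F_{2,+}+F_{2,-}=F_1+F_3$, together with the jump relation $\theta_{1,\pm}=\theta_{2,\mp}$ from \eqref{eq:thetacontinuations} (valid on $[iy^*(\alpha),i\infty)\cup\dots$, hence on all of $S(\sigma_2-\mu_2)$); this yields $\xi_{2,\pm}=\xi_{3,\mp}$. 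Across $S(\mu_3)$: differentiate \eqref{eq:jumpg3onR}, $g_{3,+}+g_{3,-}-g_2=V_3$, to get $F_{3,+}+F_{3,-}-F_2 = V_3'$, and combine with $\theta_{2,\pm}=\theta_{3,\mp}$ and $V_3=\theta_2-\theta_3$ from \eqref{eq:V3bis} to obtain $\xi_{3,\pm}=\xi_{4,\mp}$. The Lemma just proved in the excerpt already disposes of the remaining (spurious) cuts on $(-ic_2,ic_2)$ and $(-c_3,c_3)$, so $\xi$ is a single-valued analytic function on $\mathcal R$ away from the two points at infinity.

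Next I would analyze the two points at infinity. On the first sheet, $F_1(z)=\frac1z+\OO(z^{-2})$ while $V'$ is a polynomial of degree $\deg V-1$, so $\xi_1=V'-F_1$ has a pole of order $\deg V-1$ at $\infty^{(1)}$; since $\deg V\geq 2$ this is a genuine pole and $\xi$ is meromorphic there. At the common point at infinity of sheets $2,3,4$, each $F_j(z)=\OO(1/z)\to 0$, while $\theta_j'(z)=\tau s_j(z)$ grows like $(\tau z)^{1/3}$ times a cube root of unity by Lemma~\ref{lem:sjasymptotics}; in the local coordinate $w$ with $w^3\sim 1/(\tau z)$ this is a simple pole. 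One must check the three branches $\xi_2,\xi_3,\xi_4$ glue to a single function with a simple pole: the three leading terms $\tau s_1,\tau s_2,\tau s_3$ at $\infty$ are precisely the three branches $(\tau z)^{1/3},\omega(\tau z)^{1/3},\omega^2(\tau z)^{1/3}$ permuted cyclically as one circles the branch point, which is exactly the monodromy of the local coordinate $w$, so the pole is simple and single-valued on $\mathcal R$. Finally, since $\mathcal R$ is compact and $\xi$ has only these two poles, it is meromorphic on all of $\mathcal R$.

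The main obstacle I anticipate is bookkeeping the branch choices and the $\pm\frac{2}{3}\pi i$, $\pm\frac16\pi i$ constants that appear in the $g$-function jumps on the negative real axis and on the imaginary axis (Lemma~\ref{lem:jumpsg1}): these are additive constants in the $g_j$, so they differentiate away in the $F_j$ and hence do not affect $\xi$, but one has to be careful that the $\theta_j$-jumps on those same rays (Corollary~\ref{cor:jumpTheta}) are compatible with the crosswise sheet identification of $\mathcal R$ rather than introducing an extra sign. Verifying that $\theta_{1,\pm}=\theta_{2,\mp}$ etc. matches the orientation conventions of the cuts, and that the leading cube-root branches line up cyclically with the monodromy around $\infty$, is where the real care is needed; everything else is a routine differentiation of the variational equalities.
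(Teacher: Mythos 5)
Your proposal is correct and follows essentially the same route as the paper: differentiate the Euler--Lagrange equalities \eqref{eq:jumpg1onR}, \eqref{eq:jumpg2oniR}, \eqref{eq:jumpg3onR} to obtain the crosswise relations $\xi_{1,\pm}=\xi_{2,\mp}$, $\xi_{2,\pm}=\xi_{3,\mp}$, $\xi_{3,\pm}=\xi_{4,\mp}$ via $V_1=V-\theta_1$, $\theta_{1,\pm}=\theta_{2,\mp}$ on $S(\sigma_2-\mu_2)$ (using $c_2\geq y^*(\alpha)$), and $V_3=\theta_{2,\pm}-\theta_{3,\mp}$ on $\mathbb R$, then read off the pole of order $\deg V-1$ at $\infty_1$ from $\xi_1=V'-F_1$ and the simple pole at the double branch point $\infty_2$ from $\xi_j=\omega^k\tau^{4/3}z^{1/3}+\OO(z^{-1/3})$ in the local coordinate $z^{-1/3}$. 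The only minor difference is that you spell out the monodromy check at $\infty_2$ slightly more explicitly than the paper does.
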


\begin{proof}
From \eqref{eq:jumpg1onR} and \eqref{eq:defFj} it follows that
\begin{equation} \label{eq:jumpF1onR}
    F_{1,+}(x) + F_{1,-}(x) - F_2(x) = V_1'(x), \qquad x \in S(\mu_1),
    \end{equation}
since $F_2$ is analytic on $\mathbb R \setminus \{0\}$, so that
$F_{2,\pm} = F_2$. Since $V_1 = V - \theta_1$, we obtain by
the definition \eqref{eq:defxij} that
\begin{equation} \label{eq:jumpxi1onR}
    \xi_{1,\pm}(x) = \xi_{2,\mp}(x), \qquad x \in S(\mu_1).
    \end{equation}

From the first equation in  \eqref{eq:thetacontinuations} and
the fact that $c_2 \geq y^*(\alpha)$, we obtain that
\[ \theta_{1,\pm} = \theta_{2,\mp}, \qquad \text{on } S(\sigma_2-\mu_2)
    = (-i\infty, -ic_2] \cup [ic_2, \infty). \]
Then we obtain from \eqref{eq:jumpg2oniR} that
\begin{equation} \label{eq:jumpF2oniR}
    F_{2,+}(z) + F_{2,-}(z) = F_1(z) + F_3(z), \qquad z \in S(\sigma_2 -\mu_2),
    \end{equation}
and it follows from \eqref{eq:defxij} that
\begin{equation} \label{eq:jumpxi2oniR}
    \xi_{2,\pm}(z) = \xi_{3,\mp}(z), \qquad z \in S(\sigma_2 - \mu_2).
    \end{equation}

From \eqref{eq:jumpg3onR} we similarly find
\begin{equation} \label{eq:jumpF3onR}
    F_{3,+}(x) + F_{3,-}(x) - F_2(x) = V_3'(x), \qquad x \in S(\mu_3).
    \end{equation}
We next claim that
\begin{equation} \label{eq:V3claim}
    V_3(x) =  \theta_{2,\pm}(x) - \theta_{3,\mp}(x), \qquad x \in \mathbb R.
\end{equation}
Indeed, for $x \in (-x^*(\alpha), x^*(\alpha))$ we have by \eqref{eq:V3bis} that
$V_3(x) = \theta_2(x) - \theta_3(x)$
and \eqref{eq:V3claim} holds, while for $x \in\mathbb R$ with $|x| \geq x^*$,
we have by \eqref{eq:thetacontinuations} and \eqref{eq:V3bis} that
both sides of \eqref{eq:V3claim} are equal to zero.
Using \eqref{eq:V3claim} in \eqref{eq:jumpF3onR}
we obtain from the definition \eqref{eq:defxij} that
\begin{equation} \label{eq:jumpxi3onR}
    \xi_{3,\pm}(x) = \xi_{4,\mp}(x), \qquad x \in S(\mu_3).
    \end{equation}

The analyticity of $\xi$ across the three cuts $S(\mu_1)$,
$S(\sigma_2-\mu_2)$ and $S(\mu_3)$ is now established  by
\eqref{eq:jumpxi1onR}, \eqref{eq:jumpxi2oniR}, and \eqref{eq:jumpxi3onR}.

As $z \to \infty$, we have by
\eqref{eq:defxij} and \eqref{eq:defFj} that
\[ \xi_1(z) = V'(z) + \OO\left( z^{-1} \right), \]
which implies that $\xi$ has a pole of order $\deg V - 1$ at the point
at infinity on the first sheet.

From \eqref{eq:defxij}, \eqref{eq:defFj} and
Lemma \ref{lem:thetajasymptotics} it also follows that for $j=2,3,4$,
\[ \xi_j(z) = \omega^k \tau^{4/3} z^{1/3} + \OO(z^{-1/3}),
     \qquad \text{as } z \to \infty, \]
where the value of $k$ depends on $j$
and on the quadrant in which $z \to \infty$. Since the
other point
at infinity (which is common to the second, third and fourth sheets)
is a double branch point, we have that $z^{-1/3}$ is a local
coordinate, so that $\xi$ indeed has a simple pole.
\end{proof}

\begin{remark} \label{rem:spectralcurve}
It follows from Proposition \ref{prop:globalmeromorphic} that $\xi_j$, $j=1,2,3,4$
are solutions of a quartic equation
\begin{equation} \label{eq:spectralcurve}
    \xi^4 + p_3(z) \xi^3 + p_2(z) \xi^2 + p_1(z) \xi + p_0(z)  = 0
    \end{equation}
with coefficients that are polynomial in $z$. This algebraic equation
is known as the spectral curve \cite{BeEyHa1}. The degrees of the polynomial
coefficients are determined by the degree of $V$.

Using this fact, we indicate how to remove the condition that $S(\mu_1)$
and $S(\sigma_2 - \mu_2)$ are disjoint in  part (a) of Theorem \ref{theo:theo1}.
This condition was included in order to be able to conclude that $S(\mu_1)$ is
a finite union of intervals. In case the condition does not hold, we can now
argue as follows.

Given $\varepsilon > 0$ we modify the vector equilibrium problem by requiring
that $S(\mu_1) \subset (-\infty, -\varepsilon] \cup [\varepsilon, \infty)$. Then the
proof of existence and uniqueness of the minimizer follows in the same way
as in Section \ref{sec:proofTheorem1}. Let us denote the minimizer by $(\mu_1^{\varepsilon},
\mu_2^{\varepsilon}, \mu_3^{\varepsilon})$. Because the external field $V_1 - U^{\mu_2^{\varepsilon}}$
is real analytic on $\mathbb R \setminus (-\varepsilon, \varepsilon)$ we have that
the support of $\mu_1^{\varepsilon}$ is a finite union of intervals. The
further structure of the minimizers is the same, which means
that we can construct a Riemann surface with a globally meromorphic function on it,
in the same way as we did in this section. The only difference is that the meromorphic
function may have a pole in $\pm \varepsilon$. The algebraic equation \eqref{eq:spectralcurve}
has coefficients that are rational in $z$, with $\pm \varepsilon$ as the only (simple) poles,
and so we may write the spectral curve as
\begin{equation} \label{eq:spectralcurve2}
    (z^2 - \varepsilon^2) \xi^4 + q_3(z) \xi^3 + q_2(z) \xi^2 + q_1(z) \xi + q_0(z)  = 0
    \end{equation}
with polynomial coefficients $q_j(z)$, whose degrees is determined by the degree of $V$.
In particular, the degrees do not depend on $\varepsilon$.

As $\varepsilon \to 0$ we have that $\mu_j^{\varepsilon}$ converges weakly to $\mu_j$ for $j=1,2,3$,
and it is not difficult to show that the spectral curve \eqref{eq:spectralcurve2} has a limit
as well. Then $V_1'(z) - \int \frac{d\mu_1(s)}{z-s}$ is the solution of an algebraic equation and therefore
the support of $\mu_1$ is a finite union of intervals.
\end{remark}

\subsection{Properties of the $\xi$ functions}

From \eqref{eq:jumpFj} and the definition \eqref{eq:defxij} we find that
\begin{equation}  \label{eq:densityofmuj}
     \begin{aligned}
    \frac{d\mu_1}{dx} & = \frac{1}{2\pi i} \left( \xi_{1,+}(x) - \xi_{1,-}(x) \right), \qquad x \in \mathbb R, \\
    \frac{d\mu_2}{dz} & = \frac{1}{2\pi i} \left( \xi_{2,+}(z) - \xi_{2,-}(z) \right)
        - \frac{1}{2\pi i} \left( \theta'_{1,+}(z) - \theta'_{1,-}(z) \right) \\
    & = -\frac{1}{2\pi i} \left( \xi_{3,+}(z) - \xi_{3,-}(z) \right)
        + \frac{1}{2\pi i} \left( \theta'_{2,+}(z) - \theta'_{2,-}(z) \right), \quad z \in i \mathbb R, \\
    \frac{d\mu_3}{dx} & = \frac{1}{2\pi i} \left( \xi_{3,+}(x) - \xi_{3,-}(x) \right)
        - \frac{1}{2\pi i} \left( \theta'_{2,+}(x) - \theta'_{2,-}(x) \right) \\
    & = - \frac{1}{2\pi i} \left( \xi_{4,+}(x) - \xi_{4,-}(x) \right)
        + \frac{1}{2\pi i} \left( \theta'_{3,+}(x) - \theta'_{3,-}(x) \right), \quad x \in \mathbb R.
        \end{aligned}
\end{equation}

Since $\xi$ is an odd function with a simple pole at infinity on sheets $2$, $3$ and $4$,
there is an expansion in the local coordinate $z^{-1/3}$:
\[ \xi(z) = c_{-1} z^{1/3} + c_1 z^{-1/3} + c_3 z^{-1} + c_5 z^{-5/3} + \cdots \]
with certain real constants $c_j$.
Keeping track of the principal branches, we obtain the following
for the asymptotic behavior of $\xi_2$, $\xi_3$ and $\xi_4$ as $z \to \infty$.
Note that the structure of the formulas in terms of the factors $\omega$
and $\omega^2$ is the same as in Lemma \ref{lem:sjasymptotics}
and Lemma \ref{lem:thetajasymptotics}.

\begin{lemma} \label{lem:xijasymptotics}
We have as $z \to \infty$
\begin{align*}
   \xi_2(z) & = \begin{cases}
   c_{-1} z^{1/3} + c_1 z^{-1/3} + c_3 z^{-1} + c_5 z^{-5/3} + \cdots, & \text{in } I \cup IV, \\
   c_{-1} \omega z^{1/3} + c_1 \omega^2 z^{-1/3} + c_3 z^{-1} + c_5 \omega z^{-5/3} + \cdots,
        & \text{in } II, \\
   c_{-1} \omega^2 z^{1/3} + c_1 \omega z^{-1/3} + c_3 z^{-1} + c_5 \omega^2 z^{-5/3} + \cdots,
        & \text{in } III,
        \end{cases} \\
   \xi_3(z) & = \begin{cases}
   c_{-1} \omega z^{1/3} + c_1 \omega^2 z^{-1/3} + c_3 z^{-1} + c_5 \omega z^{-5/3} + \cdots,
        & \text{in } I, \\
   c_{-1} z^{1/3} + c_1 z^{-1/3} + c_3 z^{-1} + c_5 z^{-5/3} + \cdots, & \text{in } II \cup III, \\
   c_{-1} \omega^2 z^{1/3} + c_1 \omega z^{-1/3} + c_3 z^{-1} + c_5 \omega^2 z^{-5/3} + \cdots,
        & \text{in } IV,
        \end{cases} \\
   \xi_4(z) & = \begin{cases}
   c_{-1} \omega^2 z^{1/3} + c_1 \omega z^{-1/3} + c_3 z^{-1} + c_5 \omega^2 z^{-5/3} + \cdots,
        & \text{in } I \cup II, \\
   c_{-1} \omega z^{1/3} + c_1 \omega^2 z^{-1/3} + c_3 z^{-1} + c_5 \omega z^{-5/3} + \cdots,
        & \text{in } III \cup IV.
        \end{cases}
\end{align*}
with constants
\begin{equation} \label{eq:constants}
    c_{-1} = \tau^{4/3}, \qquad c_1 = - \frac{\alpha}{3} \tau^{2/3},
    \qquad c_3 = 1/3. \end{equation}
\end{lemma}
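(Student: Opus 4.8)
The plan is to exploit Proposition~\ref{prop:globalmeromorphic}, which tells us that $\xi$ is a meromorphic function on the compact surface $\mathcal R$ with a simple pole at the point $\infty_2$ at infinity common to $\mathcal R_2$, $\mathcal R_3$ and $\mathcal R_4$, together with the odd symmetry of $\xi$. First I would record that $\xi$ is odd: each $F_j$ is odd since $\mu_j$ is symmetric about $0$ by Theorem~\ref{theo:theo1}(d), each $\theta_j' = \tau s_j$ is odd by \eqref{eq:symmetries}, and $V'$ is odd because $V$ is even, so that $\xi_j(-z) = -\xi_j(z)$ for $j=1,2,3,4$. Since all branch points of $\mathcal R$ lie symmetrically about $0$, the map $z \mapsto -z$ lifts to an involution $\iota$ of $\mathcal R$ that fixes $\infty_2$, and $\xi \circ \iota = -\xi$.

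Near $\infty_2$ three sheets come together, so a local holomorphic coordinate is a $\zeta$ with $\zeta^3$ a constant multiple of $1/z$. As $\iota$ maps $1/z \mapsto -1/z$ and is an involution, it must act as $\zeta \mapsto -\zeta$. Since $\xi$ has a simple pole at $\infty_2$ and $\xi \circ \iota = -\xi$, its Laurent expansion in $\zeta$ contains only odd powers,
\[ \xi = c_{-1}\zeta^{-1} + c_1 \zeta + c_3 \zeta^3 + c_5 \zeta^5 + \cdots, \]
with real coefficients (reality follows because $\xi_2 = F_1 - F_2 + \tau s_1$ is real on the positive real axis near $+\infty$). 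On each of the three sheets and in each quadrant, $\zeta$ equals $\omega^k z^{-1/3}$ for the value of $k \in \{0,1,2\}$ that makes $c_{-1}\zeta^{-1}$ agree with the leading term of $\tau s_1$ on $\mathcal R_2$, of $\tau s_2$ on $\mathcal R_3$, and of $\tau s_3$ on $\mathcal R_4$ (recall $\xi_2 = F_1 - F_2 + \tau s_1$, $\xi_3 = F_2 - F_3 + \tau s_2$, $\xi_4 = F_3 + \tau s_3$, and the $F_j$ are $\OO(z^{-1})$); the required $\omega$-pattern is precisely the one recorded for $s_1,s_2,s_3$ in Lemma~\ref{lem:sjasymptotics}. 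Substituting $\zeta = \omega^k z^{-1/3}$ into the displayed expansion and using $\omega^{3k} = 1$ then reproduces the tables in the statement, and in particular explains why the coefficient of $z^{-1}$ carries no factor of $\omega$.

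It remains to identify $c_{-1}$, $c_1$, $c_3$, which I would do by expanding $\xi_2$ in the region $I \cup IV$, where $\zeta = z^{-1/3}$. From $\int d\mu_j = 1, \tfrac{2}{3}, \tfrac{1}{3}$ for $j = 1,2,3$ and the vanishing of the odd moments (symmetry of $\mu_j$) one has $F_j(z) = \bigl(\int d\mu_j\bigr) z^{-1} + o(z^{-1})$ as $z \to \infty$, hence $F_1 - F_2 = \tfrac{1}{3} z^{-1} + o(z^{-1})$; combining this with $\theta_1' = \tau s_1$ and the expansion of $s_1$ from Lemma~\ref{lem:sjasymptotics} gives
\[ \xi_2(z) = \tau s_1(z) + F_1(z) - F_2(z) = \tau^{4/3} z^{1/3} - \tfrac{\alpha}{3}\tau^{2/3} z^{-1/3} + \tfrac{1}{3} z^{-1} + \OO(z^{-5/3}), \]
since $\tau s_1$ contributes only the powers $z^{1/3}$, $z^{-1/3}$, $z^{-5/3}$, $\dots$ and nothing at order $z^{-1}$. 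Comparison with the $\zeta$-expansion yields $c_{-1} = \tau^{4/3}$, $c_1 = -\tfrac{\alpha}{3}\tau^{2/3}$, $c_3 = \tfrac{1}{3}$, and the same value of $c_3$ comes out of $\xi_3 = \tau s_2 + (F_2 - F_3)$ and $\xi_4 = \tau s_3 + F_3$, since $F_2 - F_3 = \tfrac{1}{3} z^{-1} + o(z^{-1})$ and $F_3 = \tfrac{1}{3} z^{-1} + o(z^{-1})$ as well. I expect no genuine obstacle: the only care required is in matching the powers of $\omega$ to the individual sheets and quadrants, and this is forced by the branch structure of the $s_j$ recorded in Lemmas~\ref{lem:sjasymptotics} and \ref{lem:thetajasymptotics}.
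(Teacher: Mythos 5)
Your proposal is correct and follows essentially the same route as the paper: the form of the expansion (only "odd" powers of the local coordinate $z^{-1/3}$, with the $\omega$-pattern inherited from Lemma \ref{lem:sjasymptotics}) comes from the oddness of the meromorphic function $\xi$ and its simple pole at $\infty_2$, and the constants are pinned down from $\theta_1'=\tau s_1$ together with the total masses of $\mu_1$ and $\mu_2$. The paper phrases the last step as forcing the $z^{1/3}$ and $z^{-1/3}$ coefficients of $F_2=\tau s_1-\xi_2+F_1$ to vanish and its $z^{-1}$ coefficient to equal $2/3$, which is algebraically the same as your direct expansion of $\xi_2$.
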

\begin{proof}
The form of the asymptotics follows from the definition \eqref{eq:defxij}
snd Lemmas \ref{lem:sjasymptotics} and \ref{lem:thetajasymptotics}.
Since $\mu_1$ is a probability measure with compact support, and symmetric
with respect to the origin, we have
\begin{equation} \label{eq:F1expansion}
    F_1(z) = \frac{1}{z} + \OO(z^{-3}).
    \end{equation}
From
\begin{equation} \label{eq:F2formula}
    F_2(z) = \theta_1'(z) - \xi_2(z) + F_1(z)
        = \tau s_1(z) - \xi_2(z) + F_1(z)
        \end{equation}
we see that $F_2$ has a series representation around
$z = \infty$ in powers of $z^{1/3}$. It should start with
$F_2(z) = \frac{2}{3} z^{-1} + \cdots$ since $\mu_2$
has total mass $2/3$, see \eqref{eq:defFj}.
Then the coefficients for $z^{1/3}$ and $z^{-1/3}$ should vanish
which by Lemma \ref{lem:sjasymptotics},
and \eqref{eq:F2formula} leads to the expressions in \eqref{eq:constants} for $c_{-1}$ and $c_{1}$.
The coefficient of $z^{-1}$ should be $2/3$, which by \eqref{eq:F1expansion}
and \eqref{eq:F2formula} leads to $c_3 = 1/3$.
\end{proof}

Combining \eqref{eq:F1expansion} and \eqref{eq:F2formula} with
Lemmas \ref{lem:sjasymptotics} and \ref{lem:xijasymptotics}
we also find that there is a real constant
\begin{equation} \label{eq:constantC}
    C = c_5 - \frac{\alpha^3}{81} \tau^{-2/3}
    \end{equation}
such that
\begin{align} \label{eq:F2expansion}
 F_2(z) & = \begin{cases}
    \frac{2}{3} z^{-1} - C z^{-5/3} + \OO(z^{-2}), &\qquad z \in I \cup IV, \\
    \frac{2}{3} z^{-1} - C \omega z^{-5/3} + \OO(z^{-2}), & \qquad z \in II, \\
    \frac{2}{3} z^{-1} - C \omega^2 z^{-5/3} + \OO(z^{-2}), & \qquad z \in III
    \end{cases}
\end{align}
as $z \to \infty$.
 Then by \eqref{eq:defxij}, and Lemmas \ref{lem:thetajasymptotics} and  \ref{lem:xijasymptotics},
 we also get
 \begin{align} \label{eq:F3expansion}
F_3(z) & = \begin{cases}
    \frac{1}{3} z^{-1} + C \omega^2 z^{-5/3} + \OO(z^{-2}), & \qquad z \in I \cup II, \\
    \frac{1}{3} z^{-1} + C \omega z^{-5/3} + \OO(z^{-2}), & \qquad z \in III \cup IV,
    \end{cases}
    \end{align}
 as $z \to \infty$, with the same constant $C$.
Using this and Lemma \ref{lem:thetajasymptotics} in \eqref{eq:densityofmuj}
we find that
\begin{equation} \label{eq:mu23decay}
\begin{aligned}
    \frac{d\mu_2}{|dz|} & = \frac{\sqrt{3}}{2\pi} C |z|^{-5/3} + \OO(z^{-2}),
        && \text{ as } |z| \to \infty, \, z \in i \mathbb R \\
    \frac{d\mu_3}{dx} & = \frac{\sqrt{3}}{2\pi} C  x^{-5/3} + \OO(x^{-2}),
        && \text{ as } x \to \infty, \, x \in \mathbb R.
        \end{aligned}
       \end{equation}
which gives the precise rate of decay of the densities of $\mu_2$ and $\mu_3$
along the imaginary and real axis, respectively.
It also follows from \eqref{eq:mu23decay} that the constant is positive, $C > 0$.

Furthermore, after integration, we find from \eqref{eq:F1expansion}, \eqref{eq:F2expansion},
and \eqref{eq:F3expansion}
\begin{align*}
    g_1(z) & = \log z + \OO(z^{-1}) \\
    g_2(z) & = \begin{cases}
    \frac{2}{3} \log z + \frac{3}{2} C z^{-2/3} + \OO(z^{-1}), & \qquad z \in I \cup IV, \\
    \frac{2}{3} \log z + \frac{3}{2} C \omega z^{-2/3} + \OO(z^{-1}), & \qquad z \in II, \\
    \frac{2}{3} \log z + \frac{3}{2} C \omega^2 z^{-2/3} + \OO(z^{-1}), & \qquad z \in III,
    \end{cases} \\
    g_3(z) & = \begin{cases}
    \frac{1}{3} \log z - \frac{3}{2} C \omega^2 z^{-2/3} + \OO(z^{-1}), & \qquad z \in I \cup II, \\
    \frac{1}{3} \log z - \frac{3}{2} C \omega z^{-2/3} + \OO(z^{-1}),  &\qquad z \in III \cup IV
    \end{cases}
    \end{align*}
as $z \to \infty$. It may be verified from \eqref{eq:defgj} that the constant of integration
indeed vanishes.

\subsection{The $\lambda$ functions} \label{subsec:lambda}

The $\lambda$ functions are primitive functions of the $\xi$ functions
(that is, Abelian integrals).
They are defined as follows.
\begin{definition} We define
\begin{equation} \label{eq:deflambdaj}
\left\{
\begin{aligned}
    \lambda_1 & = V - g_1 - \ell_1, \\
    \lambda_2 & = g_1 - g_2 + \theta_1, \\
    \lambda_3 & = g_2 - g_3 + \theta_2, \\
    \lambda_4 & = g_3 + \theta_3.
    \end{aligned} \right.
    \end{equation}
\end{definition}

It is convenient to consider each $\lambda_j$ function as defined on $\mathcal R_j$
with an extra cut to ensure single-valuedness.
Thus, $\lambda_1$ is defined and analytic on $\mathbb C \setminus (-\infty, b_N]$,
$\lambda_2$  on $\mathbb C \setminus ( (-\infty, b_N] \cup i \mathbb R)$,
$\lambda_3$ on $\mathbb C \setminus (\mathbb R \cup i \mathbb R)$,
and $\lambda_4$ on $\mathbb C \setminus \mathbb R$.

The $\lambda$-functions satisfy the following jump conditions
that follow by combining the jumps for the $g$-functions given in Lemmas \ref{lem:jumpsg1}
and \ref{lem:jumpsg2} with the properties \eqref{eq:thetacontinuations} of
the functions $\theta_j$.

\begin{lemma} \label{lem:jumpslambda}
\begin{enumerate}
\item[\rm (a)] We have
\begin{equation} \label{eq:lambdajump1}
\left\{
\begin{aligned}
    \lambda_{1,\pm} - \lambda_{2,\mp}  & = 0,  \qquad \quad  \text{ on } S(\mu_1) \cap \mathbb R^+, \\
    \lambda_{1,\pm} - \lambda_{2,\mp}  & = \mp \frac{2}{3} \pi i, \quad   \text{ on } S(\mu_1) \cap \mathbb R^-.
    \end{aligned} \right.
    \end{equation}
and for $k = 0, \ldots, N$,
\begin{equation} \label{eq:lambdajump2}
\left\{
\begin{aligned}
    \lambda_{1,+} - \lambda_{1,-} & = - 2 \pi i \alpha_k, \\
    \lambda_{2,+} - \lambda_{2,-} & =  2 \pi i \alpha_k,
    \end{aligned} \right.
    \quad \text{ on } (b_k, a_{k+1}) \cap \mathbb R^+,
    \end{equation}
\begin{equation} \label{eq:lambdajump2b}
\left\{
\begin{aligned}
    \lambda_{1,+} - \lambda_{1,-} & = - 2 \pi i \alpha_k, \\
    \lambda_{2,+} - \lambda_{2,-} & =  2 \pi i \alpha_k - \frac{4}{3} \pi i,
    \end{aligned} \right.
    \quad \text{ on } (b_k, a_{k+1}) \cap \mathbb R^-,
    \end{equation}
where $b_0 = - \infty$, $a_{N+1} = +\infty$, and $\alpha_k$
is given by \eqref{eq:defalphak} for $k=0, \ldots, N$.
\item[\rm (b)] On the imaginary axis we have
\begin{align} \label{eq:lambdajump3}
    \lambda_{2,\pm} - \lambda_{3,\mp} & = 0, \qquad \text{on } S(\sigma_2 - \mu_2),
    \end{align}
and
\begin{equation} \label{eq:lambdajump4}
\left\{
\begin{aligned}
    \lambda_{2,+}- \lambda_{2,-} & = \mp \frac{2}{3} \pi i, \\
    \lambda_{3,+}- \lambda_{3,-} & = \pm \frac{2}{3} \pi i,
    \end{aligned} \right.  \quad \text{ on } (-ic_2, ic_2) \cap i \mathbb R^{\pm}.
    \end{equation}
\item[\rm (c)] Finally,
\begin{equation} \label{eq:lambdajump5}
\left\{
\begin{aligned}
    \lambda_{3,\pm} - \lambda_{4,\mp} & = 0, \qquad \quad   \text{ on }  (c_3, \infty), \\
    \lambda_{3,\pm} - \lambda_{4,\mp} & = \mp \frac{2}{3} \pi i, \quad  \text{ on }  (-\infty, -c_3),
    \end{aligned} \right.
    \end{equation}
and
\begin{equation} \label{eq:lambdajump6}
    \left\{
    \begin{aligned}
    \lambda_{3,+}- \lambda_{3,-} & = - \frac{1}{3} \pi i, \\
    \lambda_{4,+}- \lambda_{4,-} & = \frac{1}{3} \pi i,
    \end{aligned} \right. \qquad \text{on } (-c_3, c_3).
    \end{equation}
\end{enumerate}
\end{lemma}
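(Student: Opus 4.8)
The plan is to verify each of the stated jump relations by direct substitution of the definitions \eqref{eq:deflambdaj} and collecting, arc by arc, the boundary-value jumps of the constituent functions: those of $g_1,g_2,g_3$ from Lemmas \ref{lem:jumpsg1} and \ref{lem:jumpsg2}, and those of $\theta_1,\theta_2,\theta_3$ from \eqref{eq:thetacontinuations} and Corollary \ref{cor:jumpTheta}. Since each $\lambda_j$ is a fixed linear combination of $g$-functions and $\theta$-functions (plus the constant $-\ell_1$ in the case of $\lambda_1$), the jump $\lambda_{j,+}-\lambda_{j,-}$, or the ``crosswise'' difference $\lambda_{j,\pm}-\lambda_{j+1,\mp}$, across any of the relevant arcs is the corresponding combination of the individual jumps; the content of the lemma is that these combinations collapse to the asserted constants.

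For the crosswise relations \eqref{eq:lambdajump1}, \eqref{eq:lambdajump3}, \eqref{eq:lambdajump5} I would start from the variational equalities in Lemma \ref{lem:jumpsg2}, which on the positive halves of the supports read $g_{1,+}+g_{1,-}-g_2 = V_1-\ell_1$ on $S(\mu_1)$, $g_{2,+}+g_{2,-}=g_1+g_3$ on $S(\sigma_2-\mu_2)$, and $g_{3,+}+g_{3,-}-g_2 = V_3$ on $S(\mu_3)$. Substituting into \eqref{eq:deflambdaj} and using $V_1 = V-\theta_1$ from \eqref{eq:V1bis}, the identity $V_3 = \theta_{2,\pm}-\theta_{3,\mp}$ on $\mathbb{R}$ (which is \eqref{eq:V3bis} together with \eqref{eq:thetacontinuations}, as recorded in \eqref{eq:V3claim}), and the continuation relations $\theta_{1,\pm}=\theta_{2,\mp}$ on $S(\sigma_2-\mu_2)\subset(-i\infty,-iy^*(\alpha)]\cup[iy^*(\alpha),i\infty)$ and $\theta_{2,\pm}=\theta_{3,\mp}$ off $(-x^*(\alpha),x^*(\alpha))$, one checks for instance that $\lambda_{1,\pm}-\lambda_{2,\mp} = V-(g_{1,+}+g_{1,-}-g_2)-\theta_1-\ell_1 = V-V_1-\theta_1 = 0$ on $S(\mu_1)\cap\mathbb{R}^+$, and similarly that the other two crosswise differences vanish on the positive halves of their supports. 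On the negative real axis the extra constant $\pm\tfrac23\pi i$ in the $g_2$-jump of Lemma \ref{lem:jumpsg1}(b) propagates through and produces the $\mp\tfrac23\pi i$ in the second lines of \eqref{eq:lambdajump1} and \eqref{eq:lambdajump5}; \eqref{eq:lambdajump3} needs no such correction since $S(\sigma_2-\mu_2)$ avoids $0$.

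For the ``straight'' jumps across gaps --- \eqref{eq:lambdajump2}, \eqref{eq:lambdajump2b}, \eqref{eq:lambdajump4}, \eqref{eq:lambdajump6} --- I would combine the jump $g_{1,+}-g_{1,-}=2\pi i\alpha_k$ on $(b_k,a_{k+1})$ (Lemma \ref{lem:jumpsg1}(a)), the jump $g_{3,+}-g_{3,-}=\tfrac13\pi i$ on $(-c_3,c_3)$ (Lemma \ref{lem:jumpsg1}(e)), the jumps of $g_2$ on $\mathbb{R}^-$ and on $(-ic_2,ic_2)$ (Lemma \ref{lem:jumpsg1}(b),(c)), and the jumps of the $\theta_j$ across $\mathbb{R}$ and $i\mathbb{R}$ from Corollary \ref{cor:jumpTheta}. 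The one step that is not purely mechanical is on the gap $(-ic_2,ic_2)$, where the $g_2$-jump is the non-constant quantity $\tfrac23\pi i - 2\pi i\sigma_2([0,z])$ for $z\in(0,ic_2)$: here one uses that, by \eqref{eq:sigma2withs1} together with $\theta_1'=\tau s_1$, one has $\theta_{1,+}'-\theta_{1,-}' = -2\pi i\, d\sigma_2/dz$ on $i\mathbb{R}$, and since $\sigma_2$ carries no mass on $(-iy^*(\alpha),iy^*(\alpha))$ while $\theta_1$ is continuous at the branch point $iy^*(\alpha)$, integration yields $\theta_{1,+}(z)-\theta_{1,-}(z)=-2\pi i\sigma_2([0,z])$ on $(0,ic_2)$. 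This cancels the non-constant part of the $g_2$-jump exactly, leaving $\lambda_{2,+}-\lambda_{2,-}=-\tfrac23\pi i$ there; the identity $s_{1,+}-s_{1,-}=s_{2,-}-s_{2,+}$ on $i\mathbb{R}$ gives the companion cancellation in $\lambda_3$, and the negative imaginary axis follows by the symmetries of $\mu_2$, $\sigma_2$ and $\theta_1$.

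The main obstacle is entirely bookkeeping: one must track carefully which of the two boundary values of each multivalued constituent enters each arc, and keep account of the constant $\pm\tfrac23\pi i$ and $\pm\tfrac13\pi i$ jumps that the $g$- and $\theta$-functions pick up across the negative real and imaginary half-axes --- these are precisely the source of the inhomogeneous terms in \eqref{eq:lambdajump1}, \eqref{eq:lambdajump2b}, \eqref{eq:lambdajump4}, \eqref{eq:lambdajump5}, \eqref{eq:lambdajump6}. Apart from the $\sigma_2$-cancellation on $(-ic_2,ic_2)$ just described, every identity reduces to adding up previously established constants.
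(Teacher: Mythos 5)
Your proposal is correct and matches the paper's proof: the paper likewise obtains every relation by combining the $g$-jumps of Lemmas \ref{lem:jumpsg1} and \ref{lem:jumpsg2} with the continuation relations \eqref{eq:thetacontinuations}, and singles out exactly the same non-mechanical step you do, namely the cancellation on $(0,ic_2)$ of the non-constant part $2\pi i\,\sigma_2([0,z])$ of the $g_2$-jump against $\theta_{1,+}-\theta_{1,-}$, obtained by integrating $\theta_{1,-}'-\theta_{1,+}'=2\pi i\, d\sigma_2/dz$ (the paper integrates from $0$ using $\theta_{1,+}(0)=\theta_{1,-}(0)$; your base point $iy^*(\alpha)$ is equivalent). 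The remaining identities are, as you say, pure bookkeeping of the constant $\pm\tfrac{2}{3}\pi i$ and $\pm\tfrac{1}{3}\pi i$ contributions across the negative half-axes, which is all the paper's proof does as well.
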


\begin{proof}
The equations follow by combining the jumps for the $g$-functions given in Lemmas \ref{lem:jumpsg1}
and \ref{lem:jumpsg2} with the properties \eqref{eq:thetacontinuations} of
the functions $\theta_j$.
Only \eqref{eq:lambdajump4} requires more explanation.

From \eqref{eq:jumpg2ongap} and \eqref{eq:deflambdaj} we get for $z \in (0, ic_2)$,
\[ \lambda_{2,+}(z) - \lambda_{2,-}(z)
     = - \frac{2}{3} \pi i + 2 \pi i \sigma_2([0,z]) + (\theta_{1,+}(z) - \theta_{1,-}(z)) \]
where we used the fact that $g_1$ and $g_3$ are analytic on the imaginary axis.
Since $\sigma_2$ has density \eqref{eq:sigma2withs1} and $\theta_1' = \tau s_1$,
we have
\begin{align*} \sigma_2([0,z]) = \int_0^z \frac{d\sigma_2}{dz}(z') dz'
    & = \frac{1}{2\pi i} \int_0^z (\theta_{1,-}'(z') - \theta_{1,+}'(z')) dz' \\
    &  = \frac{1}{2\pi i} \left( \theta_{1,-}(z) - \theta_{1,+}(z) - \theta_{1,-}(0) + \theta_{1,+}(0) \right)
    \end{align*}
and the first equality in \eqref{eq:lambdajump4} follows for $z \in (0,ic_2)$
since $\theta_{1,-}(0) = \theta_{1,+}(0)$.
The other equalities contained in \eqref{eq:lambdajump4} follow in the same way.
\end{proof}

The $\lambda$ functions also satisfy a number of inequalities.

\begin{lemma} We have the following inequalities
\begin{align}\label{eq:varlambda21}
    \Re(\lambda_{2,+}-\lambda_{1,-})<0 & \quad \text{on } \mathbb R \setminus S(\mu_1), \\
    \label{eq:varlambda23}
    \Re(\lambda_{2,+}-\lambda_{3,-})<0 & \quad \text{on } i\R\setminus S(\sigma_2-\mu_2), \\
    \label{eq:varlambda43}
    \Re(\lambda_{4,+}-\lambda_{3,-})<0 & \quad \text{on } \R\setminus S(\mu_3).
\end{align}
\end{lemma}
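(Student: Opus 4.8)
The plan is to derive each of the three inequalities directly from the Euler--Lagrange inequalities in Theorem~\ref{theo:theo1}, using the relations between the $\lambda_j$ and the logarithmic potentials $U^{\mu_j}$ together with the jump identities already recorded. The key observation is that $\Re \lambda_j$ can be expressed (up to the harmless additive constants and imaginary parts tracked in Lemma~\ref{lem:jumpslambda}) in terms of the $U^{\mu_j}$ and the external fields; the strict inequalities in \eqref{eq:varmu2} and \eqref{eq:varmu3}, which hold precisely because of the special properties of $\sigma_2$ and $V_3$ from Lemma~\ref{lem:convex}, are exactly what turn the weak variational inequalities into the strict ones claimed here.

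For \eqref{eq:varlambda21}: on $\mathbb R\setminus S(\mu_1)$ we have from \eqref{eq:deflambdaj} and the definition of $U^{\mu_1}$ that $\lambda_1 = V - g_1 - \ell_1$ with $\Re g_1 = -U^{\mu_1}$, and $\lambda_2 = g_1 - g_2 + \theta_1$. Using $U^{\mu_2} = -\Re g_2$ (adjusting for the imaginary part on $\mathbb R^-$ as in the proof of Lemma~\ref{lem:jumpsg1}) and $V_1 = V - \theta_1$ from \eqref{eq:V1bis}, I compute that $\Re(\lambda_{2,+} - \lambda_{1,-}) = -2U^{\mu_1}(x) + U^{\mu_2}(x) - V_1(x) + \ell_1$. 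By the variational inequality in \eqref{eq:varmu1} this is $\leq 0$; strictness on the open set $\mathbb R\setminus S(\mu_1)$ follows because $\mu_1$ has no singular exterior points in the regular case (so equality in \eqref{eq:varmu1} can hold only on $S(\mu_1)$ itself, and one uses continuity / the real-analyticity of the external field to exclude equality on the complement except possibly on a boundary set of measure zero, which is then handled by an argument analogous to the one in \cite{DuiKu2}). The same bookkeeping, now invoking the strict inequality in \eqref{eq:varmu2}, yields $\Re(\lambda_{2,+}-\lambda_{3,-}) = 2U^{\mu_2}(z) - U^{\mu_1}(z) - U^{\mu_3}(z) < 0$ on $i\mathbb R\setminus S(\sigma_2-\mu_2)$; and invoking the strict inequality in \eqref{eq:varmu3} yields $\Re(\lambda_{4,+}-\lambda_{3,-}) = 2U^{\mu_3}(x) - U^{\mu_2}(x) + V_3(x) \cdot(-1)\cdot(-1) > 0$ wait --- more precisely $\Re(\lambda_{4,+} - \lambda_{3,-}) = -(2U^{\mu_3}(x) - U^{\mu_2}(x) + V_3(x)) < 0$ on $\mathbb R\setminus S(\mu_3)$, using \eqref{eq:V3bis} and $\lambda_4 = g_3 + \theta_3$, $\lambda_3 = g_2 - g_3 + \theta_2$.

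The main obstacle is purely organizational: one must carefully match the branches of the logarithms (the various $\pm\frac{2}{3}\pi i$ and $\pm\frac16\pi i$ shifts catalogued in Lemmas~\ref{lem:jumpsg1} and \ref{lem:jumpslambda}) so that the imaginary parts cancel correctly and one is genuinely left with the real potential-theoretic expression; in particular on $\mathbb R^-$ the boundary values of $g_2$ and $\theta_1$ both jump, and one needs $c_2 \geq y^*(\alpha)$ (so that \eqref{eq:thetacontinuations} applies on all of $S(\sigma_2-\mu_2)$) and $c_3 < x^*(\alpha)$ for the analogous step on the third sheet. There is also a minor subtlety that the inequalities are asserted with specific boundary values ($\lambda_{2,+}$ versus $\lambda_{2,-}$, etc.); since $\Re\lambda_j$ is continuous up to the relevant axis from the side in question and the right-hand side of each variational condition is real, the choice of $+$ or $-$ is immaterial for the real part, which is why the statement is consistent. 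I would organize the proof as three short paragraphs, one per inequality, each reducing to the corresponding line of Theorem~\ref{theo:theo1} via the substitutions above, and cite the regularity hypothesis for the strictness on the open complement exactly as in \cite{DuiKu2}.
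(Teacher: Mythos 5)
Your proposal is correct and follows essentially the same route as the paper: the paper also rewrites each difference $\lambda_{j,+}-\lambda_{k,-}$ via \eqref{eq:deflambdaj} in terms of the $g$-functions and external fields (its Lemma~\ref{lem:jumpsg2} is just the variational conditions \eqref{eq:varmu1}--\eqref{eq:varmu3} restated in that language), and gets strictness for \eqref{eq:varlambda21} from the absence of singular exterior points and for \eqref{eq:varlambda23}, \eqref{eq:varlambda43} from the strict inequalities already built into \eqref{eq:varmu2} and \eqref{eq:varmu3}. Your extra hedging about a measure-zero boundary set in the first case is unnecessary, since a singular exterior point is by definition a point of $\mathbb R\setminus S(\mu_1)$ where equality holds in \eqref{eq:varmu1}, so regularity excludes equality outright.
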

\begin{proof}
The inequalities follow from the inequalities in Lemma \ref{lem:jumpsg2}.

For example, to obtain \eqref{eq:varlambda21} we note that by \eqref{eq:deflambdaj}
and \eqref{eq:V1bis}
\begin{align*}
    \lambda_{2,+} - \lambda_{1,-} & = (g_1 - g_2 + \theta_1)_+ - (V-g_1 - \ell_1)_- \\
        & = g_{1,+} + g_{1,-} - g_{2,+} - V_1 + \ell_1
        \end{align*}
which indeed has a negative real part on $\mathbb R \setminus S(\mu_1)$ because
of the inequality in \eqref{eq:jumpg1onR}. The inequality is strict because of the
regularity assumption.

The other inequalities \eqref{eq:varlambda23} and \eqref{eq:varlambda43} follow in a similar way.
\end{proof}

The asymptotic behavior of the $\lambda$-functions follows
by combining the definition \eqref{eq:deflambdaj}, which we state for ease of future reference.
\begin{lemma}
We have as $z \to \infty$
\begin{align}\label{eq:asymlambda1}
    \lambda_1(z)  & = V(z) - \log z - \ell_1 + \OO(z^{-1}),\\
    \label{eq:asymlambda2}\lambda_2(z) & = \theta_1(z) + \begin{cases}
    \frac{1}{3} \log z - \frac{3}{2} C z^{-2/3} + \OO(z^{-1}),  &  z \in I \cup IV, \\
    \frac{1}{3} \log z - \frac{3}{2} C \omega z^{-2/3} + \OO(z^{-1}), & z \in II, \\
    \frac{1}{3} \log z - \frac{3}{2} C \omega^2 z^{-2/3} + \OO(z^{-1}), & z \in III,
    \end{cases} \\
    \label{eq:asymlambda3}\lambda_3(z) & = \theta_2(z) + \begin{cases}
    \frac{1}{3} \log z - \frac{3}{2} C \omega z^{-2/3} + \OO(z^{-1}), & z \in I, \\
    \frac{1}{3} \log z - \frac{3}{2} C z^{-2/3} + \OO(z^{-1}), & z \in II \cup III, \\
    \frac{1}{3} \log z - \frac{3}{2} C \omega^2 z^{-2/3} + \OO(z^{-1}), & z \in IV,
    \end{cases} \\
    \label{eq:asymlambda4}\lambda_4(z) & = \theta_3(z) +  \begin{cases}
    \frac{1}{3} \log z - \frac{3}{2} C \omega^2 z^{-2/3} + \OO(z^{-1}), &  z \in I \cup II, \\
    \frac{1}{3} \log z - \frac{3}{2} C \omega z^{-2/3} + \OO(z^{-1}), & z \in III \cup IV,
    \end{cases}
    \end{align}
where $C$ is the constant from \eqref{eq:constantC}.
\end{lemma}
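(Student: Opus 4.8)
The plan is to read off each expansion directly from the definitions \eqref{eq:deflambdaj} by substituting the large-$z$ expansions of $g_1$, $g_2$, $g_3$ obtained after integration from \eqref{eq:F1expansion}, \eqref{eq:F2expansion} and \eqref{eq:F3expansion}, while keeping the functions $\theta_j$ in unexpanded form since they already appear explicitly in \eqref{eq:deflambdaj}.

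For $\lambda_1 = V - g_1 - \ell_1$ the claim \eqref{eq:asymlambda1} is immediate: $V$ is a polynomial and $g_1(z) = \log z + \OO(z^{-1})$. For $\lambda_4 = g_3 + \theta_3$ the expansion \eqref{eq:asymlambda4} is just the expansion of $g_3$ in the relevant quadrants with $\theta_3$ added. For $\lambda_2 = g_1 - g_2 + \theta_1$ one combines, in each quadrant, $g_1(z) = \log z + \OO(z^{-1})$ with the corresponding branch $g_2(z) = \frac{2}{3}\log z + \frac{3}{2} C \omega^k z^{-2/3} + \OO(z^{-1})$; since $g_1$ carries no $z^{-2/3}$ term one gets $g_1 - g_2 = \frac{1}{3}\log z - \frac{3}{2} C \omega^k z^{-2/3} + \OO(z^{-1})$ with the same $\omega^k$, and \eqref{eq:asymlambda2} follows after adding $\theta_1$.

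The only case involving a genuine (if trivial) computation is $\lambda_3 = g_2 - g_3 + \theta_2$, since in a given quadrant the $z^{-2/3}$ coefficients of $g_2$ and of $g_3$ carry different powers of $\omega$; here one invokes the identity $1 + \omega + \omega^2 = 0$. For example in quadrant $I$ one has $g_2 - g_3 = \frac{1}{3}\log z + \frac{3}{2} C (1 + \omega^2) z^{-2/3} + \OO(z^{-1}) = \frac{1}{3}\log z - \frac{3}{2} C \omega z^{-2/3} + \OO(z^{-1})$, and the three other quadrants are handled in exactly the same way; adding $\theta_2$ gives \eqref{eq:asymlambda3}.

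Since the whole argument is a substitution, I do not expect any real obstacle. The only point requiring attention is the branch-cut bookkeeping, namely checking that in each quadrant the branches of $g_2$ and $g_3$ that are used are indeed the ones attached to the correct sheets $\mathcal R_j$ and that the relevant constants of integration vanish; the latter was already verified when the $g$-expansions were derived from \eqref{eq:defgj}, and the former is routine. The lemma is stated only for convenient reference in the subsequent steepest descent analysis.
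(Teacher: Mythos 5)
Your proposal is correct and is exactly the argument the paper intends: the lemma is stated there without proof as an immediate consequence of substituting the large-$z$ expansions of $g_1$, $g_2$, $g_3$ into the definitions \eqref{eq:deflambdaj}, and your quadrant-by-quadrant bookkeeping (including the use of $1+\omega+\omega^2=0$ for $\lambda_3$) checks out in every case.
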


\section{Pearcey integrals and the first transformation of the
RH problem} \label{sec:Pearcey}

Now we come to the first transformation of the RH problem \eqref{eq:RHforY} which
as in \cite{DuiKu2} will be done with the help of Pearcey integrals. The present
setup is however slightly different from the one in \cite{DuiKu2}, since we
will construct a RH problem for the Pearcey integrals that
is $n$-dependent.

\subsection{Definitions}
We note that the weights from \eqref{eq:weightswj} can be written as
\begin{equation} \label{eq:weightsinp0}
\left\{
\begin{aligned}
    w_{0,n}(x) & = { e}^{-n V(x)} p_{0,n}(x), \\
    w_{1,n}(x) & = (n \tau)^{-1} { e}^{-n V(x)} p'_{0,n}(x), \\
    w_{2,n}(x) & = (n \tau)^{-2} { e}^{-n V(x)} p''_{0,n}(x),
\end{aligned} \right.
\end{equation}
where
\begin{equation} \label{eq:defp0n}
    p_{0,n}(x) = \int_{-\infty}^{\infty} { e}^{-n(W(s) - \tau xs)} ds,
        \qquad W(s) = \frac{1}{4} s^4 + \frac{\alpha}{2} s^2,
    \end{equation}
is a Pearcey integral that satisfies the third order linear ODE
\begin{equation} \label{eq:PearceyODE}
    p''' + n^2 \tau^2 \alpha p' - n^3 \tau^4 x p = 0.
    \end{equation}

Other solutions to the same ODE are given by similar integrals.
\begin{definition}
For $j=0, \ldots, 5$ and $n \in \mathbb N$, we define
\begin{equation} \label{Pearcey-integrals}
      p_{j,n}(z) = \int_{\Gamma_j} { e}^{-n (W(s) - \tau zs)} ds,
        \qquad z \in \mathbb C,
\end{equation}
where the contours $\Gamma_j$ are
\begin{equation} \left\{
        \begin{array}{ll}
            \Gamma_0= (-\infty,\infty),  & \Gamma_1=(i \infty,0]\cup [0,\infty), \\
            \Gamma_2= (i \infty,0]\cup [0,-\infty), \qquad &\Gamma_3= (-i \infty,0]\cup [0,-\infty),\\
            \Gamma_4=(-i \infty,0]\cup [0,\infty), \qquad &\Gamma_5=i  \mathbb R,
        \end{array} \right.
   \end{equation}
or homotopic deformations such as the ones shown in Figure \ref{fig:PearceyC},
and with the orientation as  also shown in
Figure \ref{fig:PearceyC}.
\end{definition}

All Pearcey integrals \eqref{Pearcey-integrals} are entire functions in
the complex plane.

    \begin{figure}
        \centering
          \includegraphics[angle=270,scale=0.4]{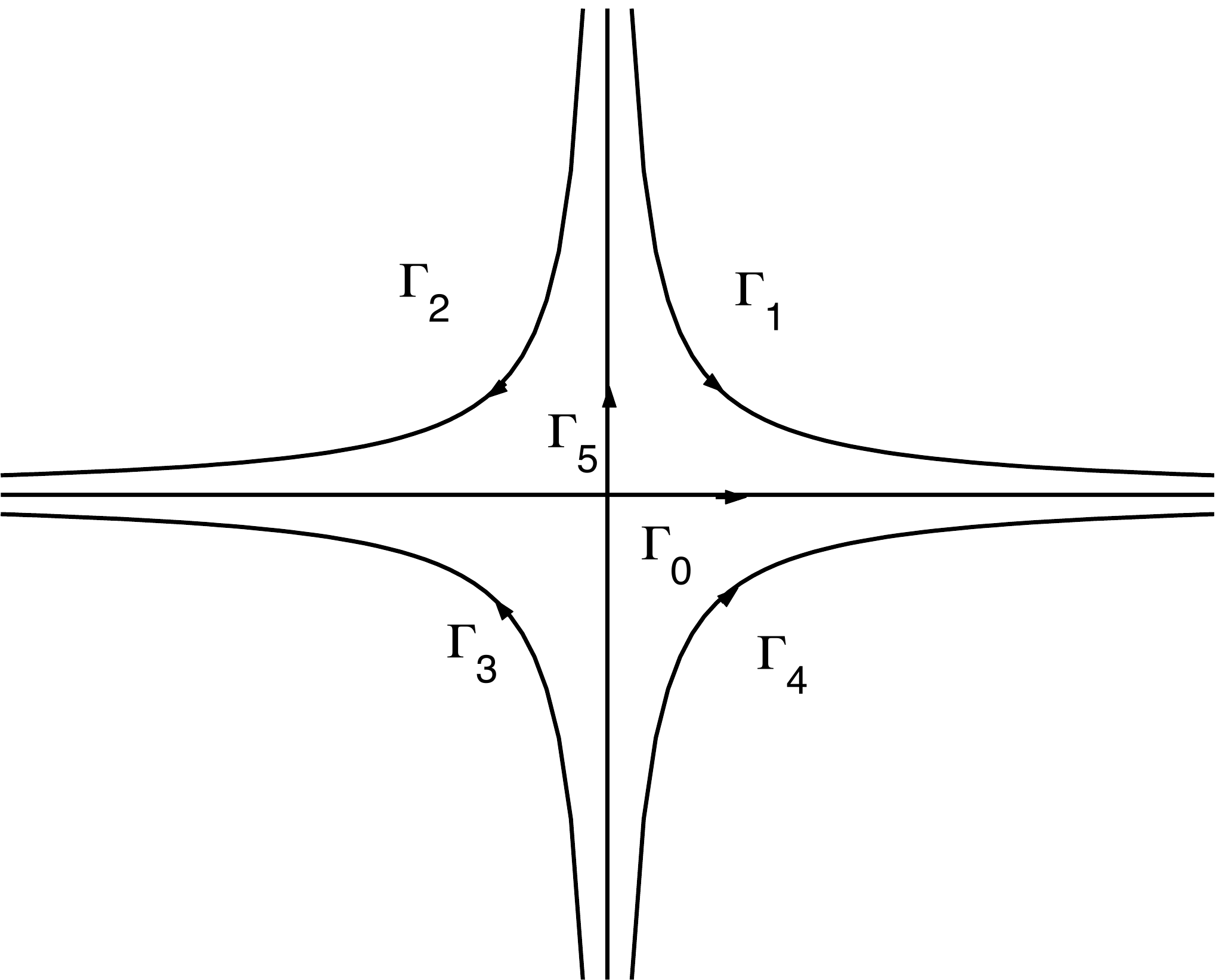}
         \caption{Contours $\Gamma_j$ in the definition of the Pearcey integrals}
         \label{fig:PearceyC}
    \end{figure}

Certain combinations of these functions are used to build a
$3 \times 3$ matrix valued $P_n : \mathbb C \setminus (\mathbb R \cup i \mathbb R)
\to \mathbb C^{3 \times 3}$ as follows.

\begin{definition}
In each of the four quadrants (denoted $I$, $II$, $III$ and $IV$)
we define
\begin{align} \label{eq:defPn}
P_n = \begin{cases} \begin{pmatrix}
    p_{0,n}   &   -p_{2,n}   &   -p_{5,n}  \\
    p_{0,n}'  &   -p_{2,n}'  &   -p_{5,n}' \\
    p_{0,n}'' &   -p_{2,n}'' &   -p_{5,n}''
\end{pmatrix} & \textrm{ in } I, \\
     \begin{pmatrix}
    p_{0,n}   &   -p_{1,n}   &   -p_{5,n}  \\
    p_{0,n}'  &   -p_{1,n}'  &   -p_{5,n}' \\
    p_{0,n}'' &   -p_{1,n}'' &   -p_{5,n}''
\end{pmatrix} & \textrm{ in } II, \\
     \begin{pmatrix}
    p_{0,n}   &   -p_{4,n}   &   -p_{5,n}  \\
    p_{0,n}'  &   -p_{4,n}'  &   -p_{5,n}' \\
    p_{0,n}'' &   -p_{4,n}'' &   -p_{5,n}''
\end{pmatrix} & \textrm{ in } III, \\
 \begin{pmatrix}
    p_{0,n}   &   -p_{3,n}   &   -p_{5,n}  \\
    p_{0,n}'  &   -p_{3,n}'  &   -p_{5,n}' \\
    p_{0,n}'' &   -p_{3,n}'' &   -p_{5,n}''
\end{pmatrix} & \textrm{ in } IV.
\end{cases}
\end{align}
\end{definition}

Then $P_n$ is indeed defined and analytic in
$\mathbb C  \setminus \left(\mathbb R \cup i \mathbb R\right)$
with the following jump properties on the real line (oriented
from left to right) and the imaginary axis (oriented from bottom to top).
As usual the orientation determines the $+$ and $-$ side of the curve,
where the $+$ side is on the left and the $-$ side on the right.

\begin{lemma} \label{lem:Pnjumps}
The matrix valued function $P_n$ satisfies the jump relations
\begin{align} \label{eq:jumpPnonR}
P_{n,+}(z) &= P_{n,-}(z) \begin{pmatrix}
                            1 & 0 & 0\\
                            0 & 1 & 0\\
                            0 & -1 & 1
                            \end{pmatrix}, &&   z\in  \mathbb R, \\
      \label{eq:jumpPnoniR}
P_{n,+}(z) & = P_{n,-}(z) \begin{pmatrix}
                            1 & -1 & 0\\
                            0 & 1 & 0\\
                            0 & 0 & 1
                            \end{pmatrix}, && z\in i \mathbb R.
\end{align}
\end{lemma}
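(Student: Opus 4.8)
The plan is to derive the two jump relations \eqref{eq:jumpPnonR}--\eqref{eq:jumpPnoniR} from four linear identities among the Pearcey integrals, which in turn come from the fact that the contours $\Gamma_0,\dots,\Gamma_5$ share common pieces along the coordinate axes. \emph{Step 1 (reduction to scalar identities).} In each quadrant the corresponding block of $P_n$ has the shape of a matrix whose columns are $(p_{0,n},p_{0,n}',p_{0,n}'')^{T}$, $(-p_{j,n},-p_{j,n}',-p_{j,n}'')^{T}$ and $(-p_{5,n},-p_{5,n}',-p_{5,n}'')^{T}$, where $j=2,1,4,3$ in quadrants $I,II,III,IV$ and the primes denote $z$-derivatives. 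Since $\partial_z p_{j,n}(z)=n\tau\int_{\Gamma_j}s\,e^{-n(W(s)-\tau zs)}\,ds$, differentiation in $z$ acts linearly on the contour; hence every linear relation among the $p_{j,n}$ as entire functions of $z$ is inherited by the $z$-derivatives, and it suffices to establish relations among the $p_{j,n}$ themselves.

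\emph{Step 2 (the contour identities).} For fixed $z$ the integrand $e^{-n(W(s)-\tau zs)}$ is entire in $s$, and since $\Re W(s)\to+\infty$ both along $\mathbb R$ and along $i\mathbb R$ (and more generally inside the four sectors $|\arg s-k\pi/2|<\pi/8$, where $\Re W(s)\sim\frac14|s|^4\cos(4\arg s)$) each Pearcey integral converges absolutely and the integrand decays super-exponentially along the relevant directions. Comparing the canonical contours, the pieces along $[0,\pm i\infty)$ and $[0,\pm\infty)$ cancel pairwise, giving the chain identities $\Gamma_1-\Gamma_2=\Gamma_0$, $\Gamma_4-\Gamma_3=\Gamma_0$, $\Gamma_4-\Gamma_1=\Gamma_5$, $\Gamma_3-\Gamma_2=\Gamma_5$ (with the orientations of Figure \ref{fig:PearceyC}), hence
\[ p_{0,n}=p_{1,n}-p_{2,n},\qquad p_{0,n}=p_{4,n}-p_{3,n},\qquad p_{5,n}=p_{4,n}-p_{1,n},\qquad p_{5,n}=p_{3,n}-p_{2,n}. \]
For homotopically deformed contours the same identities persist by Cauchy's theorem, using the super-exponential decay just noted.

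\emph{Step 3 (matching neighbouring blocks).} Finally I verify \eqref{eq:jumpPnonR}--\eqref{eq:jumpPnoniR} on each of the four half-axes by right-multiplying the $-$-side block by the stated jump matrix. On $\mathbb R^+$ the $+$-side is quadrant $I$ and the $-$-side is quadrant $IV$; right-multiplication by the matrix in \eqref{eq:jumpPnonR} replaces the middle column $(-p_{3,n},-p_{3,n}',-p_{3,n}'')^{T}$ of the $IV$-block by $(-p_{3,n}+p_{5,n},\dots)^{T}=(-p_{2,n},\dots)^{T}$ via $p_{5,n}=p_{3,n}-p_{2,n}$, and leaves the first and third columns unchanged, producing exactly the $I$-block. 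On $\mathbb R^-$ ($+$-side $II$, $-$-side $III$) the same manipulation uses $p_{5,n}=p_{4,n}-p_{1,n}$. On $i\mathbb R^+$ ($+$-side $II$, $-$-side $I$) right-multiplication by the matrix in \eqref{eq:jumpPnoniR} replaces the middle column $-p_{2,n}$ of the $I$-block by $-p_{0,n}-p_{2,n}=-p_{1,n}$ via $p_{0,n}=p_{1,n}-p_{2,n}$; on $i\mathbb R^-$ ($+$-side $III$, $-$-side $IV$) one uses $p_{0,n}=p_{4,n}-p_{3,n}$. In each case the $-$-side block times the jump matrix equals the $+$-side block, which is the assertion.

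The step demanding the most care is the bookkeeping: correctly identifying which quadrant lies on the $+$- and which on the $-$-side of each oriented half-axis, and fixing the orientations of the $\Gamma_j$ (so that Step 2 produces exactly the signs needed in Step 3). Once that is pinned down there is no analytic difficulty — the only input is absolute convergence of the integrals, which follows from the quartic growth of $\Re W$.
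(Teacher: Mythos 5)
Your proof is correct and follows the same route as the paper's (much terser) argument: the paper simply records the single contour identity $-p_{2,n}=-p_{3,n}+p_{5,n}$ and notes that the jump on $\mathbb R^+$ follows from the definitions of $P_n$ in quadrants $I$ and $IV$, leaving the other three half-axes "similar". Your four contour identities, the remark that differentiation in $z$ preserves linear relations among the $p_{j,n}$, and the column-by-column verification on each oriented half-axis are exactly the details the paper omits, and they all check out.
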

\begin{proof}
The jumps easily follow from the definitions. It follows for
example from the definition of the Pearcey integrals
and the orientation of the contours $\Gamma_j$, see Figure~\ref{fig:PearceyC}, that
\[ -p_{2,n} = - p_{3,n} + p_{5,n}. \]
This relation implies the jump relation \eqref{eq:jumpPnonR} for $z \in \mathbb R^+$
in view of the definition of $P_n$ in the first and fourth quadrant given in
\eqref{eq:defPn}.

The other jumps are proved in a similar way.
\end{proof}

\subsection{Large $z$ asymptotics}

The large $z$ behavior of $P_n(z)$ is obtained from a classical saddle
point analysis of the integrals \eqref{Pearcey-integrals} that defines
each of its entries.
Recall that the saddle point equation \eqref{eq:saddlepoint2} has
the three solutions $s_1(z)$, $s_2(z)$ and $s_3(z)$, see Section
\ref{subsec:saddlepoint}. The value
of $-W(s) + \tau zs$ at the saddle $s_j(z)$ is denoted by $\theta_j(z)$,
as in Section \ref{subsec:saddlevalues}.

\begin{lemma} \label{lem:Pnasymptotics0}
We have as $z \to \infty$,
\begin{multline} \label{eq:Pnasymptotics}
    P_n(z) = \sqrt{\frac{2\pi n}{3}} \tau
    \begin{pmatrix}  n^{-1} \tau^{-4/3} z^{-1/3} & 0 & 0 \\ 0 & 1 & 0 \\ 0 & 0 & n \tau^{4/3} z^{1/3} \end{pmatrix} \\
    \times
    \left(I + \frac{\alpha}{6} (\tau z)^{-2/3} \begin{pmatrix} 0 & 1 & 0 \\ 0 & 0 & -1 \\ -3 & 0  & 0 \end{pmatrix}
        + \OO(z^{-4/3})\right) \\
        \times
        \begin{cases}
        \begin{pmatrix} 1 & -\omega^2 & -\omega \\
            1 &  -1 & -1 \\
            1 & -\omega & -\omega^2 \end{pmatrix}
            { e}^{n \Theta(z)}, &  \text{$z$ in $I$}, \\
        \begin{pmatrix}  -\omega^2 & -1 & -\omega \\
            -1 &  -1 & -1 \\
             -\omega & -1 & -\omega^2 \end{pmatrix}
            { e}^{n \Theta(z)}, & \text{$z$ in $II$}, \\
        \begin{pmatrix}  -\omega & -1 & \omega^2 \\
            -1 &  -1 & 1 \\
             -\omega^2 & -1 & \omega \end{pmatrix}
            { e}^{n \Theta(z)}, & \text{$z$ in $III$}, \\
        \begin{pmatrix}  1 & -\omega & \omega^2 \\
            1 &  -1 & 1 \\
             1 & -\omega^2 & \omega \end{pmatrix}
            { e}^{n \Theta(z)}, & \text{$z$ in $IV$},
            \end{cases}
                \end{multline}
where $\Theta$ is defined as in \eqref{eq:defTheta}.
\end{lemma}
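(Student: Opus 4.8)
The plan is to obtain the large-$z$ asymptotics of $P_n(z)$ from a classical steepest descent (Laplace) analysis of each of the Pearcey integrals $p_{j,n}(z)=\int_{\Gamma_j}e^{-n(W(s)-\tau zs)}\,ds$ that make up the entries of $P_n$. The saddle point equation $W'(s)=\tau z$ is exactly \eqref{eq:saddlepoint2}, with solutions $s_1(z),s_2(z),s_3(z)$, and the critical values are $\theta_j(z)=-W(s_j(z))+\tau zs_j(z)$. So for each admissible contour $\Gamma_j$ one deforms it through the relevant saddle(s) along steepest descent paths, and a standard stationary-phase-for-Laplace estimate gives
\[
\int_{\Gamma_j}e^{-n(W(s)-\tau zs)}\,ds \;\sim\; \sqrt{\frac{2\pi}{n\,W''(s_k(z))}}\;e^{n\theta_k(z)}\bigl(1+\OO(1/n)\bigr),
\]
where $W''(s_k)=3s_k^2+\alpha$, and similarly $p_{j,n}'$ and $p_{j,n}''$ pick up extra factors $\tau s_k(z)$ and $(\tau s_k(z))^2$ respectively (differentiating under the integral sign brings down $\tau s$, and to leading order $s\approx s_k(z)$ at the saddle). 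The first task is therefore to determine, in each of the four quadrants, which saddle $s_k$ is the dominant one for each of $p_{0,n},p_{1,n},\dots,p_{5,n}$; this is dictated by the topology of the contours $\Gamma_j$ (Figure~\ref{fig:PearceyC}) and the geometry of the steepest descent paths, and it is precisely what produces the different matrices in the four cases of \eqref{eq:Pnasymptotics} — the columns of the constant matrix are indexed by which of $s_1,s_2,s_3$ (equivalently, which of the roots $(\tau z)^{1/3},\omega(\tau z)^{1/3},\omega^2(\tau z)^{1/3}$ to leading order, cf.\ Lemma~\ref{lem:sjasymptotics}) is being used.

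Next I would assemble the three rows. Writing the leading behavior of each column of $P_n$ as a vector proportional to $(1,\tau s_k,(\tau s_k)^2)\,e^{n\theta_k}$ times the Gaussian prefactor $\sqrt{2\pi/(n W''(s_k))}$, one uses $s_k(z)=\omega^{?}(\tau z)^{1/3}+\OO(z^{-1/3})$ and $W''(s_k(z))=3s_k^2+\alpha\sim 3(\tau z)^{2/3}$ to factor out the common scalar $\sqrt{2\pi n/3}\,\tau$ and the diagonal matrix $\diag\!\bigl(n^{-1}\tau^{-4/3}z^{-1/3},1,n\tau^{4/3}z^{1/3}\bigr)$, which carries the three different powers of $(\tau s_k)^m$ after the normalization $\tau s_k\sim \tau^{4/3}z^{1/3}$. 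What is left in each column is, to leading order, $(1,\omega^\bullet,\omega^{2\bullet})^{T}$ (up to an overall sign coming from orientation of $\Gamma_j$ and from $-p_{j,n}$ in the definition \eqref{eq:defPn}), which gives the constant matrices displayed in the four quadrants. The factor $e^{n\Theta(z)}$ with $\Theta=\diag(\theta_1,\theta_2,\theta_3)$ collects the exponentials $e^{n\theta_k}$.

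The subleading correction term, namely the matrix $I+\tfrac{\alpha}{6}(\tau z)^{-2/3}\bigl(\begin{smallmatrix}0&1&0\\0&0&-1\\-3&0&0\end{smallmatrix}\bigr)+\OO(z^{-4/3})$, requires carrying the saddle point expansion one order further. Here I would use the refined expansions of $s_j$ from Lemma~\ref{lem:sjasymptotics} (the $-\tfrac{\alpha}{3}(\tau z)^{-1/3}$ correction) together with the next term in the Laplace expansion (the contribution of $W'''(s_k)$ and $W''''(s_k)$ at the saddle). Concretely: $\tau s_k(z)=\omega^\bullet\tau^{4/3}z^{1/3}\bigl(1-\tfrac{\alpha}{3}\omega^{-\bullet}(\tau z)^{-2/3}+\cdots\bigr)$, and similarly $W''(s_k(z))^{-1/2}=(3(\tau z)^{2/3})^{-1/2}\bigl(1+\tfrac{\alpha}{6}(\text{something})(\tau z)^{-2/3}+\cdots\bigr)$; multiplying these out against the $(1,\tau s_k,(\tau s_k)^2)^T$ structure and re-extracting the pure diagonal scaling produces exactly the off-diagonal $\OO(z^{-2/3})$ matrix stated, after one checks the three entries $+1$, $-1$, $-3$ and their positions. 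The main obstacle I anticipate is purely bookkeeping-and-combinatorial rather than conceptual: correctly tracking, quadrant by quadrant, which saddle is selected by which $\Gamma_j$ (including the Stokes-line behavior near $\mathbb R\cup i\mathbb R$ that is consistent with the jump relations of Lemma~\ref{lem:Pnjumps}), and then getting all the signs and all the powers of $\omega$ to line up so that the four constant matrices and the single correction matrix come out exactly as written. Since the statement is given without proof in the paper (``obtained from a classical saddle point analysis''), I would present the quadrant-$I$ computation in full detail and indicate that the other three follow identically, using the symmetries $s_j(-z)=-s_j(z)$, $s_j(\bar z)=\overline{s_j(z)}$ of Lemma~\ref{lem:Res1} and the known relations among the $\Gamma_j$ (as in the proof of Lemma~\ref{lem:Pnjumps}) to transfer the result.
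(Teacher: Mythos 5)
Your proposal is correct in substance and, for the leading order, is exactly the paper's argument: deform each $\Gamma_j$ through the relevant saddle, apply the classical Laplace formula $\pm\sqrt{2\pi/(nW''(s_j(z)))}\,e^{n\theta_j(z)}$, identify the dominant saddle quadrant by quadrant, and factor out the scalar and the diagonal matrix. Where you diverge from the paper is in how the $\frac{\alpha}{6}(\tau z)^{-2/3}$ correction matrix is produced. You propose pushing the Laplace expansion to the next order; the paper instead substitutes $p=e^{nF}$ into the third-order Pearcey ODE \eqref{eq:PearceyODE}, solves the resulting Riccati-type equation for $f=F'$ as a formal series in $z^{-1/3}$, and thereby obtains the exact \emph{form} $p(z)=Cz^{-1/3}e^{n\theta_j(z)}\bigl(1+\frac{\alpha}{6}\omega^k(\tau z)^{-2/3}+\OO(z^{-4/3})\bigr)$ of every asymptotic solution; the saddle point analysis is then only needed to decide which formal solution (which $j$, $k$, sign and constant) each integral matches. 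The ODE route buys two things: it guarantees a priori that the next correction after the constant term is $\OO(z^{-2/3})$ with exactly the coefficient $\frac{\alpha}{6}\omega^k$, and it legitimizes differentiating the expansions termwise to get the second and third rows. Your direct route also works, but note that the $W'''$ and $W''''$ contributions you invoke only enter at order $z^{-4/3}$ (with an extra $1/n$); the entire $z^{-2/3}$ term comes from the refined expansions $W''(s_k)^{-1/2}=3^{-1/2}\omega^{2k}(\tau z)^{-1/3}\bigl(1+\frac{\alpha}{6}\omega^{k}(\tau z)^{-2/3}+\cdots\bigr)$ and $\tau s_k(z)$, which you do mention but leave as ``something''; making those two expansions explicit is the whole computation. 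One small slip: differentiation under the integral sign brings down $n\tau s$, not $\tau s$, which is where the factors $n^{\mp1}$ in the diagonal matrix come from; your final diagonal matrix is nevertheless the correct one.
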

\begin{proof}
The proof is a tedious  saddle point analysis for all integrals
that define $P_n$ in the respective quadrants, see also \cite{BleKu3}.
The ODE \eqref{eq:PearceyODE} can
be used to find the form of the asymptotic expansion. Indeed, putting
$p = { e}^{nF}$ in \eqref{eq:PearceyODE} we obtain the following
differential equation for $f = F'$,
\[ f^3 -  \tau^4 z + \tau^2 \alpha f +  \tfrac{3}{n} f f' + \tfrac{1}{n^2} f'' = 0. \]
This nonlinear ODE has solutions $f$ with expansions
\begin{multline*}
    f(z) =  \omega^k \tau^{4/3} z^{1/3} -  \frac{\alpha}{3} \omega^{2k} \tau^{2/3} z^{-1/3} -
    \frac{1}{3n} z^{-1} \\
    + \left(\frac{\alpha^3}{81} - \frac{\alpha}{9n}\right) \omega^k \tau^{-2/3} z^{-5/3} + \OO(z^{-7/3}),
    \qquad k = 0,1,2,
    \end{multline*}
as $z \to \infty$ in one of the quadrants. Here $\omega = { e}^{2\pi i/3}$, as before.
Thus after integration and by Proposition \ref{lem:thetajasymptotics},
we have for a certain $j = 1,2,3$,
\[ F(z) = \theta_{j}(z) - \frac{1}{3n} \log z + \frac{1}{n} \log C +
    \frac{\alpha}{6n} \omega^k (\tau z)^{-2/3} + \OO(z^{-4/3}) \]
where $C$ is a constant, and so there are solutions of the Pearcey ODE
\eqref{eq:PearceyODE} that behave like
\begin{equation} \label{eq:enFasympform}
    p(z) = C z^{-1/3} { e}^{n \theta_j(z)}
    \left(1 + \frac{\alpha}{6}  \omega^k (\tau z)^{-2/3} + \OO(z^{-4/3}) \right)
    \end{equation}
as $z \to \infty$ in one of the quadrants.

Each of the functions $p_{0,n}, \ldots, p_{5,n}$
that appears in the definition of $P_n$ in a certain quadrant
has an asymptotic expansion as in \eqref{eq:enFasympform}. We have
to associate with each such function a value of $j$ and the
corresponding value of $k$.
To do this, we have to perform a saddle point analysis on the
integral representation \eqref{Pearcey-integrals}.

The saddle point equation
\begin{equation} \label{eq:speq}
    W'(s) - \tau z = s^3 + \alpha s - \tau z = 0
    \end{equation}
for \eqref{Pearcey-integrals} has three solutions $s_j(z)$.
It turns out that in each quadrant and for each $j=1,2,3$, we have that $s_j(z)$
is the relevant saddle for the Pearcey integral that defines the
function in the $j$th column.
Thus the initial contour can be deformed into the steepest descent path through
$s_j(z)$, or into a union of steepest descent paths with $s_j(z)$ as
the determining saddle.

Then by a classical saddle point analysis, see e.g.~\cite{Mil}, we obtain
the following asymptotic behavior
\begin{align} \nonumber
   p(z) & = \pm \sqrt{\frac{2\pi}{n W''(s_j(z))}}
    { e}^{-n \left(W(s_j(z)) - \tau z s_j(z)\right)} \left( 1 + \OO(z^{-2/3})\right) \\
    \label{eq:p0nasympt}
    & = \pm \sqrt{\frac{2\pi}{n W''(s_j(z))}}
    { e}^{n \theta_j(z)} \left( 1 + \OO(z^{-2/3})\right)
    \end{align}
as $z \to \infty$. Since $W''(s) = 3s^2 + \alpha$, and $s_j(z) =
\OO(z^{1/3})$ by Lemma \ref{lem:sjasymptotics}, we find from
\eqref{eq:p0nasympt} the behavior
\[ p(z) = \pm \sqrt{\frac{2\pi}{3n}} s_j(z)^{-1}  { e}^{n \theta_j(z)} \left( 1 + \OO(z^{-2/3}) \right) \]
which then by \eqref{eq:enFasympform} takes the form
\[ p(z) = \pm \sqrt{\frac{2\pi}{3n}} \omega^{2k} (\tau z)^{-1/3} { e}^{n \theta_j(z)}
    \left( 1 + \frac{\alpha}{6}  \omega^k (\tau z)^{-2/3} + \OO(z^{-4/3}) \right) \]
for a certain value of $k$, depending on $j$ and depending on the particular quadrant.
In the first quadrant, for example, we have $k=j-1$ as follows from Lemma \ref{lem:sjasymptotics}.

We may differentiate the expansions and we obtain for the derivative
\begin{align*}
    p'(z) & = \pm \sqrt{\frac{2\pi}{3n}} \omega^{2k} (\tau z)^{-1/3} n \tau s_j(z) {e}^{n \theta_j(z)}
    \left( 1 + \frac{\alpha}{6}  \omega^k (\tau z)^{-2/3} + \OO(z^{-4/3}) \right) \\
    & = \pm \sqrt{\frac{2\pi}{3n}} n \tau  {e}^{n \theta_j(z)}
        \left( 1 - \frac{\alpha}{6} \omega^k (\tau z)^{-2/3} + \OO(z^{-4/3}) \right)
        \end{align*}
and for the second derivative
\[ p''(z) = \pm \sqrt{\frac{2\pi}{3n}} (n \tau)^2 \omega^k (\tau z)^{1/3} {e}^{n \theta_j(z)}
        \left( 1 - \frac{\alpha}{2} \omega^k (\tau z)^{-2/3} + \OO(z^{-4/3}) \right). \]

In each quadrant we have to take the correct sign (determined by the
orientation of the contours) and the correct value of $k$.
The formulas can then be written in the form \eqref{eq:Pnasymptotics}.
\end{proof}

Let us define the constant matrices $A_j$ as follows.
\begin{definition}
We define
\begin{align} \label{eq:A1A2}
    A_1=\frac{i}{\sqrt{3}}\begin{pmatrix}
    1& -\omega & -\omega^2 \\
    1& -1 & -1\\
    1& -\omega^2 & -\omega
\end{pmatrix}, & \qquad
    A_2=\frac{i}{\sqrt{3}}\begin{pmatrix}
    -\omega& -1 & -\omega^2\\
    -1& -1 & -1\\
    -\omega^2& -1 & -\omega
\end{pmatrix},\\ \label{eq:A3A4}
    A_3=\frac{i}{\sqrt{3}}\begin{pmatrix}
    -\omega^2& -1 & \omega\\
    -1& -1 & 1\\
    -\omega& -1& \omega^2
\end{pmatrix}, & \qquad
    A_4=\frac{i}{\sqrt{3}}\begin{pmatrix}
    1& -\omega^2 & \omega\\
    1& -1 & 1\\
    1& -\omega & \omega^2
\end{pmatrix}.
\end{align}

\end{definition}

The prefactor $\frac{i}{\sqrt{3}}$ is chosen such that $\det A_j = 1$ for $j=1,2,3,4$.
Then we can reformulate Lemma \ref{lem:Pnasymptotics0} as follows.

\begin{corollary} \label{cor:Pnasymptotics}
We have as $z \to \infty$  in the $j$th quadrant,
\begin{equation} \label{eq:Pnasymptotics3}
    P_n(z) = P_{n,0} \left(I + \OO(z^{-2/3}) \right)
    \diag \begin{pmatrix} z^{-1/3} & 1 & z^{1/3} \end{pmatrix} A_j^{-t} {e}^{n \Theta(z)}
    \end{equation}
where $P_{n,0}$ is the invertible matrix
\begin{equation} \label{eq:defPn0}
    P_{n, 0} = \sqrt{2\pi n} \tau i \begin{pmatrix}
    n^{-1} \tau^{-4/3} & 0 & 0 \\ 0 & 1 & 0 \\ - \frac{\alpha}{2} n \tau^{2/3} & 0 & n \tau^{4/3} \end{pmatrix}.
    \end{equation}
\end{corollary}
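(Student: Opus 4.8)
The statement is a reorganization of the explicit expansion \eqref{eq:Pnasymptotics} of Lemma \ref{lem:Pnasymptotics0}, so the proof is a matrix bookkeeping argument: one factors out the $z$-dependence, conjugates the sub-leading correction through the diagonal part, and absorbs the resulting constants into $P_{n,0}$ and the $A_j$. Concretely, write $D(z) = \diag(z^{-1/3},1,z^{1/3})$ and split the diagonal prefactor in \eqref{eq:Pnasymptotics} as $C_0 D(z)$ with $C_0 = \sqrt{2\pi n/3}\,\tau\,\diag(n^{-1}\tau^{-4/3},1,n\tau^{4/3})$ constant. Let $\widehat A_j$ denote the constant matrix sitting next to $e^{n\Theta}$ in the $j$-th quadrant in \eqref{eq:Pnasymptotics}, and write the middle factor as $I + \tfrac{\alpha}{6}(\tau z)^{-2/3}N + \OO(z^{-4/3})$ with $N = \begin{pmatrix} 0&1&0\\0&0&-1\\-3&0&0 \end{pmatrix}$, so that $P_n(z) = C_0 D(z)\bigl(I + \tfrac{\alpha}{6}(\tau z)^{-2/3}N + \OO(z^{-4/3})\bigr)\widehat A_j\, e^{n\Theta(z)}$.

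The plan is first to compute $D(z)ND(z)^{-1}$ entrywise: the $(1,2)$ and $(2,3)$ entries acquire a factor $z^{-1/3}$, while the $(3,1)$ entry acquires $z^{2/3}$. After multiplication by $(\tau z)^{-2/3}$ the first two become $\OO(z^{-1})$, but the $(3,1)$ entry becomes the constant $-\tfrac{\alpha}{2}\tau^{-2/3}$; likewise $D(z)\,\OO(z^{-4/3})\,D(z)^{-1}$ is still $\OO(z^{-2/3})$, the worst case being the $(3,1)$ entry, $z^{2/3}\cdot\OO(z^{-4/3}) = \OO(z^{-2/3})$. Hence
\[ D(z)\bigl(I + \tfrac{\alpha}{6}(\tau z)^{-2/3}N + \OO(z^{-4/3})\bigr)D(z)^{-1} = E_0 + \OO(z^{-2/3}), \]
where $E_0$ is the identity except for the entry $-\tfrac{\alpha}{2}\tau^{-2/3}$ in position $(3,1)$. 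Since $E_0$ is (unit lower-triangular, hence) invertible, $C_0\bigl(E_0 + \OO(z^{-2/3})\bigr) = C_0 E_0\bigl(I + \OO(z^{-2/3})\bigr)$, and a direct multiplication gives $C_0 E_0 = \tfrac{-i}{\sqrt 3}\,P_{n,0}$ with $P_{n,0}$ as in \eqref{eq:defPn0} — this is precisely why the off-diagonal entry $-\tfrac{\alpha}{2}n\tau^{2/3}$ appears in $P_{n,0}$, and the discrepancy between $\sqrt{2\pi n/3}$ and $\sqrt{2\pi n}$ is exactly the factor $i/\sqrt3$ built into the $A_j$. At this stage one has $P_n(z) = \tfrac{-i}{\sqrt3}\,P_{n,0}\bigl(I + \OO(z^{-2/3})\bigr)D(z)\,\widehat A_j\, e^{n\Theta(z)}$.

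It then remains to verify the purely algebraic identity $\widehat A_j = i\sqrt 3\, A_j^{-t}$ for $j = 1,2,3,4$, for then the scalars combine as $\tfrac{-i}{\sqrt3}\cdot i\sqrt3 = 1$ and \eqref{eq:Pnasymptotics3} follows. Writing $A_j = \tfrac{i}{\sqrt3}B_j$ (so the prefactor makes $\det A_j = 1$), the identity is equivalent to $B_j^{t}\widehat A_j = 3I$; for $j=1$ this is immediate from $\omega^3 = 1$ and the relation $1 + \omega^m + \omega^{2m}$ being $3$ when $3\mid m$ and $0$ otherwise, and the cases $j=2,3,4$ are the same computation with rows and columns permuted and the sign pattern of $\widehat A_j$ accounted for. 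The one place needing care is the conjugation step: one must track exactly which entry of the subleading correction fails to decay under conjugation by $D(z)$ and check that it reproduces the non-diagonal structure of $P_{n,0}$; after that, the verification $B_j^{t}\widehat A_j = 3I$ (with the transpose-inverse in the right position) is the only remaining, and entirely routine, computation.
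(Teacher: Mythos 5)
Your proof is correct and is exactly the derivation the paper intends: the corollary is stated without proof as a reshuffling of Lemma \ref{lem:Pnasymptotics0}, namely conjugating the subleading correction through $\diag(z^{-1/3},1,z^{1/3})$, absorbing the resulting constant $(3,1)$ entry $-\tfrac{\alpha}{2}\tau^{-2/3}$ and the scalar $\tfrac{-i}{\sqrt 3}$ into $P_{n,0}$, and checking $B_j^t\widehat A_j=3I$. I verified the conjugation bookkeeping (including that the $\OO(z^{-4/3})$ error degrades to $\OO(z^{-2/3})$ only in the $(3,1)$ slot, which explains the paper's remark about the error term) and the identity $\widehat A_j = i\sqrt3\,A_j^{-t}$; all steps check out.
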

The asymptotic formula \eqref{eq:Pnasymptotics3} will be
the most convenient to work with in what follows.
Note that we still have an error term $I + \OO(z^{-2/3})$ which is somewhat
remarkable.

\subsection{First transformation: $Y \mapsto X$}

With the $3 \times 3$ matrix-valued $P_n$ we can now
perform the first transformation of the Riemann-Hilbert problem.
Recall that $Y$ is the solution of the RH problem \eqref{eq:RHforY}.

\begin{definition}
We define $X$ by
\begin{equation}\label{eq:YtoX}
    X(z)= \begin{pmatrix} 1 & 0 \\ 0 & C_n \end{pmatrix}
    Y(z)\begin{pmatrix} 1&0\\
    0& D_n P_n^{-t}(z) {e}^{n \Theta(z)}
    \end{pmatrix},
\end{equation}
for $z$ in the $j$th quadrant.
Here $C_n$ and $D_n$ are the constant matrices
\begin{equation} \label{eq:CnDndef}
    C_n = P_{n,0}^t D_n^{-1}, \qquad
    D_n = \diag \begin{pmatrix} 1 & n \tau & (n \tau)^2 \end{pmatrix},
    \end{equation}
with $P_{n,0}$ given by \eqref{eq:defPn0}, and
$P_n$ is given by \eqref{eq:defPn},
and $\Theta$ is given by \eqref{eq:defTheta}.
\end{definition}

The matrices in the right-hand side of \eqref{eq:YtoX}
are $4\times 4$ matrices written in block form, where the right lower
block has size $3 \times 3$. The transformation \eqref{eq:YtoX}
does not affect the $(1,1)$ entry. The factor ${e}^{n \Theta}(z)$ is included in the
definition of $X(z)$ in order to simplify the asymptotic
behavior of $X$. However, it will complicate the jump matrices,
as we will see.

The matrix valued function $X$ is defined and analytic in each
of the four quadrants.

The asymptotic behavior for $X$ is as follows.
\begin{lemma}
We have
\begin{equation} \label{eq:Xasymptotics}
 X(z)=(I+\OO(z^{-2/3})) \begin{pmatrix}
    z^n&0 &0&0\\
    0&  z^{-\frac{n}{3} + \frac{1}{3}} & 0 & 0\\
    0&  0 & z^{-\frac{n}{3}} & 0\\
    0&  0 & 0 & z^{-\frac{n}{3} -\frac{1}{3}}
                \end{pmatrix}
                \begin{pmatrix}1 &  0 \\
                0& A_j\end{pmatrix}
\end{equation}
as $z \to \infty$ in the $j$th quadrant with matrices
 where the matrices $A_j$ are given by \eqref{eq:A1A2} and \eqref{eq:A3A4}.
\end{lemma}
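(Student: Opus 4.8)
The statement to prove is the asymptotic formula \eqref{eq:Xasymptotics} for $X$ as $z\to\infty$ in the $j$th quadrant. The plan is to simply substitute the known asymptotics of the ingredients of \eqref{eq:YtoX} and multiply everything out, checking that the exponential factors $e^{n\Theta}$ cancel and that the algebraic prefactors combine to the stated diagonal matrix times $\bigl(\begin{smallmatrix}1&0\\0&A_j\end{smallmatrix}\bigr)$. First I would recall from the RH problem \eqref{eq:RHforY} that $Y(z)=(I+\OO(1/z))\,\diag(z^n,z^{-n/3},z^{-n/3},z^{-n/3})$ as $z\to\infty$; writing this in block form, the upper-left $1\times1$ block is $(1+\OO(1/z))z^n$ and the lower-right $3\times3$ block is $(I+\OO(1/z))z^{-n/3}I_3$ (with some off-diagonal $\OO(1/z)$ coupling between the blocks that will be harmless).

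Next I would invoke Corollary \ref{cor:Pnasymptotics}, which gives
\[
    P_n(z) = P_{n,0}\bigl(I+\OO(z^{-2/3})\bigr)\,\diag(z^{-1/3},1,z^{1/3})\,A_j^{-t}\,e^{n\Theta(z)}.
\]
Taking the inverse transpose, $P_n^{-t}(z) = e^{-n\Theta(z)}A_j\,\diag(z^{1/3},1,z^{-1/3})\bigl(I+\OO(z^{-2/3})\bigr)P_{n,0}^{-t}$, using that $\Theta$ is diagonal (so $e^{n\Theta}$ is its own transpose) and that $A_j^{-t}$ has inverse transpose $A_j$. Substituting into the lower-right block of \eqref{eq:YtoX}, the factor $D_nP_n^{-t}(z)e^{n\Theta(z)}$ becomes
\[
    D_n\,e^{-n\Theta(z)}A_j\,\diag(z^{1/3},1,z^{-1/3})\bigl(I+\OO(z^{-2/3})\bigr)P_{n,0}^{-t}\,e^{n\Theta(z)}
    = D_n A_j\,\diag(z^{1/3},1,z^{-1/3})\bigl(I+\OO(z^{-2/3})\bigr)P_{n,0}^{-t},
\]
where the two scalar-matrix exponentials $e^{\pm n\Theta(z)}$ cancel because $\Theta$ is diagonal and commutes with everything written between them up to reordering — more carefully, one commutes $e^{-n\Theta}$ past $A_j$ and the diagonal factors at the cost of conjugating the $\OO(z^{-2/3})$ error, which stays $\OO(z^{-2/3})$. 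Then $C_n = P_{n,0}^tD_n^{-1}$ from \eqref{eq:CnDndef} cancels the stray $P_{n,0}^{-t}$ and $D_n^{-1}$ factors when one forms $\bigl(\begin{smallmatrix}1&0\\0&C_n\end{smallmatrix}\bigr)Y\bigl(\begin{smallmatrix}1&0\\0&D_nP_n^{-t}e^{n\Theta}\end{smallmatrix}\bigr)$: up to the $(I+\OO(1/z))$ from $Y$ and an $\OO(z^{-2/3})$ absorbed into the leading error, the lower-right block of $X(z)$ reduces to $z^{-n/3}\,C_nD_nA_j\,\diag(z^{1/3},1,z^{-1/3})$, and $C_nD_n = P_{n,0}^t$ while the conjugation of $P_{n,0}^t$ by the diagonal matrix absorbs into the prefactor, yielding $z^{-n/3}\,\diag(z^{1/3},1,z^{-1/3})A_j$ after pulling the diagonal factor to the left of $A_j$ (again at the cost of a harmless $\OO(z^{-2/3})$ reshuffle of the error). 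Reading off the diagonal entries $z^{-n/3+1/3},\,z^{-n/3},\,z^{-n/3-1/3}$ together with the upper-left entry $z^n$ gives exactly \eqref{eq:Xasymptotics}.

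The main obstacle, and the only place where care is genuinely needed, is bookkeeping of the error terms: one must verify that pushing the scalar exponentials $e^{\pm n\Theta}$ and the unbounded diagonal factors $\diag(z^{1/3},1,z^{-1/3})$ past the $\bigl(I+\OO(z^{-2/3})\bigr)$ factor produces again an $\bigl(I+\OO(z^{-2/3})\bigr)$ factor and not something worse — this works because $e^{n\Theta}$ is diagonal (hence commutes with diagonal matrices and conjugates the off-diagonal entries of the error matrix by bounded-modulus factors along the relevant rays, by the asymptotics of $\theta_j$ in Lemma \ref{lem:thetajasymptotics}), and because conjugating $\OO(z^{-2/3})$ by $\diag(z^{1/3},1,z^{-1/3})$ at worst multiplies individual entries by $z^{\pm2/3}$, which is still $\OO(1)$ relative to the leading $1$ only in the combinations that actually occur — here one uses that the matrices $A_j$ from \eqref{eq:A1A2}–\eqref{eq:A3A4} are exactly those appearing in Corollary \ref{cor:Pnasymptotics}, so the structure is matched term by term and the dangerous $z^{2/3}$ amplifications never land on the wrong entries. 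The $\OO(1/z)$ error from $Y$ is subdominant to $\OO(z^{-2/3})$, so the combined error is $\OO(z^{-2/3})$ as claimed, and this completes the proof.
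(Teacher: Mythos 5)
Your overall strategy is exactly the paper's: substitute Corollary \ref{cor:Pnasymptotics} into the definition \eqref{eq:YtoX}, check that $C_nD_nP_n^{-t}(z)e^{n\Theta(z)}=(I+\OO(z^{-2/3}))\diag(z^{1/3},1,z^{-1/3})A_j$, and combine with the asymptotics of $Y$. However, there is a concrete error in your computation of $P_n^{-t}$. Since $(ABC)^{-t}=A^{-t}B^{-t}C^{-t}$ (the inverse reverses the order of the factors and the transpose reverses it back), Corollary \ref{cor:Pnasymptotics} gives
\begin{equation*}
P_n^{-t}(z)=P_{n,0}^{-t}\,\bigl(I+\OO(z^{-2/3})\bigr)\,\diag\begin{pmatrix} z^{1/3} & 1 & z^{-1/3}\end{pmatrix}\,A_j\,e^{-n\Theta(z)},
\end{equation*}
i.e.\ the order of the factors is \emph{preserved}, not reversed as you wrote. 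With the correct order, $e^{-n\Theta(z)}$ sits at the right end and cancels immediately against the adjacent $e^{n\Theta(z)}$ in $D_nP_n^{-t}e^{n\Theta}$, and then $C_nD_n P_n^{-t}e^{n\Theta}=P_{n,0}^tD_n^{-1}\cdot D_nP_{n,0}^{-t}\cdot(I+\OO(z^{-2/3}))\diag(z^{1/3},1,z^{-1/3})A_j$ collapses to the desired expression with no further manipulation.

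Because you reversed the order, you were forced to ``cancel'' two exponentials separated by $A_j$ and the diagonal factors, and you justify this by saying $e^{\pm n\Theta}$ ``commutes with everything written between them up to reordering'' and conjugates the error ``by bounded-modulus factors.'' That step is false: conjugating a non-diagonal matrix by $e^{n\Theta}$ multiplies its $(k,l)$ entry by $e^{n(\theta_k-\theta_l)}$, and by Lemma \ref{lem:thetajasymptotics} the differences $\theta_k-\theta_l$ have real parts growing like $|z|^{4/3}$ along generic rays in each quadrant, so these factors are exponentially large, not bounded. (This is precisely why the asymptotics of $X$ fail to be uniform up to the axes, as the paper remarks.) The subsequent claim that ``the conjugation of $P_{n,0}^t$ by the diagonal matrix absorbs into the prefactor'' is likewise unnecessary and does not hold as stated. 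The conclusion of the lemma is correct and your plan is the right one, but the proof as written does not go through; redoing the inverse transpose with the factors in the correct order removes every problematic commutation and reduces the argument to the two-line computation in the paper.
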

As in \cite{DuiKu2} we have that the asymptotic formula \eqref{eq:Xasymptotics}
for $X$ is not uniform up to the axis.


\begin{proof}
By Corollary \ref{cor:Pnasymptotics} and
the definitions \eqref{eq:CnDndef} of $C_n$ and $D_n$ we have
\[ C_n D_n P_n^{-t}(z) {e}^{n \Theta(z)} =
        \left( I + \OO(z^{-2/3})\right) \begin{pmatrix} z^{1/3} & 0 & 0 \\
            0 & 1 & 0 \\
            0 & 0 & z^{-1/3} \end{pmatrix}
             A_j
            \] as $z \to \infty$ in the $j$th quadrant.
            Using this in \eqref{eq:YtoX}, together with
the asymptotics of $Y$ as given in \eqref{eq:RHforY},
we obtain the lemma.
\end{proof}

The jumps for $X$ on the real and imaginary axis are as follows.
\begin{lemma}
We have
\[ X_+(z)= X_-(z) J_{X}(z), \quad z\in \mathbb R\cup i\mathbb R, \]
where the jump matrices $J_{X}$ are given as follows.
\begin{itemize}
\item  On the real line we have
for $x \in (-\infty, -x^*(\alpha)] \cup [x^*(\alpha), \infty)$
\begin{align} \label{eq:jumpXonR1}
J_X(x)&= \begin{pmatrix}
    1& {e}^{-n (V(x)-\theta_{1}(x))} & 0 & 0\\
    0& 1 & 0 & 0 \\
    0& 0 & {e}^{n (\theta_{2,+}(x) - \theta_{3,+}(x))} & 1 \\
    0& 0 & 0 & {e}^{n (\theta_{3,+}(x) - \theta_{2,+}(x))}
    \end{pmatrix},
\end{align}
and for $x \in (-x^*(\alpha), x^*(\alpha))$ (only relevant in case $\alpha < 0$),
\begin{align} \label{eq:jumpXonR2}
J_X(x)&= \begin{pmatrix}
    1& {e}^{-n (V(x)-\theta_1(x))} & 0 & 0\\
    0& 1 & 0 & 0 \\
    0& 0 & 1 & {e}^{-n (\theta_{2}(x) - \theta_{3}(x))} \\
    0& 0 & 0 & 1
    \end{pmatrix}.
\end{align}
\item On the imaginary axis, we have for
$z \in (-i \infty, -i y^*(\alpha)] \cup [iy^*(\alpha), i \infty)$,
\begin{align}\label{eq:jumpXoniR1}
J_{X}(z)&= \begin{pmatrix}
    1&0 & 0 & 0\\
    0 & {e}^{n (\theta_{1,+}(z) - \theta_{2,+}(z))} & 0 & 0\\
    0&  1 & {e}^{n (\theta_{2,+}(z) - \theta_{1,+}(z))}  & 0\\
    0&  0 & 0 & 1
    \end{pmatrix},
\end{align}
and for $z \in (-i y^*(\alpha), i y^*(\alpha))$ (only relevant in case $\alpha > 0$),
\begin{align}\label{eq:jumpXoniR2}
J_{X}(z)&= \begin{pmatrix}
    1&0 & 0 & 0\\
    0 & 1 & 0 & 0\\
    0&  {e}^{-n(\theta_{2}(z) - \theta_{1}(z))} & 1  & 0 \\
    0&  0 & 0 & 1
    \end{pmatrix}.
\end{align}
\end{itemize}
\end{lemma}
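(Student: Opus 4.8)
The plan is to obtain $J_X = X_-^{-1}X_+$ directly from the definition \eqref{eq:YtoX}. Abbreviate $M_n(z) = D_n P_n^{-t}(z)\,{e}^{n\Theta(z)}$, so that \eqref{eq:YtoX} reads $X = \bigl(\begin{smallmatrix} 1 & 0 \\ 0 & C_n\end{smallmatrix}\bigr)\,Y\,\bigl(\begin{smallmatrix} 1 & 0 \\ 0 & M_n\end{smallmatrix}\bigr)$ with both outer factors block-diagonal and the left factor a global constant. Using $Y_+ = Y_- J_Y$ with $J_Y$ the jump in \eqref{eq:RHforY} on $\mathbb R$ (and $J_Y = I$ on $i\mathbb R$, since $Y$ is analytic there), the factors $\bigl(\begin{smallmatrix} 1 & 0 \\ 0 & C_n\end{smallmatrix}\bigr)$ and $Y_-$ cancel and one is left with
\[ J_X = \begin{pmatrix} 1 & 0 \\ 0 & M_{n,-}^{-1}\end{pmatrix} J_Y \begin{pmatrix} 1 & 0 \\ 0 & M_{n,+}\end{pmatrix}. \]
Multiplying out in block form, with the row vector $w_n = (w_{0,n},w_{1,n},w_{2,n})$, this becomes
\[ J_X = \begin{pmatrix} 1 & w_n D_n P_{n,+}^{-t}{e}^{n\Theta_+} \\ 0 & {e}^{-n\Theta_-}\bigl(P_{n,+}^{-1}P_{n,-}\bigr)^{t}{e}^{n\Theta_+}\end{pmatrix} \qquad\text{on }\mathbb R, \]
and the same with the top-right block replaced by $0$ on $i\mathbb R$. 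Thus everything reduces to evaluating the two blocks.

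For the top-right block, I would insert the representation \eqref{eq:weightsinp0} of the weights. It gives $w_n D_n = {e}^{-nV}\,(p_{0,n},p'_{0,n},p''_{0,n})$, which is exactly ${e}^{-nV}$ times the transpose of the first column of $P_n$; hence $w_n D_n P_{n,+}^{-t} = {e}^{-nV}\,e_1^{t}$ with $e_1 = (1,0,0)^t$, and therefore $w_n D_n P_{n,+}^{-t}{e}^{n\Theta_+} = \bigl({e}^{-n(V-\theta_{1,+})},0,0\bigr)$. Since $\theta_1$ is analytic on $\mathbb R\setminus\{0\}$ this equals $\bigl({e}^{-n(V-\theta_1)},0,0\bigr)$, the first row of both \eqref{eq:jumpXonR1} and \eqref{eq:jumpXonR2}; on $i\mathbb R$ the first row is trivially $(1,0,0,0)$, matching \eqref{eq:jumpXoniR1}--\eqref{eq:jumpXoniR2}.

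For the lower $3\times3$ block I would use Lemma \ref{lem:Pnjumps}: $P_{n,+}^{-1}P_{n,-}$ is the inverse of the constant jump matrix of $P_n$, so $\bigl(P_{n,+}^{-1}P_{n,-}\bigr)^{t}$ is the unipotent matrix with a single off-diagonal $1$ in the $(2,3)$-entry on $\mathbb R$ and in the $(2,1)$-entry on $i\mathbb R$. It then remains to conjugate these unipotent matrices by ${e}^{\pm n\Theta}$ using the jump relations for $\Theta$ from Corollary \ref{cor:jumpTheta} (that is, \eqref{eq:jumpThetaonR} and \eqref{eq:jumpThetaoniR}), treating four cases separately. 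On $\mathbb R$ with $|x|>x^*(\alpha)$ and on $i\mathbb R$ with $|y|>y^*(\alpha)$ the matrices $\Theta_\pm$ differ by the transposition of $\theta_2,\theta_3$ (resp.\ of $\theta_1,\theta_2$), so the off-diagonal $1$ survives and the two diagonal $1$'s in the transposed rows and columns turn into ${e}^{\pm n(\theta_{2,+}-\theta_{3,+})}$ (resp.\ ${e}^{\pm n(\theta_{1,+}-\theta_{2,+})}$); on the complementary intervals $\Theta_+=\Theta_-$ and the conjugation simply replaces the off-diagonal $1$ by ${e}^{-n(\theta_2-\theta_3)}$ (resp.\ ${e}^{-n(\theta_2-\theta_1)}$). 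Reading off each of the four cases reproduces exactly \eqref{eq:jumpXonR1}, \eqref{eq:jumpXonR2}, \eqref{eq:jumpXoniR1} and \eqref{eq:jumpXoniR2}.

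There is no genuine obstacle here: the whole argument is block matrix algebra together with the already established jumps of $P_n$ and $\Theta$. The only point requiring real care is the bookkeeping for $\Theta$: its jump is not a pure conjugation but an interchange of two of the functions $\theta_j$, so one must consistently rewrite the outcome in terms of the boundary values $\theta_{j,+}$ in order to match the stated formulas. The identities $V_1 = V-\theta_1$ and $V_3 = \theta_2-\theta_3$ from \eqref{eq:V1bis}--\eqref{eq:V3bis} are what make the final form of the jump matrices transparent, although they are not strictly needed for the computation itself.
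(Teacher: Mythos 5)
Your proposal is correct and follows essentially the same route as the paper: conjugate the jump of $Y$ by the block-diagonal factors, evaluate the top-right block via \eqref{eq:weightsinp0} and the fact that $(p_{0,n},p'_{0,n},p''_{0,n})^t$ is the first column of $P_n$, and reduce the lower $3\times3$ block to the transposed unipotent jump of $P_n$ conjugated by ${e}^{\pm n\Theta}$ using Corollary \ref{cor:jumpTheta}. All the block computations check out, including the case split at $\pm x^*(\alpha)$ and $\pm iy^*(\alpha)$.
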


\begin{proof}
The jump matrix for $x \in \mathbb R$ is by \eqref{eq:YtoX} and the
jump condition in \eqref{eq:RHforY}
\begin{align} \nonumber
    J_X(x) & = \begin{pmatrix} 1 & 0 \\ 0 & {e}^{-n \Theta_-(x)} P_{n,-}^t(x) D_n^{-1} \end{pmatrix}
    \begin{pmatrix} 1 & w_{0,n}(x) & w_{1,n}(x) & w_{2,n}(x) \\
        0 & 1 & 0 & 0 \\
        0 & 0 & 1 & 0 \\
        0 & 0 & 0 & 1 \end{pmatrix} \\ \nonumber
        & \hspace*{6cm} \times
        \begin{pmatrix} 1 & 0 \\ 0 & D_n  P_{n,+}^{-t}(x) {e}^{n \Theta_+(x)} \end{pmatrix}
        \\
        & = \begin{pmatrix}
        1 & \begin{pmatrix} w_{0,n}(x) & w_{1,n}(x) & w_{2,n}(x) \end{pmatrix}
            D_n  P_{n,+}^{-t}(x) {e}^{n \Theta_+(x)} \\[5pt]
        0 & {e}^{-n \Theta_-(x)} P_{n,-}^t(x) P_{n,+}^{-t}(x) {e}^{n \Theta_+(x)} \end{pmatrix}
    \label{eq:JXeval0}
\end{align}

The row vector in the right upper corner of \eqref{eq:JXeval0} is by
\eqref{eq:weightsinp0} and \eqref{eq:CnDndef}
\begin{multline} \label{eq:JXeval1}
    \begin{pmatrix} w_{0,n}(x) & w_{1,n}(x) & w_{2,n}(x) \end{pmatrix} D_n  P_{n,+}^{-t}(x) {e}^{n \Theta_+(x)} \\
     = {e}^{-n V(x)} \begin{pmatrix} p_{0,n}(x) & p_{0,n}'(x) & p_{0,n}''(x) \end{pmatrix}
        P_{n,+}^{-t}(x) {e}^{n \Theta_+(x)}
        \end{multline}
Since $\begin{pmatrix} p_{0,n}(x) & p_{0,n}'(x) & p_{0,n}''(x) \end{pmatrix}^t$ is the
first column of $P_n(x)$, see \eqref{eq:defPn},
it follows that \eqref{eq:JXeval1} is equal to
${e}^{-n V(x)} \begin{pmatrix} 1 & 0 & 0  \end{pmatrix} {e}^{n \Theta_+(x)} $
which by \eqref{eq:defTheta} leads to
\begin{multline} \label{eq:JXeval2}
    \begin{pmatrix} w_{0,n}(x) & w_{1,n}(x) & w_{2,n}(x) \end{pmatrix} D_n  P_{n,+}^{-t}(x) {e}^{n \Theta_+(x)} \\
     =     \begin{pmatrix} {e}^{-n (V(x) -\theta_1(x))} & 0 & 0 \end{pmatrix}.
    \end{multline}
This leads to the first row of the jump matrices \eqref{eq:jumpXonR1}--\eqref{eq:jumpXonR2}.

To evaluate the $3 \times 3$ block in the right lower corner of \eqref{eq:JXeval0}
we note that  we have by \eqref{eq:jumpPnonR}
\[ P_{n,-}^t(x) P_{n,+}^{-t}(x)  =
    \left(P_{n,-}^{-1}(x) P_{n,+}(x) \right)^{-t} = \begin{pmatrix} 1 & 0 & 0 \\ 0 & 1 & 1 \\ 0 & 0 & 1 \end{pmatrix}
    \]
so that
\begin{multline} \label{eq:JXeval3}
    {e}^{-n \Theta_-(x)} P_{n,-}^t(x) P_{n,+}^{-t}(x)  {e}^{n \Theta_+(x)}
    = {e}^{-n \Theta_-(x)} \begin{pmatrix} 1 & 0 & 0 \\ 0 & 1 & 1 \\ 0 & 0 & 1 \end{pmatrix}
    {e}^{n \Theta_+(x)}.
    \end{multline}
Then using \eqref{eq:jumpThetaonR} to write ${e}^{-n \Theta_-(x)}$
in terms of ${e}^{-n \Theta_+(x)}$, and then using the  explicit expressions
\eqref{eq:defTheta} for $\Theta$, we see that \eqref{eq:JXeval3} indeed reduces to the $3 \times 3$ right
lower block in \eqref{eq:jumpXonR1}--\eqref{eq:jumpXonR2}.

The proof of the other expressions for $J_X$
follows in a similar way.
\end{proof}

\subsection{RH problem for $X$}

To summarize, we have found the following RH problem
for $X$

\begin{equation}\label{eq:RHforX}
\left\{\begin{array}{l}
    X  \textrm{ is analytic in } \mathbb C\setminus (\mathbb R \cup i \mathbb R), \\[5pt]
    X_+ = X_- J_X, \qquad \text{on } \mathbb R \cup i \mathbb R, \\
    X(z)= (I+\OO(z^{-\frac{2}{3}}))
    \diag \begin{pmatrix} z^n & z^{-\frac{n-1}{3}}& z^{-\frac{n}{3}}& z^{-\frac{n+1}{3}} \end{pmatrix}
    \begin{pmatrix} 1 & 0 \\ 0 & A_j \end{pmatrix} \\
     \hfill \textrm{as $z\to \infty$ in the $j$th quadrant},
    \end{array}\right.
\end{equation}
where $J_X$ is given by \eqref{eq:jumpXonR1}--\eqref{eq:jumpXoniR2}.

Each of the jump matrices $J_X$ is nontrivial only in certain
$2 \times 2$ blocks. The nontrivial blocks are triangular and assume one of
the forms
\[ \begin{pmatrix} 1 & * \\ 0 & 1 \end{pmatrix} \quad \text{or} \quad
    \begin{pmatrix} 1 & 0 \\ * & 1 \end{pmatrix} \]
with a real off-diagonal entry $*$, or
\[ \begin{pmatrix} * & 1 \\ 0 & * \end{pmatrix}
    \quad \text{or} \quad \begin{pmatrix} * & 0 \\ 1 & * \end{pmatrix} \]
with oscillatory diagonal entries of absolute value $1$.
The first form indicates that an external field is acting
and the second form indicates the presence of an upper constraint.
In this way we can already see the connection with the vector
equilibrium.

Let us examine this in more detail.

\paragraph{Jump $J_X$ on the real line}
The jump matrix $J_X$ on the real line, see \eqref{eq:jumpXonR1} and \eqref{eq:jumpXonR2},
takes the block form
\[ J_X = \begin{pmatrix} (J_X)_1 & 0 \\ 0 & (J_X)_3 \end{pmatrix}, \qquad \text{on } \mathbb R, \]
where $(J_X)_1$ and $(J_X)_3$ are $2 \times 2$ matrices.

We have
\begin{equation} \label{eq:defJX1}
    (J_X)_1(x) = \begin{pmatrix} 1 & {e}^{-n V_1(x)} \\ 0 & 1 \end{pmatrix}, \qquad x \in \mathbb R,
    \end{equation}
where
\[ V_1(x) =  V(x) - \theta_1(x) \]
is indeed the external field that acts on the first measure in the vector
equilibrium problem, see \eqref{eq:V1bis}.

Furthermore, we have by \eqref{eq:jumpXonR2},
\begin{equation} \label{eq:defJX3ongap}
    (J_X)_3(x) =
     \begin{pmatrix} 1 & {e}^{-n V_3(x)} \\ 0 & 1 \end{pmatrix},
    \qquad x \in (-x^*(\alpha),x^*(\alpha)),
    \end{equation}
with
\[ V_3(x) = \theta_{2}(x) - \theta_{3}(x) \quad \text{for } x \in (-x^*(\alpha), x^*(\alpha)), \]
which by \eqref{eq:V3bis} is indeed the non-zero part of the external field $V_3$
that acts on the third measure in the vector equilibrium problem.
The external field $V_3$ plays a role only in case $x^*(\alpha) > 0$, that is, in case $\alpha < 0$.

The right lower block in \eqref{eq:jumpXonR1}
has oscillatory diagonal entries. We define $\psi_3$ by
\begin{equation} \label{eq:defpsi3}
    \psi_3(z) = \theta_2(z) - \theta_3(z) \end{equation}
so that
\begin{equation} \label{eq:defJX3}
    (J_X)_3(x) = \begin{pmatrix}
    {e}^{n \psi_{3,+}(x)} & 1 \\ 0 & {e}^{n \psi_{3,-}(x)} \end{pmatrix},
    \qquad  x \in \mathbb R, \, |x| > x^*(\alpha).
    \end{equation}

Then $\psi_{3,\pm}$ is purely imaginary for $|x| > x^*(\alpha)$ with
\begin{align*}
    \frac{d}{dx} \psi_{3,\pm}(x) & =
    \frac{d}{dx} \left[ \theta_{2,+}(x) - \theta_{2,-}(x) \right] \\
    & = \tau (s_{2,+}(x) - s_{2,-}(x)) = 2 i \tau \Im s_{2,+}(x),
    \end{align*}
    which is purely imaginary with positive imaginary part.
Thus we can associate with $\psi_3$ a measure $\sigma_3$ on $\mathbb R$ by
putting
\begin{align}
    \frac{d \sigma_3}{dx} =
        \frac{1}{2\pi i} \frac{d}{dx} \left(\theta_{2,+}(x) - \theta_{3,+}(x)\right)
            = \frac{\tau}{\pi} \Im s_{2,+}(x), & \quad \text{for } x \in \mathbb R.
     \label{eq:defsigma3}
        \end{align}
Then $S(\sigma_3) = \mathbb R$ if $\alpha > 0$
and $S(\sigma_3) = \mathbb R\setminus (-x^*(\alpha), x^*(\alpha))$ if $\alpha < 0$.

Because of the upper triangular form of \eqref{eq:defJX3} it will turn
out that $\sigma_3$ acts as a lower constraint on the third measure in
the sense that
\begin{equation} \label{eq:constraintsigma3}
    \mu_3 + \sigma_3 \geq 0
    \end{equation}
and then we could allow signed measures $\mu_3$ in the vector equilibrium
problem. However, in this more general vector equilibrium problem we would
still find $\mu_3 \geq 0$ so that the constraint \eqref{eq:constraintsigma3}
does not play a role after all.

We will not use the measure $\sigma_3$ anymore.

\begin{figure}[t]
\begin{center}
   \setlength{\unitlength}{1truemm}
   \begin{picture}(100,70)(-5,2)
       \put(0,30){\line(1,0){95}}
       \put(47.5,0){\line(0,1){60}}
       \put(93,26.6){$\mathbb R $}
       \put(48.5,58){$i\mathbb R $}
       \put(95,30){\thicklines\vector(1,0){.0001}}
       \put(47.5,60){\thicklines\vector(0,1){.0001}}
       \put(47.5,15){\thicklines\circle*{1}}
       \put(35,14){$-y^*(\alpha)$}
       \put(47.5,45){\thicklines\circle*{1}}
       \put(38,44){$y^*(\alpha)$}

       \put(70,10){$\begin{pmatrix} 1 & {e}^{-nV_1} & 0 & 0 \\ 0 & 1 & 0 & 0 \\
            0 & 0 & {e}^{n \psi_{3,+}} & 1 \\ 0 & 0 & 0 & {e}^{n \psi_{3,-}} \end{pmatrix}$}
        \put(80,20){\line(0,1){10}}
        \put(80,30){\thicklines\vector(0,1){0.0001}}
        \put(70,15){\line(-4,1){60}}
        \put(10,30){\thicklines\vector(-4,1){0.0001}}

       \put(60,45){$\begin{pmatrix} 1 & 0 & 0 & 0 \\ 0 & 1 & 0 & 0 \\
            0 & {e}^{-nV_2} & 1 & 0  \\ 0 & 0 & 0 & 1 \end{pmatrix}$}
        \put(60,40){\line(-1,0){12.5}}
        \put(47.5,40){\thicklines\vector(-1,0){0.0001}}
        \put(62.5,38){\line(-1,-1){15}}
        \put(47.5,23){\thicklines\vector(-1,-1){0.0001}}

        \put(-6,49){$\begin{pmatrix} 1 & 0 & 0 & 0 \\ 0 & {e}^{n \psi_{2,-}} & 0 & 0 \\
            0 & 1 & {e}^{n \psi_{2,+}} & 0 \\ 0 & 0 & 0 & 1 \end{pmatrix} $}
        \put(37.5,50){\line(1,0){10}}
        \put(47.5,50){\thicklines\vector(1,0){0.0001}}
        \put(32.5,40){\line(1,-2){15}}
        \put(47.5,10){\thicklines\vector(1,-2){0.0001}}

   \end{picture}
   \caption{Jump matrices $J_X$ in case $\alpha > 0$}
   \label{fig:jumpsJX1}
\end{center}
\end{figure}

\paragraph{Jump $J_X$ on the imaginary axis}
The jump matrix $J_X$ on the imaginary axis, see \eqref{eq:jumpXoniR1} and \eqref{eq:jumpXoniR2},
takes the block form
\[ J_X = \begin{pmatrix} 1 & 0 & 0 \\ 0 &  (J_X)_2 & 0 \\ 0 & 0 & 1 \end{pmatrix}, \qquad \text{on } i\mathbb R, \]
where $(J_X)_2$ is a $2 \times 2$ matrix.

We define
\begin{equation} \label{eq:defpsi2}
    \psi_2(z) = \theta_1(z) - \theta_2(z),  \qquad z \in i \mathbb R, \, |z| > y^*(\alpha),
    \end{equation}
so that we have by \eqref{eq:jumpXoniR1}
\begin{equation} \label{eq:defJX2}
    (J_X)_2(z) = \begin{pmatrix} {e}^{n \psi_{2,+}(z)} & 0 \\ 1 & {e}^{n \psi_{2,-}(z)} \end{pmatrix},
    \qquad z \in i \mathbb R, \, |z| > y^*(\alpha).
\end{equation}

Then $\psi_2$ is associated with the measure $\sigma_2$,
since
\begin{align*}
    \frac{d}{dz} \psi_{2,-}(z) & =
    \frac{d}{dz} \left[ \theta_{1,-}(z) - \theta_{1,+}(z) \right] \\
    & = \tau (s_{1,-}(z) - s_{1,+}(z)) = 2 \pi i  \frac{d\sigma_2}{dz}(z),
    \end{align*}
by \eqref{eq:sigma2withs1}, and also
\[ \psi_{2,\pm}(z) =
    \begin{cases}
        \mp 2 \pi i \sigma_2([0,z]), & \qquad z \in i \mathbb R^+, \\
        \mp 2 \pi i \sigma_2([z,0]), & \qquad z \in i \mathbb R^-.
        \end{cases} \]

We also identify an external field $V_2$ on the imaginary axis,
which is only there in case $\alpha > 0$. We have by \eqref{eq:jumpXoniR2}
\begin{equation} \label{eq:defJX2ongap}
    (J_X)_2(z) = \begin{pmatrix} 1 & 0 \\  {e}^{-n V_2(z)} & 1 \end{pmatrix},
    \qquad z \in (-iy^*(\alpha), iy^*(\alpha)),
    \end{equation}
with
\begin{equation} \label{eq:defV2}
    V_2(z) = \begin{cases}
    \theta_{2}(z) - \theta_{1}(z), & \quad \text{for }  z \in (-iy^*(\alpha), iy^*(\alpha)), \\
    0, & \quad \text{elsewhere on $i\mathbb R$}.
    \end{cases}
    \end{equation}
The external field $V_2$ will not be active, since it acts only
on the part of $\mu_2$ that is in $(-iy^*(\alpha), iy^*(\alpha))$,
and this part is zero, since $S(\mu_2) = S(\sigma_2) = i \mathbb R \setminus (-iy^*(\alpha), iy^*(\alpha))$.

See Figures \ref{fig:jumpsJX1} and \ref{fig:jumpsJX2} for
the jump matrices $J_X$ in the two cases $\alpha > 0$ and $\alpha < 0$.

\begin{figure}[t]
\begin{center}
   \setlength{\unitlength}{1truemm}
   \begin{picture}(100,70)(-5,2)
       \put(0,30){\line(1,0){95}}
       \put(47.5,0){\line(0,1){60}}
       \put(93,26.6){$\mathbb R $}
       \put(48.5,58){$i\mathbb R $}
       \put(95,30){\thicklines\vector(1,0){.0001}}
       \put(47.5,60){\thicklines\vector(0,1){.0001}}
       \put(67.5,30){\thicklines\circle*{1}}
       \put(63,26.6){$x^*(\alpha)$}
       \put(27.5,30){\thicklines\circle*{1}}
       \put(23.5,26.6){$-x^*(\alpha)$}

       \put(70,10){$\begin{pmatrix} 1 & {e}^{-nV_1} & 0 & 0 \\ 0 & 1 & 0 & 0 \\
            0 & 0 & {e}^{n \psi_{3,+}} & 1 \\ 0 & 0 & 0 & {e}^{n \psi_{3,-}} \end{pmatrix}$}
        \put(80,20){\line(0,1){10}}
        \put(80,30){\thicklines\vector(0,1){0.0001}}
        \put(70,15){\line(-4,1){60}}
        \put(10,30){\thicklines\vector(-4,1){0.0001}}

       \put(60,45){$\begin{pmatrix} 1 & {e}^{-nV_1} & 0 & 0 \\ 0 & 1 & 0 & 0 \\
            0 & 0 & 1 & {e}^{-n V_3}  \\ 0 & 0 & 0 & 1 \end{pmatrix}$}
        \put(60,40){\line(-1,-2){5}}
        \put(55,30){\thicklines\vector(-1,-2){0.0001}}
        \put(60,40){\line(-2,-1){20}}
        \put(40,30){\thicklines\vector(-2,-1){0.0001}}

        \put(-6,49){$\begin{pmatrix} 1 & 0 & 0 & 0 \\ 0 & {e}^{n \psi_{2,-}} & 0 & 0 \\
            0 & 1 & {e}^{n \psi_{2,+}} & 0 \\ 0 & 0 & 0 & 1 \end{pmatrix} $}
        \put(37.5,50){\line(1,0){10}}
        \put(47.5,50){\thicklines\vector(1,0){0.0001}}
        \put(32.5,40){\line(1,-2){15}}
        \put(47.5,10){\thicklines\vector(1,-2){0.0001}}

   \end{picture}
   \caption{Jump matrices $J_X$ in case $\alpha < 0$}
   \label{fig:jumpsJX2}
\end{center}
\end{figure}

\section{Second transformation $X \mapsto U$}

\subsection{Definition of second transformation}
The second transformation of the RH problem uses the functions
that come from the vector equilibrium problem. It is possible to
state the transformation in terms of either the $g$-functions, or the $\lambda$-functions.

\begin{definition}
We define the $4 \times 4$ matrix valued function $U$ by
\begin{equation} \label{eq:defU}
    U(z)= \left(I + \frac{3}{2} n C E_{2,4} \right) {e}^{nL} X(z) {e}^{-n G(z)},
\end{equation}
where $C$ is the constant from \eqref{eq:constantC}, $G$ is given by
\begin{align} \nonumber
    G & = \diag \begin{pmatrix} g_1 + \ell_1 & g_2 - g_1 & g_3 - g_2 & -g_3 \end{pmatrix} \\
      & = -\diag \begin{pmatrix} \lambda_1 - V & \lambda_2-\theta_1 & \lambda_3 - \theta_2 & \lambda_4 - \theta_3
\end{pmatrix} \label{eq:defG}
\end{align}
and $L$ is a constant diagonal matrix
\begin{equation} \label{eq:defL}
    L= \diag \begin{pmatrix} \ell_1 & 0 & 0 & 0
\end{pmatrix},
\end{equation}
with $\ell_1$ the variational constant in the Euler-Lagrange condition on $\mu_1$.
\end{definition}
Note that the equality of the two diagonal matrices in \eqref{eq:defG} follows from
the definition \eqref{eq:deflambdaj} of the $\lambda$-functions.

Then $U$ is defined and analytic in $\mathbb C \setminus (\mathbb R \cup i \mathbb R)$.

\subsection{Asymptotic behavior of $U$}
\begin{lemma} We have
\begin{equation} \label{eq:Uasymptotics}
    U(z)= \left(I +\OO(z^{-1/3})\right)\begin{pmatrix}    1 & 0 & 0 &0\\
    0 & z^{1/3} & 0 & 0\\
    0 & 0 & 1  & 0 \\
    0 & 0 & 0 & z^{-1/3}
    \end{pmatrix} \begin{pmatrix} 1 & 0 \\
    0 & A_j
    \end{pmatrix}
    \end{equation}
as $z\to \infty$ in the $j$th quadrant.
\end{lemma}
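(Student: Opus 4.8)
The plan is to trace the asymptotics of $U$ through its defining formula \eqref{eq:defU}, combining the known asymptotics of $X$ from \eqref{eq:Xasymptotics} with the large-$z$ behaviour of the $g$-functions recorded at the end of Section~\ref{sec:RiemannSurface}. First I would write $U(z) = \left(I + \frac{3}{2} n C E_{2,4}\right) e^{nL} X(z) e^{-nG(z)}$ and substitute the asymptotic form of $X$. The leading factor in the asymptotics of $X$ is $\diag\begin{pmatrix} z^n & z^{-(n-1)/3} & z^{-n/3} & z^{-(n+1)/3}\end{pmatrix}$, times the block matrix $\begin{pmatrix} 1 & 0 \\ 0 & A_j\end{pmatrix}$, and on the right we have $e^{-nG(z)}$ with $G$ as in \eqref{eq:defG}. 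The key point is that the diagonal entries of $e^{nG(z)}$ are $e^{n(g_1+\ell_1)}, e^{n(g_2-g_1)}, e^{n(g_3-g_2)}, e^{-ng_3}$, and using $g_1(z) = \log z + \OO(z^{-1})$, $g_2(z) = \tfrac23\log z + \tfrac32 C\omega^{k} z^{-2/3} + \OO(z^{-1})$, $g_3(z) = \tfrac13 \log z - \tfrac32 C\omega^{k'} z^{-2/3} + \OO(z^{-1})$ (with the appropriate powers of $\omega$ in each quadrant), the powers of $z$ combine so that $e^{nL} \cdot \diag(z^n, z^{-(n-1)/3}, \dots) \cdot e^{-nG}$ collapses to $\diag\begin{pmatrix} 1 & z^{1/3} & 1 & z^{-1/3}\end{pmatrix}$ up to the subleading exponential corrections.

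The main technical content is to control those subleading corrections. The $z^{-2/3}$ terms in $g_2$ and $g_3$ produce, after multiplication by $n$ and exponentiation, factors $e^{-n(\text{entry})}$ which are \emph{not} $1 + \OO(z^{-1/3})$ by themselves; they contribute $1 - \tfrac32 n C \omega^? z^{-2/3} + \OO(z^{-4/3})$ in the $(2,2)$ and $(4,4)$ diagonal slots (and reciprocals). This is precisely why the correction matrix $I + \tfrac32 n C E_{2,4}$ appears on the left: I would verify that conjugating/multiplying by this matrix exactly cancels the leading $n$-dependent $z^{-2/3}$ discrepancy, in the same way that analogous prefactors appeared in \cite{DuiKu2}. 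Concretely, $E_{2,4}$ is the matrix unit with a $1$ in position $(2,4)$; since the $(2,2)$ and $(4,4)$ entries of the diagonal part differ by $z^{2/3}$, the off-diagonal $(2,4)$ entry generated when one pulls the $z^{-2/3}$ corrections through $\diag(z^{-(n-1)/3},\dots)$ sits at the right place to be absorbed. One must check that the $n$-scaling matches: the correction in the exponent is $n$ times an $\OO(z^{-2/3})$ quantity, and $E_{2,4}$ is multiplied by $n$, so the bookkeeping is consistent.

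After the $z^{-2/3}$ corrections are shown to be cancelled by the left prefactor, the remaining error is $I + \OO(z^{-1/3})$: the next terms in the $g_j$ expansions are $\OO(z^{-1})$ in the exponent, hence $\OO(z^{-1})$ after exponentiation, while the $X$-asymptotics carry an intrinsic $I + \OO(z^{-2/3})$ factor, and the constant matrices $A_j$, $P_{n,0}$, $D_n$ contribute nothing to the order count. However, because $z^{1/3}$ (from the $g_2$-part) and $z^{-1/3}$ (from the $-g_3$-part) appear in the diagonal normalizing matrix in \eqref{eq:Uasymptotics}, an $\OO(z^{-2/3})$ error multiplying a $z^{1/3}$ column becomes $\OO(z^{-1/3})$ after factoring out the diagonal matrix on the right — this is the reason the stated error in \eqref{eq:Uasymptotics} is only $\OO(z^{-1/3})$ rather than $\OO(z^{-2/3})$. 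I would finish by noting, exactly as remarked after \eqref{eq:Xasymptotics}, that the estimate is not uniform up to the real and imaginary axes, since the choice of branch of $z^{1/3}$ and the quadrant-dependent $A_j$ and $\omega$-powers jump there.

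The step I expect to be the main obstacle is the explicit verification that the single off-diagonal entry $\tfrac32 n C$ in position $(2,4)$ precisely matches the $z^{-2/3}$-order, $n$-proportional term that arises from the interplay of the $g_2$ expansion (giving $+\tfrac32 C z^{-2/3}$) and the $-g_3$ expansion (giving $+\tfrac32 C z^{-2/3}$, with the \emph{same} constant $C$ by the end-of-Section~\ref{sec:RiemannSurface} computation) when pulled through the different powers $z^{-(n-1)/3}$ and $z^{-(n+1)/3}$; getting the sign, the factor of $\omega$ in each quadrant, and the relative scaling all to line up requires care but is essentially the quadrant-by-quadrant calculation already carried out in \cite{DuiKu2}.
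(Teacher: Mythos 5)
Your proposal is correct and follows essentially the same route as the paper: expand $e^{-nG}$ via the large-$z$ behaviour of the $g_j$, commute the resulting $n$-proportional $z^{-2/3}$ correction through $\diag(1,z^{1/3},1,z^{-1/3})\begin{pmatrix}1&0\\0&A_j\end{pmatrix}$, and observe that it produces exactly $\tfrac32 nC E_{2,4}+\OO(z^{-1/3})$, cancelling the left prefactor, while the $\OO(z^{-1})$ remainder degrades to $\OO(z^{-1/3})$ under the same conjugation. The one bookkeeping point to fix in the verification you defer: the $z^{-2/3}$ corrections sit in all three slots $(2,2),(3,3),(4,4)$ of $e^{-nG}$ (as $-\tfrac32 nCz^{-2/3}\,\Omega_j$ with quadrant-dependent diagonal $\Omega_j$), and it is conjugation by $A_j$ --- using that $A_j\Omega_jA_j^{-1}$ is the \emph{same} cyclic permutation matrix for every $j$ --- that creates the off-diagonal entry, after which the diagonal powers single out the $(2,4)$ position.
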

\begin{proof}
We have because of the asymptotic behavior of the $\lambda$ functions that
\begin{multline*}
    {e}^{-nG(z)} = \diag \begin{pmatrix} z^{-n} {e}^{-n \ell_1} & z^{n/3} & z^{n/3} & z^{n/3} \end{pmatrix}
    \\ \times \left(I - \frac{3}{2}nC z^{-2/3} \begin{pmatrix} 0 & 0 \\ 0 &  \Omega_j \end{pmatrix} + \OO(z^{-1})\right)
    \end{multline*}
as $z \to \infty$ in the $j$th quadrant, where
\[ \left\{
    \begin{aligned}
        \Omega_1 & = \diag \begin{pmatrix} 1 & \omega & \omega^2 \end{pmatrix} \\
        \Omega_2 & = \diag \begin{pmatrix} \omega & 1 & \omega^2 \end{pmatrix} \\
        \Omega_3 & = \diag \begin{pmatrix} \omega^2 & 1 & \omega \end{pmatrix} \\
        \Omega_4 & = \diag \begin{pmatrix} 1 & \omega^2  & \omega \end{pmatrix}
        \end{aligned} \right.
\]

Then by the asymptotic behavior of $X$, and the definition of $U$, we get
\begin{multline*} U(z) = \left(I + \frac{3}{2} n C E_{2,4} + \OO(z^{-2/3})\right)
        \diag \begin{pmatrix} 1 & z^{1/3} & 1 & z^{-1/3} \end{pmatrix}
        \begin{pmatrix} 1 & 0 \\ 0 & A_j \end{pmatrix}
    \\ \times \left(I - \frac{3}{2}nC z^{-2/3} \begin{pmatrix} 0 & 0 \\ 0 &  \Omega_j \end{pmatrix} + \OO(z^{-1})\right)
    \end{multline*}
as $z \to \infty$ in the $j$ th quadrant.
We can move the $\OO(z^{-1})$ to the front, but then the $\OO(z^{-2/3})$
reduces to $\OO(z^{-1/3})$:
\begin{multline*} U(z) = \left(I + \frac{3}{2} n C E_{2,4} + \OO(z^{-1/3})\right)
        \diag \begin{pmatrix} 1 & z^{1/3} & 1 & z^{-1/3} \end{pmatrix}
        \begin{pmatrix} 1 & 0 \\ 0 & A_j \end{pmatrix}
    \\ \times \left(I - \frac{3}{2}nC z^{-2/3} \begin{pmatrix} 0 & 0 \\ 0 &  \Omega_j \end{pmatrix} \right)
    \end{multline*}

We also want to move the $z^{-2/3}$ term to the left.
Then we pick up an $\OO(1)$ contribution in the $(2,4)$ entry.
Indeed we have
\begin{multline*} z^{-2/3}
        \diag \begin{pmatrix} 1 & z^{1/3} & 1 & z^{-1/3} \end{pmatrix}
        \begin{pmatrix} 1 & 0 \\ 0 & A_j \Omega_j A_j^{-1} \end{pmatrix}
        \diag \begin{pmatrix} 1 & z^{-1/3} & 1 & z^{1/3} \end{pmatrix} \\
        = E_{2,4} + \OO(z^{-1/3}) \end{multline*}
as $z \to \infty$ in the $j$th quadrant, since
\[ A_j \Omega_j A_j^{-1} = \begin{pmatrix} 0 & 0 & 1 \\ 1 & 0 & 0 \\ 0 & 1 & 0 \end{pmatrix} \]
for every $j$. The lemma follows.
\end{proof}

\begin{remark}
As a consistency check, we compute the jumps of the
matrix valued function $A(z)$ defined by
\[ A(z) = \diag \begin{pmatrix} 1 & z^{1/3} & 1 & z^{-1/3} \end{pmatrix}
    \begin{pmatrix} 1 & 0 \\ 0 & A_j \end{pmatrix} \]
for $z$ in the $j$th quadrant.
Then we have by the definition \eqref{eq:A1A2}--\eqref{eq:A3A4} of the matrices $A_j$,
\begin{equation} \label{eq:jumpA}
    A_+(z) = A_-(z) J_A(z), \qquad z \in \mathbb R \cup i \mathbb R,
    \end{equation}
with jump matrix
\begin{equation} \label{eq:defJA}
    J_A(z) =
    \begin{cases}
    \begin{pmatrix} 1 & 0 & 0 & 0 \\ 0 & 1 & 0 & 0 \\ 0 & 0 & 0 & 1 \\ 0 & 0 & -1 & 0 \end{pmatrix}, &
    \qquad z\in \mathbb R, \\
    \begin{pmatrix} 1 & 0 & 0 & 0 \\ 0 & 0 & -1 & 0 \\ 0 & 1 & 0 & 0 \\ 0 & 0 & 0 & 1 \end{pmatrix}, &
    \qquad z\in i\mathbb R,
    \end{cases}
    \end{equation}
which is indeed what we expect.
\end{remark}

\subsection{Jump matrices for $U$}
The jump  $U_+ = U_- J_U$ with jump matrix $J_U$ takes a different
form on the various parts of the real and imaginary axis.
Since $G$ and $L$ are diagonal matrices, the jump matrix $J_U$
has the same block structure as $J_X$.
In terms of the $\lambda$ functions the jumps take on a very nice form.

\begin{figure}[t]
\begin{center}
   \setlength{\unitlength}{1truemm}
   \begin{picture}(100,70)(-5,2)
       \put(0,30){\line(1,0){95}}
       \put(47.5,0){\line(0,1){60}}
       \put(93,26.6){$\mathbb R $}
       \put(48.5,58){$i\mathbb R $}
       \put(95,30){\thicklines\vector(1,0){.0001}}
       \put(47.5,60){\thicklines\vector(0,1){.0001}}
       \put(47.5,5){\thicklines\circle*{1}}
       \put(48,4){$-ic_2$}
       \put(47.5,50){\thicklines\circle*{1}}
       \put(42,49){$ic_2$}

       \put(20,30){\thicklines\circle*{1}}
       \put(40,30){\thicklines\circle*{1}}
       \put(55,30){\thicklines\circle*{1}}
       \put(75,30){\thicklines\circle*{1}}
       \put(18,26.6){$a_1$}
       \put(38,26.6){$b_1$}
       \put(53,26.6){$a_2$}
       \put(73,26.6){$b_2$}

       \put(55,7){\small{$\begin{pmatrix} {e}^{-2n \pi i \alpha_k} &  \hspace*{-0.2cm} {e}^{n(\lambda_2 - \lambda_1)} &  \hspace*{-0.7cm} 0 &  \hspace*{-0.7cm} 0 \\
       0 & \hspace*{-0.2cm}  {e}^{2n\pi i \alpha_k} &  \hspace*{-0.7cm} 0 &  \hspace*{-0.7cm}  0 \\
            0 & \hspace*{-0.2cm}  0 & \hspace*{-0.7cm}  {e}^{n(\lambda_{3,+}-\lambda_{3,-})} &  \hspace*{-0.7cm} 1 \\ 0 &  \hspace*{-0.2cm} 0 & \hspace*{-0.7cm}  0 & \hspace*{-0.7cm}  {e}^{n( \lambda_{4,+}-\lambda_{4,-})}
            \end{pmatrix}$}}
        \put(80,17){\line(0,1){13}}
        \put(80,30){\thicklines\vector(0,1){0.0001}}
        \put(72,17){\line(-2,1){25}}
        \put(46,30){\thicklines\vector(-2,1){0.0001}}
        \put(55,17){\line(-4,1){52}}
        \put(3,30){\thicklines\vector(-4,1){0.0001}}

       \put(-30,7){\small{$\begin{pmatrix} {e}^{n(\lambda_{1,+}-\lambda_{1,-})} & \hspace*{-0.7cm}  1 & \hspace*{-0.7cm}  0 &\hspace*{-0.7cm}  0 \\ 0 & \hspace*{-0.7cm} {e}^{n(\lambda_{2,+}-\lambda_{2,-})} &\hspace*{-0.7cm}  0 & \hspace*{-0.7cm} 0 \\
            0 & \hspace*{-0.7cm} 0 & \hspace*{-0.7cm} {e}^{n(\lambda_{3,+}-\lambda_{3,-})} & \hspace*{-0.7cm} 1 \\ 0 &\hspace*{-0.7cm}  0 &\hspace*{-0.7cm} 0 &\hspace*{-0.7cm}  {e}^{n(\lambda_{4,+}-\lambda_{4,-})} \end{pmatrix}$}}
        \put(32,20){\line(0,1){10}}
        \put(32,30){\thicklines\vector(0,1){0.0001}}
        \put(42,10){\line(1,1){20}}
        \put(62,30){\thicklines\vector(1,1){0.0001}}

       \put(63,42){\small{$\begin{pmatrix} 1 & 0 & 0 & 0 \\ 0 & 1 & 0 & 0 \\
            0 & {e}^{n(\lambda_{2}-\lambda_{3})} & 1 & 0  \\
            0 & 0 & 0 & 1 \end{pmatrix}$}}
        \put(60,40){\line(-1,0){12.5}}
        \put(47.5,40){\thicklines\vector(-1,0){0.0001}}

        \put(-20,49){\small{$ \begin{pmatrix} 1 & 0 & 0 & 0 \\ 0 & {e}^{n(\lambda_{2,+}-\lambda_{2,-})} & 0 & 0 \\
            0 & 1 & {e}^{n(\lambda_{3,+}-\lambda_{3,-})} & 0 \\ 0 & 0 & 0 & 1 \end{pmatrix} $}}
        \put(37.5,55){\line(1,0){10}}
        \put(47.5,55){\thicklines\vector(1,0){0.0001}}

   \end{picture}
   \caption{Jump matrices $J_U$ in case $\alpha > 0$}
   \label{fig:jumpsJU1}
\end{center}
\end{figure}

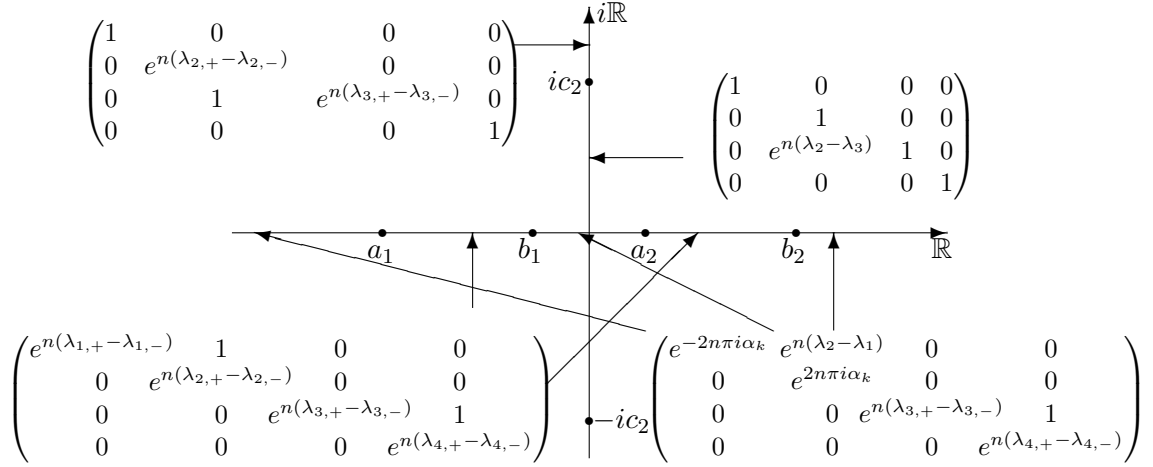
\begin{figure}[t]
\begin{center}
   \setlength{\unitlength}{1truemm}
   \begin{picture}(100,70)(-5,2)
       \put(0,30){\line(1,0){95}}
       \put(47.5,0){\line(0,1){60}}
       \put(93,26.6){$\mathbb R $}
       \put(48.5,58){$i\mathbb R $}
       \put(95,30){\thicklines\vector(1,0){.0001}}
       \put(47.5,60){\thicklines\vector(0,1){.0001}}
       \put(40,30){\thicklines\circle*{1}}
       \put(38,26.6){$-c_3$}
       \put(55,30){\thicklines\circle*{1}}
       \put(53,26.6){$c_3$}

       \put(20,30){\thicklines\circle*{1}}
       \put(35,30){\thicklines\circle*{1}}
       \put(60,30){\thicklines\circle*{1}}
       \put(75,30){\thicklines\circle*{1}}
       \put(18,26.6){$a_1$}
       \put(33,26.6){$b_1$}
       \put(58,26.6){$a_2$}
       \put(73,26.6){$b_2$}

       \put(52,7){\small{$\begin{pmatrix} {e}^{-2n \pi i \alpha_k} & {e}^{n(\lambda_{2}-\lambda_{1})} &  \hspace*{-0.7cm} 0 & \hspace*{-0.7cm}  0 \\
       0 & {e}^{2n\pi i \alpha_k} & \hspace*{-0.7cm}  0 & \hspace*{-0.7cm}  0 \\
            0 & 0 & \hspace*{-0.7cm}  {e}^{n(\lambda_{3,+}-\lambda_{3,-})} &  \hspace*{-0.7cm} 1 \\
            0 & 0 &  \hspace*{-0.7cm}  0 &  \hspace*{-0.7cm} {e}^{n(\lambda_{4,+}-\lambda_{4,-})} \end{pmatrix}$}}
        \put(80,20){\line(0,1){10}}
        \put(80,30){\thicklines\vector(0,1){0.0001}}
        \put(51,18){\line(-4,1){48}}
        \put(3,30){\thicklines\vector(-4,1){0.0001}}

       \put(-30,7){\small{$\begin{pmatrix} {e}^{n(\lambda_{1,+}-\lambda_{1,-})} & \hspace*{-0.7cm}  1 & \hspace*{-0.7cm}  0 &  \hspace*{-0.7cm}  0 \\
       0 &  \hspace*{-0.7cm} {e}^{n(\lambda_{2,+}-\lambda_{2,-})} & \hspace*{-0.7cm}  0 &   \hspace*{-0.7cm}  0 \\
            0 & \hspace*{-0.7cm}  0 &  \hspace*{-0.7cm}  {e}^{n(\lambda_{3,+}-\lambda_{3,-})} &  \hspace*{-0.7cm} 1 \\
            0 &  \hspace*{-0.7cm} 0 & \hspace*{-0.7cm}   0 & \hspace*{-0.7cm}  {e}^{n(\lambda_{4,+}-\lambda_{4,-})} \end{pmatrix}$}}
        \put(30,20){\line(0,1){10}}
        \put(30,30){\thicklines\vector(0,1){0.0001}}
        \put(45,10){\line(1,1){20}}
        \put(65,30){\thicklines\vector(1,1){0.0001}}

       \put(52,45){\small{$\begin{pmatrix} {e}^{-2n \pi i \alpha_k} & {e}^{n(\lambda_{2}-\lambda_{1})} & 0 & 0 \\
            0 & {e}^{2n \pi i \alpha_k} & 0 & 0 \\
            0 & 0 & 1 & {e}^{n(\lambda_{4}-\lambda_{3})}  \\
            0 & 0 & 0 & 1 \end{pmatrix}$}}
        \put(60,40){\line(-1,-1){10}}
        \put(50,30){\thicklines\vector(-1,-1){0.0001}}

        \put(-20,49){\small{$\begin{pmatrix} 1 & 0 & 0 & 0 \\
            0 & {e}^{n(\lambda_{2,+}-\lambda_{2,-})} & 0 & 0 \\
            0 & 1 & {e}^{n(\lambda_{3,+}-\lambda_{3,-})} & 0 \\ 0 & 0 & 0 & 1 \end{pmatrix} $}}
        \put(37.5,55){\line(1,0){10}}
        \put(47.5,55){\thicklines\vector(1,0){0.0001}}

   \end{picture}
   \caption{Jump matrices $J_U$ in case $\alpha < 0$}
   \label{fig:jumpsJU2}
\end{center}
\end{figure}

The jump matrix $J_U$ on the real line has the block form
\begin{equation} \label{eq:JUonR}
    J_U(x) = \begin{pmatrix} \left(J_U\right)_1(x) & 0 \\ 0 & \left(J_U\right)_3(x)
    \end{pmatrix}, \qquad x \in \mathbb R,
    \end{equation}
with $2 \times 2$ blocks $(J_U)_1$ and $(J_U)_3$.
On the imaginary axis it takes the form
\begin{equation} \label{eq:JUoniR}
    J_U(z) = \begin{pmatrix} 1 & 0 & 0 \\ 0 & \left(J_U\right)_2(x) & 0 \\ 0 & 0 & 1 \end{pmatrix},
    \qquad z \in i \mathbb R,
    \end{equation}
with a $2 \times 2$ block $(J_U)_2$.

\begin{lemma}
We have
\begin{equation} \label{eq:defJU1}
    \left(J_U\right)_1 = \begin{pmatrix}
    {e}^{n(\lambda_{1,+}- \lambda_{1,-})} & {e}^{n(\lambda_{2,+} - \lambda_{1,-})} \\
    0 & {e}^{n(\lambda_{2,+}- \lambda_{2,-})} \end{pmatrix},
        \qquad \text{on } \mathbb R,
    \end{equation}
\begin{equation} \label{eq:defJU2}
    \left(J_U\right)_2 = \begin{pmatrix}
    {e}^{n(\lambda_{2,+}- \lambda_{2,-})} & 0 \\
    {e}^{n(\lambda_{2,+} - \lambda_{3,-})} & {e}^{n(\lambda_{3,+}- \lambda_{3,-})} \end{pmatrix},
        \qquad \text{on } i \mathbb R,
    \end{equation}
and
\begin{equation} \label{eq:defJU3}
     \left(J_U\right)_3 = \begin{pmatrix}
    {e}^{n(\lambda_{3,+}- \lambda_{3,-})} & {e}^{n(\lambda_{4,+} - \lambda_{3,-})} \\
    0 & {e}^{n(\lambda_{4,+}- \lambda_{4,-})} \end{pmatrix},
        \qquad \text{on } \mathbb R.
    \end{equation}
\end{lemma}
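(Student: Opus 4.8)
The plan is to read off the jump of $U$ directly from its definition \eqref{eq:defU}. Since the left factor $\left(I + \tfrac{3}{2} n C E_{2,4}\right){e}^{nL}$ is a fixed matrix it has no jump, and $G$ is diagonal; hence from $X_+ = X_- J_X$ one obtains
\[ J_U(z) = {e}^{n G_-(z)}\, J_X(z)\, {e}^{-n G_+(z)}, \qquad z \in \mathbb R \cup i\mathbb R, \]
where $G_\pm$ denote the boundary values of $G$. In particular $J_U$ inherits from $J_X$ the block structure \eqref{eq:JUonR}--\eqref{eq:JUoniR}, so everything reduces to a computation in the single nontrivial $2\times 2$ block on each piece of $\mathbb R \cup i\mathbb R$. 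Into this I would substitute the second expression for $G$ in \eqref{eq:defG}, that is $G = -\diag\!\left(\lambda_1 - V,\ \lambda_2 - \theta_1,\ \lambda_3 - \theta_2,\ \lambda_4 - \theta_3\right)$, and use that $V$ is an entire polynomial, so $V_+ = V_-$.

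On the real line I would treat the blocks $(J_U)_1$ and $(J_U)_3$ separately. For the first block the relevant part of $J_X$ is $\begin{pmatrix} 1 & {e}^{-nV_1} \\ 0 & 1 \end{pmatrix}$ with $V_1 = V - \theta_1$ (see \eqref{eq:defJX1} and \eqref{eq:V1bis}); multiplying on the left by $\diag\!\left({e}^{-n(\lambda_{1,-}-V)},\,{e}^{-n(\lambda_{2,-}-\theta_{1,-})}\right)$ and on the right by $\diag\!\left({e}^{n(\lambda_{1,+}-V)},\,{e}^{n(\lambda_{2,+}-\theta_{1,+})}\right)$, and using that $\theta_1$ is the $(1,1)$ entry of $\Theta$ and hence continuous across all of $\mathbb R$ by Corollary \ref{cor:jumpTheta}, the factors $V$ and $\theta_1$ cancel in the off-diagonal entry and only the differences $\lambda_{j,+}-\lambda_{k,-}$ survive, giving \eqref{eq:defJU1}. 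For $(J_U)_3$ the relevant part of $J_X$ is $\begin{pmatrix} {e}^{n\psi_{3,+}} & 1 \\ 0 & {e}^{n\psi_{3,-}} \end{pmatrix}$ when $|x| > x^*(\alpha)$, with $\psi_3 = \theta_2 - \theta_3$ by \eqref{eq:defpsi3}, and $\begin{pmatrix} 1 & {e}^{-nV_3} \\ 0 & 1 \end{pmatrix}$ when $|x| < x^*(\alpha)$, with $V_3 = \theta_2 - \theta_3$ there by \eqref{eq:V3bis}; in both cases the same manipulation, using the $\theta$-jump relations \eqref{eq:thetacontinuations} (equivalently Corollary \ref{cor:jumpTheta}), produces \eqref{eq:defJU3}. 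The imaginary axis is entirely analogous: the nontrivial block of $J_X$ is $(J_X)_2$ of \eqref{eq:defJX2} and \eqref{eq:defJX2ongap}, with $\psi_2 = \theta_1 - \theta_2$ and $V_2 = \theta_2 - \theta_1$, and the same computation together with the $\theta$-jumps on $i\mathbb R$ from Corollary \ref{cor:jumpTheta} yields \eqref{eq:defJU2}.

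The only point that needs care — and the reason the definition of $G$ through the $\lambda$-functions is convenient — is the bookkeeping of boundary values: the $\theta_j$ appearing in $J_X$ are the $+$-boundary values written in \eqref{eq:jumpXonR1}--\eqref{eq:jumpXoniR2}, while $G_\pm$ carries boundary values of both $\lambda_j$ and $\theta_j$, and one must check that after cancelling the $V$'s and $\theta_j$'s exactly the stated $\lambda$-differences remain. This is purely mechanical once Corollary \ref{cor:jumpTheta} and Lemma \ref{lem:jumpslambda} are available, and requires no further input. As a consistency check, substituting the explicit values of the $\lambda$-jumps from Lemma \ref{lem:jumpslambda} on each band, gap and half-axis reproduces the fully worked-out jump matrices displayed in Figures \ref{fig:jumpsJU1} and \ref{fig:jumpsJU2}.
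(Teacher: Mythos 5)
Your proposal is correct and follows essentially the same route as the paper: both compute $J_U = e^{nG_-}J_X e^{-nG_+}$ blockwise, substitute $V_1=V-\theta_1$, $V_3=\theta_2-\theta_3$, $\psi_2=\theta_1-\theta_2$, $\psi_3=\theta_2-\theta_3$, and use the continuity of $\theta_1$ on $\mathbb R$ together with the continuation relations \eqref{eq:thetacontinuations} to cancel the $V$'s and $\theta_j$'s, leaving only the stated $\lambda$-differences. The bookkeeping point you flag about boundary values is exactly the only nontrivial step in the paper's proof as well.
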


\begin{proof}
By \eqref{eq:defU}, \eqref{eq:defG} we get
\[ \left(J_U\right)_1 =
    \begin{pmatrix} {e}^{-n (\lambda_{1,-} - V)} & 0 \\ 0 & {e}^{-n(\lambda_{2,-}-\theta_{1,-})} \end{pmatrix}
    \left(J_X\right)_1
    \begin{pmatrix} {e}^{n (\lambda_{1,+} - V)} & 0 \\ 0 & {e}^{n(\lambda_{2,+}-\theta_{1,+})} \end{pmatrix}.
    \]
Now $\theta_1$ is analytic across $\mathbb R \setminus \{0\}$, and $V_1 = V - \theta_1$,
so that by \eqref{eq:defJX1} we indeed obtain \eqref{eq:defJU1}.

For $\left(J_U\right)_2$ we find in a similar way
\begin{equation} \label{eq:JU2formula}
    \left(J_U\right)_2 =
    \begin{pmatrix} {e}^{-n (\lambda_{2,-} - \theta_{1,-})} & 0 \\ 0 & {e}^{-n(\lambda_{3,-}-\theta_{2,-})} \end{pmatrix}
\left(J_X\right)_2
    \begin{pmatrix} {e}^{n (\lambda_{2,+} - \theta_{1,+})} & 0 \\ 0 & {e}^{n(\lambda_{3,+}-\theta_{2,+})} \end{pmatrix}.
    \end{equation}
For $z \in i \mathbb R$ with $|z| > y^*(\alpha)$ we have $\theta_{1,\pm}(z) = \theta_{2,\mp}(z)$
by \eqref{eq:thetacontinuations} and then  \eqref{eq:defJX2}, \eqref{eq:defpsi2},
and \eqref{eq:JU2formula} give us that \eqref{eq:defJU2} holds on
$i \mathbb R \setminus (-y^*(\alpha), i y^*(\alpha))$.
On $(-iy^*(\alpha), iy^*(\alpha))$ (which is only relevant in case $\alpha > 0$)
we also obtain \eqref{eq:defJU2}, but now we use the fact that
$\theta_1$ and $\theta_2$ are both analytic on $(-iy^*(\alpha), iy^*(\alpha))$, together
with \eqref{eq:defJX2ongap}, \eqref{eq:defV2}, and \eqref{eq:JU2formula}.

For $\left(J_U\right)_3$ we obtain from \eqref{eq:defU} and \eqref{eq:defG}
\begin{equation} \label{eq:JU3formula}
    \left(J_U\right)_3 =
    \begin{pmatrix} {e}^{-n (\lambda_{3,-} - \theta_{2,-})} & 0 \\ 0 & {e}^{-n(\lambda_{4,-}-\theta_{3,-})} \end{pmatrix}
\left(J_X\right)_3
    \begin{pmatrix} {e}^{n (\lambda_{3,+} - \theta_{2,+})} & 0 \\ 0 & {e}^{n(\lambda_{4,+}-\theta_{3,+})} \end{pmatrix}.
    \end{equation}
For $\left(J_X\right)_3$ we have the two expressions
\eqref{eq:defJX3ongap} and \eqref{eq:defJX3}. Using this in \eqref{eq:JU3formula}
we obtain \eqref{eq:defJU3} in both cases.
\end{proof}

The expressions in \eqref{eq:defJU1}--\eqref{eq:defJU3} are
valid over the full (real or imaginary) axis. Observe in particular
that the two expressions \eqref{eq:defJX2} and \eqref{eq:defJX2ongap}
for $\left(J_X\right)_2$ both lead to \eqref{eq:defJU2},
and the two expressions \eqref{eq:defJX3ongap} and \eqref{eq:defJX3}
for $\left(J_X \right)_3$ both lead to \eqref{eq:defJU3}.
Hence the special roles that $\pm x^*(\alpha)$ (in case $\alpha < 0$)
and $\pm i y^*(\alpha)$ (in case $\alpha > 0$) played in the jump matrix $J_X$
for  $X$ have disappeared in the jump matrix $J_U$ for $U$.

\subsection{RH problem for $U$}

We have found the following RH problem
for $U$.

\begin{equation}\label{eq:RHforU}
\left\{\begin{array}{l}
    U \textrm{ is analytic in } \mathbb C\setminus (\mathbb R \cup i \mathbb R), \\[5pt]
    U_+ = U_- J_U, \qquad \text{on } \mathbb R \cup i \mathbb R, \\
    U(z)= (I+ \OO(z^{-1/3}))
    \diag \begin{pmatrix} 1 & z^{\frac{1}{3}}& 1 & z^{-\frac{1}{3}} \end{pmatrix}
    \begin{pmatrix} 1 & 0 \\ 0 & A_j \end{pmatrix} \\
    \hfill \textrm{as $z\to \infty$ in the $j$th quadrant},
    \end{array}\right.
\end{equation}
where $J_U$ is given by \eqref{eq:JUonR}--\eqref{eq:JUoniR}
with $\left(J_U\right)_k$ for $k=1,2,3$ given by \eqref{eq:defJU1}--\eqref{eq:defJU3}.

The parts $\left(J_U\right)_k$ in the jump matrix $J_U$ have
different expressions in the various parts of the real and imaginary axis.
This follows from \eqref{eq:defJU1}--\eqref{eq:defJU3} and the jump
properties of the $\lambda$-functions as given
in Lemma \ref{lem:jumpslambda}. Also recall that $n$ is a multiple
of three.

\begin{lemma}
\begin{enumerate}
\item[\rm (a)] For $(J_U)_1$ we have
\begin{equation} \label{eq:JU1explicit}
\left(J_U\right)_1 = \begin{cases}
    \begin{pmatrix}
    {e}^{n(\lambda_{1,+} - \lambda_{1,-})} & 1 \\
    0 & {e}^{n(\lambda_{2,+}-\lambda_{2,-})}
    \end{pmatrix},  & \text{ on } S(\mu_1), \\
    \begin{pmatrix}
    {e}^{- 2n\pi i \alpha_k} & {e}^{n(\lambda_{2,+}-\lambda_{1,-})} \\
    0 & {e}^{2n\pi i \alpha_k}
    \end{pmatrix},
    & \begin{array}{l} \text{on }  (b_k, a_{k+1}), \\
        \text{for } k =0, 1, \ldots, N. \end{array}
\end{cases} \end{equation}
\item[\rm (b)] For $(J_U)_2$ we have
\begin{equation} \label{eq:JU2explicit}
\left(J_U\right)_2 = \begin{cases}
    \begin{pmatrix}
    {e}^{n(\lambda_{2,+}-\lambda_{2,-})} & 0  \\
    1 & {e}^{n(\lambda_{3,+} - \lambda_{3,-})}
    \end{pmatrix}, & \text{ on } S(\sigma_2-\mu_2),\\
    \begin{pmatrix}
    1 & 0 \\
    {e}^{n(\lambda_{2,+}-\lambda_{3,-})} & 1
    \end{pmatrix}, & \text{ on } i \mathbb R \setminus S(\sigma_2-\mu_2),
\end{cases} \end{equation}
\item[\rm (c)] For $(J_U)_3$ we have
\begin{equation} \label{eq:JU3explicit}
\left(J_U\right)_3 = \begin{cases}
    \begin{pmatrix}
     {e}^{n(\lambda_{3,+}-\lambda_{3,-})} & 1 \\
     0 & {e}^{n(\lambda_{4,+}-\lambda_{4,-})}
    \end{pmatrix}, & \text{ on } S(\mu_3), \\
    \begin{pmatrix}
    (-1)^n & {e}^{n(\lambda_{4,+}-\lambda_{3,-})}  \\
    0 & (-1)^n
    \end{pmatrix}, & \text{ on }  \mathbb R \setminus S(\mu_3).
\end{cases} \end{equation}
\end{enumerate}
\end{lemma}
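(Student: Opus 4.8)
The plan is to obtain each of the three displayed identities directly from the product representations \eqref{eq:defJU1}--\eqref{eq:defJU3}, which are valid on the entire real (resp. imaginary) axis, by substituting on each subinterval the boundary-value relations for the $\lambda_j$ collected in Lemma \ref{lem:jumpslambda}, and then invoking the hypothesis $n \equiv 0 \pmod 3$ to reduce the exponentials of the constants $\pm\tfrac{2}{3}\pi i$, $\pm\tfrac{4}{3}\pi i$, $\pm\tfrac{1}{3}\pi i$ to $1$ or $(-1)^n$. Throughout, the diagonal entries $e^{n(\lambda_{j,+}-\lambda_{j,-})}$ and the off-diagonal entries on the gaps are simply recorded as they stand (they are the ``$\lambda$-differences'' appearing in the statement), while the off-diagonal entries on the supports and the diagonal entries on the gaps are the only ones that get simplified.

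For part (a): the off-diagonal entry of $(J_U)_1$ is $e^{n(\lambda_{2,+}-\lambda_{1,-})}$. On $S(\mu_1)\cap\mathbb R^+$ the first line of \eqref{eq:lambdajump1} gives $\lambda_{1,-}=\lambda_{2,+}$, so this entry equals $1$; on $S(\mu_1)\cap\mathbb R^-$ the second line of \eqref{eq:lambdajump1} gives $\lambda_{2,+}-\lambda_{1,-}=-\tfrac{2}{3}\pi i$, and since $n$ is a multiple of three $e^{-\tfrac{2}{3}n\pi i}=1$, so again the entry is $1$; this yields the first line of \eqref{eq:JU1explicit}. On a gap $(b_k,a_{k+1})$ one uses \eqref{eq:lambdajump2} and \eqref{eq:lambdajump2b}: $\lambda_{1,+}-\lambda_{1,-}=-2\pi i\alpha_k$, and $\lambda_{2,+}-\lambda_{2,-}$ equals $2\pi i\alpha_k$ on the $\mathbb R^+$ part and $2\pi i\alpha_k-\tfrac{4}{3}\pi i$ on the $\mathbb R^-$ part; because $e^{-\tfrac{4}{3}n\pi i}=1$ the $(2,2)$ entry is $e^{2n\pi i\alpha_k}$ on both halves, which gives the second line of \eqref{eq:JU1explicit} with $\alpha_k$ as in \eqref{eq:defalphak}.

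Parts (b) and (c) follow the identical pattern. For (b), on $S(\sigma_2-\mu_2)$ the relation $\lambda_{2,+}=\lambda_{3,-}$ from \eqref{eq:lambdajump3} turns the off-diagonal entry $e^{n(\lambda_{2,+}-\lambda_{3,-})}$ of $(J_U)_2$ into $1$, while on $(-ic_2,ic_2)$ the jumps $\lambda_{2,+}-\lambda_{2,-}=\mp\tfrac{2}{3}\pi i$ and $\lambda_{3,+}-\lambda_{3,-}=\pm\tfrac{2}{3}\pi i$ from \eqref{eq:lambdajump4}, together with $n\equiv0\pmod3$, make both diagonal entries equal to $1$, giving \eqref{eq:JU2explicit}. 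For (c), on $S(\mu_3)$ one uses $\lambda_{3,\pm}-\lambda_{4,\mp}\in\{0,\mp\tfrac23\pi i\}$ from \eqref{eq:lambdajump5} so that $e^{n(\lambda_{4,+}-\lambda_{3,-})}=1$, whereas on $(-c_3,c_3)=\mathbb R\setminus S(\mu_3)$ one has $\lambda_{3,+}-\lambda_{3,-}=-\tfrac13\pi i$ and $\lambda_{4,+}-\lambda_{4,-}=\tfrac13\pi i$ from \eqref{eq:lambdajump6}; writing $n=3m$ gives $e^{\mp\tfrac13 n\pi i}=(-1)^m=(-1)^n$, so the diagonal of $(J_U)_3$ there is $(-1)^n$, yielding the second line of \eqref{eq:JU3explicit}.

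The whole argument is bookkeeping; the only place that requires attention is verifying that the two distinct expressions for $\lambda_{2,+}-\lambda_{2,-}$ on the two halves of a gap (and likewise on the $\mathbb R^+$ and $\mathbb R^-$ pieces of the supports, where the connection formulas differ by $\mp\tfrac23\pi i$) collapse to the same entry of $(J_U)_k$ --- which is precisely the step where $n\equiv0\pmod3$ is used, and which accounts for the remark in the text that the distinguished roles of $\pm x^*(\alpha)$ and $\pm iy^*(\alpha)$ disappear in $J_U$. No further obstacle is anticipated.
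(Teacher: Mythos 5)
Your proposal is correct and follows exactly the paper's route: both derive the explicit blocks by substituting the boundary-value relations of Lemma \ref{lem:jumpslambda} into the products \eqref{eq:defJU1}--\eqref{eq:defJU3} and then using $n\equiv 0\pmod 3$ to reduce the constants $\mp\tfrac23\pi i$, $-\tfrac43\pi i$, $\mp\tfrac13\pi i$ to $1$ or $(-1)^n$ (the paper only spells out the $(-1)^n$ computation on $(-c_3,c_3)$, which you also handle correctly). No gaps.
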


\begin{proof}
All expressions follow from \eqref{eq:defJU1}--\eqref{eq:defJU3} and
Lemma \ref{lem:jumpslambda}.

In particular, note that  on $(-c_3,c_3)$ (which can only be non-empty if $\alpha < 0$),
we have by \eqref{eq:lambdajump6}
\[ {e}^{n(\lambda_{3,+}- \lambda_{3,-})}
    = {e}^{n(\lambda_{4,+}- \lambda_{4,-})}
    = {e}^{n (-\frac{1}{3} \pi i)} = (-1)^n \]
    since $n$ is a multiple of three. This explains
    the entries $(-1)^n$ in \eqref{eq:JU3explicit} on $(-c_3,c_3)$.
\end{proof}

\section{Opening of lenses}

The next step in the steepest descent analysis is the opening of
lenses around $S(\mu_1)$, $S(\sigma_2-\mu_2)$ and $S(\mu_3)$. This will be done
in the third and fourth  transformations $U\mapsto T$ and $T\mapsto S$.

\subsection{Third transformation $U \mapsto T$}

In the third transformation we open the lenses around $S(\sigma_2 -\mu_2)$
and $S(\mu_3)$ that will be denoted by $L_2$ and $L_3$, respectively.
The lenses will be closed unbounded sets that do not intersect.
There are three situations, depending on whether $c_2$ and $c_3$ are positive or zero.
We recall that by regularity we can not have  $c_2=c_3=0$.
The three different cases differ in the shapes of the lens, which is due to the
different supports of $S(\sigma_2-\mu_2)$ and $S(\mu_3)$ as illustrated
in Figures \ref{fig:lenses1}, \ref{fig:lenses2} and \ref{fig:lenses3}.

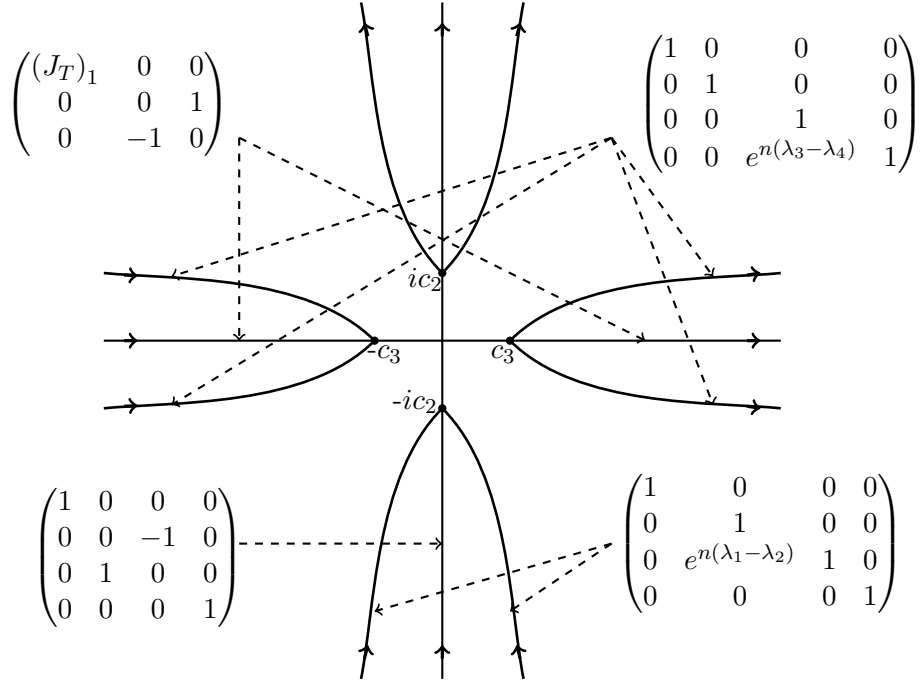
\begin{figure}[t]
\begin{center}
\begin{tikzpicture}[scale=0.9]
\draw[-,thick] (-5,0) -- (5,0);
\draw[-,thick] (0,-5) -- (0,5);
\filldraw [black] (1,0) circle (1.5pt)
    (-1,0) circle (1.5pt)
    (0,-1) circle (1.5pt)
    (0,1) circle (1.5pt);

\draw[very thick](0.9,-.2) node{$c_3$};
\draw[very thick](-0.85,-.2) node{-$c_3$};
\draw[very thick](-0.25,0.9) node{$ i c_2$};
\draw[very thick](-0.4,-0.9) node{-$ i c_2$};

\draw[->,very thick](0,4.5) -- (0,4.7);
\draw[->,very thick](-1.1,4.5) -- (-1.15,4.7);
\draw[->,very thick](1.1,4.5) -- (1.15,4.7);
\draw[->,very thick](0,-4.7) -- (0,-4.5);
\draw[->,very thick](-1.15,-4.7) -- (-1.1,-4.5);
\draw[->,very thick](1.15,-4.7) -- (1.1,-4.5);


\draw[->,very thick](-4.7,0) -- (-4.5,0);
\draw[->,very thick](-4.7,0.97) -- (-4.5,0.96);
\draw[->,very thick](-4.7,-0.97) -- (-4.5,-0.96);

\draw[->,very thick](4.5,0) -- (4.7,0);
\draw[->,very thick](4.5,0.97) -- (4.7,1);
\draw[->,very thick](4.5,-0.97) -- (4.7,-1);

    \draw[line width=1pt] (0,1) .. controls (1,2) and (1,4) .. (1.2,5);
    \draw[line width=1pt] (0,1) .. controls (-1,2) and (-1,4) .. (-1.2,5);
    \draw[line width=1pt] (0,-1) .. controls (1,-2) and (1,-4) .. (1.2,-5);
    \draw[line width=1pt] (0,-1) .. controls (-1,-2) and (-1,-4) .. (-1.2,-5);

    \draw[line width=1pt] (1,0) .. controls (2,1) and (4,0.9) .. (5,1);
    \draw[line width=1pt] (1,0) .. controls (2,-1) and (4,-0.9) .. (5,-1);
    \draw[line width=1pt] (-1,0) .. controls (-2,1) and (-4,0.9) .. (-5,1);
    \draw[line width=1pt] (-1,0) .. controls (-2,-1) and (-4,-0.9) .. (-5,-1);

\draw[->,thick,dashed](-3,3)--(3,0);
\draw[->,thick,dashed](-3,3) --(-3,0);

\draw[->,thick,dashed](-3,-3) --(0,-3);

\draw[->,thick,dashed](2.5,3)--(4,0.95);
\draw[->,thick,dashed](2.5,3)--(4,-0.95);
\draw[->,thick,dashed](2.5,3)--(-4,0.95);
\draw[->,thick,dashed](2.5,3)--(-4,-0.95);

\draw[->,thick,dashed](2.5,-3)--(1,-4);
\draw[->,thick,dashed](2.5,-3)--(-1,-4);

\draw[very thick]
        (5,3.5) node{$\begin{pmatrix} 1 & 0 & 0 & 0 \\ 0 & 1 & 0 & 0 \\
                                0 & 0 & 1 & 0 \\ 0 & 0 & {e}^{n(\lambda_3 - \lambda_4)} & 1 \end{pmatrix}$}
    (-4.8,3.5) node{$\begin{pmatrix} \left(J_T\right)_1 & 0 & 0 \\
                                0 & 0 & 1\\ 0  & -1 & 0 \end{pmatrix}$}
    (4.7,-3) node{$\begin{pmatrix} 1 & 0 & 0 & 0 \\ 0 & 1 & 0 & 0 \\
                                0 & {e}^{n(\lambda_1-\lambda_2)} & 1 & 0 \\ 0 & 0 & 0 & 1 \end{pmatrix}$}
    (-4.5,-3.2) node{$\begin{pmatrix} 1 & 0 & 0 & 0 \\ 0 & 0 & -1 & 0 \\
                                0 & 1 & 0 & 0 \\ 0 & 0 & 0 & 1 \end{pmatrix}$};
\end{tikzpicture}
\end{center}
\caption{Unbounded lenses in case $c_2>0$ and $c_3>0$, corresponding to Cases IV and V.
The figure also shows the jump contour $\Sigma_T$ and some of the jump matrices in
the RH problem \eqref{eq:RHforT} for $T$.}
\label{fig:lenses1}
\end{figure}

\begin{itemize}
\item
In the Cases IV and V we have that both  $c_2$ and $c_3$ are  positive and
we choose  the lips of the lenses such that they have $\pm  i c_2$ and $\pm c_3$ as endpoints,
as shown in Figure~\ref{fig:lenses1}.
\item
In Case III we have $c_2=0$ and $c_3 > 0$, and now we open
the lens such that the lips around $S(\sigma_2-\mu_2)= i \mathbb R$ stay away from the imaginary axis
and intersecting the real line at two points $\pm \gamma_3$ as in Figure~\ref{fig:lenses2}.
We recall that in Case III we have $0\notin S(\mu_1)$ and hence the number  $N$ of intervals
in $S(\mu_1)$ is  even. We then choose $\gamma_3$ such that
\begin{align}\label{eq:gamma3bound}
    0<\gamma_3<\min(a_{N/2+1},c_3), \qquad \text{in Case III}.
\end{align}
The lens around $S(\mu_3)$ is as in the Cases IV and V.
\item In the cases  I and II we have $c_2 > 0$ and $c_3=0$. We then take the
lens around $S(\sigma_2-\mu_2)$   as in the Cases IV and V above, but we choose the
lips of the lens around $S(\mu_3)=\mathbb R$ such that it is away from the real axis
and intersects the imaginary axis at two points $\pm i \gamma_2$ with
\begin{align} \label{eq:gamma2bound}
        0<\gamma_2<c_2, \qquad \text{in Cases I and II}.
\end{align}
\end{itemize}
We choose all lenses to be symmetric with respect to both the real and imaginary axes.

Note that we have $\gamma_3$ in Case III and $\gamma_2$ in Cases I and II.
For ease of presentation we also define
\begin{equation} \label{eq:gamma23are0}
\begin{aligned}
    \gamma_2 & = 0 \qquad \text{in Cases III, IV, and V}, \\
    \gamma_3 & = 0 \qquad \text{in Cases I, II, IV, and V}.
\end{aligned}
\end{equation}

There are further requirements on the lenses that are important for the
steepest descent analysis. These are formulated in the next two lemmas.

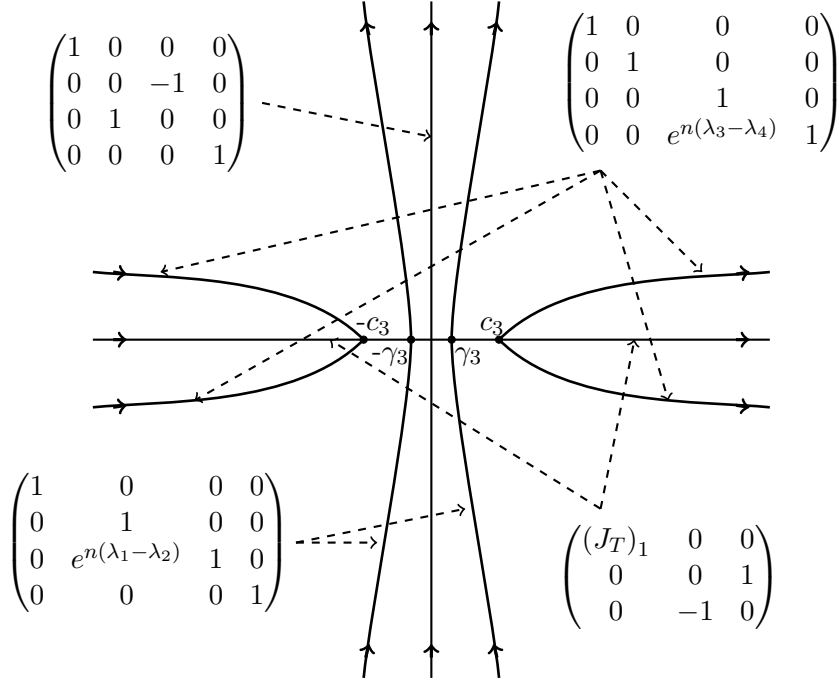
\begin{figure}[t]
\begin{center}
\begin{tikzpicture}[scale=0.9 ]

 \draw[very thick](0.9,.2) node{$c_3$};
 \draw[very thick](-0.85,.2) node{-$c_3$};

\draw [-,thick] (-5,0) -- (5,0);
\draw[-,thick] (0,-5) -- (0,5);
\filldraw [black] (-1,0) circle (1.5pt)
    (1,0) circle (1.5pt);

\filldraw [black] (-0.3,0) circle (1.5pt)     (0.3,0) circle (1.5pt) ;

\draw[->,very thick](0,4.5) -- (0,4.7);
\draw[->,very thick](-0.92,4.5) -- (-0.97,4.7);
\draw[->,very thick](0.92,4.5) -- (0.97,4.7);
\draw[->,very thick](0,-4.7) -- (0,-4.5);
\draw[->,very thick](-0.97,-4.7) -- (-0.92,-4.5);
\draw[->,very thick](0.97,-4.7) -- (0.92,-4.5);


\draw[->,very thick](-4.7,0) -- (-4.5,0);
\draw[->,very thick](-4.7,1) -- (-4.5,0.97);
\draw[->,very thick](-4.7,-1) -- (-4.5,-0.97);

\draw[->,very thick](4.5,0) -- (4.7,0);
\draw[->,very thick](4.5,0.97) -- (4.7,1);
\draw[->,very thick](4.5,-0.97) -- (4.7,-1);

\draw[very thick](.55,-.25) node{$\gamma_3$};
\draw[very thick](-.6,-.25) node{-$\gamma_3$};

    \draw[line width=1pt] (5,1) .. controls (4,0.9) and (2,1) .. (1,0);

    \draw[line width=1pt] (1,0) .. controls (2,-1) and (4,-0.9) .. (5,-1);
    \draw[line width=1pt] (-1,0) .. controls (-2,1) and (-4,0.9) .. (-5,1);
    \draw[line width=1pt] (-1,0) .. controls (-2,-1) and (-4,-0.9) .. (-5,-1);

    \draw[line width=1pt] (0.3,0) .. controls (0.3,1) and (0.9,4) .. (1,5);
    \draw[line width=1pt] (0.3,0) .. controls (0.3,-1) and (0.9,-4) .. (1,-5);
    \draw[line width=1pt] (-0.3,0) .. controls (-0.3,1) and (-0.9,4) .. (-1,5);
    \draw[line width=1pt] (-0.3,0) .. controls (-0.3,-1) and (-0.9,-4) .. (-1,-5);

    \draw[->,thick,dashed](-2.5,3.5) -- (0,3);

\draw[->,thick,dashed](2.5,-2.5)--(3,0);
\draw[->,thick,dashed](2.5,-2.5) --(-1.5,0);

\draw[->,thick,dashed](2.5,2.5)--(4,1);
\draw[->,thick,dashed](2.5,2.5)--(3.5,-0.9);
\draw[->,thick,dashed](2.5,2.5)--(-4,1);
\draw[->,thick,dashed](2.5,2.5)--(-3.5,-0.9);

\draw[->,thick,dashed](-2,-3)--(-0.8,-3);
\draw[->,thick,dashed](-2,-3)--(0.5,-2.5);

\draw[very thick]
        (4,3.8) node{$\begin{pmatrix} 1 & 0 & 0 & 0 \\ 0 & 1 & 0 & 0 \\
                        0 & 0 & 1 & 0 \\ 0 & 0 & {e}^{n(\lambda_3 - \lambda_4)} & 1 \end{pmatrix}$}
    (3.5,-3.5) node{$\begin{pmatrix} \left(J_T\right)_1 & 0 & 0 \\
                        0 & 0 & 1\\ 0  & -1 & 0 \end{pmatrix}$}
    (-4.2,-3) node{$\begin{pmatrix} 1 & 0 & 0 & 0 \\ 0 & 1 & 0 & 0 \\
                        0 & {e}^{n(\lambda_1-\lambda_2)} & 1 & 0 \\ 0 & 0 & 0 & 1 \end{pmatrix}$}
    (-4.2,3.5) node{$\begin{pmatrix} 1 & 0 & 0 & 0 \\ 0 & 0 & -1 & 0 \\
                        0 & 1 & 0 & 0 \\ 0 & 0 & 0 & 1 \end{pmatrix}$};
\end{tikzpicture}
   \caption{Unbounded lenses in case $c_2 = 0$ and $c_3 > 0$, corresponding to Case III.
   The figure also shows the jump contour $\Sigma_T$ and some
   of the jump matrices in the RH problem \eqref{eq:RHforT} for $T$.}
   \label{fig:lenses2}
\end{center}
\end{figure}

\begin{lemma}\label{lem:ineqlensmu2}
We  can and do choose the lens $L_2$ around $S(\sigma_2-\mu_2)$ such that
\[ \Re (\lambda_3 - \lambda_2) < 0 \quad \textrm{ in } L_2\setminus i\mathbb R \]  and
such that
\[ \{z\in \mathbb C \mid |z|>R, \, | \Re(z)|< \eps |\Im(z)|\}\subset L_2\]
for some $\eps>0$ and $R>0$.
\end{lemma}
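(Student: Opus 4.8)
The plan is to pin down the open set $\{\,z:\Re(\lambda_3-\lambda_2)(z)<0\,\}$ and then to slip a lens into it. There are two structural inputs. First, on the open part of $S(\sigma_2-\mu_2)$ the diagonal entries of $(J_U)_2$ in \eqref{eq:JU2explicit} are unimodular, so by Lemma~\ref{lem:jumpslambda} and \eqref{eq:lambdajump3} the boundary values $(\lambda_3-\lambda_2)_\pm$ are purely imaginary there; hence $\Re(\lambda_3-\lambda_2)\equiv 0$ on the open part of $S(\sigma_2-\mu_2)$. Second, differentiating \eqref{eq:deflambdaj} and using \eqref{eq:defxij}, \eqref{eq:densityofmuj} together with $\theta_{1,\pm}=\theta_{2,\mp}$ gives $(\lambda_3-\lambda_2)'=\xi_3-\xi_2$, and $(\xi_3-\xi_2)_+$ equals a positive constant times $\rho_2$ on that same set, with $\rho_2>0$ by Theorem~\ref{theo:theo1}(b). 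Moreover $\xi_3-\xi_2$ is the difference of two branches of the global meromorphic function $\xi$ of Proposition~\ref{prop:globalmeromorphic}, so it vanishes only at the branch points $\pm ic_2$ (when $c_2>0$) and possibly at infinity; in particular $\Re(\lambda_3-\lambda_2)$ has no critical point in the interior of any quadrant.

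From these two inputs the usual Cauchy--Riemann/hard-edge argument applies. Set $u:=\Re(\lambda_3-\lambda_2)$, harmonic off the axes. Since $u=0$ on the open part of $S(\sigma_2-\mu_2)$ and its normal derivative there is a nonzero multiple of $\rho_2$, while $\lambda_3-\lambda_2$ changes sign across the cut, the sign works out so that $u<0$ just off $S(\sigma_2-\mu_2)$ on both sides (consistently with the behaviour at infinity below). In Cases~I, II, IV, V, near the endpoints $\pm ic_2$ one has $\rho_2(z)\sim c_*|z\mp ic_2|^{1/2}$, hence $\xi_3-\xi_2$ has a square-root zero and $(\lambda_3-\lambda_2)(z)=(\lambda_3-\lambda_2)(\pm ic_2)+\kappa_\pm (z\mp ic_2)^{3/2}(1+o(1))$ with $(\lambda_3-\lambda_2)(\pm ic_2)\in i\R$; this is the hard-edge expansion of \cite{DuiKu2}, and one lets the lips of $L_2$ leave $\pm ic_2$ inside the two sectors where $\Re(\kappa_\pm(z\mp ic_2)^{3/2})<0$. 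In Case~III, $c_2=0$, $S(\sigma_2-\mu_2)=i\R$ has no endpoint, and $\rho_2>0$ on all of $i\R$ (no singular interior point at $0$ in the regular case), so $u<0$ on a full two-sided strip around $i\R$; in particular $u<0$ on $(-\gamma_3,0)\cup(0,\gamma_3)$ once $\gamma_3$ is small, and the bound \eqref{eq:gamma3bound} is imposed precisely so that $L_2$ fits into this strip while staying disjoint from the lenses around $S(\mu_1)$ and $S(\mu_3)$.

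For the behaviour at infinity I would use \eqref{eq:asymlambda2}--\eqref{eq:asymlambda3}: the $\tfrac13\log z$ terms cancel and $\lambda_3-\lambda_2=\theta_2-\theta_1+\OO(z^{-2/3})$ in each quadrant, with $\theta_1,\theta_2$ the branches of Lemma~\ref{lem:thetajasymptotics}. A short computation with the leading term $\tfrac34(\tau z)^{4/3}$ shows that $\Re(\theta_2-\theta_1)$ vanishes on $i\R$ and is strictly negative off $i\R$ in a full neighbourhood of $i\R$ at infinity: there are $\eps_0>0$, $R_0>0$ with $u<0$ on $\{\,|z|>R_0,\ |\Re z|<\eps_0|\Im z|\,\}\setminus i\R$. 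Fixing a quadrant, say the second, $u$ is harmonic there, $u\to-\infty$ at infinity, $u=0$ on the part of $i\R^+$ bounding it that lies in $S(\sigma_2-\mu_2)$, and $u>0$ on the gap $(0,ic_2)$ when $c_2>0$ by \eqref{eq:varlambda23} (which is why $L_2$ is kept away from that gap); since $u$ has no interior critical point, $\{u<0\}$ in that quadrant is a single channel joining the neighbourhood of $S(\sigma_2-\mu_2)$ built above to the second-quadrant part of the cone $\{|\Re z|<\eps_0|\Im z|\}$. Running the lip of $L_2$ inside this channel, doing the analogous thing in the other three quadrants, imposing symmetry under $z\mapsto -z$ and $z\mapsto\bar z$, and (in Cases~I and II) using the remaining freedom \eqref{eq:gamma2bound} to keep the lens around $S(\mu_3)$ disjoint, produces a lens $L_2$ with $\Re(\lambda_3-\lambda_2)<0$ on $L_2\setminus i\R$; shrinking $\eps$ below $\eps_0$ and enlarging $R$ beyond $R_0$ gives the cone inclusion.

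The step I expect to be the main obstacle is the global matching in the third paragraph: verifying that $\{u<0\}$ really is one channel running from $S(\sigma_2-\mu_2)$ out to infinity along $i\R$, with no pinching, simultaneously in each of Cases~I--V and compatibly with the lens around $S(\mu_3)$ (and later around $S(\mu_1)$). This is where one genuinely needs the global structure of $\xi$ on $\mathcal R$ and a trajectory analysis of the quadratic differential $(\xi_3-\xi_2)^2\,(dz)^2$ in the spirit of \cite{Mo2, KuiMo}; the local ingredients --- unimodularity of the diagonal of $(J_U)_2$, positivity of $\rho_2$, the $3/2$-power hard-edge expansion, and the quartic leading asymptotics of $\theta_2-\theta_1$ --- are all routine.
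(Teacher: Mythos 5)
Your proposal is correct and its core is the same as the paper's argument: show that $(\lambda_3-\lambda_2)_\pm$ is purely imaginary on $S(\sigma_2-\mu_2)$ with $\frac{d}{dy}\Im\bigl(\pm(\lambda_3-\lambda_2)_\pm(iy)\bigr)=2\pi\,\frac{d(\sigma_2-\mu_2)}{|dz|}(iy)>0$, apply the Cauchy--Riemann equations to get $\Re(\lambda_3-\lambda_2)<0$ just off the cut, and use \eqref{eq:asymlambda2}--\eqref{eq:asymlambda3} with Lemma \ref{lem:thetajasymptotics} for the cone at infinity. The ``global matching'' you flag as the main obstacle does not require any trajectory analysis of $(\xi_3-\xi_2)^2(dz)^2$: since $\rho_2$ is positive and real analytic on the open cut, the Cauchy--Riemann argument gives a neighbourhood of uniform width along any compact piece of $S(\sigma_2-\mu_2)$ away from $\pm ic_2$, and this overlaps with the asymptotic cone for $|z|>R$ and with the local square-root behaviour at the endpoints, so a symmetric unbounded lens can simply be threaded through the union of these regions. (Also, your side remark that $\xi_3-\xi_2$ vanishes only at $\pm ic_2$ is not justified in general, but nothing in the proof depends on it.)
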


\begin{proof}
From \eqref{eq:deflambdaj} we obtain
\[ \lambda_3 - \lambda_2 = 2g_2 - g_1 - g_3 + \theta_2 - \theta_1 \]
and for $z \in S(\sigma_2-\mu_2)$ we have by \eqref{eq:jumpg2oniR} and \eqref{eq:thetacontinuations} that
\begin{align*}
    (\lambda_3 - \lambda_2)_{\pm}(z)
    & = \pm \left(g_{2,+}(z) - g_{2,-}(z) + \theta_{1,-}(z) - \theta_{1,+}(z) \right).
\end{align*}
Using \eqref{eq:jumpg2oniR0}, \eqref{eq:sigma2withs1} and
\eqref{eq:thetajder} we can further rewrite this to
   \begin{align*} (\lambda_3 - \lambda_2)_{\pm}(z)
 =\begin{cases}
  \pm \frac{2}{3} \pi i \pm 2\pi i (\sigma_2 - \mu_2)([0,z])& \textrm{for }\Im z>0,\\
  \mp \frac{2}{3} \pi i \mp 2\pi i (\sigma_2 - \mu_2)([z,0])& \textrm{for }\Im z<0,
  \end{cases}
    \end{align*}
which is purely imaginary.  Then
\[ \frac{d}{dy} \Im \left( \pm (\lambda_3-\lambda_2)_{\pm} (iy)\right)
    = 2\pi \frac{d(\sigma_2 - \mu_2)}{|dz|} (iy)  \]
which is positive for  $y \in \mathbb R$ in Case III, and for $|y| > c_2 > 0$ in
the other cases.

The first statement of the lemma then follows from the Cauchy Riemann equations
since $\Re (\lambda_3 - \lambda_2)_{\pm} = 0$ on the imaginary axis.
The second statement follows then from the asymptotic behavior
of $\lambda_2$ and $\lambda_3$ as given in \eqref{eq:asymlambda2}, \eqref{eq:asymlambda3}
and Lemma \ref{lem:thetajasymptotics}.
\end{proof}

\begin{figure}[t]
\begin{center}

\begin{tikzpicture}[scale=0.9]
\draw[-,,thick] (-5,0) -- (5,0);
\draw[-,thick] (0,-5) -- (0,5);
\filldraw [black] (0,-1) circle (1.5pt)
    (0,1) circle (1.5pt);
\filldraw [black] (0,0.3) circle (1.5pt)     (0,-0.3) circle (1.5pt) ;

\draw[very thick](0.3,0.55) node{$i\gamma_2$};
\draw[very thick](0.35,-0.55) node{-$i\gamma_2$};

\draw[very thick](-0.3,0.9) node{$ i c_2$};
\draw[very thick](-0.4,-0.9) node{-$ i c_2$};

\draw[->,very thick](0,4.5) -- (0,4.7);
\draw[->,very thick](-1.1,4.5) -- (-1.15,4.7);
\draw[->,very thick](1.1,4.5) -- (1.15,4.7);
\draw[->,very thick](0,-4.7) -- (0,-4.5);
\draw[->,very thick](-1.15,-4.7) -- (-1.1,-4.5);
\draw[->,very thick](1.15,-4.7) -- (1.1,-4.5);


\draw[->,very thick](-4.7,0) -- (-4.5,0);
\draw[->,very thick](-4.7,1) -- (-4.5,0.97);
\draw[->,very thick](-4.7,-1) -- (-4.5,-0.97);

\draw[->,very thick](4.5,0) -- (4.7,0);
\draw[->,very thick](4.5,0.97) -- (4.7,1);
\draw[->,very thick](4.5,-0.97) -- (4.7,-1);

\draw[->,thick,dashed](3.5,-2.5)--(2,0);

\draw[->,thick,dashed](-2.5,-2.5)--(0.95,4);
\draw[->,thick,dashed](-2.5,-2.5)--(-0.95,4);
\draw[->,thick,dashed](-2.5,-2.5)--(0.95,-4);
\draw[->,thick,dashed](-2.5,-2.5)--(-0.95,-4);

\draw[->,thick,dashed](-2.5,3.5)--(0,3);
\draw[->,thick,dashed](-2.5,3.5)--(0,-3);

\draw[->,thick,dashed](3.5,2.5)--(4,0.9);
\draw[->,thick,dashed](3.5,2.5)--(4,-0.9);

\draw[very thick]
        (4,3.8) node{$\begin{pmatrix} 1 & 0 & 0 & 0 \\ 0 & 1 & 0 & 0 \\
                        0 & 0 & 1 & 0 \\ 0 & 0 & {e}^{n(\lambda_3 - \lambda_4)} & 1 \end{pmatrix}$}
    (4,-3.5) node{$\begin{pmatrix} \left(J_T\right)_1 & 0 & 0 \\
                        0 & 0 & 1\\ 0  & -1 & 0 \end{pmatrix}$}
    (-4.8,-3) node{$\begin{pmatrix} 1 & 0 & 0 & 0 \\ 0 & 1 & 0 & 0 \\
                        0 & {e}^{n(\lambda_1-\lambda_2)} & 1 & 0 \\ 0 & 0 & 0 & 1 \end{pmatrix}$}
    (-4.2,3.5) node{$\begin{pmatrix} 1 & 0 & 0 & 0 \\ 0 & 0 & -1 & 0 \\
                        0 & 1 & 0 & 0 \\ 0 & 0 & 0 & 1 \end{pmatrix}$};

    \draw[line width=1pt] (0,1) .. controls (1,2) and (1,4) .. (1.2,5);
    \draw[line width=1pt] (0,1) .. controls (-1,2) and (-1,4) .. (-1.2,5);
    \draw[line width=1pt] (0,-1) .. controls (1,-2) and (1,-4) .. (1.2,-5);
    \draw[line width=1pt] (0,-1) .. controls (-1,-2) and (-1,-4) .. (-1.2,-5);

    \draw[line width=1pt] (0,0.3) .. controls (1,0.3) and (4,0.9) .. (5,1);
    \draw[line width=1pt] (-0,0.3) .. controls (-1,0.3) and (-4,0.9) .. (-5,1);
    \draw[line width=1pt] (0,-0.3) .. controls (1,-0.3) and (4,-0.9) .. (5,-1);
    \draw[line width=1pt] (-0,-0.3) .. controls (-1,-0.3) and (-4,-0.9) .. (-5,-1);
\end{tikzpicture}
\end{center}
\caption{Unbounded lenses in case $c_2>0$ and $c_3=0$, corresponding to Cases I and II.
The figure also shows the jump contour $\Sigma_T$ and some of the jump matrices
in the RH problem \eqref{eq:RHforT} for $T$.}
\label{fig:lenses3}
\end{figure}
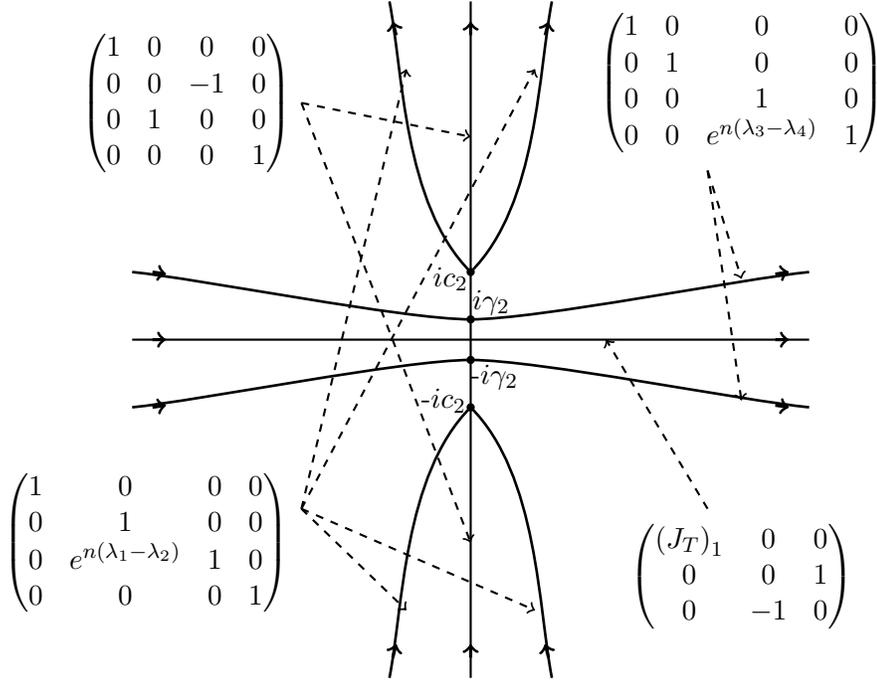

\begin{lemma} \label{lem:ineqlensmu3}
We can and do choose the lens $L_3$ around $S(\mu_3)$ such that
\[ \Re (\lambda_3 - \lambda_4) < 0 \quad \text{ in } L_3\setminus \mathbb R\]
and such that
\[ \{z\in \mathbb C \mid |z|>R, \, | \Im(z) |<\eps |\Re(z)|\} \subset L_3\]
for some $\eps>0$ and $R>0$.
\end{lemma}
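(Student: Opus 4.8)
The plan is to mirror the proof of Lemma \ref{lem:ineqlensmu2} exactly, with the roles of $\lambda_2$ and $\lambda_3$ on the imaginary axis replaced by $\lambda_4$ and $\lambda_3$ on the real axis, and the constraint $\sigma_2$ replaced by the external field $V_3$. First I would express $\lambda_3 - \lambda_4$ in terms of $g$-functions and $\theta$-functions using \eqref{eq:deflambdaj}, namely $\lambda_3 - \lambda_4 = g_2 - 2g_3 + \theta_2 - \theta_3$. On $S(\mu_3)$ I would take boundary values and use the jump relations: by \eqref{eq:jumpg3onR} together with \eqref{eq:V3claim} (or \eqref{eq:V3bis}) and the continuation \eqref{eq:thetacontinuations} for $\theta_2,\theta_3$, the combination $(\lambda_3 - \lambda_4)_\pm$ on $S(\mu_3)$ collapses to $\pm$ a quantity built from $g_{3,+} - g_{3,-}$ and the difference $\theta_{2,+} - \theta_{2,-}$ (equivalently $\theta_{3,-} - \theta_{3,+}$). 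Using \eqref{eq:jumpg3onR0} this shows $(\lambda_3 - \lambda_4)_\pm(x)$ is purely imaginary on $S(\mu_3)$, with $\frac{d}{dx}\Im\bigl(\pm(\lambda_3-\lambda_4)_\pm(x)\bigr) = 2\pi\,\frac{d\mu_3}{dx}(x) > 0$ on the interior of $S(\mu_3)$ (recall $S(\mu_3) = \mathbb R\setminus(-c_3,c_3)$ and $\rho_3 > 0$ away from $\pm c_3$ by Theorem \ref{theo:theo1}(c)).

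Next, since $\Re(\lambda_3 - \lambda_4)_\pm = 0$ on $S(\mu_3)$ and the tangential derivative of the imaginary part is strictly positive there, the Cauchy–Riemann equations give that the normal derivative of $\Re(\lambda_3 - \lambda_4)$ is strictly negative just off $S(\mu_3)$ on both sides (with the appropriate sign convention on the two lips). Hence $\Re(\lambda_3 - \lambda_4) < 0$ in a thin neighborhood of the interior of $S(\mu_3)$ minus $\mathbb R$ itself; this fixes the lens locally near $S(\mu_3)$. One must also check the behavior near the endpoints $\pm c_3$ (where $\rho_3$ vanishes as a square root, in the regular case) and, in Cases I and II where $c_3 = 0$ and the lips cross the imaginary axis at $\pm i\gamma_2$ with $0 < \gamma_2 < c_2$, that the lens can still be routed so that $\Re(\lambda_3 - \lambda_4) < 0$ there — this uses \eqref{eq:varlambda43}, i.e.\ $\Re(\lambda_{4,+} - \lambda_{3,-}) < 0$ on $\mathbb R\setminus S(\mu_3)$, which in Cases I, II means on all of $\mathbb R$ minus the band endpoints, giving room to open the lens slightly into the complex plane.

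For the second assertion, the lens $L_3$ must contain a full sector $\{|z| > R,\ |\Im z| < \varepsilon|\Re z|\}$ around the real axis at infinity. This follows from the large-$z$ asymptotics: by \eqref{eq:asymlambda3}, \eqref{eq:asymlambda4} and Lemma \ref{lem:thetajasymptotics}, $\lambda_3 - \lambda_4 = \theta_2 - \theta_3 + \OO(z^{-2/3}\log z)$, and in the right half-plane near the real axis $\theta_2 - \theta_3$ has leading term $\tfrac34(\tau z)^{4/3}(1 - \omega^2) + \cdots$ (respectively $1 - \omega$ in the left half-plane), whose real part is strictly negative and grows like $|z|^{4/3}$ in a full sector about $\mathbb R$ (one checks the sign from $\Re(1-\omega^2) > 0$ and that $(\tau z)^{4/3}$ stays close to the positive real axis for $z$ near $\mathbb R^+$). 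So for $R$ large and $\varepsilon$ small the inequality $\Re(\lambda_3 - \lambda_4) < 0$ holds throughout that sector, and we simply declare the sector to be part of $L_3$; continuity and compactness of the remaining annular region $\{R_0 \le |z| \le R\}$ minus a neighborhood of $S(\mu_3)$, together with \eqref{eq:varlambda43}, let us join the local and far-field pieces into a single admissible unbounded lens.

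The main obstacle I anticipate is not the interior estimate — that is a verbatim Cauchy–Riemann argument — but the global routing of the lens in the various cases, especially making the two halves of $L_3$ avoid the imaginary axis and the lens $L_2$ while keeping $\Re(\lambda_3 - \lambda_4) < 0$ throughout, in particular near $0$ in Cases I and II (where $c_3 = 0$ and the lips cross at $\pm i\gamma_2$) and near the endpoints $\pm c_3$ in the other cases. Handling this requires invoking the strict inequality \eqref{eq:varlambda43} (valid by the regularity hypothesis) to get a neighborhood of $\mathbb R\setminus S(\mu_3)$ on which the real part is still negative, and then patching; I would present this patching somewhat briefly, as it is analogous to the corresponding step in \cite{DuiKu2}.
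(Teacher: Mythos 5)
Your proposal follows the paper's own proof essentially step for step: write $\lambda_3-\lambda_4=g_2-2g_3+\theta_2-\theta_3$ via \eqref{eq:deflambdaj}, show the boundary values on $S(\mu_3)$ are purely imaginary with strictly increasing imaginary part, apply the Cauchy--Riemann equations near $S(\mu_3)$, and use the large-$z$ asymptotics to absorb a full sector around $\mathbb R$ into the lens. Two computational slips need fixing. First, the tangential derivative on $S(\mu_3)$ is not $2\pi\,d\mu_3/dx$ but $2\pi\,d(\sigma_3+\mu_3)/dx$: the boundary values $\theta_{2,\pm}-\theta_{3,\pm}$ contribute, besides the real part $V_3$ that cancels against the Euler--Lagrange identity \eqref{eq:jumpg3onR}, a purely imaginary part $\pm 2\pi i\,\sigma_3(\cdot)$ coming from \eqref{eq:defsigma3}; you acknowledge this term in words (``the difference $\theta_{2,+}-\theta_{2,-}$'') and then drop it from the formula. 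Since $\sigma_3\geq 0$, positivity of the derivative, and hence the Cauchy--Riemann conclusion, survives the omission, so this error is harmless but should be corrected.

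The second slip is in the far-field analysis and, as written, the stated reason does not establish the claim. By Lemma \ref{lem:thetajasymptotics} the leading coefficient of $\lambda_3-\lambda_4$ in the \emph{right} half-plane is $\tfrac34(\omega-\omega^2)(\tau z)^{4/3}=\tfrac{3\sqrt3}{4}\,i\,(\tau z)^{4/3}$ in quadrant I (and $\tfrac34(\omega^2-\omega)(\tau z)^{4/3}$ in quadrant IV), not $\tfrac34(1-\omega^2)(\tau z)^{4/3}$; the coefficients $1-\omega^2$ and $1-\omega$ occur in quadrants II and III respectively. Moreover the inference ``$\Re(1-\omega^2)>0$ and $(\tau z)^{4/3}$ is close to the positive real axis, hence the real part is strictly negative'' is incoherent: a coefficient with positive real part times a quantity near the positive real axis would give \emph{positive} real part. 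The correct mechanism is that the coefficient in the right half-plane is purely imaginary, so for $z=re^{i\phi}$ with $\phi$ small one gets $\Re(\lambda_3-\lambda_4)\approx -\tfrac{3\sqrt3}{4}(\tau r)^{4/3}\sin(4|\phi|/3)<0$ off the axis; in particular the real part is not of size $|z|^{4/3}$ uniformly in the sector but vanishes linearly in the angle, consistent with $(\lambda_3-\lambda_4)_\pm$ being purely imaginary on $\mathbb R$ itself. With the coefficients and the sign argument corrected, the sector inclusion and the patching you describe go through as in the paper.
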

\begin{proof}
The proof is similar to the proof of  Lemma \ref{lem:ineqlensmu2}.

Using \eqref{eq:deflambdaj} we obtain
\begin{align*}
\lambda_3-\lambda_4= -2g_3+g_2+\theta_2-\theta_3.
\end{align*}
and hence
\begin{align*}
\lambda_{3,\pm}-\lambda_{4,\pm}= -g_{3,+}-g_{3,-}+g_{2,\pm}+\theta_{2,\pm}-\theta_{3,\pm}\pm(g_{3,-}-g_{3,+}).
\end{align*}
Now by \eqref{eq:V3bis} and \eqref{eq:defsigma3}
\[  \theta_{2,\pm}(x) -\theta_{3,\pm}(x) =
    \begin{cases}
    V_3(x) \pm 2\pi i \sigma_3([x^*(\alpha),x]), & x > 0, \\
    V_3(x) \mp 2 \pi i \sigma_3([x, -x^*(\alpha)]), & x < 0.
    \end{cases}
\]
Combining this with \eqref{eq:jumpg3onR} and \eqref{eq:jumpg3onR0} leads to
\begin{align*}
    \lambda_{3,\pm}(x)-\lambda_{4,\pm}(x)=
        \pm 2\pi i \times \begin{cases} \sigma_3([x^*(\alpha),x]) - \mu_3([x,\infty)), & x>c_3,\\
            -\sigma_3([x,-x^*(\alpha)) - \mu_3([x,\infty)) - \frac{1}{3},  & x<-c_3,
            \end{cases}
\end{align*}
which is purely imaginary.  We find
\[ \frac{d}{dx} \Im \left( \pm (\lambda_3-\lambda_4)_{\pm} (x)\right)
    = 2\pi \frac{d(\sigma_3 + \mu_3)}{dx} (x)  \]
and this is positive for  $x \in \mathbb R$ in Cases I and II, and
for $|x| > c_3 > 0$ in the other cases.

As in Lemma \ref{lem:ineqlensmu2} the first statement of the lemma now follows
by the Cauchy Riemann equations.
The second statement follows by \eqref{eq:asymlambda3}, \eqref{eq:asymlambda4},
and Lemma \ref{lem:thetajasymptotics}.
\end{proof}

Now that we have defined the lenses $L_2$ and $L_3$  we can come to the actual definition of
the transformation $U \mapsto T$. The transformation   is based on the
following factorization of $(J_U)_2$ on $S(\sigma_2 - \mu_2)$ as given by \eqref{eq:JU2explicit} (recall
that $\lambda_{2,\pm} = \lambda_{3,\mp}$)
\begin{align} \nonumber
(J_U)_2 & = \begin{pmatrix} {e}^{n(\lambda_{3}- \lambda_{2})_-} & 0 \\
    1 & {e}^{n(\lambda_{3} - \lambda_{2})_+} \end{pmatrix} \\
    & = \begin{pmatrix} 1 & {e}^{n(\lambda_{3}- \lambda_{2})_-} \\ 0 & 1 \end{pmatrix}
    \begin{pmatrix} 0 & -1 \\ 1 & 0 \end{pmatrix}
    \begin{pmatrix} 1 & {e}^{n(\lambda_{3}- \lambda_{2})_+} \\ 0 & 1 \end{pmatrix}
    \label{eq:JU2factorization}
\end{align}
and  the factorization
of $(J_U)_3$ on $S(\mu_3)$ as given by \eqref{eq:JU3explicit} (recall
that $\lambda_{3,\pm} = \lambda_{4,\mp}$ or $\lambda_{3,\pm} = \lambda_{4,\mp} \pm \frac{2}{3} \pi i$
and $n$ is a multiple of three)
\begin{align} \nonumber
(J_U)_3 & = \begin{pmatrix} {e}^{n(\lambda_{3}- \lambda_{4})_+} & 1 \\
    0 & {e}^{n(\lambda_{3} - \lambda_{4})_-} \end{pmatrix} \\
    & = \begin{pmatrix} 1 & 0 \\ {e}^{n(\lambda_{3}- \lambda_{4})_-}  & 1 \end{pmatrix}
    \begin{pmatrix} 0 & 1 \\ -1 & 0 \end{pmatrix}
    \begin{pmatrix} 1 & 0 \\ {e}^{n(\lambda_{3}- \lambda_{4})_+} & 1 \end{pmatrix}.
    \label{eq:JU3factorization}
\end{align}
This leads to the following definition of $T$.

\begin{definition} \label{def:T}
We define the $4 \times 4$ matrix valued function $T$ by
\begin{equation} \label{eq:defT1}
    T = U \times
    \begin{cases}
    \begin{pmatrix} 1 & 0 & 0 & 0 \\
            0 & 1 & -{e}^{n(\lambda_{3}- \lambda_{2})} & 0 \\ 0 & 0 & 1 & 0 \\ 0 & 0 & 0 & 1\end{pmatrix}
        &  \begin{array}{l} \text{in the left part}\\
    \text{of the lens}\\ \text{around } S(\sigma_2 - \mu_2),\end{array} \\
    \begin{pmatrix} 1 & 0 & 0 & 0 \\
    0 & 1 & {e}^{n(\lambda_{3}- \lambda_{2})} & 0  \\ 0 & 0 & 1 & 0 \\ 0 & 0 & 0 & 1 \end{pmatrix}
    &  \begin{array}{l} \text{in the right part} \\ \text{of the lens } \\ \text{around } S(\sigma_2 - \mu_2).\end{array}
    \end{cases} \end{equation}

\begin{equation}\label{eq:defT2}
    T = U \times
    \begin{cases}
    \begin{pmatrix} 1 & 0 & 0 & 0 \\
    0 & 1 & 0 & 0 \\
    0 & 0 & 1 & 0 \\
    0 & 0 & -{e}^{n(\lambda_{3}- \lambda_{4})} & 1 \end{pmatrix}
    & \begin{array}{l} \text{in the upper part} \\ \text{of the lens}\\ \text{around } S(\mu_3),\end{array} \\
    \begin{pmatrix} 1 & 0 & 0 & 0 \\
    0 & 1 & 0 & 0 \\
    0 & 0 & 1 & 0 \\
    0 & 0 & {e}^{n(\lambda_{3}- \lambda_{4})} & 1 \end{pmatrix}
    &  \begin{array}{l} \text{in the lower part} \\ \text{of the lens}\\ \text{around } S(\mu_3),\end{array}\end{cases}
\end{equation}
and
\begin{equation} \label{eq:defT3}
    T= U \ \text{ elsewhere}.
    \end{equation}
  \end{definition}

 Then $T$ is defined and analytic in $\mathbb C \setminus \Sigma_T$ where $\Sigma_T$ is the contour
 consisting of the real and imaginary axes and the lips of the lenses around $S(\sigma_2 - \mu_2)$
 and $S(\mu_3)$.

\subsection{RH problem for $T$}

Now $T$ solves the following RH problem
\begin{equation}\label{eq:RHforT}
\left\{\begin{array}{l}
    T \textrm{ is analytic in } \mathbb C\setminus \Sigma_T, \\[5pt]
    T_+ = T_- J_T, \qquad \text{on }\Sigma_T, \\
    T(z)= (I+ \OO(z^{-1/3}))
    \diag \begin{pmatrix} 1 & z^{\frac{1}{3}}& 1 & z^{-\frac{1}{3}} \end{pmatrix}
    \begin{pmatrix} 1 & 0 \\ 0 & A_j \end{pmatrix} \\
     \hfill \textrm{as $z\to \infty$ in the $j$th quadrant},
    \end{array}\right.
\end{equation}
with certain jump matrices $J_T$ that will be described in the next subsection.
The matrices $A_j$ are given in \eqref{eq:A1A2}--\eqref{eq:A3A4}. The contour $\Sigma_T$ and some of the jump
matrices $J_T$ are shown in Figures \ref{fig:lenses1}, \ref{fig:lenses2} and \ref{fig:lenses3}.

Since the lenses around $S(\sigma_2-\mu_2)$ and $S(\mu_3)$ are unbounded,
we have to be careful about the asymptotic behavior of $T(z)$ as $z\to \infty$, since it could be
different from the asymptotic behavior of $U(z)$. However, a simple check using
\eqref{eq:asymlambda2}-\eqref{eq:asymlambda4}, \eqref{eq:defT1} and \eqref{eq:defT2}
shows that the asymptotic behavior is the same.

However, it is good to note the following. The asymptotic behavior of the Pearcey functions
given in \eqref{eq:Pnasymptotics} is not uniform up to the real and imaginary axes.
By following the transformations $Y\mapsto X\mapsto U$ we see that the same
is true for the asymptotic behavior of $U$. It requires an independent check
that after opening of the unbounded lenses the asymptotic behavior of $T$ is in fact uniform
in each of the quadrants.  This phenomenon also appeared in \cite{DuiKu2}.

\subsection{Jump matrices for $T$}

Our next task is to compute the jump matrices $J_T$ in the Riemann-Hilbert problem \eqref{eq:RHforT} for $T$.
The definitions \eqref{eq:defT1} and \eqref{eq:defT2} of $T$  and the structure
of the jump matrix $J_U$ as given in \eqref{eq:JUonR} and \eqref{eq:JUoniR} yield that
\[ T_+ = T_- J_T \]
where $J_T$ has again the structure
\begin{align} \label{eq:jumpTstructure}
 J_T = \begin{cases} \begin{pmatrix} (J_T)_1 & 0 \\ 0 & (J_T)_3 \end{pmatrix} &
    \begin{array}{l} \text{on $\mathbb R \setminus (-\gamma_3, \gamma_3)$ and} \\
        \text{on the lips of the lens} \\ \text{around $S(\mu_3)$}, \end{array} \\[20pt]
      \begin{pmatrix} 1 & 0 & 0 \\ 0 & (J_T)_2 & 0 \\ 0 & 0 & 1 \end{pmatrix} &
    \begin{array}{l} \text{on $i\mathbb R \setminus (-i\gamma_2, i\gamma_2)$ and} \\
        \text{on the lips of the lens} \\ \text{around $S(\sigma_2 -\mu_2)$}, \end{array}
        \end{cases}
     \end{align}
with $2 \times 2$ blocks $(J_T)_k$ for $k=1,2,3$.  The block structure as in \eqref{eq:jumpTstructure}
is not valid on the intervals $(-\gamma_3,\gamma_3)$ and $(-i \gamma_2,i \gamma_2)$ (if non-empty).
On these intervals the block structure changes to
\begin{equation} \label{eq:JToffdiagonal}
\begin{aligned}
    J_T & = \begin{pmatrix} (J_T)_1 & * \\ 0 & (J_T)_3 \end{pmatrix} \qquad  \text{on } (-\gamma_3, \gamma_3), \\
  J_T & = \begin{pmatrix} 1 & 0 & 0 \\ 0 & (J_T)_2 & 0 \\ 0 & * & 1 \end{pmatrix} \qquad \text{on } (-i \gamma_2, i \gamma_2),
     \end{aligned}
\end{equation}
with some non-zero entries that are denoted by $*$, see \eqref{eq:jumpTcase2a} and \eqref{eq:jumpTcase3a} below.

The diagonal blocks are given in the next lemma.
\begin{lemma}
\begin{enumerate}
\item[\rm (a)] For $(J_T)_1$ we have
\begin{align}\label{eq:JT1explicit}
    (J_T)_1 =\begin{cases} (J_U)_1  &\text{ on } \mathbb R,\\
    I_2 &\begin{array}{l} \text {on  the lips of the lens around }S(\mu_3),\end{array} \end{cases}
 \end{align}
  where $(J_U)_1$ is given by \eqref{eq:JU1explicit}.
\item[\rm (b)] For $(J_T)_2$ we have
\begin{align}   \label{eq:JT2explicit}
    (J_T)_2 =\begin{cases} \begin{pmatrix} 0 & -1 \\ 1 & 0 \end{pmatrix}
            &\text{ on } S(\sigma_2 - \mu_2), \\
  \begin{pmatrix}    1 & {e}^{n(\lambda_{3}- \lambda_{2})}  \\ 0 & 1 \end{pmatrix}
   & \begin{array}{l} \text{on the lips of the lens}\\ \text{around } S(\sigma_2 - \mu_2), \end{array} \\
    \begin{pmatrix}
    1 & 0 \\
    {e}^{n(\lambda_{2,+}-\lambda_{3,-})} & 1
    \end{pmatrix},& \text{ on } (-ic_2, ic_2).
    \end{cases}
     \end{align}
\item[\rm (c)]  For $(J_T)_3$ we have
\begin{align} \label{eq:JT3explicit}
 (J_T)_3 = \begin{cases} \begin{pmatrix} 0 & 1 \\ -1 & 0 \end{pmatrix}
    &\text{ on } S(\mu_3), \\
 \begin{pmatrix}  1 & 0 \\
     {e}^{n(\lambda_{3}- \lambda_{4})} & 1  \end{pmatrix}
    &   \begin{array}{l} \text{on the lips of the lens}\\ \text{around } S(\mu_3), \end{array} \\
    \begin{pmatrix}
    (-1)^n & {e}^{n(\lambda_{4,+}-\lambda_{3,-})}  \\
    0 & (-1)^n
    \end{pmatrix}& \text{ on } (-c_3, c_3).
  \end{cases}
    \end{align}
 \end{enumerate}
 \end{lemma}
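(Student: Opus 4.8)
The plan is to read every entry of $J_T$ directly off the definition of $T$ in Definition~\ref{def:T}, combined with the already computed jumps \eqref{eq:JU1explicit}--\eqref{eq:JU3explicit} of $U$ and the triangular factorizations \eqref{eq:JU2factorization}, \eqref{eq:JU3factorization}. The only genuine work is bookkeeping: on each arc one must identify which side is the $+$-side and which the $-$-side (dictated by the orientations in Figures~\ref{fig:lenses1}--\ref{fig:lenses3}), insert the appropriate factors from \eqref{eq:defT1}--\eqref{eq:defT3}, and track signs. I would organize the argument blockwise.

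\textbf{Block $(J_T)_1$.} The conjugating factors in \eqref{eq:defT1} and \eqref{eq:defT2} are the identity in rows and columns $1$ and $4$ (they act only on the $\{2,3\}$-block, resp.\ the $\{3,4\}$-block), so the upper-left $2\times2$ block of $T$ agrees with that of $U$ everywhere on $\mathbb R$; hence $(J_T)_1=(J_U)_1$ there, which is \eqref{eq:JU1explicit}. On the lips of the lens around $S(\mu_3)$ the function $U$ is analytic, while $T$ and $U$ differ only by a factor supported in the $\{3,4\}$-block, whose $\{1,2\}$-block is $I_2$; thus $(J_T)_1=I_2$ on those lips. This gives \eqref{eq:JT1explicit}.

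\textbf{Block $(J_T)_3$.} On a lip of the lens around $S(\mu_3)$ one side has $T=U$ and the other $T=U M$ with $M$ the relevant factor of \eqref{eq:defT2}; since $U$ has no jump there, $(J_T)_3$ equals the $\{3,4\}$-block of $M^{\pm1}$, which is $\left(\begin{smallmatrix}1&0\\ e^{n(\lambda_3-\lambda_4)}&1\end{smallmatrix}\right)$ once the orientation is taken into account (here $\lambda_3-\lambda_4$ continues analytically across each lip, away from $S(\mu_3)$). On $S(\mu_3)$ itself the $\pm$-boundary values of $T$ differ from those of $U$ by the lower and upper factors of \eqref{eq:defT2}, so the $\{3,4\}$-block of $J_T$ equals
\[
\begin{pmatrix} 1 & 0 \\ -e^{n(\lambda_3-\lambda_4)_-} & 1 \end{pmatrix}
(J_U)_3
\begin{pmatrix} 1 & 0 \\ -e^{n(\lambda_3-\lambda_4)_+} & 1 \end{pmatrix};
\]
inserting the factorization \eqref{eq:JU3factorization} of $(J_U)_3$, the outer triangular factors cancel and one is left with $\left(\begin{smallmatrix}0&1\\-1&0\end{smallmatrix}\right)$. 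Finally, on $(-c_3,c_3)$ (empty unless $c_3>0$) no lens is opened, since the lens has endpoints $\pm c_3$; hence $T=U$ and $(J_T)_3=(J_U)_3$ is read off \eqref{eq:JU3explicit}. This yields \eqref{eq:JT3explicit}.

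\textbf{Block $(J_T)_2$.} The argument is identical with $S(\sigma_2-\mu_2)$ in place of $S(\mu_3)$, the factorization \eqref{eq:JU2factorization} in place of \eqref{eq:JU3factorization}, and the factors of \eqref{eq:defT1} (supported in the $\{2,3\}$-block) in place of those of \eqref{eq:defT2}: on the lips one gets $\left(\begin{smallmatrix}1& e^{n(\lambda_3-\lambda_2)}\\0&1\end{smallmatrix}\right)$, on $S(\sigma_2-\mu_2)$ the outer factors cancel leaving $\left(\begin{smallmatrix}0&-1\\1&0\end{smallmatrix}\right)$, and on $(-ic_2,ic_2)$ no lens is opened so $(J_T)_2=(J_U)_2$ as in \eqref{eq:JU2explicit}. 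This proves \eqref{eq:JT2explicit} and completes the lemma. The main obstacle is not conceptual but one of uniformity: the shapes of the lenses — hence which lips cross which axis and where the relevant intervals $(-\gamma_3,\gamma_3)$ and $(-i\gamma_2,i\gamma_2)$ sit — differ across Cases I--V, so the orientation and sign bookkeeping above has to be checked to give the same outcome in every case, which is exactly where the placement of the lenses guaranteed by Lemmas~\ref{lem:ineqlensmu2} and \ref{lem:ineqlensmu3} is used.
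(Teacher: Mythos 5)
Your proposal is correct and follows essentially the same route as the paper: read $J_T = M_-^{-1} J_U M_+$ off Definition~\ref{def:T}, use the factorizations \eqref{eq:JU2factorization} and \eqref{eq:JU3factorization} so the outer triangular factors cancel on $S(\sigma_2-\mu_2)$ and $S(\mu_3)$, and note that the conjugating factors leave the other diagonal blocks untouched. One small imprecision: on $(-c_3,c_3)$ in Case III (and on $(-ic_2,ic_2)$ in Cases I--II) it is not literally true that $T=U$, since the other lens crosses that interval on $(-\gamma_3,\gamma_3)$ (resp.\ $(-i\gamma_2,i\gamma_2)$); the conclusion $(J_T)_3=(J_U)_3$ (resp.\ $(J_T)_2=(J_U)_2$) for the \emph{diagonal block} still holds because the extra factor only produces the off-diagonal $*$ entries of \eqref{eq:JToffdiagonal}, exactly as in \eqref{eq:jumpTcase2a} and \eqref{eq:jumpTcase3a}, which you correctly defer to the case-by-case check at the end.
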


\begin{proof}
The jumps follow from the jumps $J_U$ for $U$ in \eqref{eq:JU1explicit}--\eqref{eq:JU3explicit}
and the transformation \eqref{eq:defT1}--\eqref{eq:defT3}, where we use the
factorizations \eqref{eq:JU2factorization} and \eqref{eq:JU3factorization}.

Outside the lenses the blocks $(J_T)_2$ and $(J_T)_3$ have not changed,
and so on $(-ic_2, ic_2) = i \mathbb R \setminus S(\sigma_2-\mu_2)$
we have $(J_T)_2 = (J_U)_2$ and on $(-c_3,c_3) = \mathbb R \setminus S(\mu_3)$
we have $(J_T)_3 = (J_U)_3$.
See also \eqref{eq:jumpTcase3a} and \eqref{eq:jumpTcase2a} for the result
of the calculation of the full jump matrix $J_T$ on the  intervals $(-i \gamma_2, i\gamma_2)$
and $(-\gamma_3, \gamma_3)$.
\end{proof}

It remains to describe the off-diagonal entries in \eqref{eq:JToffdiagonal}
that only occur in case $\gamma_3 > 0$ or $\gamma_2 > 0$, that is in Cases III or
I/II.

\paragraph{Case III: $\gamma_3 > 0$.} In Case III we have $c_2 = 0$
and the lips of the lens  around $S(\sigma_2-\mu_2)$ intersect the real line at $\pm \gamma_3$ with $\gamma_3 > 0$.

In order to compute the jump matrix $J_T$ on $(-\gamma_3,\gamma_3)$, we have to note that by the regularity
assumption we have that $0\notin S(\mu_1)\cup S(\mu_3)$, and by the choice of $\gamma_3$ in \eqref{eq:gamma3bound}
we have that $[-\gamma_3,\gamma_3]$ is disjoint from $S(\mu_1)\cap S(\mu_3)$.
Also, the measure $\mu_1$ is symmetric so that $\alpha_{N/2}=1/2$.
Hence the jump matrix $J_U$ given in \eqref{eq:JUonR}, \eqref{eq:JU1explicit} and \eqref{eq:JU3explicit}
takes the form
\[ J_U = \begin{pmatrix}
(-1)^n & {e}^{n(\lambda_{2,+}-\lambda_{1,-})}& 0 & 0\\
0 & (-1)^n & 0 & 0 \\
0 & 0 & (-1)^n & {e}^{n(\lambda_{4,+}-\lambda_{3,-})}\\
0 & 0 & 0 & (-1)^n \end{pmatrix} \quad \text{on }(-\gamma_3,\gamma_3).\]
From \eqref{eq:defT1} we obtain
\begin{multline*}
J_T= \begin{pmatrix}
1 & 0 & 0 & 0 \\
0 & 1 & \mp {e}^{n(\lambda_3-\lambda_2)} & 0 \\
0 & 0 & 1 & 0 \\
0 & 0 & 0 & 1\end{pmatrix}
 J_U
\begin{pmatrix}
1 & 0 & 0 & 0 \\
0 & 1 & \pm {e}^{n(\lambda_3-\lambda_2)} & 0 \\
0 & 0 & 1 & 0 \\
0 & 0 & 0 & 1\end{pmatrix} \\
\text{ on  }(-\gamma_3,\gamma_3)\cap \mathbb R^\pm.
\end{multline*}

After some calculations using  \eqref{eq:lambdajump2}, \eqref{eq:lambdajump2b}, \eqref{eq:lambdajump6}
and the fact that $n$ is a multiple of $3$ we obtain from the last two expressions that
\begin{multline}\label{eq:jumpTcase2a}J_T=\begin{pmatrix}
(-1)^n & {e}^{n (\lambda_{2,+}-\lambda_{1,-})} & \pm {e}^{n(\lambda_{3,+}-\lambda_{1,-})} & 0\\
0 & (-1)^n  & 0 & \mp{e}^{n(\lambda_{4,+}-\lambda_{2,-})} \\
 0 & 0 & (-1)^n & {e}^{n(\lambda_{4,+}-\lambda_{3,-})}\\
0 & 0 & 0 & (-1)^n \end{pmatrix}\\
\text{ on  }(-\gamma_3,\gamma_3)\cap \mathbb R^\pm.
\end{multline}

\paragraph{Cases I and II: $\gamma_2 > 0$.}
In Cases I and II we have $c_3 = 0$ and the lips of the lens  around $S(\mu_3)$ intersect the imaginary axis
at $\pm i\gamma_2$ with $\gamma_2 > 0$.  By \eqref{eq:JUoniR} and \eqref{eq:JU2explicit} we have
\[J_U=\begin{pmatrix}
1 & 0& 0 & 0\\
0 & 1 & 0 & 0 \\
0 & {e}^{n(\lambda_{2,+}-\lambda_{3,-})} & 1 & 0\\
0 & 0 & 0 & 1\end{pmatrix} \quad
\text{ on }(-i \gamma_2,i \gamma_2).
\]
Combining this with \eqref{eq:defT2} and using \eqref{eq:lambdajump4},
the fact that $\lambda_4$ is analytic on $i\mathbb R\setminus \{0\}$ and the fact
that $n$ is a multiple of $3$  leads to \begin{align}\label{eq:jumpTcase3a}
J_T=\begin{pmatrix}
1 & 0 & 0& 0\\
0 & 1  &0 & 0 \\
 0 & {e}^{n (\lambda_{2,+}-\lambda_{3,-})} & 1 & 0\\
0 & \pm {e}^{n (\lambda_{2,+}-\lambda_{4,-})} & 0 &1 \end{pmatrix} \quad
\text{ on } (-i \gamma_2,i\gamma_2)\cap i \mathbb{R}^\pm.\end{align}

\subsection{Fourth transformation $T \mapsto S$}
In the fourth transformation we open up a lens around each interval $[a_k, b_k]$
of $S(\mu_1)$. The union of these lenses will be denoted by $L_1$ and $L_1$ is a closed bounded set.
This is done in a standard way based on the factorization

\begin{equation}     \label{eq:JT1factorization}
\begin{aligned}
(J_T)_1 = (J_U)_1 & = \begin{pmatrix} {e}^{n(\lambda_{1}- \lambda_{2})_+} & 1 \\
    0 & {e}^{n(\lambda_{1} - \lambda_{2})_-} \end{pmatrix} \\
    & = \begin{pmatrix} 1 & 0 \\ {e}^{n(\lambda_{1}- \lambda_{2})_-}  & 1 \end{pmatrix}
    \begin{pmatrix} 0 & 1 \\ -1 & 0 \end{pmatrix}
    \begin{pmatrix} 1 & 0 \\ {e}^{n(\lambda_{1}- \lambda_{2})_+} & 1 \end{pmatrix}.
\end{aligned}
\end{equation}

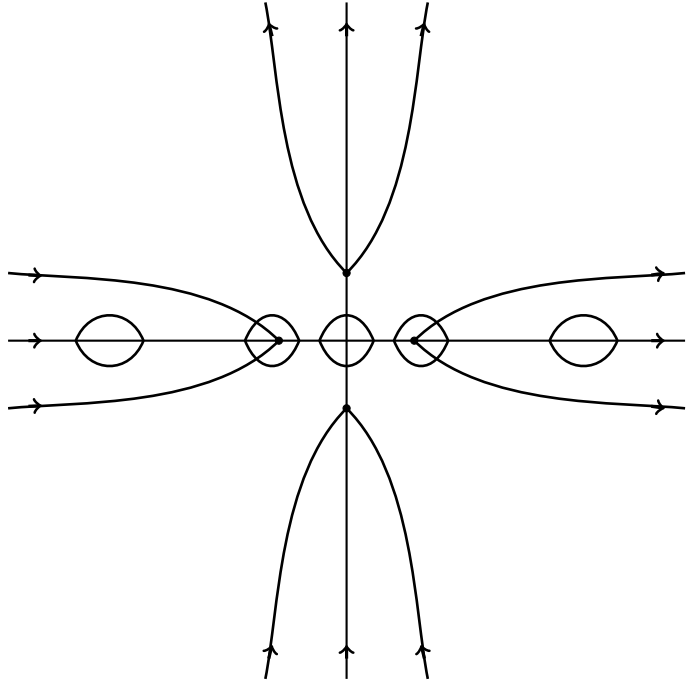
\begin{figure}[t]
\begin{center}
\begin{tikzpicture}[scale=0.9]
\draw[-,thick] (-5,0) -- (5,0);
\draw[-,thick] (0,-5) -- (0,5);
\filldraw [black] (1,0) circle (1.5pt)
    (-1,0) circle (1.5pt)
    (0,-1) circle (1.5pt)
    (0,1) circle (1.5pt);


\draw[->,very thick](0,4.5) -- (0,4.7);
\draw[->,very thick](-1.1,4.5) -- (-1.15,4.7);
\draw[->,very thick](1.1,4.5) -- (1.15,4.7);
\draw[->,very thick](0,-4.7) -- (0,-4.5);
\draw[->,very thick](-1.15,-4.7) -- (-1.1,-4.5);
\draw[->,very thick](1.15,-4.7) -- (1.1,-4.5);


\draw[->,very thick](-4.7,0) -- (-4.5,0);
\draw[->,very thick](-4.7,0.97) -- (-4.5,0.96);
\draw[->,very thick](-4.7,-0.97) -- (-4.5,-0.96);

\draw[->,very thick](4.5,0) -- (4.7,0);
\draw[->,very thick](4.5,0.97) -- (4.7,1);
\draw[->,very thick](4.5,-0.97) -- (4.7,-1);

\draw[line width=1pt] (0,1) .. controls (1,2) and (1,4) .. (1.2,5);
\draw[line width=1pt] (0,1) .. controls (-1,2) and (-1,4) .. (-1.2,5);
\draw[line width=1pt] (0,-1) .. controls (1,-2) and (1,-4) .. (1.2,-5);
\draw[line width=1pt] (0,-1) .. controls (-1,-2) and (-1,-4) .. (-1.2,-5);

\draw[line width=1pt] (1,0) .. controls (2,1) and (4,0.9) .. (5,1);
\draw[line width=1pt] (1,0) .. controls (2,-1) and (4,-0.9) .. (5,-1);
\draw[line width=1pt] (-1,0) .. controls (-2,1) and (-4,0.9) .. (-5,1);
\draw[line width=1pt] (-1,0) .. controls (-2,-1) and (-4,-0.9) .. (-5,-1);

\draw[line width=1pt] (-4,0) .. controls (-3.8,0.5) and (-3.2,0.5) .. (-3,0);
\draw[line width=1pt] (-4,0) .. controls (-3.8,-0.5) and (-3.2,-0.5) .. (-3,0);
\draw[line width=1pt] (4,0) .. controls (3.8,0.5) and (3.2,0.5) .. (3,0);
\draw[line width=1pt] (4,0) .. controls (3.8,-0.5) and (3.2,-0.5) .. (3,0);

\draw[line width=1pt] (-1.5,0) .. controls (-1.3,0.5) and (-0.9,0.5) .. (-0.7,0);
\draw[line width=1pt] (-1.5,0) .. controls (-1.3,-0.5) and (-0.9,-0.5) .. (-0.7,0);
\draw[line width=1pt] (1.5,0) .. controls (1.3,0.5) and (0.9,0.5) .. (0.7,0);
\draw[line width=1pt] (1.5,0) .. controls (1.3,-0.5) and (0.9,-0.5) .. (0.7,0);

\draw[line width=1pt] (-0.4,0) .. controls (-0.2,0.5) and (0.2,0.5) .. (0.4,0);
\draw[line width=1pt] (-0.4,0) .. controls (-0.2,-0.5) and (0.2,-0.5) .. (0.4,0);

\end{tikzpicture}
\end{center}
\caption{The opening of the lenses around $S(\mu_1)$. The lens around $(a_k,b_k)$
intersects the lens around $S(\mu_3)$ if and only if $\pm c_3\in (a_k,b_k)$ as is the case in the figure.  }
\label{fig:lensesmu1}
\end{figure}

\begin{lemma}\label{lem:ineqlensmu1}
We can and do choose the lenses  $L_1$ around $S(\mu_1)$ such that
\[
 \Re (\lambda_1 - \lambda_2) < 0 \quad\text{ in } L_1\setminus \mathbb R.\]
\end{lemma}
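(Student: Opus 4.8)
The plan is to mimic the arguments used for Lemmas \ref{lem:ineqlensmu2} and \ref{lem:ineqlensmu3}, exploiting the fact that $\lambda_1-\lambda_2$ is purely imaginary on $S(\mu_1)$ together with the Cauchy--Riemann equations. First I would use the definition \eqref{eq:deflambdaj} of the $\lambda$-functions to write $\lambda_1-\lambda_2 = V - 2g_1 + g_2 - \theta_1 - \ell_1$. Restricting to an interval $[a_k,b_k]\subset S(\mu_1)$ and taking boundary values, I would use Lemma \ref{lem:jumpsg1}(a) and \ref{lem:jumpsg2}(a) to compute $(\lambda_1-\lambda_2)_{\pm}$ on $S(\mu_1)$. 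The key point is that on $S(\mu_1)\cap\mathbb R^+$ the variational equality in \eqref{eq:jumpg1onR} gives $g_{1,+}+g_{1,-}-g_2 = V_1-\ell_1$, and since $V_1 = V-\theta_1$ this yields $\lambda_{1,\pm} + \lambda_{2,\mp}$-type relations; combined with \eqref{eq:jumpg1onR0} the boundary values $(\lambda_1-\lambda_2)_{\pm}(x)$ are seen to equal $\pm 2\pi i\,\mu_1([x,\infty))$ (up to an additive constant that is itself purely imaginary, coming from the $\mathbb R^-$ branch), hence are purely imaginary on each interval of $S(\mu_1)$.

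Next I would differentiate: on $(a_k,b_k)$,
\[
\frac{d}{dx}\Im\bigl(\pm(\lambda_1-\lambda_2)_{\pm}(x)\bigr) = 2\pi\,\frac{d\mu_1}{dx}(x) = 2\pi\rho_1(x),
\]
which is strictly positive in the interior of $[a_k,b_k]$ by Theorem \ref{theo:theo1}(a) and the regularity assumption (no singular interior points, so $\rho_1>0$ on $(a_k,b_k)$). Since $\Re(\lambda_1-\lambda_2)_{\pm} = 0$ on $S(\mu_1)$, the Cauchy--Riemann equations then show that $\Re(\lambda_1-\lambda_2)$ decreases as one moves off $S(\mu_1)$ into the upper (resp.\ lower) half-plane from the $+$ (resp.\ $-$) side; more precisely $\lambda_1-\lambda_2$ being analytic near the open interval $(a_k,b_k)$ off $S(\mu_1)$, the harmonic function $\Re(\lambda_1-\lambda_2)$ vanishes on $(a_k,b_k)$ and has a normal derivative of one sign there, so it is strictly negative in a full neighbourhood of $(a_k,b_k)$ minus the interval itself. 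Thus a sufficiently thin lens $L_1$ around each $[a_k,b_k]$ can be chosen so that $\Re(\lambda_1-\lambda_2)<0$ on $L_1\setminus\mathbb R$.

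The main obstacle I anticipate is the behaviour near the endpoints $a_k,b_k$ and near possible intersections with the other lenses. At an endpoint $\rho_1$ vanishes like a square root (regularity), so the normal derivative of $\Re(\lambda_1-\lambda_2)$ degenerates there; one must check that $\Re(\lambda_1-\lambda_2)$ is still strictly negative in a punctured neighbourhood of the endpoint inside the lens. This is handled by the local behaviour $\lambda_1-\lambda_2 \sim c(z-b_k)^{3/2}$ with $c\neq 0$ (a consequence of the square-root vanishing of $\rho_1$ together with the equilibrium condition), which has negative real part along the lips of a properly oriented lens; this is the same local analysis as in the standard one-matrix case and in \cite{DuiKu2}. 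A second point is that, as Figure \ref{fig:lensesmu1} indicates, the lens around an interval $(a_k,b_k)$ containing $\pm c_3$ will overlap the lens $L_3$ around $S(\mu_3)$; but this causes no difficulty for the present lemma, which only asserts the sign of $\Re(\lambda_1-\lambda_2)$ inside $L_1$, and the bookkeeping of the overlapping jumps is deferred to the subsequent construction of the RH problem for $S$.
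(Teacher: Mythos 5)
Your proposal is correct and follows essentially the same route as the paper's proof: compute $(\lambda_1-\lambda_2)_\pm$ on $S(\mu_1)$ from \eqref{eq:deflambdaj}, \eqref{eq:jumpg1onR}, \eqref{eq:V1bis} and \eqref{eq:jumpg1onR0}, observe it is purely imaginary with $\frac{d}{dx}\Im\bigl(\pm(\lambda_1-\lambda_2)_\pm\bigr)=2\pi\,\frac{d\mu_1}{dx}>0$ on each $(a_k,b_k)$ by regularity, and conclude via the Cauchy--Riemann equations. Your additional remarks on the $(z-b_k)^{3/2}$ behaviour at the endpoints and on the harmless overlap with $L_3$ are correct refinements that the paper leaves implicit.
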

\begin{proof}
By \eqref{eq:deflambdaj} we obtain
\[(\lambda_1-\lambda_2)_\pm=V-l_1-2g_{1,\pm}+g_{2,\pm}-\theta_{1}.\]
By \eqref{eq:jumpg1onR}, \eqref{eq:V1bis} and \eqref{eq:jumpg1onR0} this leads to
\[
        (\lambda_1-\lambda_2)_\pm(x) =
        \begin{cases} \mp(g_{1,+}(x)-g_{1,-}(x)) = \pm 2\pi i \mu_1([x,\infty)), &   x>0, \\
        \mp(g_{1,+}(x)-g_{1,-}(x)) \pm \frac{2}{3}\pi i =\pm 2\pi i \mu_1([x,\infty)) \pm \frac{2}{3}\pi i, & x<0.
        \end{cases}
        \]
Thus $\pm(\lambda_1-\lambda_2)_\pm$ is purely imaginary and
\[ \frac{d}{dx} \Im \left(  \pm(\lambda_1-\lambda_2)_\pm(x) \right) = 2 \pi \frac{d \mu_1}{dx} \]
which is strictly positive on each of the intervals $(a_k,b_k)$ by the regularity assumption on $\mu_1$.
The lemma follows by the Cauchy Riemann equations.
\end{proof}

In addition to the condition described in the lemma we also make sure that the lenses around $S(\mu_1)$
do not intersect the lens around $S(\sigma_2-\mu_2)$ and the lips of the lens around $S(\mu_3)$,
except for the case when $\pm c_3\in S(\mu_1)$. In that case we choose the lips around the interval(s)
containing $\pm c_3$ in such a way that they intersect the lips of the lens around $S(\mu_3)$ exactly
once in each quadrant as it is shown in Figure~\ref{fig:lensesmu1}.

If $0\in S(\mu_1)$ then the lips of the lens around the interval containing $0$
intersect the imaginary axis in the two points $\pm i \widehat \gamma_2$ where
\[ 0 < \widehat \gamma_2 < c_2.\]
See also Figure \ref{fig:lensesmu1}.
 If in addition $0\in S(\mu_3)$ then also
\[ 0 < \widehat \gamma_2 < \gamma_2.\]
If $0 \not\in S(\mu_1)$ then we put
\[ \widehat \gamma_2 = 0. \]

We also take the lenses so that they are symmetric in both the real and imaginary axis.

We now define $S$ as follows.
\begin{definition} \label{def:S}
We define
\begin{equation} \label{eq:defS1}
    S = T\times
    \begin{cases}
      \begin{pmatrix} 1 & 0 & 0 & 0 \\
    -{e}^{n(\lambda_{1}- \lambda_{2})} & 1 & 0 & 0 \\
    0 & 0 & 1 & 0 \\ 0 & 0 & 0 & 1 \end{pmatrix}
    & \begin{array}{l}\text{in the upper part}\\ \text{of the lenses}\\ \text{around } S(\mu_1),\end{array} \\
      \begin{pmatrix} 1 & 0 & 0 & 0 \\
    {e}^{n(\lambda_{1}- \lambda_{2})} & 1 & 0 & 0 \\
    0 & 0 & 1 & 0 \\ 0 & 0 & 0 & 1 \end{pmatrix}
    & \begin{array}{l}\text{in the lower part}\\ \text{of the lenses}\\ \text{around } S(\mu_1),\end{array}
    \end{cases}
    \end{equation}
and
\begin{equation} \label{eq:defS2}
    S = T \qquad \text{elsewhere}.
    \end{equation}
\end{definition}
Then $S$ is defined and analytic in $\mathbb C\setminus \Sigma_S$ where $\Sigma_S$
is the contour consisting of the real and imaginary axis and the lips of the lenses
around $S(\mu_1)$, $S(\sigma_2-\mu_2)$ and $S(\mu_3)$.

\subsection{RH problem for $S$}

The asymptotic behavior as $z \to \infty$ clearly has not changed, and
so $S$ satisfies the following RH problem.

\begin{equation}\label{eq:RHforS}
\left\{\begin{array}{l}
    S \textrm{ is analytic in } \mathbb C\setminus \Sigma_S, \\[5pt]
    S_+ = S_- J_S, \qquad \text{on }\Sigma_S, \\
    S(z)= (I+ \OO(z^{-1/3}))
    \diag \begin{pmatrix} 1 & z^{\frac{1}{3}}& 1 & z^{-\frac{1}{3}} \end{pmatrix}
    \begin{pmatrix} 1 & 0 \\ 0 & A_j \end{pmatrix} \\
     \hfill \textrm{as $z\to \infty$ in the $j$th quadrant}
    \end{array}\right.
\end{equation}
with jump matrices $J_S$ that are described next.

The jump matrices  $J_S$ have again the block structure
\begin{align} \label{eq:jumpSstructure}
 J_S = \begin{cases} \begin{pmatrix} (J_S)_1 & 0 \\ 0 & (J_S)_3 \end{pmatrix} &
    \begin{array}{l} \text{on $\mathbb R \setminus (-\gamma_3, \gamma_3)$ and} \\
        \text{on the lips of the lenses} \\ \text{around $S(\mu_1)$ and $S(\mu_3)$}, \end{array} \\[20pt]
     \begin{pmatrix} 1 & 0 & 0 \\ 0 & (J_S)_2 & 0 \\ 0 & 0 & 1 \end{pmatrix} &
    \begin{array}{l} \text{on $i\mathbb R \setminus ((-i\gamma_2, i\gamma_2) \cup (-i\widehat{\gamma}_2, i \widehat{\gamma}_2))$} \\
        \text{and on the lips of the lens} \\ \text{around $S(\sigma_2 -\mu_2)$}, \end{array}
        \end{cases}
     \end{align}
with $2 \times 2$ blocks $(J_S)_k$ for $k=1,2,3$. The block structure is different
on the intervals  $(-\gamma_3,\gamma_3)$,
$(-i \gamma_2,i\gamma_2)$ and $(-i \widehat \gamma_2,i \widehat \gamma_2)$ if non-empty.

On these intervals we have
\begin{equation} \label{eq:JSoffdiagonal}
\begin{aligned}
 J_S & = \begin{pmatrix} (J_S)_1 & * \\ 0 & (J_S)_3 \end{pmatrix} && \text{on } (-\gamma_3, \gamma_3), \\
 J_S & = \begin{pmatrix} 1 & 0 & 0 \\ * & (J_S)_2 & 0 \\ * & * & 1 \end{pmatrix}
                && \text{on } (-i \gamma_2, i\gamma_2) \text{ or } (-i \widehat{\gamma}_2, i \widehat{\gamma}_2),
     \end{aligned}
\end{equation}
with possible non-zero entries that are denoted by $*$.

The diagonal blocks are given in the following lemma.

\begin{lemma}
\begin{enumerate}
\item[\rm (a)]
For $(J_S)_1$ we have
\begin{align} \label{eq:JS1explicit}
(J_S)_1 = \begin{cases}
    \begin{pmatrix} 0 & 1 & \\ -1 & 0  \end{pmatrix} & \text{on } S(\mu_1),\\
    \begin{pmatrix} {e}^{-2n \pi i \alpha_k} & {e}^{n(\lambda_{2,+}-\lambda_{1,-})} \\ 0 & {e}^{2n\pi i \alpha_k} \end{pmatrix} &
    \begin{array}{l} \text{on } (b_k, a_{k+1}) \\ \text{for } k =0, \ldots, N, \end{array} \\
        \begin{pmatrix} 1 & 0  \\
    {e}^{n(\lambda_{1}- \lambda_{2})(z)} & 1  \end{pmatrix} &
    \begin{array}{l} \text{on the lips of the lenses} \\ \text{around } S(\mu_1), \end{array} \\
    I_2 & \begin{array}{l} \text{on  the lips of the lens} \\ \text{around } S(\mu_3). \end{array}
\end{cases}\end{align}
\item[\rm (b)] For $(J_S)_2$ we have
\begin{align} \label{eq:JS2explicit}
    (J_S)_2 & = \begin{cases} \begin{pmatrix} 0 & -1 \\ 1 & 0 \end{pmatrix} & \text{on } S(\sigma_2-\mu_2), \\
        \begin{pmatrix} 1 & {e}^{n(\lambda_3-\lambda_2)} \\ 0 & 1 \end{pmatrix} &
            \begin{array}{l} \text{on the lips of the lens} \\ \text{around } S(\sigma_2-\mu_2), \end{array} \\
    \begin{pmatrix} 1 & 0 \\ {e}^{n(\lambda_{2,+}-\lambda_{3,-})} & 1 \end{pmatrix} &
        \text{on } (-ic_2, ic_2).
        \end{cases}
        \end{align}
\item[\rm (c)] For $(J_S)_3$ we have
\begin{align} \label{eq:JS3explicit}
(J_S)_3= \begin{cases} \begin{pmatrix} 0 & 1 \\ -1 & 0 \end{pmatrix} & \text{on } S(\mu_3), \\
    \begin{pmatrix} (-1)^n & {e}^{n(\lambda_{4,+}-\lambda_{3,-})} \\ 0 & (-1)^n \end{pmatrix} & \text{on } (-c_3, c_3), \\
    I_2&  \begin{array}{l} \text{on the lips of the lenses} \\ \text{around } S(\mu_1), \end{array} \\
    \begin{pmatrix} 1 & 0 \\ {e}^{n(\lambda_3-\lambda_4)} & 1 \end{pmatrix} &
    \begin{array}{l} \text{on the lips of the lens} \\ \text{around }S(\mu_3). \end{array}
    \end{cases}
\end{align}
\end{enumerate}
\end{lemma}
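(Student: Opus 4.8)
The plan is to read every block of $J_S$ off from the jump matrices $J_T$ recorded in \eqref{eq:JT1explicit}--\eqref{eq:JT3explicit} combined with the definition \eqref{eq:defS1}--\eqref{eq:defS2} of $S$. Writing the triangular factor in \eqref{eq:defS1} as $I \mp e^{n(\lambda_1 - \lambda_2)} E_{2,1}$, on any arc of $\Sigma_S$ where $S = T \cdot (I \mp e^{n(\lambda_1-\lambda_2)} E_{2,1})$ on one side and $S = T$ on the other, the new jump equals $(I + e^{n(\lambda_1-\lambda_2)} E_{2,1}) J_T (I + e^{n(\lambda_1-\lambda_2)} E_{2,1})$ (using $E_{2,1}^2 = 0$ and orientation-dependent signs), while on arcs where $S = T$ on both sides nothing changes. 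First I would note that $E_{2,1}$ lives entirely in the block of rows and columns $1,2$, so whenever $J_T$ is block diagonal of the form $\operatorname{diag}((J_T)_1,(J_T)_3)$ the conjugation leaves the $(3,4)$-block untouched; this gives all of \eqref{eq:JS3explicit} except the line on the lips of the lenses around $S(\mu_1)$. On $i\mathbb R$ a short matrix computation shows $(I+e E_{2,1})\operatorname{diag}(1,(J_T)_2,1)(I+e E_{2,1})$ still has $(J_T)_2$ in rows and columns $2,3$, only creating entries in the first column; this yields \eqref{eq:JS2explicit}.

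Next I would handle the lips of the lenses around $S(\mu_1)$, which belong to $\Sigma_S \setminus \Sigma_T$: there $J_T = I$, so the new jump is simply the conjugating factor, which is block diagonal with $\begin{pmatrix} 1 & 0 \\ e^{n(\lambda_1-\lambda_2)} & 1 \end{pmatrix}$ in rows and columns $1,2$ and $I_2$ in rows and columns $3,4$. This produces the third line of \eqref{eq:JS1explicit} and the $I_2$-line of \eqref{eq:JS3explicit}. On the lips of the lens around $S(\mu_3)$, away from the isolated points where such a lip meets a lens around $S(\mu_1)$, one has $S = T$ and $(J_T)_1 = I_2$ by \eqref{eq:JT1explicit}, giving $(J_S)_1 = I_2$ there as well.

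The core step is $(J_S)_1$ on $S(\mu_1)$ and on the gaps. On a gap $(b_k, a_{k+1})$ one has $S = T$, hence $(J_S)_1 = (J_T)_1 = (J_U)_1$, which is the second line of \eqref{eq:JS1explicit} by \eqref{eq:JU1explicit}. On an interval $[a_k,b_k]$ I would invoke the factorization \eqref{eq:JT1factorization} of $(J_U)_1$, legitimate because $\lambda_{1,\pm} = \lambda_{2,\mp}$ on $S(\mu_1)\cap\mathbb R^+$ and $\lambda_{1,\pm} = \lambda_{2,\mp} \mp \tfrac23\pi i$ on $S(\mu_1)\cap\mathbb R^-$ by \eqref{eq:lambdajump1}, the extra phase being invisible since $n \equiv 0 \ (\mod 3)$. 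The triangular factors of \eqref{eq:defS1} are exactly the outer two factors in \eqref{eq:JT1factorization}, so conjugation cancels them and leaves the constant middle factor $\begin{pmatrix} 0 & 1 \\ -1 & 0 \end{pmatrix}$, the first line of \eqref{eq:JS1explicit}. The inequality of Lemma \ref{lem:ineqlensmu1} is not needed for this algebraic identity, but it is precisely what makes $e^{n(\lambda_1-\lambda_2)}$ exponentially small on the lips, which matters only in the later asymptotic analysis.

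Finally, the off-diagonal entries denoted by $*$ in \eqref{eq:JSoffdiagonal} on $(-\gamma_3,\gamma_3)$ and on $(-i\gamma_2,i\gamma_2)$, $(-i\widehat\gamma_2,i\widehat\gamma_2)$, together with the extra jump at the isolated points where a lens around $S(\mu_1)$ crosses the lens around $S(\mu_3)$ when $\pm c_3 \in S(\mu_1)$, I would compute exactly as in \eqref{eq:jumpTcase2a}--\eqref{eq:jumpTcase3a}: push the relevant triangular factor of \eqref{eq:defS1} through the already-modified $J_T$ and simplify with \eqref{eq:lambdajump1}--\eqref{eq:lambdajump6} and $n \equiv 0 \ (\mod 3)$. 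These entries do not affect the diagonal blocks and therefore lie outside the present statement. I expect the only genuine difficulty to be organizational: tracking the orientations of the many lips of $\Sigma_S$, identifying the $+$-side consistently, and keeping straight how the several lenses overlap near the origin in the different cases.
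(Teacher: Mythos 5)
Your proposal is correct and follows essentially the same route as the paper: conjugate $J_T$ by the triangular factors of \eqref{eq:defS1}, use the factorization \eqref{eq:JT1factorization} to collapse $(J_T)_1$ on $S(\mu_1)$ to the constant middle factor, and observe that $(J_S)_k=(J_T)_k$ wherever the transformation is trivial or commutes through the block structure. The only slight imprecision is your claim that $S=T$ on the lips of the lens around $S(\mu_3)$ away from isolated intersection points: when $\pm c_3\in(a_k,b_k)$ an entire arc of those lips lies \emph{inside} the lens around $S(\mu_1)$, where $S=T$ times the same triangular factor on both sides; since that factor is block diagonal and identical on the two sides, the stated jumps are unaffected.
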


\begin{proof}
The expressions for $(J_S)_1$ on $S(\mu_1)$ and on the lips of the lenses around $S(\mu_1)$
follow from the factorization \eqref{eq:JT1factorization} and the transformation
\eqref{eq:defS1}--\eqref{eq:defS2}.
On the other contours we have $(J_S)_1 = (J_T)_1$. Also $(J_S)_k = (J_T)_k$ for $k=2,3$,
and so we find all further expressions in the lemma from \eqref{eq:JT1explicit}--\eqref{eq:JT3explicit}.
\end{proof}

We next give the jump matrices $J_S$ on the intervals $(-\gamma_3, \gamma_3)$, $(-i\gamma_2, i\gamma_2)$
and $(-i \widehat \gamma_2, i \widehat \gamma_2)$ and in particular  the off-diagonal entries
 that were denoted by $*$ in \eqref{eq:JSoffdiagonal}. They depend on the Cases I-V. We simply
 present the formulas without further comment. Of course they follow from the jump matrices $J_T$
 and the tranformation  \eqref{eq:defS1}--\eqref{eq:defS2}.
Note that
\[ \widehat{\gamma}_2 > 0 \Leftrightarrow 0 \in S(\mu_1), \qquad \gamma_3 > 0 \Leftrightarrow c_2 = 0,
    \qquad \gamma_2 > 0 \Leftrightarrow c_3 = 0. \]

\paragraph{Case I.}
In Case I we have $0 \in S(\mu_1)$, $c_2 > 0$ and $c_3 = 0$ so that
$\widehat{\gamma}_2 > 0$ and $\gamma_2 > 0$ and $ \gamma_3 = 0$. By
construction $\widehat{\gamma}_2 < \gamma_2$ and we find in the Case
I
\begin{align} \label{eq:JScaseI}
J_S= \begin{cases}
\begin{pmatrix}
 1 & 0 & 0 & 0 \\
 0 & 1 & 0 & 0 \\
 \mp {e}^{n(\lambda_1-\lambda_{3,-})} & {e}^{n(\lambda_{2,+}-\lambda_{3,-})} & 1 & 0 \\
 -{e}^{n(\lambda_{1}-\lambda_4)} & \pm {e}^{n(\lambda_{2,+}-\lambda_4)} & 0 & 1
\end{pmatrix} & \begin{array}{l}\text{on } (-i \widehat \gamma_2,i \widehat\gamma_2)\cap i \mathbb R^\pm\end{array} \\
     J_T & \text{ on } (-i \gamma_2, -i \widehat \gamma_2) \cup (i \widehat\gamma_2,i\gamma_2),
     \end{cases}
\end{align}
where $J_T$ is given by \eqref{eq:jumpTcase3a}.

\paragraph{Case II.} In Case II we have $0 \not\in S(\mu_1)$, $c_2 > 0$ and $c_3 = 0$ so that
$\widehat{\gamma}_2 = 0$, $\gamma_2 > 0$ and  $\gamma_3 = 0$. In this case
\begin{align} \label{eq:JScaseII}
J_S= J_T \qquad \text{on } (-i \gamma_2, i \gamma_2) \end{align}
where $J_T$ is given by \eqref{eq:jumpTcase3a}.

\paragraph{Case III.} In Case III we have $0 \not\in S(\mu_1)$, $c_2 = 0$ and $c_3 > 0$ so that
$\widehat{\gamma}_2 = 0$, $\gamma_2 = 0$ and $\gamma_3 > 0$. We have in this case
\begin{align} \label{eq:JScaseIII}
J_S= J_T \qquad \text{on } (-\gamma_3,  \gamma_3) \end{align}
where $J_T$ is given by \eqref{eq:jumpTcase2a}.

\paragraph{Case IV.} In Case IV we have $0 \in S(\mu_1)$, $c_2 > 0$ and $c_3 > 0$ so that
$\widehat{\gamma}_2 > 0$, $\gamma_2 = 0$ and $\gamma_3 = 0$. We have in this case
\begin{align} \label{eq:JScaseIV}
J_S=
\begin{pmatrix}
 1 & 0 & 0 & 0 \\
 0 & 1 & 0 & 0 \\
 \mp {e}^{n(\lambda_1-\lambda_{3,-})} & {e}^{n(\lambda_{2,+}-\lambda_{3,-})} & 1 & 0 \\
 0 & 0& 0 & 1
\end{pmatrix} & \begin{array}{l}\text{on } (-i \widehat \gamma_2,i \widehat\gamma_2)\cap i \mathbb R^\pm\end{array}.
\end{align}

\paragraph{Case V.} In Case V we have $0 \not\in S(\mu_1)$, $c_2 > 0$ and $c_3 > 0$ so that
$\widehat{\gamma}_2 = 0$, $\gamma_2 = 0$ and $\gamma_3 = 0$. There are no exceptional
intervals in this case.

\subsection{Behavior of jumps as $n \to \infty$}

Having collected all the jump matrices $J_S$ we may study their behavior as $n \to \infty$.
It turns out that all off-diagonal entries of the form $\pm e^{n(\lambda_j - \lambda_k)}$
are such that $\Re (\lambda_j - \lambda_k) < 0$ and therefore they are exponentially decaying
as $n \to \infty$.

For $(J_S)_1$ in \eqref{eq:JS1explicit} we have off diagonal entries
\begin{align} \begin{cases}
    e^{n(\lambda_{2,+}-\lambda_{1,-})} & \text{ on } \mathbb R \setminus S(\mu_1), \\
    e^{n(\lambda_1 - \lambda_2)} &  \text{ on the lips of the lenses around } S(\mu_1),
    \end{cases}
    \end{align}
    which are indeed exponentially decaying because of \eqref{eq:varlambda21} and Lemma \ref{lem:ineqlensmu1}.

For $(J_S)_2$ in \eqref{eq:JS2explicit} we have off diagonal entries
\begin{align} \begin{cases}
    e^{n(\lambda_{2,+}-\lambda_{3,-})} & \text{ on } i \mathbb R \setminus S(\sigma_2 - \mu_2), \\
    e^{n(\lambda_3 - \lambda_2)} &  \text{ on the lips of the lens $L_2$ around } S(\sigma_2 - \mu_2),
        \end{cases} \end{align}
    which are  exponentially decaying because of \eqref{eq:varlambda23} and Lemma \ref{lem:ineqlensmu2}.

For $(J_S)_3$ in  \eqref{eq:JS3explicit} we have off diagonal entries
\begin{align} \begin{cases}
    e^{n(\lambda_{4,+}-\lambda_{3,-})} &  \text{ on } \mathbb R \setminus S(\mu_3), \\
    e^{n(\lambda_3 - \lambda_4)} &  \text{ on the lips of the lens $L_3$ around } S(\mu_3),
        \end{cases} \end{align}
    which are  exponentially decaying because of \eqref{eq:varlambda43} and Lemma \ref{lem:ineqlensmu3}.

The remaining off-diagonal entries appear in the $*$ entries in \eqref{eq:JSoffdiagonal} on
the special intervals $(-\gamma_3, \gamma_3)$ or $(-i \gamma_2, i \gamma_2)$ and $(-i \widehat \gamma_2, i \widehat \gamma_2)$.
They are explicitly given in the formulas \eqref{eq:jumpTcase3a}, \eqref{eq:jumpTcase2a}, \eqref{eq:JScaseI}, and \eqref{eq:JScaseIV}.
It turns out that all these entries are exponentially decaying as $n \to \infty$.
We will not verify all the cases here, but let us check the jump matrix $J_S$ on
$(-i \widehat \gamma_2, i \widehat \gamma_2)$ in Case I as given in \eqref{eq:JScaseI}.
Here there are four off diagonal entries
\begin{equation} \label{eq:entriesJScaseI}
    \mp {e}^{n(\lambda_1 - \lambda_{3,-})}, \quad   {e}^{n(\lambda_{2,+} - \lambda_{3,-})},
    \quad  - {e}^{n(\lambda_1 - \lambda_4)}, \quad  \pm {e}^{n(\lambda_{2,+} - \lambda_{4})}.
    \end{equation}
Since we are in Case I, the inequalities from \eqref{eq:varlambda23} and Lemmas \ref{lem:ineqlensmu3} and
\ref{lem:ineqlensmu1} apply on $[-i\widehat{\gamma}_2, i \widehat{\gamma_2}]$ since this interval
is contained in $i \mathbb R \setminus S(\sigma_2-\mu_2)$ and it belongs to the two lenses $L_1$ and $L_3$.
So we have
\begin{equation} \label{eq:lambdainequalities}
    \Re(\lambda_1 - \lambda_2) \leq 0, \qquad \Re( \lambda_{2,+} - \lambda_{3,-}) < 0, \qquad \Re(\lambda_3 - \lambda_4) \leq 0
    \end{equation}
on $[-i\widehat{\gamma}_2, i \widehat{\gamma_2}]$. In fact, equality in the first and third
inequalities of \eqref{eq:lambdainequalities} holds only at $0$.
Then indeed all entries in \eqref{eq:entriesJScaseI} are exponentially decaying as $n\to \infty$,
uniformly on $[-i\widehat{\gamma}_2, i \widehat{\gamma}_2]$.

In the next step of the steepest descent analysis we will ignore all exponentially small
entries in the jump matrices $J_S$. This will lead to matrices $J_M$
that we use as jump matrices for the so-called global parametrix.
The matrices $J_M$ are given in \eqref{eq:jumpM1}--\eqref{eq:jumpM2} below.

\section{Global parametrix} \label{sec:global}

If we ignore all entries in the jump matrices $J_S$ that are exponentially
small as $n \to \infty$, we find the following model Riemann-Hilbert problem
for $M : \mathbb C \setminus (\mathbb R \cup i \mathbb R) \to \mathbb C^{4 \times 4}$.

\begin{equation}\label{eq:RHforM}
\left\{\begin{array}{l}
    M  \textrm{ is analytic in } \mathbb C\setminus (\mathbb R \cup i \mathbb R), \\[5pt]
    M_+ = M_- J_M, \qquad \text{on } \mathbb R \cup i \mathbb R, \\
    M(z)= (I+\OO(z^{-1}))
     \begin{pmatrix} 1 & 0 & 0 & 0 \\
     0 & z^{1/3} & 0 & 0 \\
     0 & 0 & 1 & 0 \\
     0 & 0 & 0 & z^{-1/3} \end{pmatrix}
     \begin{pmatrix} 1 & 0 \\ 0 & A_j \end{pmatrix}  \\
     \qquad \textrm{as $z\to \infty$ in the $j$th quadrant},
    \end{array}\right.
\end{equation}
where $J_M$ is given as follows.
On the real line, the jump matrix has the block form
\begin{align} \label{eq:jumpMonR}
    J_M(x) = \begin{pmatrix} (J_M)_1(x) & 0 \\ 0 & (J_M)_3(x) \end{pmatrix}, \qquad
     x \in \mathbb R,
\end{align}
with
\begin{align} \label{eq:jumpM1}
    (J_M)_1 & =  \begin{cases}
    \begin{pmatrix} 0 & 1 \\ -1 & 0 \end{pmatrix} &
        \text{ on } S(\mu_1), \\
    \begin{pmatrix} {e}^{-2n \pi i \alpha_k} & 0 \\ 0 & {e}^{2n \pi i \alpha_k} \end{pmatrix} &
        \begin{array}{l} \text{on } (b_k, a_{k+1}), \\
            \text{for } k = 1, \ldots, N-1, \end{array} \\
        I_2 & \text{ on } (-\infty, a_1) \cup (b_N, +\infty),
        \end{cases} \\ \label{eq:jumpM3}
   (J_M)_3 & = \begin{cases}
    \begin{pmatrix} 0 & 1 \\ -1 & 0 \end{pmatrix} &
        \text{ on } S(\mu_3), \\
        (-1)^n I_2 & \text{ on }  (-c_3,c_3).
    \end{cases}
\end{align}
On the imaginary axis the jump matrix has the block form
\begin{align} \label{eq:jumpMoniR}
    J_M(x) & = \begin{pmatrix} 1 & 0 & 0  \\ 0 & (J_M)_2(x) & 0 \\ 0 & 0 & 1 \end{pmatrix},
     \qquad x \in i \mathbb R,
    \end{align}
with
\begin{align} \label{eq:jumpM2}
    (J_M)_2 & =  \begin{cases}
    \begin{pmatrix} 0 & -1 \\ 1 & 0 \end{pmatrix} &
        \text{ on } S(\sigma_2 - \mu_2), \\
        I_2 & \text{ on } (-ic_2,ic_2).
        \end{cases}
        \end{align}

Note that we have strenghtened the $\OO$ term in the asymptotic condition
in \eqref{eq:RHforM} from $\OO(z^{-1/3})$ to $\OO(z^{-1})$.

The solution has fourth root singularities at
all branch points $a_k, b_k$, for $k=1, \ldots, N$, and at $\pm c_2$, $\pm c_3$
if $c_2, c_3 > 0$. We give more details on the construction in the
rest of this section.

\subsection{Riemann surface as an $M$-curve}

We follow the approach of \cite{KuiMo} in using meromorphic differentials
on the Riemann surface as the main ingredient in the construction of the
global parametrix. Recall that $\mathcal R$ is a four sheeted cover
of the Riemann sphere. We use
\[ \pi : \mathcal R \to \overline{\mathbb C} \]
to denote the canonical projection, and we let $\pi_j$ be its
restriction to the $j$th sheet. Then for $z \in \mathbb C$ we have
that
\[ \pi_j^{-1}(z) \]
denotes the point on the $j$th sheet that projects onto $z$. It
is well-defined for $z \in \mathbb C \setminus (\mathbb R \cup i \mathbb R)$.

The Riemann surface $\mathcal R$ has the structure of an $M$-curve.
It has an anti-holomorphic involution
\[ \phi : \mathcal R \to \mathcal R : P \mapsto \bar{P} \]
with $\bar{P}$ on the same sheet as $P$. The fixed point set
of $\phi$ consists of $g+1$ connected components ($g$ is the genus of $\mathcal R$)
\[ \{ P \in \mathcal R : \phi(P) = P \} = \Sigma_0 \cup \Sigma_1 \cup \cdots \cup \Sigma_g \]
where
\begin{itemize}
\item $\Sigma_0$ contains $\infty_1$ and consists of the
intervals $(-\infty, a_1]$ and $[b_N, \infty)$ on the first and
second sheets.
\item $\Sigma_i$ for $i =1,\ldots, N-1$ contains the intervals $[b_i, a_{i+1}]$ on the first
and second sheets.
\end{itemize}

A full description of the curves depends on the case we are in. Recall
that there are five cases.
\begin{description}
\item[Cases I and II:] In Cases I and II we have $c_3 = 0$ and the genus is $N-1$.
In these cases we have only the above curves $\Sigma_0, \ldots, \Sigma_{N-1}$ and they
are on the first and second sheets only.
\item[Case III:] In Case III we still have $g = N-1$. Since $0 \not\in \supp(\mu_1)$
the number $N$ is even (by symmetry). The curve $\Sigma_{N/2}$ is now of
a different character, since it visits all four sheets. Indeed
we have that $\Sigma_{N/2}$ consists of $[b_{N/2}, a_{N/2 + 1}]$ on the first
and second sheets, and $[-c_3, c_3]$ on the third and fourth seets.
The component $\Sigma_{N/2}$ is also unusual in that it contains four branch points,
in contrast to the other curves that have two branch points.
\item[Cases IV and V:] In Cases IV and V the genus is $N$. In these cases
there is an additional component $\Sigma_N$ that consists of the
intervals $[-c_3,c_3]$ on the third and fourth sheets. The other components
are on the first and second sheets only, as in Cases I and II.
\end{description}

We need the following result on non-special divisors on $M$-curves.

\begin{lemma} \label{lem:nonspecial}
If $P_j \in \Sigma_j$ for $j=1, \ldots, g$ then the divisor $\sum\limits_{j=1}^g P_j$
is non-special.
\end{lemma}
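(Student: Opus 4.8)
\textbf{Proof strategy for Lemma \ref{lem:nonspecial}.}

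The plan is to use the standard characterization of special divisors on an $M$-curve together with the fact that each $\Sigma_j$ is a fixed component of the anti-holomorphic involution $\phi$. Recall that a divisor $D = \sum_{j=1}^g P_j$ of degree $g$ is special precisely when $i(D) = \dim H^0(\mathcal R, \Omega(D)) \geq 1$, i.e.\ when there is a nonzero holomorphic differential $\omega$ on $\mathcal R$ vanishing at every $P_j$. By Riemann-Roch, equivalently, $\ell(D) \geq 2$, so there is a nonconstant meromorphic function with at most simple poles at the $P_j$. I would argue via the holomorphic-differential formulation, since that is the one that interacts cleanly with the real structure.

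First I would recall Harnack's picture: for an $M$-curve of genus $g$, the real locus has exactly $g+1$ ovals $\Sigma_0, \ldots, \Sigma_g$, and the complement $\mathcal R \setminus (\Sigma_0 \cup \cdots \cup \Sigma_g)$ has exactly two connected components, interchanged by $\phi$; call them $\mathcal R^+$ and $\mathcal R^-$. The key classical fact (going back to the theory of $M$-curves; see e.g.\ the references used in \cite{KuiMo}) is the following sign/counting argument. Suppose $\omega$ is a nonzero holomorphic differential on $\mathcal R$ with $\omega(P_j) = 0$ for $j = 1, \ldots, g$, where $P_j \in \Sigma_j$. Replacing $\omega$ by $\omega + \overline{\phi^*\omega}$ or $i(\omega - \overline{\phi^*\omega})$ if necessary, we may assume $\omega$ is ``real'', i.e.\ $\phi^*\omega = \overline{\omega}$; this does not destroy the vanishing at the $P_j$ because each $P_j$ is a fixed point of $\phi$, so the two averaged differentials still vanish there, and at least one of them is nonzero. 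For such a real differential, the restriction of $\omega$ to each oval $\Sigma_k$ is a real-valued $1$-form, so $\omega$ has an even number of zeros on each $\Sigma_k$ (counting multiplicity), because a real $1$-form on a circle must change sign an even number of times — unless it vanishes identically on that oval, which a nonzero holomorphic differential cannot do. Hence $\omega$ has at least $2$ zeros on each of $\Sigma_1, \ldots, \Sigma_g$ (it already has one forced zero $P_j$ on $\Sigma_j$, so by parity at least a second), giving at least $2g$ zeros in total. But a nonzero holomorphic differential on a genus-$g$ surface has exactly $2g - 2$ zeros — contradiction. Therefore no such $\omega$ exists, $i(D) = 0$, and $D$ is non-special.

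The step I expect to require the most care is the parity argument on the special component in Case III, where $\Sigma_{N/2}$ visits all four sheets and contains four branch points rather than two, and similarly making sure the real structure $\phi$ and the decomposition into ovals genuinely has the Harnack form in each of the five cases listed above. Concretely I would verify that $\mathcal R$ is indeed an $M$-curve in every case (the excerpt asserts this, so I may invoke it), that each $\Sigma_j$ with $j \geq 1$ is topologically a circle on which a real holomorphic differential restricts to a genuine real $1$-form, and that the presence of the extra branch points on $\Sigma_{N/2}$ in Case III does not spoil the ``even number of sign changes'' count — it does not, since the branch points are interior points of the oval and the argument only uses that $\omega|_{\Sigma_{N/2}}$ is real and not identically zero. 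Once these topological facts are in place, the zero-counting contradiction $2g \leq 2g-2$ closes the proof uniformly in all cases.
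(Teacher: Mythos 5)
Your argument is correct. Be aware, though, that the paper does not actually prove this lemma: its entire ``proof'' is a citation to Huisman \cite[Theorem 2.4]{Hui}, which is precisely the statement that a degree-$g$ divisor on an $M$-curve with one point on each of $g$ of the $g+1$ ovals is non-special. What you have written is, in essence, the standard proof of that cited theorem: reduce to a real holomorphic differential by averaging with $\overline{\phi^*\omega}$ (legitimate here because the $P_j$ lie in the fixed locus of $\phi$, so both averages still vanish there and at least one is nonzero), observe that the restriction to each oval is a real-analytic real $1$-form on a circle and hence has an even number of zeros counted with multiplicity (an even number of sign changes forces an even number of odd-order zeros, and even-order zeros contribute evenly), conclude $\geq 2$ zeros on each of $\Sigma_1,\dots,\Sigma_g$, and contradict $\deg(\text{canonical})=2g-2$. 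Your handling of the two delicate points is also right: a nonzero holomorphic differential cannot vanish identically on an oval, and the extra branch points on $\Sigma_{N/2}$ in Case III are irrelevant since near a fixed point of $\phi$ one can always choose a local holomorphic coordinate that is real on the oval (even at a branch point of $\pi$), so the multiplicity of $\omega$ as a differential agrees with the multiplicity of the real function $f$ in $\omega|_{\Sigma_k}=f(t)\,dt$. The only thing you must take as input from the paper is that $\mathcal R$ is genuinely an $M$-curve with the ovals $\Sigma_0,\dots,\Sigma_g$ as described in each of the five cases; given that, your zero count closes the argument. So the net effect of your route is a self-contained proof in place of the paper's external reference, at the cost of about a page of classical real-curve theory.
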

\begin{proof}
This can be found in \cite[Theorem 2.4]{Hui}.
\end{proof}

Recall that the divisor $\sum\limits_{j=1}^g P_j$ is non-special if and only
if there are no non-constant analytic functions on $\mathcal R \setminus \{P_1, \ldots, P_g\}$
with only simple poles at the points $P_1, \ldots, P_g$.

\subsection{Canonical homology basis}

The Riemann surface has a canonical homology basis $\{ A_1, \ldots, A_g; B_1, \ldots, B_g\}$
that we choose such that the cycle $A_j$ is homologous to $\Sigma_j$ but disjoint from it.
See Figures \ref{fig: Case II}--\ref{fig: Case V} for an illustration of the curves for
the cycles with their orientation in the various cases.
In all cases we choose the $B$-cycles such that as sets they are invariant under the involution $\phi$.

The Cases I and II are very similar and we only show the Figure \ref{fig: Case II} for
Case II. The $B$-cycles are on the first sheet only.
The cycle $B_j$  surrounds the interval $[a_1, b_j]$ once in the negative (counterclockwise)
direction, and it intersects the real line  in $(-\infty, a_1)$ and in $(b_j, a_{j+1})$.
The $A$-cycles are partly on the upper half of the first sheet and partly on the lower half of
the second sheet. The cycle $A_j$ passes through the cuts $[a_j,b_j]$ and $[a_{j+1}, b_{j+1}]$
with orientation as indicated in Figures \ref{fig: Case II}.
In Case II there is an $A$-cycle that intersects with the imaginary axis. We make
sure that on the second sheet it does so in the interval $(-ic_2, 0)$.


\begin{figure}[t] 
\begin{center}
\begin{tikzpicture}[scale=1.5]  
\draw[-] (-3,0) node[below]{$a_1$}--(-2,0) node[below]{$b_1$};
\draw[-] (-1.5,0) node[below]{}--(-0.6,0) node[below]{};
\draw[-] (0.6,0) node[below]{}--(1.5,0) node[below]{};
\draw[-] (2.0,0) node[below]{$a_N$}--(3.0,0) node[below]{$b_N$};
\filldraw  (-3,0) circle(1pt) (-2,0) circle(1pt) (-1.5,0) circle(1pt) (-0.6,0) circle(1pt);
\filldraw  (0.6,0) circle(1pt) (1.5,0) circle(1pt) (2,0) circle(1pt) (3,0) circle(1pt);
\draw[very thick,->] (-2.5,0) ellipse (20pt and 10pt) (-1.84,0.3) node{$B_1$};
\draw[very thick] (-1.8,0) ellipse (50pt and 15pt) (-0.25,0.4) node{$B_2$};
\draw[very thick] (-1.2,0) ellipse (85pt and 20pt) (1.4,0.5) node{$B_{N-1}$};
\draw[->,very thick] (-1.1,-0.7)--(-1.11,-0.7);
\draw[->,very thick] (-2,-0.53)--(-2.01,-0.53);
\draw[->,very thick] (-2.6,-0.35)--(-2.61,-0.35);
\draw[very thick,dashed] (-2.3,0)..controls (-2.3,0.2) and (-1.2,0.2).. (-1.2,0) node[above]{$A_1$};
\draw[very thick,dashed] (-0.9,0)..controls (-0.9,0.2) and (0.9,0.2).. (0.9,0) node[above]{};
\draw[very thick,dashed] (1.2,0)..controls (1.2,0.2) and (2.3,0.2).. (2.3,0) node[above]{$A_{N-1}$};
\draw[->,very thick] (-1.72,0.15)--(-1.73,0.15);
\draw[->,very thick] (0.01,0.16)--(0.0,0.16);
\draw[->,very thick] (1.79,0.16)--(1.78,0.16);
\draw[-] (-4.5,0.8)--(4.5,0.8)--(4.5,-0.8)--(-4.5,-0.8)--(-4.5,0.8);
\end{tikzpicture}

\begin{tikzpicture}[scale=1.5] 
\draw[-] (-3,0) node[above]{$a_1$}--(-2,0) node[above]{$b_1$};
\draw[-] (-1.5,0) node[above]{}--(-0.6,0) node[above]{};
\draw[-] (0.6,0) node[above]{}--(1.5,0) node[above]{};
\draw[-] (2.0,0) node[above]{$a_N$}--(3.0,0) node[above]{$b_N$};
\filldraw  (-3,0) circle(1pt) (-2,0) circle(1pt) (-1.5,0) circle(1pt) (-0.6,0) circle(1pt);
\filldraw  (0.6,0) circle(1pt) (1.5,0) circle(1pt) (2,0) circle(1pt) (3,0) circle(1pt);
\draw[-] (0,0.4) node[left]{$ic_2$}--(0,0.8);
\draw[-] (0,-0.4) node[left]{$-ic_2$}--(0,-0.8);
\filldraw (0,0.4) circle(1pt) (0,-0.4) circle(1pt);
\draw[very thick,dashed] (-2.3,0)..controls (-2.3,-0.2) and (-1.2,-0.2).. (-1.2,-0) node[below]{$A_1$};
\draw[very thick,dashed] (-0.9,0)..controls (-0.9,-0.2) and (0.9,-0.2).. (0.9,0) node[below]{};
\draw[very thick,dashed] (1.2,0)..controls (1.2,-0.2) and (2.3,-0.2).. (2.3,-0) node[below]{$A_{N-1}$};
\draw[<-,very thick] (-1.72,-0.15)--(-1.73,-0.15);
\draw[<-,very thick] (0.01,-0.16)--(0.0,-0.16);
\draw[<-,very thick] (1.79,-0.16)--(1.78,-0.16);
\draw[-] (-4.5,0.8)--(4.5,0.8)--(4.5,-0.8)--(-4.5,-0.8)--(-4.5,0.8);
\end{tikzpicture}

\begin{tikzpicture}[scale=1.5] 
\draw[-] (0,0.4) node[left]{$ic_2$}--(0,0.8);
\draw[-] (0,-0.4) node[left]{$-ic_2$}--(0,-0.8);
\filldraw (0,0.4) circle(1pt) (0,-0.4) circle(1pt);
\draw[-] (-4.5,0)--(4.5,0);
\draw[-] (-4.5,0.8)--(4.5,0.8)--(4.5,-0.8)--(-4.5,-0.8)--(-4.5,0.8);
\end{tikzpicture}

\begin{tikzpicture}[scale=1.5] 
\draw[-] (-4.5,0)--(4.5,0);
\draw[-] (-4.5,0.8)--(4.5,0.8)--(4.5,-0.8)--(-4.5,-0.8)--(-4.5,0.8);
\end{tikzpicture}
\end{center}
\caption{Case II}
\label{fig: Case II}
\end{figure}


\begin{figure}[t] 
\begin{center}
\begin{tikzpicture}[scale=1.5]  
\draw[-] (-3,0) node[below]{$a_1$}--(-2,0) node[below]{$b_1$};
\draw[-] (-1.5,0) node[below]{}--(-0.6,0) node[below]{};
\draw[-] (0.6,0) node[below]{}--(1.5,0) node[below]{};
\draw[-] (2.0,0) node[below]{$a_N$}--(3.0,0) node[below]{$b_N$};
\filldraw  (-3,0) circle(1pt) (-2,0) circle(1pt) (-1.5,0) circle(1pt) (-0.6,0) circle(1pt);
\filldraw  (0.6,0) circle(1pt) (1.5,0) circle(1pt) (2,0) circle(1pt) (3,0) circle(1pt);
\draw[very thick,->] (-2.5,0) ellipse (20pt and 10pt) (-1.84,0.3) node{$B_1$};
\draw[very thick] (-1.8,0) ellipse (50pt and 15pt) (-0.25,0.4) node{};
\draw[very thick] (-1.2,0) ellipse (85pt and 20pt) (1.4,0.5) node{$B_{N-1}$};
\draw[->,very thick] (-1.1,-0.7)--(-1.11,-0.7);
\draw[->,very thick] (-2,-0.53)--(-2.01,-0.53);
\draw[->,very thick] (-2.6,-0.35)--(-2.61,-0.35);
\draw[very thick,dashed] (-2.3,0)..controls (-2.3,0.2) and (-1.2,0.2).. (-1.2,0) node[above]{$A_1$};
\draw[very thick,dashed] (-0.9,0)..controls (-0.9,0.2) and (0.9,0.2).. (0.9,0) node[above]{};
\draw[very thick,dashed] (1.2,0)..controls (1.2,0.2) and (2.3,0.2).. (2.3,-0) node[above]{$A_{N-1}$};
\draw[->,very thick] (-1.72,0.15)--(-1.73,0.15);
\draw[->,very thick] (0.01,0.16)--(0.0,0.16);
\draw[->,very thick] (1.79,0.16)--(1.78,0.16);
\draw[-] (-4.5,0.8)--(4.5,0.8)--(4.5,-0.8)--(-4.5,-0.8)--(-4.5,0.8);
\end{tikzpicture}

\begin{tikzpicture}[scale=1.5] 
\draw[-] (-3,0) node[above]{$a_1$}--(-2,0) node[above]{$b_1$};
\draw[-] (-1.5,0) node[above]{}--(-0.6,0) node[above]{};
\draw[-] (0.6,0) node[above]{}--(1.5,0) node[above]{};
\draw[-] (2.0,0) node[above]{$a_N$}--(3.0,0) node[above]{$b_N$};
\filldraw  (-3,0) circle(1pt) (-2,0) circle(1pt) (-1.5,0) circle(1pt) (-0.6,0) circle(1pt);
\filldraw  (0.6,0) circle(1pt) (1.5,0) circle(1pt) (2,0) circle(1pt) (3,0) circle(1pt);
\draw[-] (0,-0.8)--(0,0.8);
\draw[very thick,dashed] (-2.3,0).. controls (-2.3,-0.2) and (-1.2,-0.2).. (-1.2,0) (-2.3,0) node[below]{$A_1$};
\draw[very thick,dashed] (-0.9,0).. controls (-0.9,-0.3).. (0,-0.3) (-0.7,-0.25) node[below]{$A_{N/2}$};
\draw[very thick,dashed] (0.9,0).. controls (0.9,-0.4) .. (0,-0.4) (0.8,-0.31) node[below]{$A_{N/2}$};
\draw[very thick,dashed] (1.2,0).. controls (1.2,-0.2) and (2.3,-0.2).. (2.3,-0) node[below]{$A_{N-1}$};
\draw[<-,very thick] (-1.72,-0.15)--(-1.73,-0.15);
\draw[<-,very thick] (1.79,-0.16)--(1.78,-0.16);
\draw[->,very thick] (-0.41,-0.3)--(-0.4,-0.3);
\draw[->,very thick] (0.4,-0.4)--(0.41,-0.4);
\draw[-] (-4.5,0.8)--(4.5,0.8)--(4.5,-0.8)--(-4.5,-0.8)--(-4.5,0.8);
\end{tikzpicture}

\begin{tikzpicture}[scale=1.5] 
\draw[-] (0,-0.8)--(0,0.8);
\draw[-] (-4.5,0)--(-1.0,0) (1.0,0)--(4.5,0);
\filldraw (-1,0) circle(1pt) node[above]{$-c_3$} (1,0) circle(1pt) node[above]{$c_3$};
\draw[very thick,dashed] (0,-0.4).. controls (-1.5,-0.4) .. (-1.5,0) (-1.5,-0.28) node[below]{$A_{N/2}$};
\draw[very thick,dashed] (0,-0.3).. controls (1.5,-0.3) .. (1.5,0) (1.5,-0.26) node[below]{$A_{N/2}$};
\draw[->,very thick] (-0.61,-0.4)--(-0.6,-0.4);
\draw[->,very thick] (0.6,-0.3)--(0.61,-0.3);
\draw[-] (-4.5,0.8)--(4.5,0.8)--(4.5,-0.8)--(-4.5,-0.8)--(-4.5,0.8);
\end{tikzpicture}

\begin{tikzpicture}[scale=1.5] 
\draw[-] (-4.5,0)--(-1.0,0) (1.0,0)--(4.5,0);
\filldraw (-1,0) circle(1pt) node[below]{$-c_3$} (1,0) circle(1pt) node[below]{$c_3$};
\draw[very thick,dashed] (-1.5,0)..controls (-1.5,0.4) and (1.5,0.4).. (1.5,0)   node[above]{$A_{N/2}$};
\draw[<-,very thick] (0,0.3)--(0.01,0.3);
\draw[-] (-4.5,0.8)--(4.5,0.8)--(4.5,-0.8)--(-4.5,-0.8)--(-4.5,0.8);
\end{tikzpicture}
\caption{Case III}
\label{fig: Case III}
\end{center}
\end{figure}

In Case III we have  $c_2 = 0$ and $c_3 > 0$.
The genus of $\mathcal{R}$ is $g=N-1$. Note that
in this case, the number of intervals in $S(\mu_1)$ is even and the
origin does not belong to $S(\mu_1)$. The canonical homology basis
is chosen as in Figure \ref{fig: Case III}. The $B$-cycles are the same as in Case II. In particular
they are only on the first sheet. The $A$-cylces are also the same as in Case II, except for the
cycle $A_{N/2}$ which crosses the imaginary axis. This cycle visits all four sheets as
indicated in Figure \ref{fig: Case III}.

\begin{figure}[t] 
\begin{center}
\begin{tikzpicture}[scale=1.5]  
\draw[-] (-3,0) node[below]{$a_1$}--(-2,0) node[below]{$b_1$};
\draw[-] (-1.5,0) node[below]{}--(-0.6,0) node[below]{};
\draw[-] (0.6,0) node[below]{}--(1.5,0) node[below]{};
\draw[-] (2.0,0) node[below]{$a_N$}--(3.0,0) node[below]{$b_N$};
\filldraw  (-3,0) circle(1pt) (-2,0) circle(1pt) (-1.5,0) circle(1pt) (-0.6,0) circle(1pt);
\filldraw  (0.6,0) circle(1pt) (1.5,0) circle(1pt) (2,0) circle(1pt) (3,0) circle(1pt);
\draw[very thick,->] (-2.5,0) ellipse (20pt and 10pt) (-1.84,0.3) node{$B_1$};
\draw[very thick] (-1.8,0) ellipse (50pt and 15pt) (-0.25,0.4) node{$B_2$};
\draw[very thick] (-1.2,0) ellipse (85pt and 20pt) (1.4,0.5) node{$B_{N-1}$};
\draw[->,very thick] (-1.1,-0.7)--(-1.11,-0.7);
\draw[->,very thick] (-2,-0.53)--(-2.01,-0.53);
\draw[->,very thick] (-2.6,-0.35)--(-2.61,-0.35);
\draw[very thick,dashed] (-2.3,0)..controls (-2.3,0.2) and (-1.2,0.2).. (-1.2,0) node[above]{$A_1$};
\draw[very thick,dashed] (-0.9,0)..controls (-0.9,0.2) and (0.9,0.2).. (0.9,0) node[above]{};
\draw[very thick,dashed] (1.2,0)..controls (1.2,0.2) and (2.3,0.2).. (2.3,-0) node[above]{$A_{N-1}$};
\draw[->,very thick] (-1.72,0.15)--(-1.73,0.15);
\draw[->,very thick] (0.01,0.16)--(0.0,0.16);
\draw[->,very thick] (1.79,0.16)--(1.78,0.16);
\draw[-] (-4.5,0.8)--(4.5,0.8)--(4.5,-0.8)--(-4.5,-0.8)--(-4.5,0.8);
\end{tikzpicture}

\begin{tikzpicture}[scale=1.5]  
\draw[-] (-3,0) node[above]{$a_1$}--(-2,0) node[above]{$b_1$};
\draw[-] (-1.5,0) node[above]{}--(-0.6,0) node[above]{};
\draw[-] (0.6,0) node[above]{}--(1.5,0) node[above]{};
\draw[-] (2.0,0) node[above]{$a_N$}--(3.0,0) node[above]{$b_N$};
\filldraw  (-3,0) circle(1pt) (-2,0) circle(1pt) (-1.5,0) circle(1pt) (-0.6,0) circle(1pt);
\filldraw  (0.6,0) circle(1pt) (1.5,0) circle(1pt) (2,0) circle(1pt) (3,0) circle(1pt);
\draw[-] (0,0.4)--(0,0.8) (0,-0.4)--(0,-0.8);
\filldraw (0,0.4) circle(1pt) node[left]{$ic_2$} (0,-0.4) node[left]{$-ic_2$} circle(1pt);
\draw[very thick,dashed] (-2.3,0)..controls (-2.3,-0.2) and (-1.2,-0.2).. (-1.2,-0) node[below]{$A_1$};
\draw[very thick,dashed] (-0.9,0)..controls (-0.9,-0.2) and (0.9,-0.2).. (0.9,0) node[below]{};
\draw[very thick,dashed] (1.2,0)..controls (1.2,-0.2) and (2.3,-0.2).. (2.3,-0) node[below]{$A_{N-1}$};
\draw[<-,very thick] (-1.72,-0.15)--(-1.73,-0.15);
\draw[<-,very thick] (0.01,-0.16)--(0.0,-0.16);
\draw[<-,very thick] (1.79,-0.16)--(1.78,-0.16);
\draw[very thick] (0,0.7).. controls (-4.5,0.7) and (-4.5, -0.7)..(0,-0.7)  (-3.5,0.5) node[right]{$B_N$};
\draw[->,very thick] (-3.36,0)--(-3.36,0.01);
\draw[-] (-4.5,0.8)--(4.5,0.8)--(4.5,-0.8)--(-4.5,-0.8)--(-4.5,0.8);
\end{tikzpicture}

\begin{tikzpicture}[scale=1.5] 
\draw[-] (0,0.4)--(0,0.8) (0,-0.4)--(0,-0.8);
\filldraw (0,0.4) circle(1pt) node[left]{$ic_2$} (0,-0.4) node[left]{$-ic_2$} circle(1pt);
\draw[-] (-4.5,0)--(-1.0,0) (1.0,0)--(4.5,0);
\filldraw (-1,0) circle(1pt) node[below]{$-c_3$} (1,0) circle(1pt) node[below]{$c_3$};
\draw[very thick,dashed] (-1.5,0)..controls (-1.5,0.2) and (1.5,0.2).. (1.5,0)  node[above]{$A_N$};
\draw[->,very thick] (0,0.15)--(-0.01,0.15);
\draw[very thick] (0,0.7).. controls (1,0.7) and (1, -0.7)..(0,-0.7)  (0.5,0.5) node[right]{$B_N$};
\draw[->,very thick] (0.75,0)--(0.75,-0.01);
\draw[-] (-4.5,0.8)--(4.5,0.8)--(4.5,-0.8)--(-4.5,-0.8)--(-4.5,0.8);
\end{tikzpicture}

\begin{tikzpicture}[scale=1.5]  
\draw[-] (-4.5,0)--(-1.0,0) (1.0,0)--(4.5,0);
\filldraw (-1,0) circle(1pt) node[above]{$-c_3$} (1,0) circle(1pt) node[above]{$c_3$};
\draw[very thick,dashed] (-1.5,0)..controls (-1.5,-0.2) and (1.5,-0.2).. (1.5,0)  node[below]{$A_N$};
\draw[<-,very thick] (0,-0.15)--(-0.01,-0.15);

\draw[-] (-4.5,0.8)--(4.5,0.8)--(4.5,-0.8)--(-4.5,-0.8)--(-4.5,0.8);
\end{tikzpicture}
\caption{Case V}
\label{fig: Case V}
\end{center}
\end{figure}

The two Cases IV and V are again very similar and we only show the Figure \ref{fig: Case V}
for Case V. Here we have $c_2 > 0$ and $c_3 > 0$,
and the Riemann surface $\mathcal R$ has genus $g=N$.
The cycles $A_1, \ldots, A_{N-1}$ and $B_1, \ldots, B_{N-1}$ are as in
the previous Cases I and II.
There are two extra cycles $A_N$ and $B_N$. The cycle $A_N$ is on
the third and fourth sheets. It consists of a part in the upper half
plane of $\mathcal R_3$  from a point in  $(c_3,\infty)$ to a point
in $(-\infty,-c_3)$, together with a part in the lower half plane of
$\mathcal{R}_4$ that we choose to be the mirror image in the real
line of the part on the third sheet. The cycle $B_N$ is on the second
and third sheets. The part on $\mathcal R_2$ goes from a point in $(-i\infty, -ic_2)$
to a point in $(ic_2, \infty)$ and it goes around all intervals $[a_j, b_j]$
that are in the left-half plane. The part on $\mathcal R_3$ intersect the
real line somewhere in $(0, c_3)$.

By construction we have
\begin{equation} \label{eq:involcyc}
    \phi(B_j) = -B_j, \qquad \phi(A_j) \sim A_j, \qquad j=1, \ldots, g,
    \end{equation}
where the symbol $\sim$ means that $\phi(A_j)$ is homologous to $A_j$ in $\mathcal R$.

\subsection{Meromorphic differentials}

Let us now recall some facts about meromorphic differentials on the
Riemann surface. Most of the results that we will be using can be
found in \cite{FarKra}.

A meromorphic differential with simple poles only is called a
meromorphic differential of the third kind. A meromorphic differential
of the third kind is uniquely determined by its $A$-periods
and the location and residues at its poles, provided that
the residues add up to zero.

We pick points $P_j$, $j=1, \ldots, g$ with $P_j \in \Sigma_j$. For each such choice
we define a meromorphic differential $\omega_P$ of the third kind as follows.

We use $\infty_1$ to denote the point at infinity on the first sheet
and $\infty_2$ to denote the other point at infinity which is common to all
three other sheets.

\begin{definition} \label{def: omegaP}
Let $P_j \in \Sigma_j$ for $j=1, \ldots, g$. Then
$\omega_P$ is the meromorphic differential of third kind on $\mathcal R$
which is uniquely determined by the following conditions
\begin{enumerate}
\item[\rm (a)] The meromorphic differential has simple poles at
$a_j, b_j$, $j=1, \ldots, N$, at $\pm ic_2$ (if $c_2 > 0$),
at $\pm c_3$ (if $c_3 > 0$), at the points $P_j$, $j=1, \ldots, g$ and at
$\infty_2$. The residues at the finite branch points  are equal to $-1/2$:
\begin{equation} \label{eq:residues}
\begin{aligned}
      \Res_{z = a_j} \omega_P & = \Res_{z = b_j} \omega_P = - \tfrac{1}{2}, \qquad j =1, \ldots, N, \\
      \Res_{z = \pm ic_2} \omega_P & = -\tfrac{1}{2}, \qquad \text{(only if $c_2 > 0$)}, \\
      \Res_{z = \pm c_3} \omega_P & = -\tfrac{1}{2}, \qquad \text{(only if $c_3 > 0$)},
\end{aligned}
\end{equation}
the residue at $\infty_2$ is equal to $2$,
\begin{equation} \label{eq:residuesinfty}
      \Res_{z = \infty_2} \omega_P  = 2
\end{equation}
and the residue at the points $P_j$ is equal to $1$:
\begin{equation} \label{eq:residueP}
      \Res_{z = P_j} \omega_P  = 1, \qquad j=1, \ldots, g.
\end{equation}
\item[\rm (b)]
The meromorphic differential has vanishing $A$ periods in Cases I, II, IV, and V:
\begin{align} \label{eq:Aperiod}
    \oint_{A_k} \omega_P = 0, \quad k =1, \ldots, g, \qquad \text{(in Cases I, II, IV, and V)}.
\end{align}
In Case III all $A$-periods are vanishing, except the one of $A_{N/2}$:
\begin{align} \label{eq:AperiodIII}
       \oint_{A_k} \omega_P  & = - \pi i \delta_{k,N/2}, \quad k=1, \ldots, g, \qquad \text{(only in Case III)}.
\end{align}
\end{enumerate}
\end{definition}
A simple count shows that the residues of $\omega_P$ add up to $0$
and therefore the meromorphic differential $\omega_P$ is indeed uniquely
defined by the pole conditions and the $A$-period conditions.

If one or more of the $P_j$'s coincide with a branch point,
then the residue conditions \eqref{eq:residues} have to be modified
appropriately. For example, if $P_j = a_{j}$ then
\[ \Res_{P_j} \omega_P = \tfrac{1}{2}. \]
In this way, the meromorphic differential $\omega_P$
depends continuously on the $P_j$'s and is well defined for each choice
of $P_j \in \Gamma_j$, $j =1, \ldots, g$.

The anti-holomorphic involution $\phi$ is used to map a meromorphic
differential $\omega$
to a meromorphic differential $\phi^\#(\omega)$ in an obvious way.
If $\omega$ is equal to $f_j(z) dz$ for a meromorphic function $f_j$
on sheet $j$, then $\phi^\#(\omega)$ is equal to
\[ \overline{f_j(\bar{z})} dz \]
on sheet $j$.
A crucial property is that $\omega_P$ is invariant under $\phi^\#$.

\begin{lemma} Let $P_j \in \Sigma_j$, $j=1, \ldots, g$. Then
\[ \omega_P = \phi^{\#}(\omega_P). \]
\end{lemma}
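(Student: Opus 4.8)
The statement asserts that $\omega_P = \phi^\#(\omega_P)$, i.e.\ the meromorphic differential of the third kind characterized in Definition~\ref{def: omegaP} is invariant under the anti-holomorphic involution. The natural strategy is a uniqueness argument: I would show that $\phi^\#(\omega_P)$ satisfies exactly the same defining conditions as $\omega_P$ — the same poles with the same residues, and the same normalization on the $A$-cycles — and then invoke the uniqueness of a meromorphic differential of the third kind with prescribed poles, residues (summing to zero), and $A$-periods.

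\emph{First step: $\phi^\#$ preserves the pole/residue data.} The involution $\phi$ maps each branch point $a_j,b_j,\pm ic_2,\pm c_3$ to itself (these lie on the fixed-point set, being real or purely imaginary), maps $\infty_2$ to itself, and maps each $P_j \in \Sigma_j$ to itself since $P_j$ is a fixed point of $\phi$. Hence $\phi^\#(\omega_P)$ has poles at precisely the same points as $\omega_P$. For the residues I would use the general fact that if $\omega = f(z)\,dz$ near a fixed point $z_0 \in \mathbb R \cup i\mathbb R$ of $\phi$, then $\phi^\#(\omega) = \overline{f(\bar z)}\,dz$, and the residue at $z_0$ of $\phi^\#(\omega)$ is the complex conjugate of the residue of $\omega$ at $z_0$ — but the residues in \eqref{eq:residues}, \eqref{eq:residuesinfty}, \eqref{eq:residueP} are all real ($-1/2$, $2$, $1$), so they are unchanged. (A small care point: when $z_0$ lies on the imaginary axis one uses the local coordinate there; the conjugation argument still gives a real residue fixed by conjugation. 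Also, if some $P_j$ coincides with a branch point the modified residue $1/2$ is still real, so the argument is unaffected.)

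\emph{Second step: $\phi^\#$ preserves the $A$-periods.} Here I would use \eqref{eq:involcyc}, namely $\phi(A_k) \sim A_k$. For any meromorphic differential $\omega$ one has $\oint_{\phi(A_k)} \phi^\#(\omega) = \overline{\oint_{A_k}\omega}$ (conjugation comes from the anti-holomorphic nature of $\phi$, together with the fact that $\phi$ reverses orientation of a real path but the contour $\phi(A_k)$ inherits the image orientation). Since $\phi(A_k)$ is homologous to $A_k$, this gives $\oint_{A_k}\phi^\#(\omega_P) = \overline{\oint_{A_k}\omega_P}$. In Cases I, II, IV, V the $A$-periods of $\omega_P$ vanish by \eqref{eq:Aperiod}, so they still vanish. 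In Case III the only nonzero $A$-period is $\oint_{A_{N/2}}\omega_P = -\pi i$ by \eqref{eq:AperiodIII}; its conjugate is $+\pi i$, so at first glance this does not match. The resolution — and the point that requires the most attention — is that in Case III the cycle $A_{N/2}$ visits all four sheets, and one must check that $\phi(A_{N/2})$ is homologous to $A_{N/2}$ \emph{with reversed orientation} (so that $\oint_{\phi(A_{N/2})}$ picks up an extra sign), or equivalently re-examine the geometry in Figure~\ref{fig: Case III}; with the correct orientation bookkeeping the conjugate $+\pi i$ is identified with $-\pi i$ traversed backwards, and the normalization is preserved after all. I would spell out this orientation computation carefully since it is the one genuinely non-formal part.

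\emph{Conclusion.} Once both steps are in place, $\phi^\#(\omega_P)$ is a meromorphic differential of the third kind with the same poles, the same residues (which sum to zero, as already noted), and the same $A$-periods as $\omega_P$. By the uniqueness statement for such differentials recalled just before Definition~\ref{def: omegaP} (and standard, e.g.\ \cite{FarKra}), we conclude $\phi^\#(\omega_P) = \omega_P$. The main obstacle, as indicated, is the orientation/homology accounting for the exceptional cycle $A_{N/2}$ in Case III; everywhere else the argument is a routine ``conjugate everything and observe it is real / zero'' verification.
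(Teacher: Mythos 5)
Your overall strategy (show that $\phi^{\#}(\omega_P)$ has the same poles, residues and $A$-periods, then invoke uniqueness) is the paper's strategy, and your first step on poles and residues is fine. The gap is in the second step, and it is a real one. From $\oint_{A_k}\phi^{\#}(\omega_P)=\overline{\oint_{\phi(A_k)}\omega_P}$ you pass to $\overline{\oint_{A_k}\omega_P}$ ``since $\phi(A_k)$ is homologous to $A_k$.'' But homologous cycles give equal periods only for holomorphic differentials; $\omega_P$ is meromorphic, and the homotopy deforming $\phi(A_k)$ into $A_k$ sweeps across poles of $\omega_P$ --- namely the two branch points bounding the relevant gap and the point $P_k$ that sits between $A_k$ and $\phi(A_k)$ on $\Sigma_k$. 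Each crossing contributes $2\pi i$ times a residue. In the generic (two-sheet) case these contributions are $\tfrac12+\tfrac12-1=0$, so your conclusion happens to be correct, but the cancellation has to be checked --- it is not automatic, and your write-up omits it entirely.

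In Case III the same mechanism is what saves the argument, and your proposed fix is the wrong one. The cycle $A_{N/2}$ visits all four sheets, and deforming $\phi(A_{N/2})$ into $A_{N/2}$ picks up residue $\tfrac12$ at each of the four branch points $b_{N/2}, a_{N/2+1}, \pm c_3$ and $-1$ at $P_{N/2}$, for a total contribution of $2\pi i$. Hence $\oint_{\phi(A_{N/2})}\omega_P=-\pi i+2\pi i=\pi i$, and conjugation returns $-\pi i$, matching \eqref{eq:AperiodIII}. Your suggestion that $\phi(A_{N/2})$ might be homologous to $A_{N/2}$ with reversed orientation contradicts the stated property \eqref{eq:involcyc}, which asserts $\phi(A_j)\sim A_j$ for every $j$; the discrepancy is not an orientation artifact but a residue contribution, and the same residue bookkeeping is needed (and silently used) in the generic case as well.
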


\begin{proof}
Since all the poles of $\omega_P$ are invariant under the involution $\phi$,
the meromorphic differential $\phi^\#(\omega_P)$ has the
same poles and residues as $\omega_P$. We have to show that their $A$-periods are
the same. We have
\begin{equation}\label{eq:deltaperi}
    \oint_{A_k} \phi^{\#}(\omega_P) = \overline{ \oint_{\phi(A_k)} \omega_P }.
\end{equation}

We have that $\phi(A_k)$ is homologous to $A_k$ in $\mathcal R$, but in the
process of deforming $\phi(A_k)$ to $A_k$ we pick up residue contributions
from the poles of $\omega_P$.

For the cycles $A_k$ that are only on two sheets (which is the
typical situation) we pick up a residue contribution from the two
endpoints of a gap in the support of either $\mu_1$ or $\mu_3$ and
from $P_k$. As the deformation from $\phi(A_k)$ to $A_k$ will result
in clockwise loops around these points, the total residue is
$\tfrac{1}{2} + \tfrac{1}{2} - 1 = 0$. It follows that
\[ \oint_{A_k} \phi^{\#}(\omega_P) = \overline{ \oint_{A_k} \omega_P} = 0, \]
since the $A$-period of $\omega_P$ is zero for $A$-cycles that are only on
two sheets.

In Case III there is a cycle $A_{N/2}$ which is on all four sheets.
If we deform $\phi(A_{N/2})$ into $A_{N/2}$ we pick up residue
contributions from the endpoints $b_{N/2}, a_{N/2+1}, -c_3, c_3$ and
from $P_{N/2}$. Then we have residue $\tfrac{1}{2}$ four times and
$-1$ once, so that the total residue contribution is $1$ (As the
residue contribution comes from clockwise loops around these
points). It follows that
\[ \oint_{A_{N/2}} \phi^{\#}(\omega_P) = \overline{ \oint_{A_{N/2}} \omega_P + 2 \pi i} =
    \overline{-\pi i + 2 \pi i} = - \pi i, \]
since the $A_{N/2}$ period is $-\pi i$ by definition \eqref{eq:AperiodIII}.

So the $A$-periods of $\phi^{\#}(\omega_P)$ and $\omega_P$ agree, and
the lemma follows.
\end{proof}

\begin{proposition} \label{prop:mapPsi}
The $B$-periods of $\omega_P$ are purely imaginary
and the map
\begin{multline} \label{eq:mapPsi}
    \Psi : \Sigma_1 \times \cdots \times \Sigma_g \to (\mathbb R \slash \mathbb Z )^{g} : \\
        (P_1, \ldots, P_g) \mapsto
        \frac{1}{2\pi i} \left( \int_{B_1} \omega_P, \ldots, \int_{B_g} \omega_P \right)
        \end{multline}
is a well-defined, continuous bijection.
\end{proposition}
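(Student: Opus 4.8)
The plan is to prove Proposition \ref{prop:mapPsi} in three stages: first that the $B$-periods of $\omega_P$ are purely imaginary, then that $\Psi$ is well-defined and continuous, and finally that it is a bijection. For the first stage I would use the invariance $\omega_P = \phi^\#(\omega_P)$ together with the relation \eqref{eq:involcyc}, namely $\phi(B_j) = -B_j$ as sets. Indeed, pulling back along $\phi$ gives $\oint_{B_j} \omega_P = \oint_{\phi(B_j)} \phi^\#(\omega_P) = \overline{\oint_{B_j}\omega_P}$ up to an orientation reversal from $\phi(B_j)=-B_j$; since $\phi$ is anti-holomorphic it conjugates the integrand, and combined with the sign flip one concludes $\oint_{B_j}\omega_P = -\overline{\oint_{B_j}\omega_P}$, i.e.\ the $B$-period is purely imaginary. (One must be slightly careful in Case III and Cases IV, V where a $B$-cycle may pass through several sheets and the deformation of $\phi(B_j)$ back to $B_j$ could in principle pick up residue contributions; I would check that the relevant poles either lie on the fixed-point set or come in conjugate pairs with opposite residues, so no net contribution arises, exactly as in the proof of the preceding lemma.) This shows $\Psi$ maps into $(\mathbb{R}/\mathbb{Z})^g$.

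For well-definedness and continuity: $\omega_P$ was already shown to depend continuously on $(P_1,\dots,P_g) \in \Sigma_1\times\cdots\times\Sigma_g$, including when some $P_j$ hits a branch point (the residue condition \eqref{eq:residues} is modified continuously, as noted in the text after Definition \ref{def: omegaP}). Hence the $B$-periods depend continuously on $P$, so $\Psi$ is continuous. The main work is the bijectivity. For injectivity I would argue as follows: suppose $\Psi(P) = \Psi(Q)$ for two tuples $P=(P_1,\dots,P_g)$, $Q=(Q_1,\dots,Q_g)$. Then $\omega_P - \omega_Q$ is a meromorphic differential whose only possible poles are at the $P_j$'s and $Q_j$'s (the branch point poles and the pole at $\infty_2$ cancel since they have fixed residues), with all $A$-periods zero and all $B$-periods in $2\pi i \mathbb{Z}$. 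Integrating $\omega_P - \omega_Q$ gives a function on $\mathcal R$ (single-valued because the $A$-periods vanish and the $B$-periods are integer multiples of $2\pi i$, so $\exp$ of the integral is single-valued — actually one should work with $\exp\left(\int (\omega_P-\omega_Q)\right)$ directly) which is meromorphic on $\mathcal R$ with at most simple poles at $P_1,\dots,P_g$ and simple zeros at $Q_1,\dots,Q_g$. Since each $P_j \in \Sigma_j$, Lemma \ref{lem:nonspecial} says the divisor $\sum P_j$ is non-special, forcing such a function to be constant, hence $P_j = Q_j$ for all $j$. This is the step I expect to be the main obstacle: making the integer-$B$-period / non-speciality argument fully rigorous, in particular handling the point at $\infty_2$ and the branch-point poles correctly, and being careful in Case III where one $A$-period is $-\pi i$ rather than $0$ (there $\omega_P - \omega_Q$ still has vanishing $A_{N/2}$-period because the $-\pi i\delta_{k,N/2}$ term is the same for $P$ and $Q$).

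For surjectivity I would combine a degree/topological argument with the injectivity. Both $\Sigma_1\times\cdots\times\Sigma_g$ and $(\mathbb{R}/\mathbb{Z})^g$ are compact connected $g$-manifolds (each $\Sigma_j$ is a circle — it is a connected component of the fixed-point set of $\phi$ on the $M$-curve $\mathcal R$, hence topologically $S^1$). A continuous injective map between compact manifolds of the same dimension is, by invariance of domain, an open map, so its image is both open and closed, hence all of the (connected) target; therefore $\Psi$ is a continuous bijection. Alternatively one can compute that the linearization of $\Psi$ is the period matrix, which is invertible by the standard Riemann bilinear relations, giving local diffeomorphism, and then upgrade to a global bijection via injectivity and compactness. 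I would present the invariance-of-domain version as it is cleanest. The remaining routine points — that each $\Sigma_j$ is indeed a circle, that the poles of $\omega_P-\omega_Q$ are as claimed, and the continuity at coincidences with branch points — I would dispatch briefly, referring back to the $M$-curve structure described earlier in this section and to \cite{FarKra} for the classical facts about differentials of the third kind.
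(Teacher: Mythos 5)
Your proposal is correct and follows essentially the same route as the paper, whose proof simply defers to \cite{KuiMo}: continuity of $\omega_P$ in the $P_j$'s for well-definedness, injectivity from the non-speciality of the divisor $\sum_j P_j$ (Lemma \ref{lem:nonspecial}) applied to the single-valued function $\exp\bigl(\int(\omega_P-\omega_Q)\bigr)$, and surjectivity by invariance of domain. One harmless slip: since $\Res_{P_j}(\omega_P-\omega_Q)=+1$ and $\Res_{Q_j}(\omega_P-\omega_Q)=-1$, that function has a simple \emph{zero} at each $P_j$ and a simple \emph{pole} at each $Q_j$ (not the other way around), but as $\sum_j Q_j$ is equally non-special the conclusion is unaffected.
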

\begin{proof} This follows as in \cite{KuiMo}. The fact that the map is well-defined and continuous is proved as in
\cite[Proposition 2.3]{KuiMo}. Due to the fact that the divisors
$\sum\limits_{j=1}^g P_j$ with $P_j \in \Sigma_j$ are non-special, see Lemma \ref{lem:nonspecial},
the argument in the proof of \cite[Theorem 2.6]{KuiMo} gives first the injectivity of $\Psi$. Then
the invariance of domain argument of the same proof yields the surjectivity of~$\Psi$.
\end{proof}

\subsection{Definition and properties of functions $u_j$}

Due to Proposition \ref{prop:mapPsi} there exists a choice of points $P_j$, $j=1, \ldots, g$
such that the corresponding meromorphic differential $\omega_P$ satisfies
\begin{equation}\label{eq:Bperiods}
\begin{aligned}
    \oint_{B_k}\omega_P & = -2 n \pi i \alpha_k, \qquad k=1,\ldots, N-1, \\
    \oint_{B_N} \omega_P & =- n \pi i, \qquad \text{(only in Cases IV and V),}
\end{aligned}
\end{equation}
where the equalities hold modulo $2 \pi i \mathbb Z$. Note that $\omega_P$
is varying with $n$. We consider $n$ as fixed and work with $\omega_P$
satisfying \eqref{eq:Bperiods} throughout this section. Of course, $\omega_P$
also satisfies the conditions given in Definition \ref{def: omegaP}.

We are going to integrate $\omega_P$ along paths on $\mathcal R$ that
start from $\infty_1$ (the point at infinity on the first sheet) and
that on each sheet remain in the same quadrant. So the paths do not cross the
contours $\Sigma_j$, $j=0, \ldots, g$ and also do not intersect
the imaginary axis except along the cut $S(\sigma_2 - \mu_2)$ that
connects the second and third sheets. There may be a choice
in the cut that one takes when passing from the first to the
second sheet. However, this will lead to the same value for
the integral
because of the vanishing of the integral of $\omega_P$
over the cycles $A_k$ and $\phi(A_k)$, see \eqref{eq:Aperiod}.
Note that the exceptional $A$-period \eqref{eq:AperiodIII} in Case III
does not play a role here.

With this convention for the paths we define functions $u_j$ as follows.

\begin{definition}
For $j=1,2,3,4$ and $z \in \mathbb C \setminus (\mathbb R \cup i \mathbb R)$ we define
$u_j(z)$ as the Abelian integral
\[ u_j(z)=\int_{\infty_1}^{\pi_j^{-1}(z)} \omega_P. \]
\end{definition}

Abusing the notation, we also write
\[ u_j(z) = \int_{\infty_1}^z \omega_P \]
where $z$ is considered as a point on the $j$th sheet. The path from $\infty_1$ to $z \in \mathbb R_j$
follows the convention described above, namely that on each sheet it stays in the same
quadrant.

Then the functions $u_j$ are defined and analytic on $\mathbb C \setminus (\mathbb R \cup i \mathbb R)$
with the following jump properties.
In what follows we write $\equiv$ to denote equality up to integer multiples of $2\pi i$.

\begin{lemma} \label{lem:jumpuj}
\begin{enumerate}
\item[\rm (a)] For $x \in (a_k,b_k)$ with $k = 1, \ldots, N$, we have
\begin{align}
    u_{1,\pm}(x) = u_{2,\mp}(x).
    \end{align}
\item[\rm (b)] For $x \in (b_k, a_{k+1})$ with $k=0, \ldots, N$, we have
\begin{align}
    u_{1,+}(x) & \equiv u_{1,-}(x) - 2 n \pi i \alpha_k, \\ \nonumber
        &  \qquad \text{unless $P_k$ is on the first sheet and $x = \pi(P_k)$} \\
    u_{2,+}(x) & \equiv u_{2,-}(x) + 2 n \pi i \alpha_k + \pi i, \\ \nonumber
        & \qquad  \text{unless $P_k$ is on the second sheet and $x = \pi(P_k)$}
    \end{align}
where $b_0 = - \infty$, $a_{N+1} = + \infty$, $\alpha_0 = 1$ and $\alpha_N = 0$.
\item[\rm (c)] For $x \in (-c_3, c_3)$ we have
\begin{align}
    u_{3,+}(x) & \equiv u_{3,-}(x) - (n+1) \pi i, \\ \nonumber
        & \qquad \text{unless $P_{k}$ is on the third sheet and $x = \pi(P_{k})$} \\
    u_{4,+}(x) & \equiv u_{4,-}(x) + n \pi i, \\ \nonumber
        & \qquad \text{unless $P_{k}$ is on the fourth sheet and $x = \pi(P_{k})$}
    \end{align}
where $k = N/2$ in Case III and $k = N$ in Cases IV and V.
\item[\rm (d)] For $x \in (-\infty, -c_3) \cup (c_3,\infty)$ we have
\begin{align} u_{3,\pm}(x) = u_{4,\mp}(x).
\end{align}
\item[\rm (e)] On the imaginary axis we have
\begin{align}
    \label{eq:uoniR1}
    u_{1,+}(x) & = u_{1,-}(x) && \text{ for } x \in  i \mathbb R, \\
    \label{eq:uoniR2}
    u_{2,+}(x) & = u_{2,-}(x) && \text{ for } x\in (-ic_2, ic_2), \\
    \label{eq:uoniR3}
    u_{2,\pm}(x) & = u_{3,\mp}(x) && \text{ for } x \in  (-i\infty, -ic_2] \cup [ic_2, i\infty), \\
    \label{eq:uoniR4}
    u_{3,+}(x) & \equiv u_{3,-}(x) + \pi i && \text{ for } x\in (-ic_2, ic_2), \\
    \label{eq:uoniR5}
    u_{4,+}(x) & \equiv u_{4,-}(x) + \pi i && \text{ for } x\in i \mathbb R.
    \end{align}
\end{enumerate}
\end{lemma}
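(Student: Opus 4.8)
Here is my plan for proving Lemma \ref{lem:jumpuj}.

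\textbf{Overall strategy.} All the jump relations are of the form $u_{j,\pm} = u_{k,\mp}$ (when two sheets are glued along a cut) or $u_{j,+} \equiv u_{j,-} + (\text{period})$ (when the path of integration crosses into a ``new'' region as $z$ passes through a branch cut, picking up a period of $\omega_P$). The basic mechanism is that $u_j(z) = \int_{\infty_1}^{\pi_j^{-1}(z)} \omega_P$ with the stated convention on paths (staying in a fixed quadrant on each sheet, crossing $i\mathbb{R}$ only through $S(\sigma_2-\mu_2)$). So I would first set up, once and for all, the bookkeeping: approaching a point $x$ on a cut from the $+$ side (resp.\ $-$ side) means ending the integration path just above (resp.\ below) $x$ on a particular sheet, and the difference $u_{j,+}(x) - u_{j,-}(x)$ equals $\oint_\gamma \omega_P$ for a small loop $\gamma$ combining the two path-ends; the value of this loop integral is dictated by whether the two sheets are connected across the cut (giving $u_{j,\pm} = u_{k,\mp}$, since the combined path just traverses the cut once more and lands on the other sheet) or not (giving a residue/period contribution).

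\textbf{Step-by-step.} (a) and (d): On $(a_k,b_k)$ the first and second sheets of $\mathcal R$ are glued in the usual crosswise manner (by construction of $\mathcal R$, see \eqref{eq:Riemannsurface}), so a path ending just above $x$ on sheet $1$ continues, through the cut, to just below $x$ on sheet $2$; hence $u_{1,+}(x) = u_{2,-}(x)$, and symmetrically $u_{1,-}(x)=u_{2,+}(x)$. The same argument with sheets $3$ and $4$ glued along $S(\mu_3) = \mathbb{R}\setminus(-c_3,c_3)$ gives (d). (e): Relation \eqref{eq:uoniR1} holds because $u_1$ is single-valued across $i\mathbb{R}$ (the first sheet $\mathcal R_1 = \mathbb{C}\setminus S(\mu_1)$ has no cut on the imaginary axis and $\omega_P$ has no pole there, while the vanishing $A$-periods \eqref{eq:Aperiod} — and in Case III the fact that the exceptional cycle $A_{N/2}$ does not interfere here — make the path-independence work). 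Relation \eqref{eq:uoniR3} is again the gluing of sheets $2$ and $3$ along $S(\sigma_2-\mu_2)$. Relations \eqref{eq:uoniR2}, \eqref{eq:uoniR4}, \eqref{eq:uoniR5} come from computing the residue picked up: crossing $(-ic_2,ic_2)$ on sheet $2$ picks up nothing new (no pole of $\omega_P$ there, and the relevant cycle has vanishing period), whereas on sheets $3$ and $4$ one picks up $2\pi i \operatorname{Res}_{\infty_2}\omega_P / 2 = \pi i$ from half of the residue $2$ at $\infty_2$ — more precisely, the difference $u_{3,+} - u_{3,-}$ on $(-ic_2,ic_2)$ equals the integral of $\omega_P$ over a loop that, after sliding along the glued cuts, encircles $\infty_2$, giving $\pm\pi i$ modulo $2\pi i$; one checks the sign using the orientation conventions. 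For (b) and (c): on a gap $(b_k,a_{k+1})$, sheets $1$ and $2$ are \emph{not} glued, so the difference $u_{1,+}(x)-u_{1,-}(x)$ equals a $B$-period of $\omega_P$ (the cycle that the closed-up path represents is homologous to $B_k$, possibly with a residue correction from a pole $P_k$ sitting in that gap, which accounts for the stated exceptions), and by \eqref{eq:Bperiods} this is $-2n\pi i\alpha_k$ modulo $2\pi i$. The $u_{2,+}-u_{2,-}$ statement combines the analogous $B$-period on the second sheet with the $\pi i$ coming from the residue at $\infty_2$; similarly (c) uses $\oint_{B_N}\omega_P = -n\pi i$ (Cases IV, V) or the exceptional $A_{N/2}$-period \eqref{eq:AperiodIII} (Case III) together with the $\pm(n+1)\pi i$, $\pm n\pi i$ contributions, again modulo $2\pi i$.

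\textbf{Main obstacle.} The genuinely delicate point is the careful identification, \emph{with the correct sign and correct multiple of $2\pi i$}, of which homology cycle (or combination of a $B$-cycle with small loops around branch points and around $\infty_2$ and $P_k$) is traced out by the closed-up integration path in each of the five cases, given the quadrant-respecting path convention and the chosen orientations of $A_k$, $B_k$ in Figures \ref{fig: Case II}–\ref{fig: Case V}. This is where Case III (the four-sheeted cycle $A_{N/2}$, its exceptional period \eqref{eq:AperiodIII}, and the fact that $N$ is even with $0\notin S(\mu_1)$) and Cases IV, V (the extra cycles $A_N$, $B_N$ on the lower sheets) each require their own diagram-chasing. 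Everything is only claimed modulo $2\pi i$, which is what makes the argument tractable: we never need to track the absolute value of the Abelian integral, only its periods and residues, all of which are supplied by Definition \ref{def: omegaP} and \eqref{eq:Bperiods}. The exceptions ``unless $P_k$ is on the $j$th sheet and $x=\pi(P_k)$'' are handled by noting that when the simple pole $P_k$ of $\omega_P$ lies on the gap over which we are computing the jump, sliding the path past it contributes an extra residue $\pm 2\pi i \cdot 1$, which is absorbed into ``$\equiv$'' — except exactly at $x = \pi(P_k)$ itself, where $u_j$ has a logarithmic singularity and the jump statement is simply not asserted.
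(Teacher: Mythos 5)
Your overall strategy is exactly the paper's: each jump $u_{j,+}(x)-u_{j,-}(x)$ is the integral of $\omega_P$ over a closed cycle on $\mathcal R$, which one deforms into a combination of $A$-cycles, $\phi(A)$-cycles, $B$-cycles and small loops around the poles of $\omega_P$, and then evaluates using Definition \ref{def: omegaP} and \eqref{eq:Bperiods}. The paper itself only works out \eqref{eq:uoniR5} and declares the rest "straightforward but tedious", so your plan is at the right level of detail and the parts (a), (d), \eqref{eq:uoniR3} (gluing), \eqref{eq:uoniR1} (path-independence via vanishing $A$- and $\phi(A)$-periods), and the $B$-period mechanism for (b)--(c) are all correctly identified.

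One concrete misstep: you attribute the odd multiples of $\pi i$ in \eqref{eq:uoniR4}, \eqref{eq:uoniR5} (and the extra $\pi i$ in part (b)) to "half of the residue $2$ at $\infty_2$". This cannot be right: $\tfrac12\cdot 2\pi i\cdot\Res_{\infty_2}\omega_P = 2\pi i\equiv 0\pmod{2\pi i}$, and in any case a half-loop around a pole is not a closed cycle, so it has no well-defined period. The actual source of these $\pi i$'s is the residue $-\tfrac12$ at the \emph{finite} branch points: a full loop around one of them contributes $2\pi i\cdot(-\tfrac12)=-\pi i\equiv\pi i$. Indeed, in the paper's worked example for \eqref{eq:uoniR5} the cycle $C$ is deformed into $-\phi(A_N)$, the cycles $-A_j$, and a closed loop around $-ic_2$, and it is the loop around $-ic_2$ that produces the $\pi i$; likewise the asymmetry between $-(n+1)\pi i$ and $+n\pi i$ in part (c) reflects an odd number of half-integer residues at $\pm c_3$ (and $\pm ic_2$) swept past in the deformation, not anything at $\infty_2$. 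Since you yourself flag the identification of the correct cycle "with the correct sign and correct multiple of $2\pi i$" as the crux, this piece of the bookkeeping needs to be redone with the finite branch points in the leading role before the verification of \eqref{eq:uoniR4}, \eqref{eq:uoniR5} and parts (b)--(c) goes through.
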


\begin{proof}
This is a straightforward but rather tedious verification. All identities
or equivalences come down to the calculation of a period of a closed loop on $\mathcal R$
for the meromorphic differential $\omega_P$.

For example, to prove \eqref{eq:uoniR5} for $x \in i \mathbb R^+$, we note that
\[ u_{4,+}(x) - u_{4,-}(x) = \int_C \omega_P \]
where $C$ is the cycle on $\mathcal R$ that is shown in Figure
\ref{fig: cycleC} for the Cases IV and V. This cycle can be deformed
to a sum of $-\phi(A_{N})$, the cycles $-A_j$, $j=1, \ldots, N-1$
and a closed loop around $-ic_2$. This indeed leads to
\eqref{eq:uoniR5} since $\omega_P$ has vanishing $A$-periods and
$\phi(A)$ periods, and the only contribution comes from $-ic_2$
which gives us $\pm \pi i$.

In the Case III (not shown in the figure), the cycle $C$ can be deformed into
the cycles, $-A_j$, $j=1, \ldots N-1$, that include the exceptional cycle $-A_{N/2}$.
It is now because of \eqref{eq:AperiodIII} that we obtain \eqref{eq:uoniR5}
for $x \in i \mathbb R^+$.

The other relations follows in a similar way.
\end{proof}

\begin{figure}[t] 
\begin{center}
\begin{tikzpicture}[scale=1.5] 
\draw[-] (-3,0) node[below]{$a_1$}--(-2,0) node[below]{$b_1$};
\draw[-] (-1.5,0) node[below]{}--(-1.0,0) node[below]{};
\draw[-] (-0.5,0)--(0.5,0);
\draw[-] (1,0) node[below]{}--(1.5,0) node[below]{};
\draw[-] (2.0,0) node[below]{$a_N$}--(3.0,0) node[below]{$b_N$};
\filldraw  (-3,0) circle(1pt) (-2,0) circle(1pt) (-1.5,0) circle(1pt) (-1,0) circle(1pt) (-0.5,0) circle(1pt);
\filldraw  (0.5,0) circle(1pt) (1,0) circle(1pt) (1.5,0) circle(1pt) (2,0) circle(1pt) (3,0) circle(1pt);
%
%
%
\draw[very thick] (-2.3,0) .. controls (-2.3,0.5) and (-4.5,0.5) .. (-4.5,0.3) (-3,0.3) node[above]{$C$};
\draw[->,very thick] (-3.4,0.4)--(-3.41,0.4);
\draw[very thick] (2.3,0) .. controls (2.3,0.5) and (4.5,0.5) .. (4.5,0.3) (3,0.3) node[above]{$C$};
\draw[->,very thick] (3.41,0.4)--(3.4,0.4);
\draw[-] (-4.5,0.8)--(4.5,0.8)--(4.5,-0.8)--(-4.5,-0.8)--(-4.5,0.8);
\end{tikzpicture}

\begin{tikzpicture}[scale=1.5] 
\draw[-] (-3,0) node[above]{$a_1$}--(-2,0) node[above]{$b_1$};
\draw[-] (-1.5,0) node[above]{}--(-1.0,0) node[above]{};
\draw[-] (-0.5,0)--(0.5,0);
\draw[-] (1,0) node[above]{}--(1.5,0) node[above]{};
\draw[-] (2.0,0) node[above]{$a_N$}--(3.0,0) node[above]{$b_N$};
\filldraw  (-3,0) circle(1pt) (-2,0) circle(1pt) (-1.5,0) circle(1pt) (-1,0) circle(1pt) (-0.5,0) circle(1pt);
\filldraw  (0.5,0) circle(1pt) (1,0) circle(1pt) (1.5,0) circle(1pt) (2,0) circle(1pt) (3,0) circle(1pt);
\draw[-] (0,0.4)--(0,0.8) (0,-0.4)--(0,-0.8);
\filldraw (0,0.4) circle(1pt) node[left]{$ic_2$} (0,-0.4) node[left]{$-ic_2$} circle(1pt);
%
%
%
\draw[very thick] (-2.3,0) .. controls (-2.3,-0.7)  .. (0,-0.7) (-2.5,-0.3) node[below]{$C$};
\draw[->,very thick] (-1,-0.7)--(-1.01,-0.7);
\draw[very thick] (2.3,0) .. controls (2.3,-0.6).. (0,-0.6) (2.5,-0.3) node[below]{$C$};
\draw[->,very thick] (1.01,-0.6)--(1,-0.6);
\draw[-] (-4.5,0.8)--(4.5,0.8)--(4.5,-0.8)--(-4.5,-0.8)--(-4.5,0.8);
\end{tikzpicture}

\begin{tikzpicture}[scale=1.5]  
\draw[-] (0,0.4)--(0,0.8) (0,-0.4)--(0,-0.8);
\filldraw (0,0.4) circle(1pt) node[left]{$ic_2$} (0,-0.4) node[left]{$-ic_2$} circle(1pt);
\draw[-] (-4.5,0)--(-1.0,0) (1.0,0)--(4.5,0);
\filldraw (-1,0) circle(1pt) node[below]{$-c_3$} (1,0) circle(1pt) node[below]{$c_3$};
%
%
%
\draw[very thick] (-1.8,0) .. controls (-1.8,-0.6)  .. (0,-0.6) (-2.0,-0.3) node[below]{$C$};
\draw[->,very thick] (-1,-0.6)--(-1.01,-0.6);
\draw[very thick] (1.8,0) .. controls (1.8,-0.7).. (0,-0.7) (2.0,-0.3) node[below]{$C$};
\draw[->,very thick] (1.01,-0.7)--(1,-0.7);
\draw[-] (-4.5,0.8)--(4.5,0.8)--(4.5,-0.8)--(-4.5,-0.8)--(-4.5,0.8);
\end{tikzpicture}

\begin{tikzpicture}[scale=1.5]  
\draw[-] (-4.5,0)--(-1.0,0) (1.0,0)--(4.5,0);
\filldraw (-1,0) circle(1pt) node[above]{$-c_3$} (1,0) circle(1pt) node[above]{$c_3$};
%
%
\draw[very thick] (-1.8,0) .. controls (-1.8,0.6)  .. (0,0.6) (-2.0,0.3) node{$C$};
\draw[->,very thick] (-1.01,0.6)--(-1.0,0.6);
\draw[very thick] (1.8,0) .. controls (1.8,0.6).. (0,0.6) (2.0,0.3) node{$C$};
\draw[->,very thick] (1.0,0.6)--(1.01,0.6);
\draw[dashed] (0,-0.8)--(0,0.8);
\draw[-] (-4.5,0.8)--(4.5,0.8)--(4.5,-0.8)--(-4.5,-0.8)--(-4.5,0.8);
\end{tikzpicture}
\end{center}
\caption{Cycle $C$}
\label{fig: cycleC}
\end{figure}

We state without proof the behavior of the functions $u_j$ near the branch points.
They follow from residue conditions \eqref{eq:residues}.
\begin{lemma} \label{lem:uatbranch}
\begin{enumerate}
\item[\rm (a)] For $j=1,2$ and $k=1, \ldots, N$, we have
\[ u_j(z) = - \tfrac{1}{4} \log(z-a_k) + \OO(1), \qquad \text{as } z \to a_k, \]
\[ u_j(z) = - \tfrac{1}{4} \log(z-b_k) + \OO(1), \qquad \text{as } z \to b_k. \]
\item[\rm (b)] For $j=2,3$ and $c_2 > 0$,  we have
\[ u_j(z) = - \tfrac{1}{4} \log(z \mp ic_2) + \OO(1), \qquad \text{as } z \to \pm ic_2. \]
\item[\rm (c)] For $j=3,4$ and $c_3 > 0$, we have
\[ u_j(z) = - \tfrac{1}{4} \log(z \mp c_3) + \OO(1), \qquad \text{as } z \to \pm c_3. \]
\item[\rm (d)] If $P_k$ is on the $j$th sheet of the Riemann surface, then
\[ u_j(z) = \log(z- \pi(P_k)) + \OO(1), \qquad \text{as } z \to \pi(P_k), \]
where $k =1, \ldots, g$.
\item[\rm (e)] As $z \to \infty$,
\[ u_1(z) = \OO(1/z) \qquad \text{ and } \qquad
    u_j(z) = - \tfrac{2}{3} \log z  + \OO(1), \quad \text{ for } j=2,3,4. \]
\end{enumerate}
\end{lemma}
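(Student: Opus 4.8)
The plan is to derive all of (a)--(e) from one uniform principle: if $\omega$ is a meromorphic differential of the third kind on $\mathcal R$ with a simple pole at a point $Q$, then in any local holomorphic coordinate $t$ centered at $Q$ one has $\omega = \bigl(\tfrac{\Res_Q \omega}{t} + (\text{holomorphic})\bigr)\,dt$, so that every primitive of $\omega$ equals $(\Res_Q \omega)\log t + \OO(1)$ as $t\to 0$. Since the path defining $u_j$ may be deformed freely away from a fixed small neighbourhood of $Q$ (the ambiguity being an additive constant absorbed into the $\OO(1)$), this controls $u_j$ near each pole of $\omega_P$; it then only remains to convert $\log t$ into $\log$ of the base-point coordinate using the ramification index of $\pi$ at $Q$.

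First I would treat the finite branch points. Each of $a_k,b_k$ is a ramification point of $\pi$ of index $2$ joining the first and second sheets, so $t^2 = c\,(z-a_k)$ for some $c\neq 0$ and hence $\log t = \tfrac12\log(z-a_k)+\OO(1)$; likewise $\pm ic_2$ joins sheets $2$ and $3$, and $\pm c_3$ joins sheets $3$ and $4$. Because $\Res \omega_P = -\tfrac12$ at each of these points by \eqref{eq:residues}, integration gives $u_j(z) = -\tfrac14\log(z-a_k)+\OO(1)$ on both adjacent sheets, and the same for $b_k$, $\pm ic_2$, $\pm c_3$, which is exactly parts (a), (b), (c). (If some $P_k$ happens to coincide with a branch point the residue there changes, as noted after Definition \ref{def: omegaP}, and the corresponding estimate changes accordingly; we exclude this degenerate situation.)

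For part (d), a point $P_k$ lying on the $j$th sheet and distinct from every branch point is an unramified point of $\pi$, so $z-\pi(P_k)$ is itself a local coordinate, and $\Res_{P_k}\omega_P = 1$ by \eqref{eq:residueP} yields $u_j(z) = \log(z-\pi(P_k))+\OO(1)$; since $P_k$ lies on no other sheet, $u_{j'}$ with $j'\neq j$ is analytic at $\pi(P_k)$, so the singularity appears only on sheet $j$. For part (e): the point $\infty_1$ is not among the poles listed in Definition \ref{def: omegaP}, so $\omega_P$ is holomorphic there and, in the local coordinate $1/z$ on the first sheet, is the differential of an analytic function which vanishes at $\infty_1$ because $u_1(\infty_1)=0$; hence $u_1(z)=\OO(1/z)$. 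The remaining point $\infty_2$, common to the second, third and fourth sheets, is a branch point where three sheets come together, with local coordinate $\zeta = z^{-1/3}$ (as recorded in the proof of Proposition \ref{prop:globalmeromorphic}), and $\Res_{\infty_2}\omega_P = 2$ by \eqref{eq:residuesinfty}; integrating $\omega_P = \bigl(2\zeta^{-1}+\cdots\bigr)\,d\zeta$ gives $u_j(z) = 2\log\zeta + \OO(1) = -\tfrac23\log z + \OO(1)$ for $j=2,3,4$.

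I expect the only delicate point to be the careful bookkeeping of local coordinates and ramification indices, and in particular getting the factor $1/e$ right in each case --- most notably the factor $\tfrac13$ at $\infty_2$, where three sheets collide --- while also checking that additive constants and branch choices of the logarithms are harmless since they are swallowed by the $\OO(1)$ terms. Everything else is routine once the residue data from Definition \ref{def: omegaP} is inserted.
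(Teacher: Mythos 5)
Your proposal is correct and is precisely the argument the paper has in mind: the paper states this lemma without proof, remarking only that the estimates ``follow from residue conditions \eqref{eq:residues}'', and your write-up supplies exactly the missing bookkeeping (local coordinate of the correct ramification index at each pole, the residue data from Definition \ref{def: omegaP}, and absorption of path-deformation periods and branch ambiguities into the $\OO(1)$ terms). Your exclusion of the degenerate case where some $P_k$ hits a branch point matches the paper's own remark immediately following the lemma.
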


If $P_k$ coincides with one of the branch points then the behavior should be modified.
This should be obvious and we do not give details here.

\subsection{Definition and properties of functions $v_j$}

The functions $v_j$ are the exponentials of the functions $u_j$
which we again consider as functions on $\mathbb C \setminus (\mathbb R \cup i \mathbb R)$.

\begin{definition} \label{def:vj}
For $j=1,2,3,4$ and $z \in \mathbb C \setminus (\mathbb R \cup i \mathbb R)$ we
define
\[ v_j(z) = {e}^{u_j(z)}  \]
\end{definition}

The jump properties of $v_j$ follow from Lemma \ref{lem:jumpuj}.
We state them in a vector form.
\begin{corollary} \label{lem:jumpvj}
    The vector-valued function $\begin{pmatrix} v_1 & v_2 & v_3 & v_4 \end{pmatrix}$
    is analytic in $\mathbb C \setminus (\mathbb R \cup i \mathbb R)$ with
    jump property
    \[ \begin{pmatrix} v_1 & v_2 & v_3 & v_4 \end{pmatrix}_+
        = \begin{pmatrix} v_1 & v_2 & v_3 & v_4 \end{pmatrix}_- J_v
            \qquad \text{on } \mathbb R \cup i \mathbb R \]
        with a jump matrix $J_v$ that takes the following form.
    \begin{enumerate}
    \item[\rm (a)]
    On the real axis the jump matrix has the block form
    \[ J_v = \begin{pmatrix} (J_v)_1 & 0 \\ 0 & (J_v)_3 \end{pmatrix}
        \qquad \text{on } \mathbb R \]
    with $2 \times 2$ blocks
    \begin{equation}
    \begin{aligned}
        (J_v)_1 & = \begin{pmatrix} 0 & 1 \\ 1 & 0 \end{pmatrix}
            && \text{ on } S(\mu_1), \\
        (J_v)_1 & = \begin{pmatrix} {e}^{-2n \pi i \alpha_k} & 0 \\ 0 & - {e}^{2n \pi i \alpha_k} \end{pmatrix}
            && \text{ on } (b_k, a_{k+1}),
            \end{aligned}
            \end{equation}
    for $k=0, \ldots, N$,   and
    \begin{equation}
    \begin{aligned}
        (J_v)_3  & = \begin{pmatrix} 0 & 1 \\ 1 & 0 \end{pmatrix}
            && \text{ on } S(\mu_3), \\
        (J_v)_3 & = \begin{pmatrix} (-1)^{n+1} & 0 \\ 0 & (-1)^n \end{pmatrix}
          && \text{ on } (-c_3, c_3).
          \end{aligned}
          \end{equation}
          \item[\rm (b)]    On the imaginary axis the jump matrix has the block form
    \[ J_v = \begin{pmatrix} 1 & 0 & 0 \\ 0 & (J_v)_2 & 0 \\ 0 & 0 & -1 \end{pmatrix}
        \qquad \text{on } i \mathbb R, \]
     with the $2 \times 2$ block
     \begin{equation}
    \begin{aligned}
        (J_v)_2  & = \begin{pmatrix} 0 & 1 \\ 1 & 0 \end{pmatrix}
            && \text{ on } S(\sigma_2-\mu_2), \\
        (J_v)_2 & = \begin{pmatrix} 1 & 0 \\ 0 & -1 \end{pmatrix}
          && \text{ on } (-ic_2, ic_2).
          \end{aligned}
          \end{equation}
          \end{enumerate}
\end{corollary}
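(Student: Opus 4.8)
The plan is to obtain the Corollary directly from the jump relations for the Abelian integrals $u_j$ collected in Lemma \ref{lem:jumpuj}, simply by exponentiating. Since $v_j = e^{u_j}$ by Definition \ref{def:vj}, every identity of the form $u_{j,\pm}(x) = u_{k,\mp}(x)$ immediately yields $v_{j,\pm}(x) = v_{k,\mp}(x)$, while every congruence $u_{j,+}(x) \equiv u_{j,-}(x) + c$ modulo $2\pi i$ becomes an exact identity $v_{j,+}(x) = e^{c}\,v_{j,-}(x)$. The fact that several relations in Lemma \ref{lem:jumpuj} hold only modulo $2\pi i$ is therefore harmless here; this is precisely the point of passing from the $u_j$ to the $v_j$. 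Analyticity of the vector $\begin{pmatrix} v_1 & v_2 & v_3 & v_4 \end{pmatrix}$ on $\mathbb C \setminus (\mathbb R \cup i \mathbb R)$ is inherited from that of the $u_j$, since all the relevant special points (the branch points $a_k,b_k$, $\pm ic_2$, $\pm c_3$ and the projections $\pi(P_k)$) lie on $\mathbb R \cup i \mathbb R$.

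What remains is bookkeeping: translate each item (a)--(e) of Lemma \ref{lem:jumpuj} into the corresponding $2\times 2$ block of $J_v$. On $S(\mu_1) = \bigcup_k (a_k,b_k)$ the relation $u_{1,\pm} = u_{2,\mp}$ gives the antidiagonal block $(J_v)_1 = \bigl(\begin{smallmatrix} 0 & 1 \\ 1 & 0 \end{smallmatrix}\bigr)$; on a gap $(b_k,a_{k+1})$ the relations $u_{1,+}\equiv u_{1,-} - 2n\pi i\alpha_k$ and $u_{2,+}\equiv u_{2,-} + 2n\pi i\alpha_k + \pi i$ give $(J_v)_1 = \diag(e^{-2n\pi i\alpha_k},\, -e^{2n\pi i\alpha_k})$, using $e^{\pi i} = -1$. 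On the imaginary axis, item (e) shows that $v_1$ has no jump and that $v_{4,+} = e^{\pi i} v_{4,-} = -v_{4,-}$ on all of $i\mathbb R$, producing the constants $1$ and $-1$ in the outer slots of $J_v$; the relations $u_{2,\pm} = u_{3,\mp}$ on $S(\sigma_2 - \mu_2)$, together with $u_{2,+} = u_{2,-}$ and $u_{3,+}\equiv u_{3,-}+\pi i$ on $(-ic_2,ic_2)$, give the stated $(J_v)_2$. Items (c)--(d) give $(J_v)_3$ on $S(\mu_3)$ and on $(-c_3,c_3)$, where one uses $e^{-(n+1)\pi i} = (-1)^{n+1}$ and $e^{n\pi i} = (-1)^n$.

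The one point that requires a moment's thought — the main, though mild, obstacle — is the role of the exceptional points: Lemma \ref{lem:jumpuj} states several of its jump relations ``unless $P_k$ is on the relevant sheet and $x = \pi(P_k)$''. I would argue that these points are in fact not exceptional for the $v_j$. At $x = \pi(P_k)$ with $P_k$ lying on sheet $j$, the boundary values $u_{j,\pm}$ differ from the generic relation by an additional $\pm 2\pi i$, arising from the clockwise residue $2\pi i\,\Res_{P_k}\omega_P = 2\pi i$ (recall \eqref{eq:residueP}) picked up when the integration path is pushed across $P_k$; upon exponentiating, this extra term disappears, $e^{\pm 2\pi i} = 1$, so the $v_j$-jump relation extends continuously to $x = \pi(P_k)$. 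Moreover, by Lemma \ref{lem:uatbranch}(d) one has $u_j(z) = \log(z - \pi(P_k)) + \OO(1)$ near $\pi(P_k)$, hence $v_j = e^{u_j}$ has a removable singularity (in fact a simple zero) there, which confirms analyticity of $\begin{pmatrix} v_1 & v_2 & v_3 & v_4 \end{pmatrix}$ across these isolated points and makes the jump relation there a trivial identity between vanishing quantities. Assembling the blocks obtained above then yields exactly the matrix $J_v$ as stated.
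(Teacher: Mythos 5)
Your proposal is correct and follows the same route as the paper, which simply records that the Corollary is obtained by exponentiating the jump relations of Lemma \ref{lem:jumpuj}; your block-by-block bookkeeping matches the stated $J_v$ exactly. Your extra care at the points $\pi(P_k)$ (the $\pm 2\pi i$ ambiguity exponentiating to $1$, and the simple zero of $v_j$ there making the relation continuous) is a correct and slightly more detailed treatment than the paper provides.
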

Note that each of the blocks $(J_v)_j$, $j=1,2,3$ has determinant $-1$.

From Lemma \ref{lem:uatbranch} and Definition \ref{def:vj} we obtain the
behavior of the $v_j$ functions near the branch points and other special points.
\begin{corollary} \label{cor:vatbranch}
\begin{enumerate}
\item[\rm (a)] For $j=1,2$ and $k=1, \ldots, N$, we have
\[ v_j(z) = \OO \left((z-a_k)^{-1/4}\right), \qquad \text{as } z \to a_k, \]
\[ u_j(z) = \OO \left((z-b_k)^{-1/4}\right), \qquad \text{as } z \to b_k. \]
\item[\rm (b)] For $j=2,3$ and $c_2 > 0$,  we have
\[ u_j(z) = \OO \left((z\mp ic_2)^{-1/4}\right), \qquad \text{as } z \to \pm ic_2. \]
\item[\rm (c)] For $j=3,4$ and $c_3 > 0$, we have
\[ u_j(z) = \OO \left((z\mp c_3)^{-1/4}\right), \qquad \text{as } z \to \pm c_3. \]
\item[\rm (d)] If $P_k$ is on the $j$th sheet of the Riemann surface, then
\[ v_j(z) = \OO(z- \pi(P_k)), \qquad \text{as } z \to \pi(P_k), \]
where $k =1, \ldots, g$.
\item[\rm (e)] As $z \to \infty$,
\[ v_1(z) = 1 +  \OO(1/z) \qquad \text{ and } \qquad v_j(z) = \OO(z^{-2/3}), \quad \text{ for } j=2,3,4. \]
\end{enumerate}
\end{corollary}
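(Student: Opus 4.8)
The plan is to derive Corollary \ref{cor:vatbranch} directly from Lemma \ref{lem:uatbranch} by exponentiating, using the defining relation $v_j = e^{u_j}$ of Definition \ref{def:vj}. The one elementary fact I will invoke is that if a function $h$ satisfies $h(w) = c \log w + \OO(1)$ as $w \to 0$ with a fixed constant $c$, then $e^{h(w)} = w^c e^{\OO(1)} = \OO(w^c)$, since $e^{\OO(1)}$ is bounded (and bounded away from $0$); the choice of branch of $\log$ is immaterial because only an order-of-magnitude estimate is claimed.

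First I would treat parts (a), (b), (c): applying the observation above with $c = -\tfrac{1}{4}$ and $w$ equal to $z - a_k$, $z - b_k$, $z \mp ic_2$, or $z \mp c_3$, the expansions in Lemma \ref{lem:uatbranch}(a)--(c) immediately give $v_j(z) = \OO(w^{-1/4})$ in each case. Next, for part (d), the expansion $u_j(z) = \log(z - \pi(P_k)) + \OO(1)$ of Lemma \ref{lem:uatbranch}(d) exponentiates to $v_j(z) = (z - \pi(P_k)) e^{\OO(1)} = \OO(z - \pi(P_k))$, so $v_j$ vanishes (at least simply) at $\pi(P_k)$ on the $j$th sheet. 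Finally, for part (e) I would split off $v_1$: since $u_1(z) = \OO(1/z) \to 0$ as $z \to \infty$, the power series of the exponential gives $v_1(z) = 1 + u_1(z) + \OO(u_1(z)^2) = 1 + \OO(1/z)$; for $j = 2,3,4$, the expansion $u_j(z) = -\tfrac{2}{3} \log z + \OO(1)$ yields $v_j(z) = z^{-2/3} e^{\OO(1)} = \OO(z^{-2/3})$ by the same observation with $c = -\tfrac{2}{3}$ and $w = 1/z$.

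There is no genuinely hard step: the whole statement is transported from Lemma \ref{lem:uatbranch} through the exponential map. The only point worth a sentence is that the $\OO(1)$ remainders in Lemma \ref{lem:uatbranch} really are bounded and not merely $o(\log)$; this holds because near each of the listed points $\omega_P$ has at worst a simple pole, so that after subtracting the explicit logarithmic primitive the remaining function extends continuously (and analytically off the cuts) to a neighborhood; likewise $\omega_P$ is holomorphic at $\infty_1$, which together with $u_1(\infty_1) = 0$ gives $u_1(z) = \OO(1/z)$. If some $P_k$ coincides with a branch point, Lemma \ref{lem:uatbranch} already records the modified expansion (the residues $-\tfrac{1}{2}$ at the branch point and $1$ at $P_k$ combine), and the exponentiation argument applies verbatim with the altered exponent; I would note this only parenthetically, as the excerpt does.
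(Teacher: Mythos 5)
Your proposal is correct and follows exactly the route the paper intends: the paper offers no separate proof of Corollary \ref{cor:vatbranch}, stating only that it follows from Lemma \ref{lem:uatbranch} and Definition \ref{def:vj}, and your exponentiation of the logarithmic expansions (with the remark that the $\OO(1)$ remainders are genuinely bounded because $\omega_P$ has at worst simple poles at the listed points) is precisely the intended argument.
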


\subsection{The first row of $M$}

We now define the entries in the first row of $M$.

\begin{definition} \label{def:Mfirstrow}
We define
\begin{align}
    M_{11}(z) & = v_1(z), \\
    M_{12}(z) & =
        \begin{cases} v_2(z) & \text{for } \Im z > 0, \\
                                -v_2(z)  & \text{for } \Im z < 0,
                                \end{cases} \\
    M_{13}(z) & =
        \begin{cases} v_3(z) & \text{for $z$ in the first and third quadrants}, \\
                                -v_3(z)  & \text{for $z$ in the second and fourth quadrants},
                                \end{cases} \\
    M_{14}(z) & =
        \begin{cases} -v_4(z) & \text{for } \Re z > 0, \\
                                v_4(z)  & \text{for } \Re z < 0.
                                \end{cases}
\end{align}
\end{definition}

It turns out that with this distribution of $\pm$ signs the row
vector $(M_{1j})$ has exactly the correct jump properties that are required
in the RH problem for $M$.

\begin{proposition}
The row vector $( M_{1j})_{j=1, \ldots 4}$ satisfies the conditions
that are necessary for the first row of the solution of the RH problem for $M$.

That is, the entries $M_{1j}$ are analytic in $\mathbb C \setminus (\mathbb R \cup i \mathbb R)$
with jump property
\begin{equation} \begin{pmatrix} M_{11} & M_{12} & M_{13} & M_{14} \end{pmatrix}_+
    = \begin{pmatrix} M_{11} & M_{12} & M_{13} & M_{14} \end{pmatrix}_- J_M
\end{equation}
on $\mathbb R \cup i \mathbb R$, where $J_M$ is the jump matrix in the
RH problem for $M$, see \eqref{eq:RHforM},
and also
\begin{multline}
    \begin{pmatrix} M_{11}(z) & M_{12}(z) & M_{13}(z) & M_{14}(z) \end{pmatrix} \\
    = \begin{pmatrix} 1 + \OO(z^{-1}) & \OO(z^{-2/3}) & \OO(z^{-2/3}) & \OO(z^{-2/3}) \end{pmatrix}
\end{multline}
as $z \to \infty$.
\end{proposition}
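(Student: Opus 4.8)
The plan is to verify the two required properties --- the jump relation and the asymptotic behavior --- by reducing everything to the already-established facts about the functions $v_j$ (Corollary \ref{lem:jumpvj} and Corollary \ref{cor:vatbranch}) together with the explicit sign bookkeeping in Definition \ref{def:Mfirstrow}. The analyticity of each $M_{1j}$ on $\mathbb C \setminus (\mathbb R \cup i \mathbb R)$ is immediate, since each $v_j$ is analytic there and we only multiply by locally constant signs that are constant on each open quadrant; the sign functions are designed to be constant on the relevant connected pieces, so no spurious singularities are introduced in the interior.

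For the jump relation, I would treat the real axis and the imaginary axis separately, and within each, the various subintervals (the bands $S(\mu_1)$, $S(\mu_3)$ or $S(\sigma_2-\mu_2)$; the gaps $(b_k,a_{k+1})$; the intervals $(-c_3,c_3)$, $(-ic_2,ic_2)$; and the unbounded pieces). On each such piece the jump matrix $J_M$ from \eqref{eq:jumpM1}--\eqref{eq:jumpM2} is a fixed $2\times 2$ block sitting inside a $4\times 4$ identity, so the claim reduces to a two-component identity relating $(M_{1j})_\pm$ on adjacent sheets. The point is that $J_M$ differs from $J_v$ of Corollary \ref{lem:jumpvj} precisely by sign factors, and those sign factors are exactly compensated by the jumps in the $\pm$ prefactors built into $M_{12},M_{13},M_{14}$ as one crosses from one quadrant to an adjacent one. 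Concretely: on $S(\mu_1)$ the prefactor of $M_{12}$ flips sign (from $+v_2$ above to $-v_2$ below), which turns the antidiagonal block $\begin{pmatrix}0&1\\1&0\end{pmatrix}$ of $(J_v)_1$ into $\begin{pmatrix}0&1\\-1&0\end{pmatrix}$ of $(J_M)_1$; on a gap $(b_k,a_{k+1})$ the prefactor of $M_{12}$ does not change sign there but $(J_v)_1$ carries a factor $-e^{2n\pi i\alpha_k}$ in its $(2,2)$ entry, and crossing into the next quadrant the sign of the $M_{12}$-prefactor does flip, which absorbs the offending $-1$; similarly for the $(-c_3,c_3)$, $(-ic_2,ic_2)$ and $S(\mu_3)$, $S(\sigma_2-\mu_2)$ pieces, using that the $M_{13}$ prefactor flips between first/third and second/fourth quadrants and the $M_{14}$ prefactor flips across the imaginary axis. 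I would write out one representative case on $\mathbb R$ and one on $i\mathbb R$ in detail and assert the rest follow by the same sign count.

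For the asymptotic behavior as $z\to\infty$, I would simply invoke Corollary \ref{cor:vatbranch}(e): $v_1(z) = 1 + \OO(z^{-1})$ and $v_j(z) = \OO(z^{-2/3})$ for $j=2,3,4$, and note that the $\pm$ prefactors are bounded, so $M_{11}(z) = 1 + \OO(z^{-1})$ and $M_{1j}(z) = \OO(z^{-2/3})$ for $j=2,3,4$. This matches exactly the stated expansion and, in particular, the stronger $\OO(z^{-1})$ in the $(1,1)$ entry appearing in the RH problem \eqref{eq:RHforM}.

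The main obstacle --- and the part deserving the most care --- is the sign bookkeeping in the jump verification: one has to make sure that the sign conventions in Definition \ref{def:Mfirstrow} are globally consistent, i.e.\ that going around any of the branch points $a_k,b_k,\pm ic_2,\pm c_3$ the products of prefactors and the $v_j$-monodromy combine to a genuine (single-valued) jump, and that at the branch points themselves the $\OO((z-\text{pt})^{-1/4})$ growth of the relevant $v_j$'s (Corollary \ref{cor:vatbranch}(a)--(c)) and the $\OO(z-\pi(P_k))$ vanishing at the auxiliary points $P_k$ (part (d)) are not spoiled by the sign factors --- which they are not, since multiplying by $\pm 1$ does not change orders of vanishing or blow-up. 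I would include a short remark recording that the fourth-root singularities at the branch points and the analyticity (in fact, a zero) at each $P_k$ are inherited directly from $v_j$, so that the first row of $M$ has the admissible singular behavior, completing the verification.
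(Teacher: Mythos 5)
Your proposal is correct and follows exactly the route the paper takes: its entire proof is the one-line citation of Corollary \ref{lem:jumpvj}, Corollary \ref{cor:vatbranch}(e), and Definition \ref{def:Mfirstrow}, and your sign bookkeeping (the prefactor flips of $M_{12}$, $M_{13}$, $M_{14}$ across the axes converting $J_v$ into $J_M$, plus the $z\to\infty$ behavior of the $v_j$) is precisely the verification that citation leaves to the reader. One small wording slip: on a gap $(b_k,a_{k+1})$ the $M_{12}$-prefactor \emph{does} change sign when crossing the real axis (it is $+v_2$ above and $-v_2$ below everywhere), which is exactly what absorbs the $-1$ in the $(2,2)$ entry of $(J_v)_1$ — your later sentence says this correctly, so the argument stands.
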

\begin{proof}
This follows directly from Lemma \ref{lem:jumpvj}, Corollary \ref{cor:vatbranch} (e), and Definition \ref{def:Mfirstrow}.
\end{proof}

\subsection{The other rows of $M$}

We will now construct the other rows of $M$ out of the first row.

\begin{lemma} \label{lem:dime}
The vector space of meromorphic functions on $\mathcal R$ (including the constant
functions)  whose divisor is greater than or equal to
\[ -\sum_{j=1}^g P_j-3\infty_2 \]
 is of dimension $4$.
\end{lemma}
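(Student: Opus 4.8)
The plan is to deduce this dimension count from the Riemann--Roch theorem together with the non-specialness statement of Lemma~\ref{lem:nonspecial}. Write $D = \sum_{j=1}^g P_j + 3\infty_2$ for the divisor appearing in the statement. Since the points $P_j$ lie on the pairwise disjoint sets $\Sigma_j$, and since $\infty_2$ does not lie on any $\Sigma_j$ with $j \geq 1$ (those components project onto bounded intervals), the $P_j$ are distinct points of $\mathcal R$, distinct from $\infty_2$; hence $D$ is an effective divisor of degree $\deg D = g+3$. The vector space in the statement is exactly $L(D) = \{ f \text{ meromorphic on } \mathcal R : (f) \geq -D \}$, and the claim is that $\ell(D) := \dim L(D) = 4$.

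First I would apply the Riemann--Roch theorem on $\mathcal R$, which has genus $g$, in the form $\ell(D) - \ell(K-D) = \deg D - g + 1$, where $K$ denotes a canonical divisor. With $\deg D = g+3$ this reads $\ell(D) = 4 + \ell(K-D)$, so everything reduces to showing $\ell(K-D) = 0$.

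For this step, observe that $D \geq \sum_{j=1}^g P_j$, and therefore $K - D \leq K - \sum_{j=1}^g P_j$, which gives the inclusion $L(K-D) \subseteq L\bigl(K - \sum_{j=1}^g P_j\bigr)$ and hence $\ell(K-D) \leq \ell\bigl(K - \sum_{j=1}^g P_j\bigr)$. Now I would invoke Lemma~\ref{lem:nonspecial}: since $P_j \in \Sigma_j$ for $j=1,\ldots,g$, the degree-$g$ divisor $\sum_{j=1}^g P_j$ is non-special, i.e.\ $\ell\bigl(\sum_{j=1}^g P_j\bigr) = 1$ (only the constant functions). Applying Riemann--Roch once more to this divisor, $\ell\bigl(\sum_j P_j\bigr) - \ell\bigl(K - \sum_j P_j\bigr) = g - g + 1 = 1$, so $\ell\bigl(K - \sum_{j=1}^g P_j\bigr) = 0$. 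Combining this with the previous inequality forces $\ell(K-D) = 0$, and therefore $\ell(D) = 4$, as claimed.

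All of the substantive geometric input — the non-specialness of $\sum_{j=1}^g P_j$ on the $M$-curve $\mathcal R$ — is already packaged in Lemma~\ref{lem:nonspecial}, so there is no real obstacle in this argument; the only minor points to be careful about are that the $P_j$ are genuinely distinct from one another and from $\infty_2$ (so that $\deg D = g+3$), and that the monotonicity $\ell(D_1) \leq \ell(D_2)$ for $D_1 \leq D_2$ is being used for the canonical-minus-divisor term. Both are routine.
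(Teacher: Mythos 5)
Your proof is correct, and it rests on exactly the same two ingredients as the paper's: the Riemann--Roch theorem and the non-specialness of $\sum_{j=1}^g P_j$ from Lemma \ref{lem:nonspecial}. The only difference is in how they are combined. The paper sandwiches the dimension: it gets the upper bound $\dim L\bigl(\sum_j P_j + 3\infty_2\bigr) \leq \dim L\bigl(\sum_j P_j\bigr) + 3 = 4$ by exhibiting the space $L\bigl(\sum_j P_j\bigr)$ (which is one-dimensional by non-specialness) as the kernel of the linear map sending $F$ to its principal part $(f_1,f_2,f_3)$ at $\infty_2$, and then gets the matching lower bound $\geq 4$ from the Riemann--Roch inequality. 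You instead use the full Riemann--Roch equality and eliminate the correction term $\dim L(K-D)$ by monotonicity, reducing it to $\dim L\bigl(K-\sum_j P_j\bigr)=0$, which is Riemann--Roch applied to the non-special divisor $\sum_j P_j$. The two arguments are logically equivalent (the paper's "adding a point raises the dimension by at most one" bound is the same fact as your monotonicity of the specialty index), so neither buys anything substantive over the other; your version is slightly more standard in its bookkeeping, while the paper's explicit principal-part map at $\infty_2$ foreshadows how the functions $F_2, F_3, F_4$ are actually pinned down by their expansions at $\infty_2$ in the construction of the remaining rows of $M$.
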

\begin{proof}
Let us denote, for a positive divisor $D^{\prime}$,
the space of meromorphic functions on $\mathcal{R}$ whose divisor is
greater than or equal to $-D^{\prime}$ by $L(D^{\prime})$.

Let $D = \sum_{j=1}^g P_j$. Any $F\in L(D+ 3\infty_2)$
has a pole at $\infty_2$ of order at most $3$. Therefore, if $w$ is a local
coordinate near $\infty_2$, we have
\[ F(w) = f_3 w^{-3} + f_2 w^{-2} + f_1 w^{-1} + \OO(1) \qquad \text{ as } w \to 0, \]
for certain numbers $f_1, f_2, f_3$.
The kernel of the linear map
\[ L(D + 3 \infty_2) \to \mathbb C^3 : F \mapsto (f_1,f_2,f_3) \]
is equal to  the space $L(D)$. Since $D$ is non-special, the space
$L(D)$ contains only constant functions and therefore is of dimension $1$.
Therefore by the dimension theorem for linear maps
\begin{equation*}
\mathrm{dim}L(D +3\infty_2)\leq
\mathrm{dim}L(D)+3=4.
\end{equation*}
However, by the Riemann-Roch theorem (see e.g. \cite{FarKra}), we also
have
\begin{equation*}
    L(D+3\infty_2)\geq 4.
\end{equation*}
This proves the lemma.
\end{proof}

We continue to use
\[ D = \sum_{j=1}^g P_j. \]

Let $F_j$ be a function in $L(D + 3 \infty_2)$. We use $F_{jk}$ to denote the
restriction of $F_j$ to the $k$th sheet and we consider the row vector
\begin{equation} \label{eq:modifiedrow}
    \begin{pmatrix} M_{11} F_{j1} & M_{12} F_{j2} & M_{13} F_{j3} & M_{14} F_{j4} \end{pmatrix}
    \end{equation}
Note that the possible poles of $F_{jk}$ at the points $\pi(P_l)$ are cancelled
by the zeros of $M_{1k}$ at the same point. Therefore the row vector
remains bounded at these points and it has the same behavior near
the branch points as the first row.

It is easy to check that the row vector has a jump on $\mathbb R \cup i \mathbb R$
with jump matrix $J_M$. In order to construct the other rows we therefore
aim to find three functions $F_2$, $F_3$ and $F_4$ in  $L(D + 3 \infty_2)$
such that the corresponding row vectors also satisfy the asymptotic condition
for the respective rows $2$, $3$ and $4$ in the RH problem for $M$.
This will be done in the next proposition.

\begin{proposition}
The RH problem for $M$ has a unique solution that is constructed in
the form
\[ M = \begin{pmatrix} M_{11} & M_{12} & M_{13} & M_{14} \\
    M_{11} F_{21} & M_{12} F_{22} & M_{13} F_{23} & M_{14} F_{24} \\
    M_{11} F_{31} & M_{12} F_{32} & M_{13} F_{33} & M_{14} F_{34} \\
    M_{11} F_{41} & M_{12} F_{42} & M_{13} F_{43} & M_{14} F_{44}
    \end{pmatrix} \]
\end{proposition}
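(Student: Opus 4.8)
The plan is to produce the functions $F_2,F_3,F_4$ by a dimension count together with the normalization of the leading asymptotics at $\infty_2$, and then to verify that the resulting matrix $M$ has the correct asymptotic behavior and is invertible. First I would observe that by Lemma~\ref{lem:dime} the space $L(D+3\infty_2)$ is four-dimensional. Writing $w=z^{-1/3}$ as a local coordinate at $\infty_2$, every $F\in L(D+3\infty_2)$ has an expansion $F(w)=f_3w^{-3}+f_2w^{-2}+f_1w^{-1}+f_0+\OO(w)$ on the union of sheets $2,3,4$, while on sheet $1$ it is bounded (it has no pole at $\infty_1$). The linear map $F\mapsto(f_1,f_2,f_3)$ from $L(D+3\infty_2)$ to $\mathbb C^3$ is surjective (its kernel is $L(D)$, the constants, of dimension $1$, and the space has dimension $4$). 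Hence I can choose $F_2,F_3,F_4$ so that the triple $(f_1,f_2,f_3)$ equals prescribed values, chosen to match the required asymptotic expansions of the second, third and fourth rows of $M$ in \eqref{eq:RHforM}. Concretely, recalling from Corollary~\ref{cor:vatbranch}(e) that $M_{1k}(z)=\OO(z^{-2/3})$ for $k=2,3,4$ while $M_{11}(z)=1+\OO(z^{-1})$, and that the asymptotic condition in \eqref{eq:RHforM} demands the $(j,j)$-entry to behave like $z^{1/3}$, $1$, $z^{-1/3}$ for $j=2,3,4$ respectively (times the matrices $A_j$ depending on the quadrant), one reads off exactly which coefficients $f_1,f_2,f_3$ of $F_j$ are needed; the freedom in the constant $f_0$ is used to fix the $(j,1)$-entry $M_{11}F_{j1}$ to be $\OO(z^{-1})$ as required on the first column away from the diagonal.

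Next I would assemble $M$ as displayed and check the three defining properties. Analyticity off $\mathbb R\cup i\mathbb R$ is clear since each $F_{jk}$ is meromorphic on sheet $k$ with poles only at the $\pi(P_l)$, which are cancelled by the zeros of $M_{1k}$ from Corollary~\ref{cor:vatbranch}(d); the fourth-root singularities at the branch points $a_k,b_k,\pm ic_2,\pm c_3$ are inherited from the first row and are the admissible ones. The jump relation $M_+=M_-J_M$ holds because multiplying a row $(M_{11}\,M_{12}\,M_{13}\,M_{14})$ by the sheetwise-constant factors $(F_{j1},F_{j2},F_{j3},F_{j4})$ on both sides of a cut does not change the jump matrix: on each component of $\mathbb R\cup i\mathbb R$ the two sheets being exchanged carry values of the single meromorphic function $F_j$ that match across the cut (this is the content of the jump structure already encoded in Corollary~\ref{lem:jumpvj} for the $v_k$), so the $F$-factors commute through $J_M$. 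The asymptotic condition at $\infty$ then follows from the way $F_2,F_3,F_4$ were chosen, together with the expansions of $v_j$ in Corollary~\ref{cor:vatbranch}(e) and the explicit form of the matrices $A_j$.

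Finally, for uniqueness I would argue in the standard way: since each $J_M$-block has determinant $\pm1$ (the real-axis blocks $(J_M)_1$ have determinant $1$ or, on $(b_k,a_{k+1})$, product of the two exponentials $=1$; $(J_M)_3$ has determinant $1$ except the factor $(-1)^n I_2$ with determinant $1$; $(J_M)_2$ has determinant $1$), $\det M$ has no jumps, only possible fourth-root singularities at finitely many points (hence removable since $\det M$ is single-valued and bounded nearby), and tends to a nonzero constant at $\infty$; by Liouville $\det M$ is a nonzero constant, so $M$ is invertible everywhere. Any two solutions $M,\tilde M$ then give $M\tilde M^{-1}$ analytic across $\mathbb R\cup i\mathbb R$ with value $I$ at $\infty$, hence $M\tilde M^{-1}\equiv I$. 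The main obstacle I anticipate is the bookkeeping in the second step: one must verify that the single normalization $(f_1,f_2,f_3,f_0)$ of each $F_j$ simultaneously produces the correct leading term in all four quadrants, i.e.\ that the quadrant-dependent matrices $A_j$ in \eqref{eq:RHforM} are automatically reproduced. This works precisely because the branching of $w=z^{-1/3}$ on sheets $2,3,4$ permutes the three cube roots in exactly the pattern displayed in Lemmas~\ref{lem:sjasymptotics}, \ref{lem:thetajasymptotics} and \ref{lem:xijasymptotics}, which is the same pattern built into the $A_j$; making this matching explicit (and checking it is consistent with the $\pm$ sign choices in Definition~\ref{def:Mfirstrow}) is the one genuinely delicate computation, and I would carry it out quadrant by quadrant using the relation $A_j\Omega_jA_j^{-1}=\left(\begin{smallmatrix}0&0&1\\1&0&0\\0&1&0\end{smallmatrix}\right)$ already recorded in the proof of the asymptotics of $U$.
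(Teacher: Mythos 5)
Your construction is essentially the paper's own proof: you invoke Lemma~\ref{lem:dime} for $\dim L(D+3\infty_2)=4$, determine each $F_j$ by prescribing the coefficients $f_1,f_2,f_3$ in the local coordinate $w=z^{-1/3}$ at $\infty_2$ together with vanishing at $\infty_1$, use the zeros of $M_{1k}$ at $\pi(P_l)$ to cancel the poles of $F_{jk}$, observe that the sheetwise factors pass through $J_M$, and correctly flag the quadrant-by-quadrant consistency of the leading terms as the one point requiring an independent check. The only additions are the explicit uniqueness argument via $\det M$ (which the paper leaves implicit, and where your "bounded near the branch points" should more precisely be "$\OO((z-p)^{-1/2})$, hence removable"), so the proposal is correct and follows the same route.
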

\begin{proof}

Let us do this for the second row. The asymptotic condition we get from
\eqref{eq:RHforM} for the second row  gives us as $z \to \infty$ in the first quadrant
\begin{align} \label{eq:condF21}
     M_{11}(z) F_{21}(z) & = \OO(z^{-1}) \\ \label{eq:condF22}
     M_{12}(z) F_{22}(z) & =  \frac{i}{\sqrt{3}}   z^{1/3} + \OO(z^{-2/3})  \\ \label{eq:condF23}
     M_{13}(z) F_{23}(z) & =  - \frac{i}{\sqrt{3}} \omega   z^{1/3} + \OO(z^{-2/3})  \\ \label{eq:condF24}
   M_{14}(z) F_{24}(z) & =  - \frac{i}{\sqrt{3}} \omega^2  z^{1/3} + \OO(z^{-2/3})
   \end{align}

The first condition \eqref{eq:condF21} is satisfied if and only if $F_2(\infty_1) = 0$.
The other conditions determine the behavior of $F_2$ near $\infty_2$.

There is an expansion
\[ F_2(w) = f_3 w^{-3} + f_2 w^{-2} + f_1 w^{-1} + \OO(1) \]
where $w$ is the local coordinate near $\infty_2$ which we
choose to be equal to $z^{-1/3}$ in
the first quadrant of the second sheet. Its behavior in other quadrants
and on other sheets is determined by analytic continuation.
Thus
\begin{equation} \label{eq:F22expansion}
    F_{22}(z) = f_3 z + f_2 z^{2/3} + f_1 z^{1/3} + \OO(1)
    \end{equation}
as $z \to \infty$ in the first quadrant.
We also have
\begin{equation} \label{eq:M12expansion}
    M_{12}(z) = v_2(z) = m_3 z^{-2/3} + m_2 z^{-1} + m_1 z^{-4/3} + \OO(z^{-5/3})
    \end{equation}
as $z \to \infty$ in the first quadrant with a nonzero first coefficient
\[ m_3 \neq 0. \]
Then inserting \eqref{eq:F22expansion} and \eqref{eq:M12expansion}
into the condition \eqref{eq:condF22} we obtain a linear system of
equations for the unknowns $f_3, f_2, f_1$ that has a unique solution.
These three conditions together with the fact that $F_2(\infty_1) = 0$
determines $F_2$ uniquely.

Now it requires an independent check that the conditions \eqref{eq:condF23}
and \eqref{eq:condF24} are satisfied as well with the same function $F_2$, and also the analogous
conditions that come from the asymptotic condition in the other quadrants.
This then completes the construction of the second row of $M$.

The functions $F_3$ and $F_4$ are found in a similar way and they are
used to construct the remaining rows of $M$.\end{proof}

From the construction it follows that the entries of $M$ and $M^{-1}$
are uniformly bounded in $n$ away from the branch points and infinity. More precisely, we have the following proposition.

\begin{proposition} \label{prop:columnsofM} We have that
\begin{enumerate}
\item[{\rm (a)}] the first columns of $M$ and $M^{-t}$ are bounded  away from the points $\{a_1,b_1,\ldots,a_N,b_N\}$, with a bound that is uniform in $n$.
\item[{\rm (b)}] the second columns of $M$ and $M^{-t}$ are bounded  away from the points $\{a_1,b_1,\ldots,a_N,b_N\}$, $\pm i c_2$ and $\infty$, with a bound that is uniform in $n$.
\item[{\rm (c)}] the third columns of $M$ and $M^{-t}$ are bounded  away from the points $\pm ic_2,\pm c_3$ and $\infty$, with a bound that is uniform in $n$.
\item[{\rm (d)}] the first columns of $M$ and $M^{-t}$ are bounded  away from the points $\pm c_3$ and $\infty$,  with a bound that is uniform in $n$.
\end{enumerate}
\end{proposition}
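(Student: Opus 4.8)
The plan is to read the boundedness off the explicit construction of $M$ just carried out, together with a compactness argument, and to handle $M^{-t}$ by the same construction applied to the transposed–inverted model problem. First I would check that $\det J_M=1$ on every arc of $\mathbb R\cup i\mathbb R$: each nontrivial $2\times2$ block in \eqref{eq:jumpM1}--\eqref{eq:jumpM2} has determinant $1$, including the blocks $(-1)^nI_2$ (whose determinant is $(-1)^{2n}=1$). Hence $\det M$ has no jump on $\mathbb R\cup i\mathbb R$; at each branch point it is $\OO(|z-\mathrm{pt}|^{-1/2})$ by Corollary~\ref{cor:vatbranch} (two columns carry a fourth-root singularity, the other two are bounded there) and hence the singularities are removable, at the points $\pi(P_l)$ it is analytic (the poles of the $F_j$ are cancelled by the zeros of the $v_k$), and $\det M\to\det A_j=1$ at infinity, so $\det M\equiv1$. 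Consequently $M^{-t}$ is again analytic off $\mathbb R\cup i\mathbb R$, satisfies $(M^{-t})_+=(M^{-t})_-\,J_M^{-t}$, and has the asymptotics obtained from \eqref{eq:RHforM} with the entries $z^{1/3}$ and $z^{-1/3}$ interchanged; this is a model RH problem of exactly the same shape as \eqref{eq:RHforM}, so it admits an identical construction from a third‑kind differential $\omega_Q$ attached to a non-special divisor $\sum_{j=1}^gQ_j$ with $Q_j\in\Sigma_j$ (chosen, via Proposition~\ref{prop:mapPsi}, so that its $B$-periods realise the required values), together with the associated functions $\tilde v_j=\mathrm e^{\tilde u_j}$ and meromorphic $\tilde F_j$. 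Thus $M$ and $M^{-t}$ can be treated on the same footing. (Writing $M^{-t}$ via the adjugate as $3\times3$ minors of $M$ would also work, but it obscures the cancellations that make the result bounded near $\pm ic_2$, $\pm c_3$ and $\infty$; the direct construction makes them manifest.)

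Next I would record the structure of the columns. By Definition~\ref{def:Mfirstrow} and the construction of the remaining rows, column $k$ of $M$ equals $\varepsilon_k\,v_k\,(1,F_{2k},F_{3k},F_{4k})^t$, where $\varepsilon_k$ is a locally constant sign (depending on $k$ and the position, not on the row index), $v_k=\mathrm e^{u_k}$, and $F_{jk}$ is the restriction of $F_j$ to the $k$th sheet; analogously for $M^{-t}$ with $\tilde v_k,\tilde F_{jk}$. The simple poles of $F_{jk}$ at those $\pi(P_l)$ lying on the $k$th sheet are cancelled by the zeros of $v_k$ there (Corollary~\ref{cor:vatbranch}(d)), so each product $v_kF_{jk}$ is analytic at those moving points. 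By Lemma~\ref{lem:uatbranch} and Corollary~\ref{cor:vatbranch}, $v_k$ has at worst a fourth-root singularity at the branch points lying on a cut incident to sheet $k$ — namely $\{a_1,b_1,\dots,a_N,b_N\}$ for $k=1$; that set together with $\{\pm ic_2\}$ for $k=2$; $\{\pm ic_2,\pm c_3\}$ for $k=3$; and $\{\pm c_3\}$ for $k=4$ — while near $\infty_2$ one has $v_k=\OO(z^{-2/3})$ against $F_{jk}=\OO(z)$ (a triple pole in the local coordinate $z^{-1/3}$), so that $v_kF_{jk}=\OO(z^{1/3})$ for $k=2,3,4$; for $k=1$, $v_1\to1$ and the $F_{j1}$ stay bounded at $\infty_1$. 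Away from precisely these points the entries $v_kF_{jk}$ are bounded, which is exactly the exclusion set in (a)–(d); the same bookkeeping for $\tilde v_k,\tilde F_{jk}$ gives the statement for $M^{-t}$.

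Finally I would establish uniformity in $n$. The differential $\omega_P$ is uniquely determined by its (fixed) poles, residues and $A$-periods once $P_1,\dots,P_g$ are fixed, so the whole construction depends on $n$ only through $(P_1,\dots,P_g)\in\Sigma_1\times\cdots\times\Sigma_g$ (compact), through $(Q_1,\dots,Q_g)$ in the same compact set, and through the unit-modulus quantities $\mathrm e^{2\pi in\alpha_k}$ and $(-1)^n$ appearing in $J_M$. As in \cite[Proposition~2.3]{KuiMo}, $\omega_P$ depends continuously on $(P_1,\dots,P_g)$, hence so do $u_k$, $v_k$, the expansion coefficients of $v_2$ at $\infty_2$ (the leading coefficient being $\mathrm e^{c}$ with $c$ continuous on the compact parameter space, hence bounded away from $0$ uniformly in $n$), the linear systems that determine the $F_j$ (which therefore remain uniformly solvable), and thus the entries of $M$; moreover the products $v_kF_{jk}$ depend jointly continuously on $(z,P_1,\dots,P_g)$ because the $n$-dependent poles at $\pi(P_l)$ are cancelled by zeros of $v_k$. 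Uniform boundedness on $\{z:\dist(z,E)\ge\delta\}$, for $E$ the exclusion set of the relevant column and any $\delta>0$, then follows from compactness of the parameter space and of $\{z:\dist(z,E)\ge\delta\}$ (a subset of the Riemann sphere), and identically for $M^{-t}$. The main point to be careful about is exactly this joint continuity — that the cancellation of the moving poles at $\pi(P_l)$ is uniform — together with the uniform solvability of the systems defining the $F_j$ (equivalently, that the coefficient $m_3$ stays bounded away from $0$); everything else is bookkeeping.
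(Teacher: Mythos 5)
Your proposal is correct and takes essentially the same route as the paper's proof: the boundedness is read off the local behaviour of the explicitly constructed columns (the paper compresses this into ``from the structure of the RH problem''), uniformity in $n$ comes from continuity of the construction on the compact parameter space together with the unimodular phases, and $M^{-t}$ is handled as the solution of a model RH problem of the same structure. The only difference is cosmetic: the paper also records the closed-form alternative $M^{-t}=C\widetilde M$, with $C$ a constant anti-diagonal sign matrix and $\widetilde M$ the solution of \eqref{eq:RHforM} with parameters $-\alpha_k$, which packages your $\omega_Q$-construction in one identity.
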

\begin{proof}
We will only proof part (a) as the others follow by similar arguments.

Let us start with the first column of $M$. From the structure of the
RH problem we see that the entries of $M$ in the first column  are
analytic in $\C\setminus [a_1,b_n]$, bounded near $\infty$ and have
an analytic continuation across each interval $(a_k,b_k)$ and
$(b_k,a_{k+1})$. Hence the entries are bounded if we stay away from
$\{a_1,b_1,\ldots,a_N,b_N\}$.  Moreover, $M$ depends continuously on
the parameters $n\alpha_k$ modulo integers. By compactness of the
parameter space, it then follows that we can choose the bound for
the entries such that they hold uniformly in $n$.

As for the first column of $M^{-t}$, we note that $M^{-t}$ satisfies
a RH problem that has the same structure as the RH problem for $M$.
Then the statement follows from the same arguments. Alternatively,
the statement for $M^{-t}$ follows from the identity
\begin{align} \label{eq:MinverseM}
M^{-t} =\begin{pmatrix} 1 & 0 & 0 & 0\\ 0 & 0 & 0 &-1\\
0 & 0 & -1 & 0 \\
0 & -1 & 0 & 0
\end{pmatrix}\widetilde{M},
\end{align}
where $\widetilde M$ is the solution of the RH problem \eqref{eq:RHforM} but with parameters $-\alpha_k$ instead of $\alpha_k$ for $k=1,\ldots,N$. The identity \eqref{eq:MinverseM}  follows from the fact that both sides solve the same, uniquely solvable, RH problem.
\end{proof}
The following corollary will be used in the next section.
\begin{corollary}\label{cor:MinvyMx}
Let $K\subset S(\mu_1)\setminus \{a_1,b_1,\ldots,a_N,b_N\}$ be compact. Then for every $x\in K$ we have
\begin{align}
M_+^{-1}(y)M_+(x)=\begin{pmatrix}
I_2+\OO(x-y) &*\\
* & *
\end{pmatrix} \quad \text{as } y\to x,
\end{align}
uniformly in $n$. The $*$ entries denote unimportant $2\times 2$ blocks.
\end{corollary}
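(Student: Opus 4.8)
The plan is to deduce the corollary directly from the analytic structure of $M$ at points of $S(\mu_1)$ together with the uniform-in-$n$ bounds established in Proposition \ref{prop:columnsofM}. First I would recall that on $S(\mu_1)$ the jump matrix $J_M$ has the block form \eqref{eq:jumpMonR}--\eqref{eq:jumpM1}, namely $J_M = \begin{pmatrix} (J_M)_1 & 0 \\ 0 & (J_M)_3 \end{pmatrix}$ with $(J_M)_1 = \begin{pmatrix} 0 & 1 \\ -1 & 0 \end{pmatrix}$ on $S(\mu_1)$ and $(J_M)_3$ \emph{independent of $x$ on the piece of $S(\mu_1)$ we are looking at} (it equals $I_2$ on the lips but more to the point it is locally constant away from the branch points of $\mu_3$; since $K$ is compact in the interior of $S(\mu_1)\setminus\{a_k,b_k\}$ we can also arrange $K$ avoids $\pm c_3$). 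The key observation is that $(J_M)_1$ is a \emph{constant} matrix on $S(\mu_1)$. Therefore the product $M_+^{-1}(y) M_+(x)$, for $x,y$ on the same subinterval $(a_k,b_k)$, can be rewritten using $M_+ = M_- (J_M)$ so that the relevant combination extends analytically across the arc.

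The key steps are as follows. Fix $x\in K$ and let $y\to x$ with $y$ on the same interval $(a_k,b_k)$. Write $M_+^{-1}(y)M_+(x) = \bigl(M_+^{-1}(y)M_+(y)\bigr)\bigl(M_+^{-1}(y)M_+(x)\bigr)$ — trivially — and instead work with the factorization through a local analytic function. Concretely, because $(J_M)_1$ is constant on $(a_k,b_k)$, the matrix-valued function $z\mapsto M(z) \begin{pmatrix} (J_M)_1^{1/2} & 0 \\ 0 & I_2 \end{pmatrix}^{-1}$ (or more simply: the function $M(z)$ post-multiplied by a fixed constant matrix $C_0$ that block-diagonally ``absorbs'' the constant jump $(J_M)_1$ in the $2\times 2$ upper-left block) extends to a function that is analytic in a full neighborhood of any point of $(a_k,b_k)$, bounded uniformly in $n$ by Proposition \ref{prop:columnsofM}(a). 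Call this neighborhood-analytic function $\widehat M(z)$. Then $M_+(x) = \widehat M(x) C_0^{-1}$ and $M_+(y) = \widehat M(y) C_0^{-1}$, so $M_+^{-1}(y) M_+(x) = C_0 \widehat M(y)^{-1} \widehat M(x) C_0^{-1}$. Since $\widehat M$ is analytic and invertible near $x$ with uniform-in-$n$ bounds on $\widehat M$ and $\widehat M^{-1}$ (the inverse bound from the corresponding statement for $M^{-t}$, again via Proposition \ref{prop:columnsofM}), Taylor expansion gives $\widehat M(y)^{-1}\widehat M(x) = I_4 + \OO(x-y)$ with an implied constant uniform in $n$ and in $x\in K$. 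Conjugating by the fixed matrix $C_0$ preserves this: $M_+^{-1}(y)M_+(x) = I_4 + \OO(x-y)$.

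That identity is already stronger than what is claimed, so the final step is just bookkeeping: the corollary only asserts control of the upper-left $2\times 2$ block, writing the remaining blocks as unimportant $*$'s. One subtlety I would flag and handle carefully: the constant matrix $C_0$ must be chosen so that post-multiplication genuinely removes the $x$-dependence of the jump on $(a_k,b_k)\subset S(\mu_1)$ while not reintroducing an $x$-dependent jump on the adjacent gap intervals $(b_k,a_{k+1})$; but since for the corollary we only need local analyticity near a single point $x\in K$ in the \emph{interior} of a band, it suffices to pick $C_0$ adapted to that one interval and argue purely locally, so no global compatibility is needed. The main (and essentially only) obstacle is the uniformity in $n$: this is exactly the content of Proposition \ref{prop:columnsofM} — the entries of $M$ and $M^{-t}$ are bounded away from the branch points with bounds independent of $n$ (via compactness of the torus of parameters $n\alpha_k \bmod 1$) — so once that is invoked, the estimate $\widehat M(y)^{-1}\widehat M(x) = I_4 + \OO(x-y)$ is a routine consequence of Cauchy estimates on a fixed disk around $x$ contained in the region of analyticity and disjoint from the branch points.
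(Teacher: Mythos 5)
Your central construction does not work as stated: post-multiplying $M$ by a single fixed constant matrix $C_0$ cannot remove the jump on $(a_k,b_k)$. If $\widehat M = M C_0$, then $\widehat M_+ = \widehat M_-\, C_0^{-1} J_M C_0$, so the jump is merely conjugated; since $(J_M)_1 = \left(\begin{smallmatrix} 0&1\\-1&0\end{smallmatrix}\right)$ is not the identity, no constant $C_0$ (a ``square root'' of the jump or otherwise) makes $\widehat M$ continuous, let alone analytic, across the band. What is true --- and what the paper actually uses --- is that each of the first two \emph{columns} of $M$, viewed from one side, continues analytically across $(a_k,b_k)$ into (plus or minus) the other column, so $M_{+}e_1$ and $M_{+}e_2$ are restrictions of functions analytic in a full neighborhood of any interior point of the band; combined with the uniform bounds of Proposition \ref{prop:columnsofM} and Cauchy's estimate, this gives that the first two columns of $M_+(x)-M_+(y)$ are $\OO(x-y)$ uniformly in $n$. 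The paper then writes $M_+^{-1}(y)M_+(x)=I+M_+^{-1}(y)\bigl(M_+(x)-M_+(y)\bigr)$ and observes that the upper-left $2\times2$ block of the correction term involves only the first two rows of $M_+^{-1}(y)$ (uniformly bounded by Proposition \ref{prop:columnsofM}) and the first two columns of $M_+(x)-M_+(y)$. Your idea could be repaired with a \emph{piecewise}-constant factor ($C_+$ in the upper half-plane, $C_-=J_M C_+$ in the lower), but only as a device for continuing the first two columns, not the full matrix.

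The second gap is that your conclusion $M_+^{-1}(y)M_+(x)=I_4+\OO(x-y)$ is too strong and is in general false on all of $K$. The hypothesis only excludes the endpoints $a_k,b_k$; the points $\pm c_3$ may lie in the interior of a band of $S(\mu_1)$ (the paper explicitly allows this, see Figure \ref{fig:lensesmu1}), and there the third and fourth columns of $M$ have fourth-root singularities, so neither $M$ nor $M^{-1}$ is bounded and the lower-right block cannot be controlled. You cannot ``arrange'' that $K$ avoids $\pm c_3$, since $K$ is given; this is precisely why the corollary asserts control only of the upper-left $2\times2$ block. The paper's column-wise argument is insensitive to this because it never touches the last two columns of $M$ or the last two rows of $M^{-1}$.
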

\begin{proof}
By  Proposition \ref{prop:columnsofM} the first two columns in $M_+(x)-M_+(y)$ are uniformly bounded and since they are analytic it follows that they are of order $\OO(x-y)$ as $y\to x$
uniformly in $n$. Also  the first two rows of $M_+^{-1}(y)$ are uniformly bounded by Proposition \ref{prop:columnsofM}. Then  the corollary follows, since
\begin{align*}
M_+^{-1}(y)M_+(x)=I+M_+^{-1}(y)(M_+(x)-M_-(y)).
\end{align*}
\end{proof}
\section{Local parametrices and final transformation}

\subsection{Local parametrices}
The global parametrix $M$ will not be a good approximation to $S$ near
the branch points.
Around the branch points we construct a local parametrix in a fairly standard
way with the use of Airy functions. We will  not give full details about
the construct here but only give the relevant formulas with some comments.

\subsubsection{Statement of local RH problems}

Let
\[ BP = \{ a_k, b_k \mid k=1, \ldots, n\} \cup \left(\{ \pm ic_2, \pm c_3 \} \setminus \{ 0\} \right). \]
be the set of branch points. Note that $\pm ic_2$ an $\pm c_3$ are only branch
points if they are non-zero.

There is a possibility that $\pm c_3$ coincides with one of the end points $a_k, b_k$ of
the support of $\mu_1$. This is a case that could be handled just as well
but the formulas are slightly different and we prefer not to give full detail
in this case. Thus we assume
\begin{equation} \label{eq:c3different}
    \pm c_3 \not \in \{ a_k, b_k  \mid k =1, \ldots, n\}.
    \end{equation}

We take a sufficiently small disk $D_{p}$ around each of the branch points $p \in BP$.
The disks are mutually disjoint. We also make sure that the disks $D_{a_k}$ and $D_{b_k}$
are small enough such that they do not  intersect with the lips of the global lens around $S(\mu_3)$,
and similarly the disks $D_{\pm c_3}$  do not intersect with the lips of the lenses
around $S(\mu_1)$. Also the disks around $\pm ic_2$ are small enough
so that they do not intersect with the lenses around $S(\mu_1)$ and $S(\mu_3)$.

We use
\[ D = \bigcup_{p \in BP} D_p \]
to denote the union of the disks. Then we would like to find
a parametrix $P : \overline{D} \setminus \Sigma_S \to \mathbb C^{4\times 4}$
that has the same jumps as $S$ on the parts of $\Sigma_S$ in $D$.
There is a very minor complication here in case that $a_k$ or $b_k$ belongs
to $\mathbb R \setminus S(\mu_3) = (-c_3, c_3)$. Then the jump matrix $J_S$ for $S$
has in its right lower block $(J_S)_3$ an off-diagonal entry that is exponentially
small in a full neighborhood of $a_k$ or $b_k$, see the second formula in   \eqref{eq:JS3explicit}.
Being exponentially small this entry will play no role in what follows.
However, for the construction of the local parametrix it is more convenient
to set this entry equal to $0$.  This is also what we did when defining the part
$(J_M)_3$ in the jump matrix for $M$, see \eqref{eq:jumpM3}. Thus we use $(J_M)_3$
for the jump matrix $(J_P)_3$ on the real line in $D_{a_k}$ and $D_{b_k}$, see \eqref{eq:JPonDp3} below.

A similar thing happens in case $\pm c_3 \in \mathbb R \setminus S(\mu_1)$.
Then an exponentially small entry is in $(J_S)_1$, see the second formula in \eqref{eq:JS1explicit}
that we set equal to zero in the jump matrix for $P$.

Thus in $D_p$ where $p$ is one of $a_k, b_k$, for $k=1, \ldots, N$, we take
\begin{equation} \label{eq:JPblock13}
     J_P = \begin{pmatrix} (J_P)_1 & 0 \\ 0 & (J_P)_3 \end{pmatrix}
     \end{equation}
with
\begin{equation} \label{eq:JPonDp1}
\begin{aligned}
    (J_P)_1 & = (J_S)_1 && \quad \text{ on } \Sigma_S \cap D_p, \\
    (J_P)_3 & = (J_M)_3 && \quad \text{ on } \mathbb R \cap D_p, \\
    (J_P)_3 & = I_2 && \quad \text{ on } \left( \Sigma_S \setminus \mathbb R \right) \cap D_p.
    \end{aligned}
    \end{equation}

If $c_2 >0$ then in $D_{ic_2}$ and $D_{-ic_2}$  we put
\begin{equation}
     J_P =      \begin{pmatrix} 1 & 0 & 0 \\ 0 &  (J_P)_2 & 0 \\ 0 & 0 & 1 \end{pmatrix}
     \end{equation}
with
\begin{equation} \label{eq:JPonDp2}
    (J_P)_2 = (J_S)_2 \quad \text{ on } \Sigma_S \cap D_p.
    \end{equation}

If $c_3 > 0$ then in $D_{c_3}$ and $D_{-c_3}$, we have again the
block structure \eqref{eq:JPblock13}, but now we put
\begin{equation} \label{eq:JPonDp3}
\begin{aligned}
    (J_P)_1 & = (J_M)_1 && \quad \text{ on } \mathbb R \cap(D_{c_3} \cup D_{-c_3}), \\
    (J_P)_1 & = I_2     && \quad \text{ on } (\Sigma_S \setminus \mathbb R) \cap (D_{c_3} \cup D_{-c_3}), \\
    (J_P)_3 & = (J_S)_3 && \quad \text{ on } \Sigma_S \cap (D_{c_3} \cup D_{-c_3}),
    \end{aligned}
    \end{equation}

With this definition of $J_P$ we look for a parametrix
\[ P : \overline{D} \setminus \Sigma_S \to \mathbb C^{4\times 4} \]
that satisfies the following local RH problem.
\begin{equation}\label{eq:RHforP}
\left\{\begin{array}{l}
    P \textrm{ is analytic in } D \setminus \Sigma_S \text{ and continuous in } \overline{D} \setminus \Sigma_S, \\[5pt]
    P_+ = P_- J_P, \qquad \text{on }\Sigma_S \cap D, \\
    P= (I+ \OO(n^{-1})) M \quad \text{as } n \to \infty, \text{ uniformly on } \partial D \setminus \Sigma_S.
    \end{array} \right.
\end{equation}

\subsubsection{Airy functions}

Around each branch point the construction of $P$ is essentially a
$2 \times 2$ problem, that can be solved in a standard way using Airy functions,
see \cite{DKMVZ1} for the $2 \times 2$ case
and \cite{BleKu1,DuiKu2} for larger size RH problems.

The RH problem for Airy functions is the following.
It will be stated in terms of an auxiliary $\zeta$ variable.

\begin{equation}\label{eq:RHforA}
\left\{\begin{array}{l}
    A  :\mathbb C \setminus \Sigma_A \to \mathbb C^{2 \times 2} \textrm{ is analytic}, \\[5pt]
    A_+ = A_- J_A, \qquad \text{on } \Sigma_A, \\[5pt]
    A(\zeta)=
    \begin{pmatrix} \zeta^{-1/4} & 0 \\ 0 & \zeta^{1/4} \end{pmatrix}
    \frac{1}{\sqrt{2}} \begin{pmatrix} 1 & i \\ i & 1 \end{pmatrix} (I + \OO(\zeta^{-3/2})) \\
    \ \hfill{ \times
    \begin{pmatrix}
    e^{- \frac{2}{3} \zeta^{3/2}} & 0 \\
    0 & e^{\frac{2}{3} \zeta^{3/2}} \end{pmatrix}
    \textrm{ as } \zeta \to \infty},
    \end{array}
     \right.
\end{equation}
where the contour $\Sigma_A$ and the jump matrices $J_A$ are
shown in Figure \ref{fig:RHforA}.

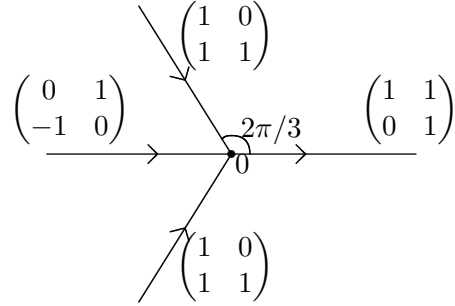
\begin{figure}[t]
\centering
\unitlength 0.7pt
\linethickness{0.5pt}
\begin{picture}(200,160)(-100,-80)
   \put(0,0){\line(1,0){100}}
   \put(0,0){\line(-1,0){100}}
   \qbezier(0,0)(-25,40)(-50,80)
   \qbezier(0,0)(-25,-40)(-50,-80)
   \qbezier(40,0)(39,1)(35,5)   \qbezier(40,0)(39,-1)(35,-5)
   \qbezier(-40,0)(-41,1)(-45,5)   \qbezier(-40,0)(-41,-1)(-45,-5)
   \qbezier(-25,40)(-24,43.5)(-23,47)   \qbezier(-25,40)(-29,41)(-33,42)
   \qbezier(-25,-40)(-24,-43.5)(-23,-47)   \qbezier(-25,-40)(-29,-41)(-33,-42)
   \put(2,-10){$0$}
   \put(0,0){\circle*{4}}
   \qbezier(10,0)(10,10)(0,10)
   \qbezier(0,10)(-4,10)(-5,8)
   \put(5,9){$2\pi/3$}
   \put(70,20){$\begin{pmatrix} 1 & 1  \\ 0 & 1\end{pmatrix}$}
   \put(-30,60){$\begin{pmatrix} 1 & 0 \\  1 & 1 \end{pmatrix}$}
   \put(-120,20){$\begin{pmatrix} 0& 1  \\ -1 & 0\end{pmatrix}$}
   \put(-30,-65){$\begin{pmatrix} 1 & 0 \\ 1 & 1 \end{pmatrix}$}
   \end{picture}
   \caption{Contour $\Sigma_A$ and jump matrices $J_A$ in the RH problem for $A$.  \label{fig:RHforA}}
\end{figure}

We have stated the RH problem in such a way that $\det A(\zeta) \to 1$ as $ \zeta \to \infty$,
which implies that also the solution has constant determinant $1$.
This explains the factors $\sqrt{2 \pi}$ and $\pm i$ that appear in the solution \eqref{eq:Airysol} below.
Define the three functions
\[ y_0(\zeta) = \Ai(\zeta), \qquad y_1(\zeta)  = \omega \Ai(\omega \zeta),
    \qquad y_2(\zeta) = \omega^2 \Ai(\omega^2 \zeta), \]
where $\Ai$ denotes the usual Airy function. That is, $\Ai$ is the unique solution of the Airy differential equation
$y''(\zeta) = \zeta y(\zeta)$ with asymptotic behavior
\[ \Ai(\zeta) = \frac{1}{2\sqrt{\pi} \zeta^{1/4}} e^{-\frac{2}{3} \zeta^{3/2}}\left(1 + \OO(\zeta^{-3/2})\right) \]
as $\zeta \to \infty$ with $-\pi + \varepsilon < \arg \zeta < \pi - \varepsilon$ for any $\varepsilon > 0$.
Also $y_1$ and $y_2$ are solutions of the Airy equation and the relation
\[ y_0 + y_1 + y_2 = 0 \]
is satisfied.

Then the solution of the RH problem \eqref{eq:RHforA} is given by
\begin{equation} \label{eq:Airysol}
    A(\zeta) = \sqrt{2\pi} \times \begin{cases}
        \begin{pmatrix} y_0(\zeta) & -y_2(\zeta) \\
    -i y_0(\zeta) & i y_2'(\zeta) \end{pmatrix} & \text{ for } 0 < \arg \zeta < 2\pi/3, \\
    \begin{pmatrix} -y_1(\zeta) & -y_2(\zeta) \\
    i y_1'(\zeta) & i y_2'(\zeta) \end{pmatrix} & \text{ for } 2\pi/3 < \arg \zeta < \pi, \\
    \begin{pmatrix} -y_2(\zeta) & y_1(\zeta) \\
    i y_2'(\zeta) & -i y_1'(\zeta) \end{pmatrix} & \text{ for } -\pi < \arg \zeta < -2\pi/3, \\
        \begin{pmatrix} y_0(\zeta) & y_1(\zeta) \\
    -i y_0'(\zeta) & -i y_1'(\zeta) \end{pmatrix} & \text{ for } -2 \pi/3 < \arg \zeta < 0.
    \end{cases}
\end{equation}

The solution $A$ of the Airy RH problem is the main building block for the local parametrix $P$.

\subsubsection{Parametrix $P$ in $D_{b_k}$}

In the neighborhood $D_{b_k}$ of a right-end point $b_k$ of
the support of $\mu_1$ the local parametrix $P$ takes the form
\begin{equation} \label{eq:defPnearbk}
    P(z) = M(z) \begin{pmatrix} P_1(z) & 0 \\ 0 & I_2 \end{pmatrix}, \qquad z \in D_{b_k} \setminus \Sigma_S.
    \end{equation}

To describe the $ 2 \times 2$ block $P_1(z)$ we use the solution $A(\zeta)$ of the
RH problem \eqref{eq:RHforA}, the functions $\lambda_1$ and $\lambda_2$ that come from
\eqref{eq:deflambdaj} (and that also appear in the jump matrix $(J_P)_1 = (J_S)_1$, see \eqref{eq:JS1explicit}),
and a function
\begin{equation} \label{eq:defmapf}
        f(z)=\begin{cases}
        \left(\frac{3}{4}(\lambda_{1}(z) - \lambda_{2}(z))  \pm \frac{3}{2} \pi i \alpha_k \right)^{2/3} & \text{ if }b_k>0,\\
           \left(\frac{3}{4}(\lambda_{1}(z) - \lambda_{2}(z))  \pm \frac{3}{2} \pi i \alpha_k  \mp \frac{1}{2}\pi i\right)^{2/3}
& \text{ if } b_k<0,
\end{cases}
        \end{equation}
 for $\pm \Im z > 0, \ z \in D_{b_k} \setminus \mathbb R$.

It turns out that $f$ has an analytic extension to $D_{b_k}$ which maps $b_k$ to $0$
(to check this one  uses the
jump properties \eqref{eq:lambdajump2}--\eqref{eq:lambdajump2b} for $\lambda_1$ and $\lambda_2$, among other
things).
Shrinking $D_{b_k}$ if necessary, one has that $\zeta = f(z)$ is a conformal map from $D_{b_k}$ to
a convex neighborhood of $\zeta = 0$ with $f(b_k) = 0$, $f'(b_k) >0$ and which is real for real $z$.
We adjust the lens around $(a_k,b_k)$ in such a way that the lips of the lens within $D_{b_k}$ are mapped by
$f$ into the rays $\arg \zeta = 2\pi/3$.

Then $P_1$ is given by
\begin{multline} \label{eq:defP1bk}
    P_1(z) =  \begin{pmatrix} e^{\pm n \pi i \alpha_k} & 0 \\ 0 & e^{\mp n \pi i \alpha_k} \end{pmatrix}
        \frac{1}{\sqrt{2}} \begin{pmatrix} 1 & -i \\ -i & 1 \end{pmatrix}
         \begin{pmatrix} n^{1/6} f(z)^{1/4} & 0 \\ 0 & n^{-1/6} f(z)^{-1/4} \end{pmatrix} \\
         \times A(n^{2/3} f(z))
            \begin{pmatrix} e^{\frac{1}{2} n(\lambda_1(z) - \lambda_2(z))} & 0 \\ 0 & e^{-\frac{1}{2} n(\lambda_1(z) - \lambda_2(z))} \end{pmatrix}, \\
              \hfill{\textrm{ for } z\in D_{b_k} \setminus \Sigma_S, \ \pm \Im z > 0,}
\end{multline}
with the principal branches of the fourth roots in $f(z)^{\pm 1/4}$.

\subsubsection{Parametrix $P$ in $D_{a_k}$}

The construction of $P$ in a neighborhood $D_{a_k}$ of a left endpoint $a_k$
of $S(\mu_1)$ is similar. Here we have a map
\begin{equation} \label{eq:defmapf2}
        f(z)=\begin{cases}
        \left(\frac{3}{4}(\lambda_{1}(z) - \lambda_{2}(z))  \pm \frac{3}{2} \pi i \alpha_{k-1} \right)^{2/3} &\text{if } a_k>0,\\
     \left(\frac{3}{4}(\lambda_{1}(z) - \lambda_{2}(z))  \pm \frac{3}{2} \pi i \alpha_{k-1}\mp \frac{1}{2}\pi i \right)^{2/3} &\text{if } a_k<0,\end{cases}
\end{equation}
 for  $\pm \Im z > 0, \, z \in D_{a_k} \setminus \mathbb R$,
with $f(a_k) = 0$, $f'(a_k) < 0$. If necessary, we shrink the disk $D_{a_k}$ and adjust
the lips of the lens around $(a_k,b_k)$ such that $\zeta = f(z)$ is a conformal map in $D_{a_k}$
onto a convex neighborhood of $\zeta  =0$ that maps $\Sigma_S \cap D_{a_k}$ into $\Sigma_A$.

Then the local parametrix takes the form
\begin{equation} \label{eq:defPnearak}
    P(z) = M(z) \begin{pmatrix} P_1(z) & 0 \\ 0 & I_2 \end{pmatrix}, \qquad z \in D_{a_k} \setminus \Sigma_S,
    \end{equation}
with
\begin{multline} \label{eq:defP1ak}
    P_1(z) =  \begin{pmatrix} e^{\pm n \pi i \alpha_{k-1}} & 0 \\ 0 & e^{\mp n \pi i \alpha_{k-1}} \end{pmatrix}
        \frac{1}{\sqrt{2}} \begin{pmatrix} 1 & -i \\ i & -1 \end{pmatrix}
         \begin{pmatrix} n^{1/6} f(z)^{1/4} & 0 \\ 0 & n^{-1/6} f(z)^{-1/4} \end{pmatrix} \\
         \times
         A(n^{2/3} f(z))  \begin{pmatrix} 1 & 0 \\ 0 & -1 \end{pmatrix}
            \begin{pmatrix} e^{\frac{1}{2} n(\lambda_1(z) - \lambda_2(z))} & 0 \\ 0 & e^{-\frac{1}{2} n(\lambda_1(z) - \lambda_2(z))} \end{pmatrix} \\
            \textrm{ for } z \in D_{a_k} \setminus \Sigma_S, \ \pm \Im z > 0.
\end{multline}
The fourth roots $f(z)^{\pm 1/4}$ are taken to be analytic in $D_{a_k} \setminus [a_k,b_k]$ and positive
for real $z < a_k$, $z \in D_{a_k}$.

\subsubsection{Parametrix $P$ in $D_{\pm ic_2}$}

This case is only relevant if $c_2 > 0$ and so we assume $c_2 > 0$.
The parametrix in $D_{\pm ic_2}$ takes the form
\begin{equation} \label{eq:defPnearc2}
    P(z) = M(z) \begin{pmatrix} 1 & 0 & 0 \\ 0 & P_2(z) & 0 \\ 0 & 0 & 1 \end{pmatrix}
    \end{equation}
where $P_2(z)$ is a $2 \times 2$ matrix valued function in $D_{\pm i c_2} \setminus \Sigma_2$.

In $D_{ic_2}$ it is constructed with the function
\begin{equation} \label{eq:defmapfc2} f(z) = \left( \frac{3}{4}(\lambda_2(z) - \lambda_3(z)) \mp \frac{1}{2} \pi i \right)^{3/2}
    \quad \text{for } z \in D_{ic_2} \setminus i \mathbb R, \ \pm \Re z > 0,
    \end{equation}
which (with an appropriate understanding of the $3/2$-power) has an extension to a conformal map
on $D_{ic_2}$ with $f(ic_2) = 0$ and $f'(ic_2) \in i \mathbb R^+$.
Then  $f(z) > 0$ for $z = iy \in D_{ic_2} \cap i\mathbb R$, $y < c_2$.
We take $P_2$ in $D_{ic_2}$ as
\begin{multline} \label{eq:defP2c2}
    P_2(z) =  (-1)^n
        \frac{1}{\sqrt{2}} \begin{pmatrix} -i & 1 \\ -1 & i \end{pmatrix}
        \begin{pmatrix} n^{1/6} f(z)^{1/4} & 0 \\ 0 & n^{-1/6} f(z)^{-1/4} \end{pmatrix} \\
         \times A(n^{2/3} f(z))  \begin{pmatrix} 0 & -1 \\ 1 & 0 \end{pmatrix}
            \begin{pmatrix} e^{-\frac{1}{2} n(\lambda_2(z) - \lambda_3(z))} & 0 \\ 0 & e^{\frac{1}{2} n(\lambda_2(z) - \lambda_3(z))} \end{pmatrix} \\
            \textrm{ for } z \in D_{ic_2} \setminus \Sigma_S.
\end{multline}
The fourth roots $f(z)^{\pm 1/4}$ are positive for $z = iy \in D_{ic_2} \cap i \mathbb R$, $y < c_2$.

The construction of $P$ in $D_{-ic_2}$ is very similar. By symmetry we
can also obtain it from $P$ in $D_{i c_2}$ by means of the formula
\[ P(z) = \begin{pmatrix} 1 & 0 & 0 & 0 \\ 0 & 1 & 0 & 0 \\ 0 & 0 & -1 & 0 \\ 0 & 0 & 0 & 1 \end{pmatrix}
    P(-z) \begin{pmatrix} 1 & 0 & 0 & 0 \\ 0 & 1 & 0 & 0 \\ 0 & 0 & -1 & 0 \\ 0 & 0 & 0 & 1 \end{pmatrix} \]
for $z \in D_{-ic_2} \cap \Sigma_S$.

\subsubsection{Parametrix $P$ in $D_{\pm c_3}$}

This case is only relevant if $c_3 > 0$ and so we assume $c_3 > 0$.
The parametrix $P$ in $D_{\pm c_3}$ takes the form

\begin{equation} \label{eq:defPnearc3}
    P(z) = M(z) \begin{pmatrix} I_2 & 0  \\ 0 & P_3(z)  \end{pmatrix}
    \end{equation}
where a $2 \times 2$ block $P_3(z)$.

It is constructed with the function
\begin{equation} \label{eq:defmapfc3}
        f(z)=
        \left(\frac{3}{4}(\lambda_{3}(z) - \lambda_{4}(z))  \pm \frac{1}{4} \pi i  \right)^{2/3}
        \quad \text{for } \pm \Im z > 0, \ z \in D_{-c_3} \setminus \mathbb R.
        \end{equation}
Then $f$ has an analytic extension to $D_{- c_3}$ which maps $\pm c_3$ to $0$.
Shrinking $D_{\pm c_3}$ if necessary, one has that $\zeta = f(z)$ is a conformal map from
$D_{-c_3}$ to a neighborhood of $\zeta = 0$ with $f(-c_3) =0$ and $f'(-c_3) > 0$.
We adjust the lens around $(-\infty,-c_3)$ in such a way that the lips of the lens within $D_{-c_3}$
are mapped into the rays $\arg \zeta = 2\pi/3$.
Then $P_3$ is given by
\begin{multline} \label{eq:defP3a}
    P_3(z) =  (-1)^n
        \frac{1}{\sqrt{2}} \begin{pmatrix} 1 & -i \\ -i & 1 \end{pmatrix}
         \begin{pmatrix} n^{1/6} f(z)^{1/4} & 0 \\ 0 & n^{-1/6} f(z)^{-1/4} \end{pmatrix} \\
         \times A(n^{2/3} f(z))
            \begin{pmatrix} e^{\frac{1}{2} n(\lambda_3(z) - \lambda_4(z))} & 0 \\ 0 & e^{-\frac{1}{2} n(\lambda_3(z) - \lambda_4(z))} \end{pmatrix}, \\
              \hfill{\textrm{ for } z\in D_{-c_3} \setminus \Sigma_S.}
\end{multline}

For the construction in $D_{c_3}$ we can use the conformal map $f$ with
the same definition as in \eqref{eq:defmapfc3}, but now considered in the
neighborhood of $c_3$. Then we have $f(c_3) = 0$ and $f'(c_3) < 0$,
and after adjusting of the lenses we then define $P_3$ as
\begin{multline} \label{eq:defP3b}
    P_3(z) =  (-1)^n
        \frac{1}{\sqrt{2}} \begin{pmatrix} 1 & -i \\ i & -1 \end{pmatrix}
         \begin{pmatrix} n^{1/6} f(z)^{1/4} & 0 \\ 0 & n^{-1/6} f(z)^{-1/4} \end{pmatrix} \\
         \times A(n^{2/3} f(z)) \begin{pmatrix} 1 & 0 \\ 0 & -1 \end{pmatrix}
            \begin{pmatrix} e^{\frac{1}{2} n(\lambda_3(z) - \lambda_4(z))} & 0 \\ 0 & e^{-\frac{1}{2} n(\lambda_3(z) - \lambda_4(z))} \end{pmatrix}, \\
              \hfill{\textrm{ for } z\in D_{c_3} \setminus \Sigma_S.}
\end{multline}

\subsection{Final transformation}

Having the global parametrix $M$ and the local parametrix $P$
we are ready for the fifth and also final transformation.

\begin{definition}
We define the $4 \times 4$ matrix valued function $R$ by
\begin{equation} \label{eq:defR}
    R(z) = \begin{cases}
    S(z) P(z)^{-1} & \text{ in the disks around each of the branch points,} \\
    S(z) M(z)^{-1}  & \text{ outside the disks.}
    \end{cases}
\end{equation}
\end{definition}

Then $R$ is defined and analytic outside the union of $\Sigma_S$ and
the boundaries of the disks around the branch points. However, since
the jumps of $S$ and $M$ agree on $S(\mu_1) \cap S(\mu_3)$ and on
$S(\sigma_2 - \mu_2)$, the matrix $R$ has analytic continuation
across the parts of these supports that are outside the disks. We
also have that the jumps of $S$ and $P$ agree inside the disks
$D_{\pm ic_2}$ and so $R$ has analytic continuation in the interior
of these two disks. In the disk $D_{a_k}, D_{b_k}$ and $D_{\pm c_3}$
the jumps of $S$ and $P$ may not be exactly the same. They could
differ by an exponentially small entry on the real line inside these
disks. The jumps on the lips of the lenses are the same inside the
disks, and so $R$ has an analytic continuation across these lenses,
but $R$ could have an exponentially small jump on the real line
inside the disks.

The result is that $R$ has an analytic continuation to $\mathbb C \setminus \Sigma_R$
for a certain contour $\Sigma_R$ and that $R$ satisfies the following RH problem.
\begin{equation}\label{eq:RHforR}
\left\{\begin{array}{l}
    R \textrm{ is analytic in } \mathbb C\setminus \Sigma_R, \\[5pt]
    R_+ = R_- J_R, \qquad \text{on }\Sigma_R, \\
    R(z)= I+ \OO(z^{-1}) \quad \text{as } z\to \infty.
    \end{array}\right.
\end{equation}
where
\begin{equation} \label{eq:defJR}
    J_R(z) =
    \begin{cases}
    M(z) P(z)^{-1}, &  z \in \bigcup_{p \in BP} \partial D_p, \\
  P_-(z) J_S(z) P_+(z)^{-1}, & z \in \Sigma_R \cap \bigcup_{p \in BP} D_p, \\
    M_-(z) J_S(z) M_+(z)^{-1}, & z \in \Sigma_R \setminus \overline{\bigcup_{p \in BP} D_p}.
    \end{cases}
    \end{equation}

All of the jump matrices are close to the identity matrix as $n \to \infty$.
Indeed because of the matching condition in \eqref{eq:RHforP} we have
\[ J_R(z) = I_4 + \OO(n^{-1}) \qquad \text{as } n \to \infty \]
uniformly for $z \in \bigcup_{p \in BP} \partial D_p$.
The other jumps are exponentially close to the identity matrix as $n \to \infty$
with a bound that improves as $z \to \infty$.
Indeed we have
\[ J_R(z) = I_4 + \OO(\exp(- c n (|z| + 1)) \qquad \text{as } n \to \infty \]
uniformly for $z \in \Sigma_R \setminus \left( \bigcup_{p \in BP} \partial D_p\right)$.
This follows as in \cite[section 8.4]{DuiKu2}.

Observe that from the definition \eqref{eq:defR} and the asymptotic behaviors
of $S$ and $M$ as given in \eqref{eq:RHforS} and \eqref{eq:RHforM}
we first find that $R(z) = I_4 + \OO(z^{-1/3})$ as $z \to \infty$.
However  $J_R - I_4$ is also exponentially decaying as $z \to \infty$ with $n$
fixed, and therefore the better bound $R(z) = I_4 + \OO(z^{-1})$ in \eqref{eq:RHforR} indeed holds.

From this we conclude as in \cite{DuiKu2}
\begin{proposition} \label{prop:finalRestimate}
    There is a constant $C > 0$ such that for every $n$,
    \[ \| R(z) - I_4 \| \leq \frac{C}{n (|z| + 1)} \]
    uniformly for $z \in \mathbb C \setminus \Sigma_R$.
\end{proposition}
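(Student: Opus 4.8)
The plan is to treat the RH problem \eqref{eq:RHforR} for $R$ by the standard small-norm (small-perturbation) argument, exactly along the lines of \cite[Section 8.4]{DuiKu2}. First I would collect the quantitative estimates on the jump matrix $J_R$ that are already recorded just above the statement: from the matching condition in \eqref{eq:RHforP},
\[ \|J_R(z) - I_4\| \le \frac{C}{n}, \qquad z \in \bigcup_{p \in BP} \partial D_p, \]
while on the remaining part of $\Sigma_R$ (the lips of the lenses and the portions of $\mathbb R \cup i\mathbb R$ outside the disks, where the jumps of $S$ and $M$ or $S$ and $P$ differ only by exponentially small entries) one has, for some $c>0$,
\[ \|J_R(z) - I_4\| \le C\,e^{-cn(|z|+1)}, \qquad z \in \Sigma_R \setminus \bigcup_{p \in BP} \partial D_p . \]
Since $\Sigma_R$ is, up to negligible local deformations, a fixed oriented contour consisting of the finitely many circles $\partial D_p$ together with a finite union of unbounded analytic arcs, these bounds give $\|J_R - I_4\|_{L^2(\Sigma_R) \cap L^\infty(\Sigma_R)} = O(1/n)$, and moreover $(1+|z|)\|J_R(z) - I_4\|$ is $O(1/n)$ in $L^1 \cap L^2 \cap L^\infty(\Sigma_R)$: the weight $1+|z|$ is harmless on the bounded circles, and is absorbed by the exponential decay along the unbounded arcs.

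Next I would set up the associated singular integral equation. Let $C_-$ denote the non-tangential boundary value from the minus side of the Cauchy transform on $\Sigma_R$, and define the operator $C_{J_R-I_4} f = C_-\!\bigl(f\,(J_R - I_4)\bigr)$ on $L^2(\Sigma_R)$. Boundedness of $C_-$ on $L^2(\Sigma_R)$ together with the $L^\infty$ bound on $J_R - I_4$ yields $\|C_{J_R - I_4}\|_{L^2 \to L^2} \le C/n$, so for $n$ large enough $I - C_{J_R - I_4}$ is invertible with inverse bounded uniformly in $n$. Then $\rho := (I - C_{J_R-I_4})^{-1} I_4$ is well defined, $\rho - I_4 \in L^2(\Sigma_R)$ with $\|\rho - I_4\|_{L^2} = O(1/n)$, and
\[ R(z) = I_4 + \frac{1}{2\pi i} \int_{\Sigma_R} \frac{\rho(s)\bigl(J_R(s) - I_4\bigr)}{s - z}\,ds, \qquad z \in \mathbb C \setminus \Sigma_R, \]
solves \eqref{eq:RHforR}; uniqueness follows in the usual way since any solution of \eqref{eq:RHforR} with the stated normalization must coincide with this one.

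To extract the pointwise estimate I would split $\rho\,(J_R - I_4) = (J_R - I_4) + (\rho - I_4)(J_R - I_4)$; by Cauchy--Schwarz and the estimates of the first paragraph, both $\rho\,(J_R - I_4)$ and $(1+|s|)\,\rho(s)(J_R(s) - I_4)$ have $L^1(\Sigma_R)$ norm $O(1/n)$. For $z$ bounded away from $\Sigma_R$ this already gives $\|R(z) - I_4\| \le C/(n(1+|z|))$: the contribution of the circles is $\le (C/n)/\mathrm{dist}(z, \bigcup_p \partial D_p)$, which is $O\bigl((n(1+|z|))^{-1}\bigr)$ once $|z|$ is large and $O(1/n)$ for $|z|$ bounded, and the contribution of the unbounded arcs is even smaller owing to the exponential decay there. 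For $z$ close to $\Sigma_R$ the same bound follows from a standard local argument using the uniform $L^\infty$ control of $\rho$ near the finitely many self-intersection points of $\Sigma_R$, again noting that $1+|z|$ is bounded above and below on a neighbourhood of the bounded part of $\Sigma_R$, while near the unbounded arcs the exponential smallness of $J_R - I_4$ dominates. This gives the claimed inequality $\|R(z) - I_4\| \le C/(n(1+|z|))$ uniformly for $z \in \mathbb C \setminus \Sigma_R$.

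The only genuinely delicate point, and the one I would be most careful about, is the uniformity of the weighted estimate at infinity: one must verify that the exponential decay rate $c$ of $J_R - I_4$ along the unbounded lips and the unbounded parts of the axes — which comes from the strict inequalities \eqref{eq:varlambda21}--\eqref{eq:varlambda43}, Lemmas \ref{lem:ineqlensmu2}, \ref{lem:ineqlensmu3} and \ref{lem:ineqlensmu1}, and the growth of the $\lambda$-functions at infinity — can be chosen independent of $n$ and strong enough to absorb the factor $1+|z|$. This is precisely where the careful design of the lenses in the opening-of-lenses step is used. Everything else is the routine small-norm machinery for RH problems.
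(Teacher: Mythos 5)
Your proposal is correct and follows essentially the same route as the paper, which records exactly the two jump estimates you start from (the $\OO(n^{-1})$ matching on the circles $\partial D_p$ and the $\OO(e^{-cn(|z|+1)})$ bound elsewhere) and then invokes the standard small-norm/singular-integral-equation machinery of \cite[Section 8.4]{DuiKu2} that you spell out. The only cosmetic difference is that in the paper $R$ is already defined by \eqref{eq:defR} and the integral-equation solution is identified with it by uniqueness, rather than being used to construct $R$; your handling of the weight $1+|z|$ on the unbounded arcs is the same point the paper emphasizes.
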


This concludes the steepest descent analysis of the RH problem for $Y$.

\subsection{Proof of Theorem \ref{theo:theo2}}

We use the following consequence of Corollary \ref{cor:MinvyMx} and Proposition
\ref{prop:finalRestimate}

\begin{lemma} \label{lem:Sestimate}
For every $x \in S(\mu_1) \setminus \bigcup_{k=1}^N\left( D_{a_k}\cup D_{b_k}\right)$
we have
\[ S_+^{-1}(y)S_+(x) = \begin{pmatrix} I_2 + \OO(x-y)
 & * \\
 * & * \end{pmatrix} \qquad \text{as } y \to x, \]
uniformly in $n$. The $*$ entries denote unimportant $2\times 2$ blocks.
\end{lemma}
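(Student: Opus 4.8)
The plan is to trace the claimed estimate back through the final transformation $S \mapsto R$ and the properties of the global parametrix $M$ already established in Corollary \ref{cor:MinvyMx}. First I would write, for $x \in S(\mu_1) \setminus \bigcup_{k=1}^N(D_{a_k} \cup D_{b_k})$ and $y$ close to $x$ (so that $y$ is also outside all the disks $D_{a_k}, D_{b_k}$), the relation
\[
S_+(x) = R(x) M_+(x), \qquad S_+(y) = R(y) M_+(y),
\]
which is valid because outside the disks we have $S = R M$ by \eqref{eq:defR}. Hence
\[
S_+^{-1}(y) S_+(x) = M_+^{-1}(y) R^{-1}(y) R(x) M_+(x)
    = M_+^{-1}(y) \bigl( I_4 + R^{-1}(y)(R(x) - R(y)) \bigr) M_+(x).
\]

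Next I would estimate the middle factor. By Proposition \ref{prop:finalRestimate}, $R - I_4 = \OO(1/n)$ uniformly, so $R^{-1}(y)$ is uniformly bounded in $n$; and since $R$ is analytic in a fixed neighborhood of $x$ (the point $x$ lies on $S(\mu_1)$ away from the branch points, and away from the disks $R$ has analytic continuation across $S(\mu_1)$), Cauchy's estimate applied to $R$ on a fixed disk around $x$ gives $R(x) - R(y) = \OO\bigl((x-y)/n\bigr) = \OO(x-y)$ uniformly in $n$. Therefore
\[
S_+^{-1}(y) S_+(x) = M_+^{-1}(y) \bigl( I_4 + \OO(x-y) \bigr) M_+(x)
    = M_+^{-1}(y) M_+(x) + M_+^{-1}(y)\, \OO(x-y)\, M_+(x).
\]
Now I invoke Proposition \ref{prop:columnsofM}(a)--(b): the first two columns of $M$ and of $M^{-t}$ are uniformly bounded away from $\{a_1,b_1,\dots,a_N,b_N\}$, hence the first two rows of $M_+^{-1}(y)$ and the first two columns of $M_+(x)$ are uniformly bounded. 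So the error term $M_+^{-1}(y)\,\OO(x-y)\,M_+(x)$ has its upper-left $2\times 2$ block of size $\OO(x-y)$ uniformly in $n$, with the remaining blocks merely bounded (the $*$'s). Finally, by Corollary \ref{cor:MinvyMx}, the term $M_+^{-1}(y) M_+(x)$ has the block structure $\begin{pmatrix} I_2 + \OO(x-y) & * \\ * & * \end{pmatrix}$ uniformly in $n$. Adding the two contributions gives exactly the asserted form, completing the proof.

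The only mild subtlety — and the step I would be most careful about — is making sure that all $\OO$-bounds are genuinely uniform in $n$: this relies on the uniform-in-$n$ bounds in Proposition \ref{prop:columnsofM}, on the uniform $\OO(1/n)$ bound for $R - I_4$ from Proposition \ref{prop:finalRestimate}, and on the fact that the disk around $x$ on which one applies Cauchy's estimate to $R$ can be taken of a fixed radius independent of $n$ (which holds because the contour $\Sigma_R$ and the disks $D_p$ are fixed). One also needs $x$ to stay in a compact subset of $S(\mu_1) \setminus \{a_1,b_1,\dots\}$ so that the distance to the branch points is bounded below; this is implicit in the statement since $x \notin \bigcup_k (D_{a_k}\cup D_{b_k})$ and these disks are neighborhoods of the branch points. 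No essentially new computation is needed beyond assembling these already-established facts.
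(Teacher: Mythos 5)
Your proposal is correct and follows essentially the same route as the paper: factor $S_+^{-1}(y)S_+(x)$ through $M$ and $R$ via \eqref{eq:defR}, estimate $R_+^{-1}(y)R_+(x)-I=\OO((x-y)/n)$ by a Cauchy integral/estimate on a fixed disk using Proposition \ref{prop:finalRestimate}, and then conclude with Corollary \ref{cor:MinvyMx} and the uniform bounds of Proposition \ref{prop:columnsofM}. The only point the paper makes slightly more carefully is that $R$ need not be analytic on a full fixed disk around $x$ (the disk may meet the lips of the lenses), so one first deforms contours to obtain an analytic continuation of $R_+$ there on which the same bound holds — a remark you essentially cover when noting that $R$ continues across $S(\mu_1)$ away from the branch points.
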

\begin{proof}
By \eqref{eq:defR} and the fact that $x$ is  outside the disks, we have
\begin{align}\label{eq:SinvySx}
S_+^{-1}(y)S_+(x)=M_+^{-1}(y) R_+^{-1}(y) R_+(x) M_+(x),\end{align}
if $y$ is close enough to $x$.
Note that $R$ may not be analytic in a neighborhood of $x$. However, by deforming contours  into the lower half plane we see that $R$ does  have an analytic continuation to a neighborhood of $x$ where the same estimate of Proposition \ref{prop:finalRestimate} is valid. Now write
\begin{align*}
R^{-1}_+(y) R_+(x)=I+R^{-1}_+(y)\left(R_+(x)-R_+(y)\right).
\end{align*}
 Then by Cauchy's Theorem
 \begin{align*}
R^{-1}_+(y) R_+(x)&=I+R^{-1}_+(y)\frac{1}{2\pi i} \oint_{|z-x|=r} R(z)\left(\frac{1}{z-x}
-\frac{1}{z-y} \right)\ {d}z\\
&=I+R^{-1}_+(y)\frac{(x-y)}{2\pi i} \oint_{|z-x|=r} \frac{R(z)-I}{(z-x)(z-y)} \ {d}z,
\end{align*}
for some $r>0$.  Combining this with Proposition \ref{prop:finalRestimate} leads to
\[R_+^{-1}(y) R_+(x)=I+\OO\left(\frac{x-y}{n}\right) \quad \text{as } y\to x.\]
Now inserting this in \eqref{eq:SinvySx} and using Corollary \ref{cor:MinvyMx} gives the statement.
\end{proof}

We follow the effect of the transformations $Y \mapsto X \mapsto U \mapsto T \mapsto S \mapsto R$
on the kernel $K_{11}^{(n)}$ given by \eqref{eq:kernelintermsofY}.

The transformation $Y \mapsto X$ as given by \eqref{eq:YtoX} then gives
\begin{equation} \label{eq:kernelintermsofXa}
\begin{aligned}
    K_{11}^{(n)}(x,y)  = & \frac{1}{2\pi i(x-y)} \begin{pmatrix} 0 & w_{0,n}(y) & w_{1,n}(y) & w_{2,n}(y) \end{pmatrix} \\
        & \times \begin{pmatrix} 1 & 0 \\ 0 & D_n P_{n,+}^{-t}(y) e^{n \Theta_+(y)} \end{pmatrix}
            X_{+}^{-1}(y) X_+(x) \\
            & \times \begin{pmatrix} 1 & 0 \\ 0 & e^{-n \Theta_+(x)} P_{n,+}^{t}(x)  D_n^{-1}  \end{pmatrix}
            \begin{pmatrix} 1 \\ 0 \\ 0 \\ 0 \end{pmatrix}.
            \end{aligned}
\end{equation}
By \eqref{eq:JXeval2} we then get
\begin{equation} \label{eq:kernelintermsofX}
    K_{11}^{(n)}(x,y) = \frac{1}{2\pi i(x-y)} \begin{pmatrix} 0 & e^{-n(V(y) - \theta_1(y))} & 0 & 0 \end{pmatrix}
            X_{+}^{-1}(y) X_+(x)
            \begin{pmatrix} 1 \\ 0 \\ 0 \\ 0 \end{pmatrix}.
            \end{equation}

Then by the transformation $X \mapsto U$ given in \eqref{eq:defU} we get
\begin{equation} \label{eq:kernelintermsofUa}
\begin{aligned}
    K_{11}^{(n)}(x,y)  = & \frac{e^{n(g_{1,+}(x)+ \ell_1)}}{2\pi i(x-y)}
        \begin{pmatrix} 0 & e^{-n(V(y)-\theta_1(y)-g_{1,+}(y)+g_{2,+}(y))} & 0 & 0 \end{pmatrix} \\
        & \times
            U_{+}^{-1}(y) U_+(x)
            \begin{pmatrix} 1 \\ 0 \\ 0 \\ 0 \end{pmatrix},
                \quad x, y \in \mathbb R.
            \end{aligned}
            \end{equation}
By \eqref{eq:deflambdaj} this leads to
\begin{equation} \label{eq:kernelintermsofU}
\begin{aligned}
    K_{11}^{(n)}(x,y)  = & \frac{e^{n(g_{1,+}(x) - g_{1,+}(y))}}{2\pi i(x-y)}
        \begin{pmatrix} 0 & e^{n(\lambda_{2,+}(y)-\lambda_{1,+}(y))} & 0 & 0 \end{pmatrix} \\
        & \times
            U_{+}^{-1}(y) U_+(x)
            \begin{pmatrix} 1 \\ 0 \\ 0 \\ 0 \end{pmatrix},
                \quad x, y \in \mathbb R.
            \end{aligned}
            \end{equation}

The transformation $U\mapsto T$ given by \eqref{eq:defT1}--\eqref{eq:defT3} only
acts on the lower right $2 \times 2$ block, and does not affect the
expression \eqref{eq:kernelintermsofU} for the correlation kernel. We get
\begin{equation} \label{eq:kernelintermsofT}
\begin{aligned}
    K_{11}^{(n)}(x,y)  = & \frac{e^{n(g_{1,+}(x) - g_{1,+}(y))}}{2\pi i(x-y)}
        \begin{pmatrix} 0 & e^{n(\lambda_{2,+}(y)-\lambda_{1,+}(y))} & 0 & 0 \end{pmatrix} \\
        & \times
            T_{+}^{-1}(y) T_+(x)
            \begin{pmatrix} 1 \\ 0 \\ 0 \\ 0 \end{pmatrix},
                \quad x, y \in \mathbb R.
            \end{aligned}
            \end{equation}

Next, by the transformation $T \mapsto S$ of \eqref{eq:defS1} and \eqref{eq:defS2}
we find
\begin{equation} \label{eq:kernelintermsofS}
\begin{aligned}
    K_{11}^{(n)}(x,y)  = & \frac{e^{n(g_{1,+}(x) - g_{1,+}(y))}}{2\pi i(x-y)}
        \begin{pmatrix} -\chi_{S(\mu_1)}(y) & e^{n(\lambda_{2,+}(y)-\lambda_{1,+}(y))} & 0 & 0 \end{pmatrix} \\
        & \times
            S_{+}^{-1}(y) S_+(x)
            \begin{pmatrix} 1 \\ \chi_{S(\mu_1)}(x) e^{n(\lambda_{1,+}(x)-\lambda_{2,+}(x))}  \\ 0 \\ 0 \end{pmatrix},
                \quad x, y \in \mathbb R
            \end{aligned}
            \end{equation}
where $\chi_{S(\mu_1)}$ denotes the characteristic function of the set $S(\mu_1)$.

Let $x \in S(\mu_1) \setminus \{a_k, b_k \mid k = 1, \ldots, N \}$.
The factor $e^{n(g_{1,+}(x) - g_{1,+}(y))}$ disappears as $y \to x$.
Then we use Lemma~\ref{lem:Sestimate} and we get
\begin{equation} \label{eq:kernelxx1}
\begin{aligned}
    K_{11}^{(n)}(x,x)  & = \lim_{y \to x} \frac{-1 + e^{n(\lambda_{2,+}(y)-\lambda_{1,+}(y))} e^{n(\lambda_{1,+}(x)-\lambda_{2,+}(x))}}{2\pi i(x-y)}
    + \OO(1)  \\
    & = \frac{n}{2\pi i} \frac{d}{dx} \left( \lambda_{1,+}(x) - \lambda_{2,+}(x) \right) + \OO(1)
    \end{aligned}
            \end{equation}
as $n \to \infty$.

Thus
\begin{align*}
    \lim_{n\to\infty} \frac{1}{n} K_{11}^{(n)}(x,x) & = \frac{1}{2\pi i} \frac{d}{dx} \left( \lambda_{1,+}(x) - \lambda_{2,+}(x) \right) \\
    & = \frac{1}{2\pi i} \left( F_{1,-}(x) - F_{1,+}(x) \right) \\
    & = \frac{d \mu_1}{dx},  \end{align*}
    where we used   \eqref{eq:deflambdaj},  \eqref{eq:jumpg1onR},   \eqref{eq:defFj}, and
    \eqref{eq:jumpFj}.
This completes the proof of Theorem \ref{theo:theo2}.

\section*{Acknowledgements}

M. Duits and A.B.J. Kuijlaars are grateful for the support and hospitality of MSRI in Berkeley in the fall of 2010.

A.B.J. Kuijlaars is supported by K.U. Leuven research grant
OT/08/33, FWO-Flanders project G.0427.09, by the Belgian
Interuniversity Attraction Pole P06/02, and by
grant MTM2008-06689-C02-01 of the Spanish Ministry of Science and
Innovation. 

M. Y. Mo acknowledges financial support by the EPSRC
grant EP/G019843/1.

\end{document}